\theoremstyle{plain}
\newtheorem{Lemma}{Lemma}
\newtheorem{Theorem}{Theorem}
\theoremstyle{remark}
\newtheorem{Assumption}{Assumption}
\newtheorem{Example}{Example}
\newcommand{\R}{\mathbb{R}}
\newcommand{\N}{\mathbb{N}}
\newcommand{\E}{\mathbb{E}}
\newcommand{\Var}{\operatorname{Var}}   
\newcommand{\Cov}{\operatorname{Cov}}
\DeclarePairedDelimiter{\floor}{\lfloor}{\rfloor}
\DeclarePairedDelimiter{\Bfloor}{\Big\lfloor}{\Big\rfloor}
\DeclarePairedDelimiter{\bfloor}{\big\lfloor}{\big\rfloor}
\newcommand{\EINS}{\vmathbb{1}}           
\newcommand{\Pj}{\mathbb{P}}
\newcommand\argmin{\operatornamewithlimits{argmin}}
\newcommand\argmax{\operatornamewithlimits{argmax}}
\newcommand{\inprob}{\stackrel{p}{\to}}
\begin{document}
\begin{frontmatter}

\title{Cross-validation for change-point regression:\\pitfalls and solutions}
%\title{A sample article title with some additional note\thanksref{T1}}
\runtitle{Cross-validation for change-point regression}
\runauthor{F. Pein and R. D. Shah}
%\thankstext{T1}{A sample of additional note to the title.}

\begin{aug}
%%%%%%%%%%%%%%%%%%%%%%%%%%%%%%%%%%%%%%%%%%%%%%%
%% Only one address is permitted per author. %%
%% Only division, organization and e-mail is %%
%% included in the address.                  %%
%% Additional information can be included in %%
%% the Acknowledgments section if necessary. %%
%%%%%%%%%%%%%%%%%%%%%%%%%%%%%%%%%%%%%%%%%%%%%%%
\author[A]{\fnms{Florian} \snm{Pein}\ead[label=e1]{f.pein@lancaster.ac.uk}\orcid{0000-0002-7994-9264}} \author[B]{\fnms{Rajen} \snm{D.\ Shah}\ead[label=e2]{r.shah@statslab.cam.ac.uk}\orcid{0000-0001-9073-3782}}
%%%%%%%%%%%%%%%%%%%%%%%%%%%%%%%%%%%%%%%%%%%%%%
%% Addresses                                %%
%%%%%%%%%%%%%%%%%%%%%%%%%%%%%%%%%%%%%%%%%%%%%%
\address[A]{Lancaster University, UK\printead[presep={,\ }]{e1}}

\address[B]{University of Cambridge, UK\printead[presep={,\ }]{e2}}
\end{aug}

\begin{abstract}
Cross-validation is the standard approach for tuning parameter selection in many non-parametric regression problems. However its use is less common in change-point regression, perhaps as its prediction error-based criterion may appear to permit small spurious changes and hence be less well-suited to estimation of the number and location of change-points. We show that in fact the problems of cross-validation with squared error loss are more severe and can lead to systematic under- or over-estimation of the number of change-points, and highly suboptimal estimation of the mean function in simple settings where changes are easily detectable. We propose two simple approaches to remedy these issues, the first involving the use of absolute error rather than squared error loss, and the second involving modifying the holdout sets used. For the latter, we provide conditions that permit consistent estimation of the number of change-points for a general change-point estimation procedure. We show these conditions are satisfied for least squares estimation using new results on its performance when supplied with the incorrect number of change-points. Numerical experiments show that our new approaches are competitive with common change-point methods using classical tuning parameter choices when error distributions are well-specified, but can substantially outperform these in misspecified models. An implementation of our methodology is available in the \texttt{R} package \texttt{crossvalidationCP} on CRAN.
\end{abstract}

\begin{keyword}[class=MSC]
\kwd[Primary ]{62G08}
\kwd[; secondary ]{62G20}
\end{keyword}

\begin{keyword}
\kwd{Change-point regression}
\kwd{cross-validation}
\kwd{segment neighbourhood}
\kwd{sample-splitting}
\kwd{selection consistency}
\kwd{tuning parameter selection}
\end{keyword}

\end{frontmatter}
%%%%%%%%%%%%%%%%%%%%%%%%%%%%%%%%%%%%%%%%%%%%%%
%% Please use \tableofcontents for articles %%
%% with 50 pages and more                   %%
%%%%%%%%%%%%%%%%%%%%%%%%%%%%%%%%%%%%%%%%%%%%%%
%\tableofcontents

%%%%%%%%%%%%%%%%%%%%%%%%%%%%%%%%%%%%%%%%%%%%%%
%%%% Main text entry area:
\begin{cbunit}
\section{Introduction}\label{sec:introduction}
Driven by a need to study datasets that exhibit abrupt changes in distribution, often across time, the field of change-point analysis has received a great deal of attention in recent years. Application areas where such data are common include biochemistry \citep{pein2021analysis}, finance \citep{bai2003computation, kim2005structural}, genomics \citep{olshen2004circular}, quality  monitoring \citep{d2011incipient} and speech processing \citep{harchaoui2009regularized}, to name a few. Perhaps the simplest model studied involves data $Y=(Y_1,\ldots,Y_n) \in \R^n$ satisfying
\begin{equation} \label{eq:uni_mod}
	Y_i = \mu_i + \varepsilon_i, \qquad i=1,\ldots,n,
\end{equation}
where the $\varepsilon_i$ are independent mean-zero errors, and $\mu := (\mu_1,\ldots,\mu_n)$ is piecewise constant with change-points $\tau_1 < \cdots <\tau_K$; that is $\mu_i \neq \mu_{i+1}$ if and only if $i = \tau_k$ for some $k$, for $i=1,\ldots,n-1$.

A variety of methods exists for estimating $\mu$ and the unknown change-points, for instance binary segmentation \citep{vostrikova1981detecting} and its variants \citep{olshen2004circular, fryzlewicz2014wild, fryzlewicz2020detecting, kovacs2020seeded}, (penalized) cost methods such as Segment Neighbourhood \citep{auger1989algorithms, jackson2005algorithm, zhang2007modified, killick2012optimal, maidstone2017optimal, verzelen2020optimal}, multiscale methods \citep{frick2014multiscale, li2016fdr} and Bayesian approaches \citep{fearnhead2006exact, du2016stepwise}, among many others; for further detail see \citet{niu2016multiple, truong2020selective, fearnhead2020relating} and references therein. The empirical and theoretical properties of these methods typically rely on selecting appropriate choices for tuning parameters. For instance, Segment Neighbourhood \citep{auger1989algorithms} requires selection of the number of change-points to estimate, and then determines the location of these to minimise the residual sum of squares. Given the correct number of change-points, the least squares estimate of the change-point locations is minimax rate-optimal and its $L_2$-error rate is optimal up to log-factors (see Theorem~\ref{theorem:detectionPrecision} in Section~\ref{sec:optimalPartitioning} and the following discussion as well as \eqref{eq:L2riskModifiedCriterion}). Other approaches require different tuning parameters, and for some of these methods, theoretically motivated choices of those tuning parameters can be very successful in idealised settings where the joint distribution of the $\varepsilon_i$ is known. However, their performance can deteriorate when the error distribution is misspecified, as is likely to be the case in practice, particularly when the errors have heavy tails or are heteroscedastic.

The problem of selecting regression procedures indexed by tuning parameters is of course encountered in more general regression settings, and here cross-validation is typically the method of choice. One of the appeals of cross-validation is its inherently model-free nature which confers a certain robustness. It has been shown to be very successful empirically in non-parametric and high-dimensional regression settings and theoretical guarantees have provided additional support for its usage \citep{wong1983consistency, yang2007consistency, arlot2010survey, yu2014modified, chetverikov2021cross}.

The use of cross-validation is however less common in change-point regression, and the only theoretical contributions we are aware of are \citet{arlot2011segmentation} and \citet{zou2020consistent}. The former provides results on the quality of a cross-validation estimation of the prediction error for a least squares estimate based on a given fixed set of putative change-points. These results however do not directly tackle the problem of whether cross-validation can provide a consistent estimate of the true number of changes; this latter problem is studied in \citet{zou2020consistent} which we discuss further in the following.

A misgiving one may have about cross-validation is that it typically targets procedures with good prediction properties, and a method that introduces many spurious changes with small jump sizes may not suffer too much from this perspective. This may be concerning given that the goal in change-point regression is often accurate estimation of the number and locations of the change-points.

In this work however, we show that the shortcomings of standard cross-validation with squared error loss are more serious, and can lead to both under- and over-selection of the number of change-points, and perhaps surprisingly, poor performance in terms of mean squared error. The central issue is that if one of the holdout sets includes a point immediately following or preceding a large change-point, the squared error incurred when predicting at that data point can dominate the cross-validation error criterion. We detail this in Section~\ref{sec:L2loss} where we provide some formal negative results for the use of cross-validation with least squares estimation. In Sections~\ref{sec:LargeExample},~\ref{sec:simulationUnderestimation}~and~\ref{sec:simulationOverestimation} we demonstrate empirically that this can substantially affect the performance of cross-validation in practice, and moving a single change-point from an even to an odd location for example, can lead to a drastic deterioration in performance. This issue extends also to the cross-validation procedures of \citet{arlot2011segmentation}, which are also based on squared error loss.

One reason this problem has not (to our knowledge) been highlighted in prior literature, is that existing theoretical results on cross-validation consider asymptotic regimes which may either implicitly or explicitly assume that a change-point procedure trained on a subset of the data will make bounded expected squared errors on the remaining data as is the case in \citet[Prop.~1]{arlot2011segmentation}. However by bypassing the issue of poor predictive performance described above, the insights of such asymptotic regimes for finite samples are thus perhaps somewhat limited.

Our second contribution is to further advance the point made in \citet{zou2020consistent} that the basic intuition that cross-validation encourages too many small spurious changes, is not necessarily well-founded. In Sections~\ref{sec:cvL1}~and~\ref{sec:cvL2} we propose two simple approaches to avoid the problems associated with large changes. The first involves using absolute error rather than squared error in the cross-validation criterion. The second involves modifying the cross-validation score for squared error loss to avoid the problematic points. For the latter, we provide relatively mild conditions on the underlying change-point regression procedure under which the cross-validation approach is consistent for selecting the number of change-points; no additional penalisation (or need for choosing appropriate tuning parameters that would come with it) is required to achieve this consistency.

Our theory builds on the work of \citet{zou2020consistent}, who show consistency of a cross-validation scheme in their Theorems~1~and~2. However inspection of their proofs shows that their conclusions as stated may need some caveats. Firstly, though not explicitly stated, the proofs require all change-points to occur at even locations. This simplification of the model may be justified when the noise and signal strength are bounded away from $0$ and $\infty$ respectively such that the expected errors of the fitted regression function are bounded; however, as explained above, this asymptotic regime may then have less relevance to practice. Secondly, it is unclear to us how the arguments in their proofs may be extended to allow for the maximum number of change-points considered and the true number of change-points to diverge; see the discussion before and after \eqref{eq:whatZouetalHaveShown} in Section~\ref{sec:proof:RescaledCV} in the supplementary material of our paper. This seems particularly relevant given that the number of change-points is the object of inference. In contrast, our result (Theorem~\ref{theorem:positiveResultRescaledCV}) allows for the number of change-points to tend to infinity. We verify the conditions of our general result for the case of least squares estimation by employing a new result (Theorem~\ref{theorem:detectionPrecision} in Section~\ref{sec:optimalPartitioning}) on the existence of estimated change-points in the neighbourhood of true changes even when the number of changes has been incorrectly specified, which may be of independent interest.

In Section~\ref{sec:simulations}, we present numerical experiments that illustrate the performance of our new cross-validation schemes in comparison with commonly used change-point procedures using classical tuning parameter choices. We see that cross-validation with absolute error loss is competitive when the error distribution has been well-specified, but substantially outperforms classical methods in settings with heteroscedastic or heavy-tailed errors, or when outliers are present. We conclude with a discussion in Section~\ref{sec:discussion}. The Appendix contains descriptions of generalisations of our methodology; additional numerical experiments as well as all proofs are contained in the supplementary material.

\section{Pitfalls of using cross-validation with squared error loss}\label{sec:L2loss}
In this section, we give examples of simple settings where changes are easily detectable but where using cross-validation with squared error loss can lead to both systematic under- and over-estimation of the number of change-points. For these negative results, we focus on the univariate mean change-point regression problem \eqref{eq:uni_mod}, where additionally $\varepsilon_i \sim \mathcal{N}(0, \sigma^2)$. Furthermore, we consider a version of two-fold cross-validation, termed the $\operatorname{COPSS}$ procedure in \citet{zou2020consistent}, with least squares estimation for estimating the change-point locations\footnote{Throughout the paper we will use the terminology least squares estimation. \citet{zou2020consistent} called it optimal partitioning, highlighting the fact that least squares estimation is performed for various putative numbers of change-points. Elsewhere in the literature however, this is known as Segment Neighbourhood \citep{auger1989algorithms}, with optimal partitioning referring to a penalised version. We finally remark that least squares estimation coincides with maximum likelihood estimation if the noise is Gaussian.}.
However it will be clear that our constructions highlighting the undesirable properties of cross-validation, and our conclusions, can be generalised to other settings and other forms of cross-validation employing squared error loss.

\subsection{Setting}\label{sec:gaussianSetting}
In order to describe and study the $\operatorname{COPSS}$ procedure, we introduce some notation. Let $\tau_0=0$ and $\tau_{K+1}=n$. Let $\beta_k$ for $k=1,\ldots,K$ be the mean of the signal in the $k$th constant segment, so
\begin{equation}\label{eq:signalGaussian}
	\mu_i = \beta_k,\ \text{ if and only if } \tau_k < i \leq \tau_{k+1},\ k = 0,\ldots,K;\ i = 1,\ldots,n.
\end{equation}
For an arbitrary vector of observations $Z:=(Z_1,\ldots,Z_m)$ and a putative number of change-points $L$, least squares estimation obtains the change-points as 
\begin{equation}\label{eq:optimalPartitioning}
	\left(\hat{\tau}^Z_{L,0},\ldots, \hat{\tau}^Z_{L,L + 1}\right) := \argmin_{0 =: t_0 < t_1 < \cdots < t_L < t_{L+1} := n}{\sum_{l = 0}^{L}{\sum_{i = t_l + 1}^{t_{l+1}}{\left(Z_i - \overline{Z}_{t_l :t_{l + 1}}\right)^2}}},
\end{equation}
where $\overline{Z}_{a:b} := (b - a)^{-1}\sum_{i = a+1}^{b}{Z_i}$. In Section~\ref{sec:optimalPartitioning} we will give theoretical guarantees for the estimates $\hat{\tau}^Z_{L,l}$. We will assume for simplicity here and throughout that $n$ is even; if not, the final observation may be dropped.

In order to perform $\operatorname{COPSS}$ with least squares estimation, we will apply the above to the odd and even indexed observations separately. (We will assume for simplicity here and throughout that $n$ is even; if not, the final observation may be dropped.) To study this, we introduce
\begin{equation}\label{eq:2foldsSplitting}
	\begin{aligned}
		Y^O_i & := Y_{2 i - 1}, & \mu^O_i & := \mu_{2 i - 1}, & \varepsilon^O_i & := \varepsilon_{2 i - 1}, & i = &1,\ldots, n / 2 ,\\
		Y^E_i & := Y_{2 i},     & \mu^E_i & := \mu_{2 i },    & \varepsilon^E_i & := \varepsilon_{2 i },    & i = & 1,\ldots,n / 2 .
	\end{aligned}
\end{equation}
We write $\mathcal{T}^O := \big\{\tau^O_0, \ldots,\tau^O_{K+1}\big\}$ for the set of true change-points among the odd observations $Y^O :=\big(Y^O_1,\ldots,Y^O_{ n/2 }\big)$, so
\begin{equation*}
	\mu^O_i = \beta_k,\ \text{ if and only if } \tau^O_k < i \leq \tau^O_{k+1},\ k = 0,\ldots,K;\ i = 1,\ldots, n/2 .
\end{equation*}
We denote the estimated change-point set obtained by applying least squares estimation \eqref{eq:optimalPartitioning} to $Y^O$ by
\begin{equation}\label{eq:estimatedCpO}
	\hat{\mathcal{T}}^O_L := \big\{0 = \hat{\tau}^O_{L,0} < \hat{\tau}^O_{L,1} < \cdots < \hat{\tau}^O_{L,L} < \hat{\tau}^O_{L, L + 1} =  n / 2 \big\},
\end{equation}
and define $\mathcal{T}^E$ and $\hat{\mathcal{T}}^E_L$ analogously for the even observations. (Here and throughout, we assume that the original change-points $\tau_0, \tau_1, \ldots, \tau_{K+1}$ are at least two apart, so $\min_{k=0,\ldots,K} \tau_{k+1} - \tau_k \geq 2$; this ensures that each of $\mathcal{T}^E$ and $\mathcal{T}^0$ have $K+2$ elements.)

With these, we may now define the (two-fold) cross-validation criterion for $L$ change-points using squared error loss as
\begin{equation}\label{eq:cvL2}
		\operatorname{CV}_{(2)}(L) := 
		\sum_{l = 0}^{L}\sum_{i = \hat{\tau}^O_{L,l} + 1}^{\hat{\tau}^O_{L,l + 1}}{ \left(Y^E_i - \overline{Y}^O_{\hat{\tau}^O_{L,l} :\hat{\tau}^O_{L,l + 1}}\right)^2 }\\
		+ \sum_{l = 0}^{L}\sum_{i = \hat{\tau}^E_{L,l} + 1}^{\hat{\tau}^E_{L,l + 1}}{ \left(Y^O_i - \overline{Y}^E_{\hat{\tau}^E_{L,l} :\hat{\tau}^E_{L,l + 1}}\right)^2 },
\end{equation}
where the subscript $(2)$ in $\operatorname{CV}_{(2)}(L)$ emphasises the use of squared error loss. Lastly, the estimated number of change-points $K$ is given by
\begin{equation}\label{eq:estimatedKgaussian}
	\hat{K} := \argmin_{L = 0,\ldots,K_{\max}}{\operatorname{CV}_{(2)}(L)},
\end{equation}
where $K_{\max} \geq 1$ is a pre-specified upper bound for the number of change-points. Given $\hat{K}$, we may perform least squares estimation with $L=\hat{K}$ on the full vector of observations $Y$ to produce a final set of estimated change-points $0 =: \hat{\tau}_{0} < \hat{\tau}_{1} < \cdots < \hat{\tau}_{\hat{K}} < \hat{\tau}_{\hat{K} + 1} := n$ and an estimate $\hat{f}:\ [0,1] \to \R,\ t \mapsto  \sum_{k = 0}^{\hat{K}}{\overline{Y}_{\hat{\tau}_{k} :\hat{\tau}_{k + 1}} \EINS_{(\hat{\tau}_{k} / n, \hat{\tau}_{k + 1} / n]}(t)}$ of the true mean function
$f:\ [0,1] \to \R,\ t \mapsto  \sum_{k = 0}^{K}{\beta_k \EINS_{(\tau_{k} / n, \tau_{k + 1} / n]}(t)}$.

The motivation for this approach is that in \eqref{eq:cvL2}, $Y^E_i$ is (almost) an unbiased proxy for $\mu^O_i$, and similarly for the odd and even designators interchanged. The ``almost'' qualification is due to the fact that the unbiasedness may fail to hold immediately after a change, so for example $\mu^O_i$ and $\E Y^E_i = \mu^E_i$ can be very different when $i=\tau^O_k$. Whilst this will only occur at isolated points, the fact that the errors are squared in \eqref{eq:cvL2} can lead these discrepancies to dominate the cross-validation criterion when changes are large, and this has severe consequences for the quality of the estimate $\hat{K}$ as we show formally in the following.

\subsection{Underestimation}\label{sec:underestimation}

In this section we present a scenario in which cross-validation will under-estimate the number of change-points with high probability.

\begin{Example}[Underestimation]\label{example:underestmation}
Let $Y \in \R^n$ be of the form in \eqref{eq:uni_mod} with mean vector $\mu$ as in \eqref{eq:signalGaussian}. Suppose $n$ is even, but $n/2$ odd. Let $K=2$ and for odd $\underline{\lambda} < n / 4$, let $\tau_1 = n / 2 - \underline{\lambda},\, \tau_2 = n / 2$. We set $\Delta_1 = \beta_1$, $\Delta_2 = \beta_3$ and $\beta_2 = 0$ and suppose $\Delta_1 < \Delta_2$, so $\Delta_2$ and $\Delta_1$ are the sizes of the largest and smallest jumps respectively. An illustration of this construction is given in Figure \ref{fig:example}. 
\end{Example}

\begin{figure}[!htb]
\includegraphics[width = 0.99\textwidth]{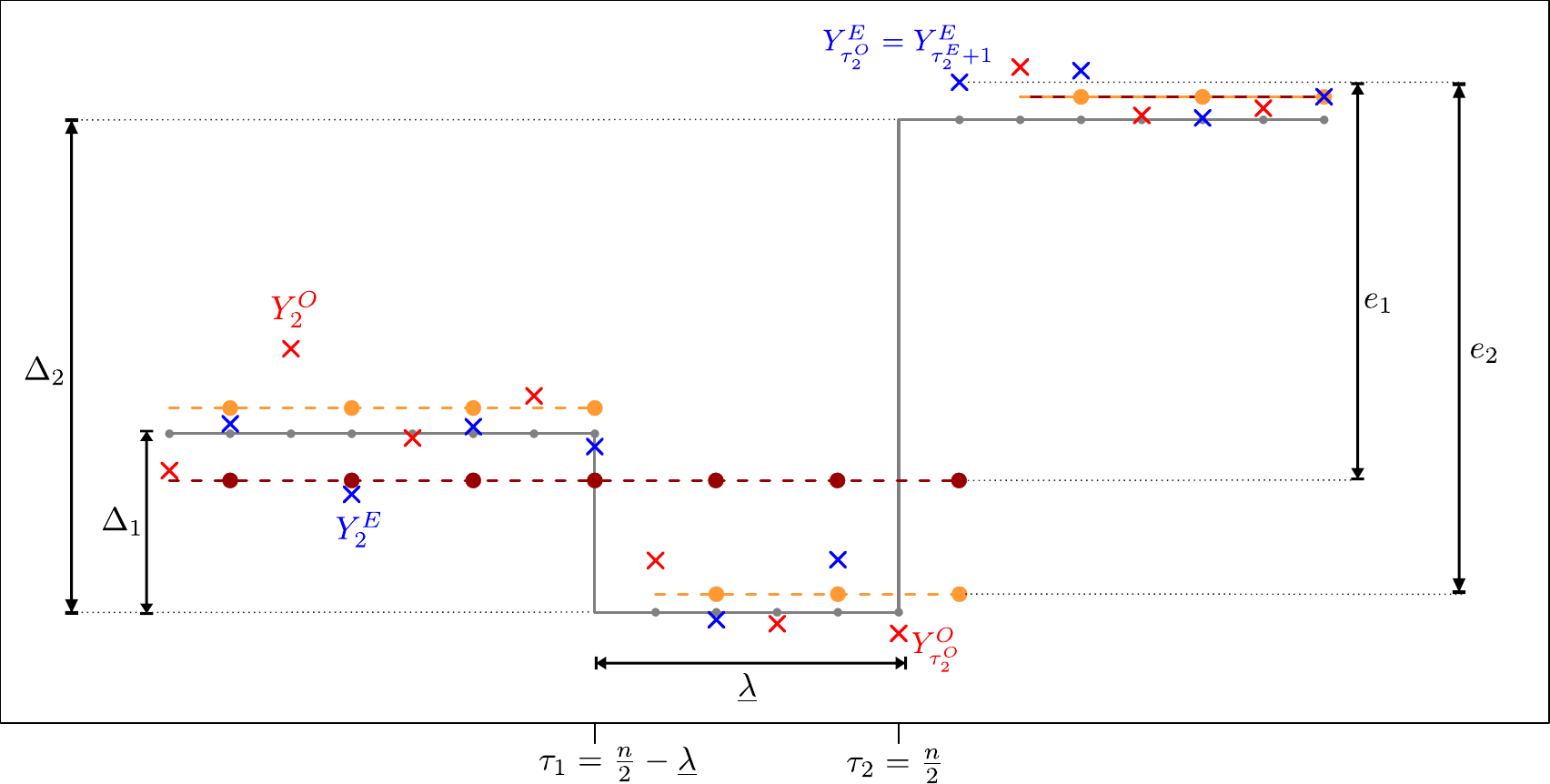} 
\caption{Schematic of Example~\ref{example:underestmation}. Expectations are visualized by grey dots and observations by coloured crosses, split into the two folds given by $Y^O$ (red) and $Y^E$ (blue). The predictions $\big(\overline{Y}^O_{\hat{\tau}^O_{L,l} :\hat{\tau}^O_{L,l + 1}}\big)_{l=0}^L$ from the odd observations, are shown for $L = 2$ (orange dots) and $L = 1$ (brown dots). Distances $e_2$ and $e_1$ between $Y^E_{\tau^O_2}$ and the corresponding predictions are much larger than those corresponding to other observations, and can dominate the cross-validation criterion. The larger size of $e_2$ results in $\operatorname{CV}_{(2)}$ being minimised at $L=1$.}
\label{fig:example}
\end{figure} 
 
In Example~\ref{example:underestmation} above, we have that as $n/2$ is odd, $\tau^O_2 = (n/2 + 1)/2$, but $\tau^E_2 = (n/2 - 1)/2$, and so $\mu^E_{\tau^O_2} = \Delta_2 \neq \mu^O_{\tau^O_2} = 0$. Thus, if $\Delta_2$ is large, the point following the large second change, $Y^E_{\tau^E_2 + 1} = Y^E_{\tau^O_2}$, contributes heavily to 
the cross-validation criterion. To see why this is problematic, it is instructive to first consider a noiseless setting where $\sigma=0$. Then with a correctly specified $L=K=2$, we obtain
\begin{equation*}
	\begin{split}
		\operatorname{CV}_{(2)}(2) =& \left(\mu^O_{\tau^E_2 + 1} - \beta_2\right)^2 + \left(\mu^E_{\tau^O_2} - \beta_1\right)^2
		= \left(\mu^O_{\tau^O_2} - \beta_2\right)^2 + \left(\mu^E_{\tau^E_2 + 1} - \beta_1\right)^2 \\
		= &\left(\beta_1 - \beta_2\right)^2 + \left(\beta_2 - \beta_1\right)^2  
		= \Delta_2^2 + \Delta_2^2.
	\end{split}
\end{equation*}
On the other hand, when $L = 1$, least squares estimation recovers only the large second change-point. Hence, the first segment consists of $n / 2 - \underline{\lambda}$ observations with value $\beta_0 = \Delta_1$ and $\underline{\lambda}$ observations with value $\beta_1 = 0$. Hence, $\overline{\mu}^O_{0:\hat{\tau}^O_{1,1}} = \overline{\mu}^E_{0:\hat{\tau}^E_{1,1}} \approx \frac{n - 2\underline{\lambda}}{n} \Delta_1$. Thus,
\begin{align*}
	\operatorname{CV}_{(2)}(1)	&\approx \left(\frac{n}{2} - \underline{\lambda}\right) \left(\frac{2\underline{\lambda}}{n}\right)^2\Delta_1^2 + \underline{\lambda} \left(\frac{n - 2\underline{\lambda}}{n}\right)^2\Delta_1^2
	+ \left(\mu^O_{\tau^E_2 + 1} - \beta_2\right)^2 + \left(\mu^E_{\tau^O_2} - \frac{n - 2\underline{\lambda}}{n} \Delta_1 \right)^2\\
	&=  \left(1-\frac{2\underline{\lambda}}{n}\right) \underline{\lambda} \Delta_1^2
	+ \Delta_2^2 + \left(\Delta_2 - \frac{n - 2\underline{\lambda}}{n} \Delta_1 \right)^2.
\end{align*}
We then obtain
\begin{equation} \label{eq:noiseless}
	\operatorname{CV}_{(2)}(2) - \operatorname{CV}_{(2)}(1) \approx \Delta_1^2 \left( 2\frac{\Delta_2}{\Delta_1} - \underline{\lambda} \right)\left(1-\frac{2\underline{\lambda}}{n}\right) > \frac{1}{2}\Delta_1^2 \left( 2\frac{\Delta_2}{\Delta_1} - \underline{\lambda} \right).
\end{equation}
Thus when $2\Delta_2 / ( \underline{\lambda}\Delta_1) > 1$ we can expect cross-validation to favour $L=1$ change-point.

More precisely, and taking account of the presence of noise, we have the following asymptotic result. Note that here and in the sequel, all parameters (in the current case $\Delta_1$, $\Delta_2$, $\underline{\lambda}$ and $\sigma$) are permitted to change with $n$, though we suppress this in the notation. Also, although our result is stated for simplicity for the case of $\operatorname{COPSS}$, a similar conclusion would hold for other squared error cross-validation schemes, possibly with the large change at a different location depending on the details of the method employed such as the number of folds used.

One example is the $\operatorname{LooVF}$ procedure from \citet{arlot2011segmentation}, where the issues  described above occur when one considers Example~\ref{example:underestmation} in reverse order. This is because they extrapolate to the left when predicting at unseen time points. In general, a choice has to be made to extrapolate in either direction, and depending on the underlying signal, each strategy can result in poor performance.
\begin{Theorem}
	\label{theorem:underestimation}
	Let $Y \in \R^n$ be as in Example~\ref{example:underestmation}. Suppose that the following hold:
	\begin{align}
		\frac{\underline{\lambda}\Delta_1^2}{\sigma^2 \log(n)} &\to \infty, \label{eq:minmax} \\
		\liminf_{n \to \infty}{\frac{2}{\underline{\lambda}}\frac{\Delta_2}{\Delta_1}} &> 1. \label{eq:conditionUnderestimation}
	\end{align}
Then $\hat{K}$ \eqref{eq:estimatedKgaussian} satisfies  $\Pj(\hat{K} = 2) \to 0$.
Moreover, if additionally $K_{\max} = K = 2$,
\begin{equation}\label{eq:L2riskSquaredErrorLoss}
\left[\int_0^1{ \left(\hat{f}(t) - f(t)\right)^2 dt}\right]^{-1} = \mathcal{O}_\Pj\left( \frac{n}{\underline{\lambda} \Delta_1^2} \right).
\end{equation}
\end{Theorem}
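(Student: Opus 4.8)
The plan is to treat the two assertions separately. For the first, since $K_{\max}\ge 1$ makes $L=1$ an admissible candidate in \eqref{eq:estimatedKgaussian}, we have $\Pj(\hat K = 2) \le \Pj\bigl(\operatorname{CV}_{(2)}(2) \le \operatorname{CV}_{(2)}(1)\bigr)$, so it suffices to prove that the gap $G := \operatorname{CV}_{(2)}(2) - \operatorname{CV}_{(2)}(1)$ is strictly positive with probability tending to one. The noiseless identity \eqref{eq:noiseless} already shows that the population value of this gap is $\gtrsim \Delta_1^2\,(2\Delta_2/\Delta_1 - \underline{\lambda})$, which under \eqref{eq:conditionUnderestimation} is of order at least $\max(\underline{\lambda}\Delta_1^2,\,\Delta_1\Delta_2)\ge \underline{\lambda}\Delta_1^2$; the whole point is therefore to show that the stochastic fluctuations caused by the noise are of strictly smaller order than this, which is where condition \eqref{eq:minmax} enters.

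First I would pin down the segmentations on a high-probability event $\mathcal{E}_n$. Condition \eqref{eq:minmax} is precisely the minimax detectability threshold for a jump of size $\Delta_1$ over a segment of length $\asymp \underline{\lambda}$, so applying Theorem~\ref{theorem:detectionPrecision} separately to $Y^O$ and $Y^E$ yields, with probability tending to one: for $L=2$, an estimated change within $\mathcal{O}(\sigma^2\log(n)/\Delta_1^2)$ of $\tau^O_1$ and one within $\mathcal{O}(\sigma^2\log(n)/\Delta_2^2)$ of $\tau^O_2$; and for $L=1$, a single estimated change within $\mathcal{O}(\sigma^2\log(n)/\Delta_2^2)$ of the larger change $\tau^O_2$ (and symmetrically for the even fold), the larger jump being the one retained when only a single change is permitted. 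Since \eqref{eq:minmax} forces $\sigma^2\log(n)/\Delta_1^2 = o(\underline{\lambda})$, on $\mathcal{E}_n$ the short middle segment is correctly resolved when $L=2$ and merged when $L=1$, exactly as in the heuristic leading to \eqref{eq:noiseless}.

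Next, on $\mathcal{E}_n$, I would replace each realised segmentation by its idealised version (true locations for $L=2$, and $\tau^O_2,\tau^E_2$ for $L=1$). The crucial bookkeeping observation is that displacing a change $\tau_k$ by $m$ points perturbs the criterion by $\mathcal{O}(m\Delta_k^2)$ in its deterministic part plus an $\mathcal{O}_{\Pj}(\Delta_k\sigma\sqrt{m})$ cross term; since the localisation rate times $\Delta_k^2$ equals $\mathcal{O}(\sigma^2\log n)$ for both $k$, this replacement costs only $\mathcal{O}_{\Pj}(\sigma^2\log n) = o_{\Pj}(G)$. With the segmentations now deterministic, I would expand, for the odd-predicts-even fold,
\[ \bigl(Y^E_i - \overline{Y}^O\bigr)^2 = \bigl(\mu^E_i - \overline{\mu}^O\bigr)^2 + 2\bigl(\mu^E_i - \overline{\mu}^O\bigr)\bigl(\varepsilon^E_i - \overline{\varepsilon}^O\bigr) + \bigl(\varepsilon^E_i - \overline{\varepsilon}^O\bigr)^2, \]
where $\overline{\,\cdot\,}^O$ denotes the average over the idealised odd segment containing $i$, and symmetrically for the even-predicts-odd fold. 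The segmentation-independent quantity $\sum_i (Y^E_i)^2 + \sum_i (Y^O_i)^2$ is identical for $L=1$ and $L=2$ and hence disappears from $G$, removing the dominant $\mathcal{O}(n\sigma^2)$ stochastic contribution and leaving quadratic-noise discrepancies that reduce to differences of a bounded number of squared segment-averages, of size $\mathcal{O}_{\Pj}(\sigma^2\log n)$. The hard part will be the cross terms, where the large jump is the source of difficulty: the mismatched point $Y^E_{\tau^O_2}$ has mean $\Delta_2$ but is predicted by a segment of mean $\approx 0$, so its squared error contributes $\approx\Delta_2^2$ to both $\operatorname{CV}_{(2)}(2)$ and $\operatorname{CV}_{(2)}(1)$, and only the difference, of order $\Delta_1\Delta_2$, survives; the attendant noise must therefore be shown small relative to $G$ rather than merely to $\Delta_2^2$. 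Expanding this point produces cross terms of types $\mathcal{O}_{\Pj}(\Delta_1\sigma)$ (from the gap $\asymp\Delta_1$ between the two competing predicted means) and $\mathcal{O}_{\Pj}(\Delta_2\sigma/\sqrt{\underline{\lambda}})$ (from the segment-average noise over the length-$\asymp\underline{\lambda}$ middle segment), while the bulk cross terms carry total signal energy $\mathcal{O}(\underline{\lambda}\Delta_1^2)$ and so are $\mathcal{O}_{\Pj}(\sigma\Delta_1\sqrt{\underline{\lambda}})$. Using $G \gtrsim \max(\underline{\lambda}\Delta_1^2,\Delta_1\Delta_2)$ together with $\underline{\lambda}\Delta_1^2/\sigma^2\to\infty$ shows all three are $o_{\Pj}(G)$, whence $G>0$ with probability tending to one and $\Pj(\hat K = 2)\to 0$.

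For the risk bound \eqref{eq:L2riskSquaredErrorLoss}, observe that when $K_{\max}=K=2$ the above yields $\hat K\in\{0,1\}$ with probability tending to one, so that $\hat f$ belongs to the class $\mathcal{G}_1$ of functions on $[0,1]$ that are piecewise constant with at most one change-point; consequently $\int_0^1(\hat f - f)^2 \ge \min_{g\in\mathcal{G}_1}\int_0^1 (g-f)^2$. I would finish with a purely deterministic estimate of this minimum: because $f$ has two jumps and any $g\in\mathcal{G}_1$ aligns with at most one of them, $g$ necessarily smears the smaller jump $\Delta_1$ across the length-$\underline{\lambda}$ middle segment, and a short computation shows $\min_{g\in\mathcal{G}_1}\int_0^1 (g-f)^2 \ge c\,\underline{\lambda}\Delta_1^2/n$ for a universal $c>0$ (placing the change at $\tau_2/n=1/2$ and fitting the first half by its length-weighted mean alone gives $\Delta_1^2(\underline{\lambda}/n)(1-2\underline{\lambda}/n)\ge\tfrac12\underline{\lambda}\Delta_1^2/n$ since $\underline{\lambda}<n/4$, while any placement not at $1/2$ forfeits the larger jump $\Delta_2\ge\Delta_1$ and only increases the error). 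Hence $\int_0^1(\hat f - f)^2 \gtrsim \underline{\lambda}\Delta_1^2/n$ with probability tending to one, which is exactly \eqref{eq:L2riskSquaredErrorLoss}.
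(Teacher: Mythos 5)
Your overall strategy for the first assertion is the same as the paper's: restrict to the comparison $\operatorname{CV}_{(2)}(2)$ versus $\operatorname{CV}_{(2)}(1)$, pin down the fitted segmentations on a high-probability event, and show that the deterministic gap of order $\Delta_1^2(2\Delta_2/\Delta_1-\underline{\lambda})$ dominates all stochastic fluctuations; your order-of-magnitude bookkeeping for the cross and quadratic noise terms is consistent with the bounds the paper actually derives. Your treatment of \eqref{eq:L2riskSquaredErrorLoss} is also sound and in fact slightly cleaner than the paper's: by lower-bounding over the whole class of at-most-one-jump functions you avoid having to argue that $\hat{K}=1$ (rather than $0$) and that $\hat{\tau}_{1,1}=\tau_2$, which is what the paper does.

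There is, however, one genuine gap in the first part: the identification of the $L=1$ segmentation. You assert that when only a single change is permitted, optimal partitioning retains (a small neighbourhood of) the larger jump $\tau^O_2$, "the larger jump being the one retained when only a single change is permitted", and you attribute this to Theorem~\ref{theorem:detectionPrecision}. But that theorem's conclusion for $L<K$, namely \eqref{detectionPrecision:L<K}, is only a dichotomy: for each $k$, \emph{either} there is an estimated change within $\gamma_{L,k}$ of $\tau_k$ \emph{or} the fitted mean is poor in a window around $\tau_k$. It does not tell you which of the two true changes the single fitted change-point attaches to, and the second alternative (a bad fit near $\tau_2$) cannot be excluded without a direct comparison of the residual sums of squares over all candidate positions of the single change. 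This is exactly the content of \eqref{eq:estimateCPunderestimationL1} in the paper, whose proof occupies roughly a page (a three-case analysis via Lemma~\ref{lemma:differenceCosts} over $\tau>\tau^O_2$, $\tau\in[\tau^O_1,\tau^O_2)$ and $\tau<\tau^O_1$). Without this step you do not know what $\operatorname{CV}_{(2)}(1)$ is, so the comparison cannot be carried out. A related point you should make explicit: the contribution of the mismatched point $Y^E_{\tau^O_2}$ is discontinuous in the location of the fitted change — moving the estimated change-point by a single index across $\tau^O_2$ changes the criterion by $\Theta(\Delta_2^2)$, which can dwarf the gap $G$. Your blanket "$\mathcal{O}(m\Delta_k^2)$ plus $\mathcal{O}_{\Pj}(\Delta_k\sigma\sqrt{m})$" replacement cost only closes the argument because under \eqref{eq:minmax} and \eqref{eq:conditionUnderestimation} one has $\Delta_2^2\gtrsim\underline{\lambda}^2\Delta_1^2\gg\sigma^2\log n$, so the localisation radius for $\tau^O_2$ is eventually below one and the recovery is in fact \emph{exact} (as the paper establishes directly via \eqref{eq:estimateCPunderestimationL1} and \eqref{eq:estimateCPunderestimationL2}); this case distinction should be spelled out rather than hidden in the $\mathcal{O}(m\Delta_k^2)$ notation.
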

Condition \eqref{eq:minmax} is the minimum requirement that ensures that both change-points are detectable; see the discussion after Theorem~\ref{thm:consistency}. Cross-validation however is unlikely to select the correct number of change-points provided the size $\Delta_2$ of the largest jump is large compared to the product of the minimum gap $\underline{\lambda}$ between the change-points and the smallest jump size $\Delta_1$ \eqref{eq:conditionUnderestimation}.

If overestimation is not permitted, i.e. $K_{\max} = K$, \eqref{eq:L2riskSquaredErrorLoss} shows that the $L_2$-loss of the estimated function is at least of order $\underline{\lambda} \Delta_1^2 / n$. We may contrast this with the corresponding $L_2$-loss realised by our proposed criterion, which for general signals is $\mathcal{O}_{\Pj}\big(n^{-1} \sigma^2 \log\log \overline{\lambda}\big)$, see \eqref{eq:L2riskModifiedCriterion}. It follows from \eqref{eq:minmax} that in the setting of Theorem~\ref{theorem:underestimation}, the quotient of these losses converges to zero, underlining the suboptimality of cross-validation based on squared error loss even with respect to $L_2$-loss.

\subsection{Overestimation}\label{sec:overestimation}
We now introduce an example where cross-validation with least squares loss has a tendency to overestimate the number of change-points.

\begin{Example}[Overestimation]\label{example:overestmation}
As in Example~\ref{example:underestmation} let $Y \in \R^n$ be of the form in \eqref{eq:uni_mod} with $n$ even but $n/2$ odd. Let $K=1$, $\tau_1 = n / 2$, $\beta_0 = 0$ and $\beta_1 = \Delta_1$.
\end{Example}

In this example, $\tau^O_1 = (n/2 + 1)/2$ and $\mu^E_{\tau^O_1} - \mu^O_{\tau^O_1} =  \Delta_1$. Thus, if $\Delta_1$ is large, the cross-validation criterion will be heavily influenced by $Y^E_{\tau^O_1}$. More precisely, $\tau^O_1$ is estimated exactly with high probability if $\Delta_1$ is large. Thus on this event, the criterion contains the term
\begin{equation} \label{eq:over_est_term}
\left( Y^E_{\tau^O_1} - \overline{Y}^O_{\hat{\tau} :\tau^O_1}\right)^2 = \left( \varepsilon^E_{\tau^O_1} + \Delta_1 - \overline{\varepsilon}^O_{\hat{\tau} :\tau^O_1}\right)^2,
\end{equation}
where $\hat{\tau}$ denotes the last estimated change-point before $\tau^O_1$. If $L = K = 1$, $\hat{\tau} = 0$. However, if $L > K$, $\Pj\big(\hat{\tau} > 0\big)\geq \frac{1}{2}$ and $\Pj\big(\overline{\varepsilon}^O_{\hat{\tau} :\tau^O_1} > \overline{\varepsilon}^O_{0:\tau^O_1}\big) \geq \frac{1}{2}$. Thus, if $\Delta_1$ is large, the cross-term $\Delta_1\overline{\varepsilon}^O_{\hat{\tau} :\tau^O_1}$ in \eqref{eq:over_est_term} can outweigh the costs incurred by the additional change-points and cause overestimation. This is formalised in Theorem~\ref{theorem:overestimation}. Note that for this effect it is only important that $\tau_1$ is at an odd location; other choices of the setup are just made to simplify the analysis.

\begin{Theorem}
	\label{theorem:overestimation}
Let $Y$ be as in Example \ref{example:overestmation} and $\hat{K}$ be as in \eqref{eq:estimatedKgaussian} with $K_{\max} > 1$. Then, if
\begin{equation}\label{eq:conditionOverestimation}
	\frac{\Delta_1}{\sigma \sqrt{n\log\log{n}}} \to \infty,\quad \text{as } n\to \infty,
\end{equation}
we have
\begin{equation}\label{eq:statementOverestimation}
\liminf_{n \to \infty} \Pj\left( \hat{K} > 1 \right) > 0.
\end{equation}
\end{Theorem}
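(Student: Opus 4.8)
The plan is to exhibit an event of probability bounded away from zero on which $\operatorname{CV}_{(2)}(2) < \operatorname{CV}_{(2)}(1)$, while simultaneously ruling out $L=0$ as a minimiser; this suffices, since if $2 \le K_{\max}$ (which is needed for over-selection to be possible at all) and $\operatorname{CV}_{(2)}(2)$ beats both $\operatorname{CV}_{(2)}(1)$ and $\operatorname{CV}_{(2)}(0)$, then $\hat K \ge 2 > 1$. Ruling out $L=0$ is immediate: under \eqref{eq:conditionOverestimation} the single jump $\Delta_1$ satisfies $\Delta_1/(\sigma\sqrt n)\to\infty$, so leaving the step unexplained forces $\operatorname{CV}_{(2)}(0)$ to be of order $n\Delta_1^2$ and $\Pj(\hat K = 0)\to 0$. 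Hence it remains only to lower bound $\Pj\big(\operatorname{CV}_{(2)}(2) < \operatorname{CV}_{(2)}(1)\big)$.

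First I would condition on the high-probability event $\mathcal E$ that optimal partitioning recovers $\tau^O_1$ and $\tau^E_1$ exactly in both folds for $L\in\{1,2\}$. Since $\Delta_1\gg\sigma\sqrt{n\log\log n}\gg\sigma\sqrt{\log n}$, the localisation guarantee of Theorem~\ref{theorem:detectionPrecision} (valid whenever $L\ge K$) places every estimated change within less than one index of a true change, i.e.\ forces exact recovery, so $\Pj(\mathcal E)\to 1$; on $\mathcal E$ the extra change-point used when $L=2$ is genuinely spurious and lies in one of the two pure-noise blocks. On $\mathcal E$ I would decompose $\operatorname{CV}_{(2)}(2)-\operatorname{CV}_{(2)}(1)=D^{O}+D^{E}$ into the odd- and even-fold holdout differences, noting that in each direction only the block adjacent to the detected change is affected, the untouched segments cancelling. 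Expanding the squares exactly as in \eqref{eq:over_est_term}, the mismatch index $\tau^O_1$ contributes, to leading order, a \emph{benefit} $2\Delta_1\big(\overline{\varepsilon}^O_{1:\tau^O_1}-\overline{\varepsilon}^O_{\hat\tau:\tau^O_1}\big)\EINS\{\hat\tau\in(0,\tau^O_1)\}$, where $\hat\tau$ is the spurious odd change-point, while the remaining even-noise-against-odd-mean terms and the lower-order part of the expansion form a \emph{cost} $R^{O}=\OO_{\Pj}\big(\sigma^2\,\mathrm{polylog}(n)\big)$, and symmetrically for the even fold. The benefit scales with $\Delta_1$ and is negative precisely when the spurious change falls in the pre-change block with $\overline{\varepsilon}^O_{\hat\tau:\tau^O_1}>\overline{\varepsilon}^O_{1:\tau^O_1}$.

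Next I would introduce the favourable event $\mathcal F$ on which, in the odd fold, the spurious change lies in the pre-change block with the favourable sign and, in the independent even fold, the spurious change lies in the post-change block with the analogous sign, so that \emph{both} benefit terms are negative and each of magnitude at least of order $\Delta_1\sigma\,a_n$ for an appropriate fluctuation scale $a_n$. By the sign symmetry of the Gaussian noise and the near-equality of the two block lengths ($\approx n/4$ each), each marginal placement-and-sign pair has probability at least $\tfrac12-o(1)$, and by independence of the folds $\liminf_n\Pj(\mathcal F)>0$. Requiring \emph{both} folds favourable, rather than just one, is what prevents the other direction's (potentially positive) mismatch term from swamping the favourable one. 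On $\mathcal F$ each benefit dominates its own cost as soon as $\Delta_1 a_n\gg\sigma\,\mathrm{polylog}(n)$, and under \eqref{eq:conditionOverestimation} this holds, giving $D^{O}+D^{E}<0$ and hence $\hat K>1$ with probability bounded below.

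The hard part will be the magnitude control hidden in $\mathcal F$: $\hat\tau$ is not placed by hand but chosen by optimal partitioning to maximise the training CUSUM gain, so both its location and the induced segment mean $\overline{\varepsilon}^O_{\hat\tau:\tau^O_1}$ are random and must be quantified. The natural tool is a law-of-the-iterated-logarithm / Darling--Erd\H{o}s description of the maximally selected CUSUM over a pure-noise block, which caps the cost (the maximal gain is $\OO_{\Pj}(\sigma^2\log\log n)$) and, because the holdout benefit is positively correlated with that same gain, yields a matching lower bound for the benefit on a positive-probability sub-event. Two subtleties make this delicate: optimal partitioning optimises the \emph{training} residual sum of squares and not the holdout criterion, so the benefit lower bound must be transferred from the selected split's segment-mean separation rather than obtained by choosing a convenient split; and the benefit-versus-cost comparison must hold \emph{uniformly} over the random split location, since a split near the far end of the block would yield a negligible benefit. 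The generous scaling $\Delta_1\gg\sigma\sqrt{n\log\log n}$ in \eqref{eq:conditionOverestimation} is precisely what supplies the slack to reconcile the benefit and cost scales — both ultimately governed by the maximally selected gain — across all placements that can arise.
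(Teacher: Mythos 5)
Your proposal is correct and follows essentially the same route as the paper's proof: exact recovery of the true change in both folds, the cross term $2\Delta_1\big(\overline{\varepsilon}^O_{\hat\tau:\tau^O_1}-\overline{\varepsilon}^O_{1:\tau^O_1}\big)$ made favourable on a positive-probability event via placement and sign symmetry of the spurious split, law-of-the-iterated-logarithm control of the maximally selected CUSUM to balance benefit against the $\mathcal{O}_\Pj(\sigma^2\log\log n)$ cost, and independence of the two folds to multiply the marginal probabilities. The one quantitative quibble is that the per-fold favourable probability is not $\tfrac12-o(1)$ but the product of three symmetry events (placement in the correct block, sign of the holdout mean, and relative magnitude of the two CUSUM pieces), each contributing a factor $\tfrac12$, which is why the paper lands at $1/64$ rather than $1/4$ --- this does not affect the $\liminf>0$ conclusion.
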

The simulations in Section~\ref{sec:simulationOverestimation} in the supplementary material suggest that the probability in \eqref{eq:statementOverestimation} can be roughly $2/3$ when the signal-to-noise ratio is large. In these settings we also see that the probability drops to less than $9\%$ when the change-point is moved by a single design point to an even location, illustrating that the single point is solely responsible for the unfavourable behaviour. Similarly to Example~\ref{example:underestmation}, for this setup there exist procedures that correctly estimate the number of change-points to be $1$ with probability converging to $1$. In particular, our modified squared error loss criterion, which we describe in the next section, would achieve this; see Theorem~\ref{thm:consistency}.

\section{New cross-validation criteria}\label{sec:methodology}
We now introduce two new cross-validation criteria which circumvent the issues of squared error loss presented in the previous section. In Section~\ref{sec:cvL1} we  motivate the use of absolute errors and in Section~\ref{sec:cvL2} we propose a  modified criterion with squared errors. For the latter we prove theoretical guarantees on consistent estimation of the number of change-points in a more general multivariate model introduced in Section~\ref{sec:model}.

\subsection{Cross-validation with absolute error loss}\label{sec:cvL1}
We have seen how, when standard squared error loss is used, a single point following a large change-point can dominate the cross-validation criterion. 
For instance in Example~\ref{example:underestmation}, $Y^E_{\tau^E_2 + 1} = Y^E_{\tau^O_2}$ contributes $e_L^2$ to $\operatorname{CV}_{(2)}(L)$ for $L=1,2$, where
\[
e_1^2 \approx \bigg(\varepsilon^E_{\tau^O_2} + \Delta_2 - \frac{n - 2\underline{\lambda}}{n}\Delta_1\bigg)^2 \qquad \text{and} \qquad e_2^2 \approx \Big(\varepsilon^E_{\tau^O_2} + \Delta_2\Big)^2;
\]
see Figure~\ref{fig:example}. If $\Delta_2$ is large, the difference $e_2^2-e_1^2$ will be large, resulting in cross-validation erroneously favouring $\hat{K}=1$ change-points.

Consider now cross-validation with absolute error loss:
\begin{equation}\label{eq:cvL1}
		\operatorname{CV}_{(1)}(L) := 
		\sum_{l = 0}^{L}\sum_{i = \hat{\tau}^O_{L,l} + 1}^{\hat{\tau}^O_{L,l + 1}}{ \left\vert Y^E_i - \overline{Y}^O_{\hat{\tau}^O_{L,l} :\hat{\tau}^O_{L,l + 1}}\right\vert }
		+ \sum_{l = 0}^{L}\sum_{i = \hat{\tau}^E_{L,l} + 1}^{\hat{\tau}^E_{L,l + 1}}{ \left\vert Y^O_i - \overline{Y}^E_{\hat{\tau}^E_{L,l} :\hat{\tau}^E_{L,l + 1}}\right\vert }.
\end{equation}
Then the difference of the contribution of $Y^E_{\tau^O_2}$ to the criterion will be $|e_2| - |e_1|$, which when $\Delta_2$ is large will, with high probability, simply be
\[
e_2 - e_1 \approx \frac{n - 2\underline{\lambda}}{n}\Delta_1.
\]
The quantity on the right-hand side importantly does not feature $\Delta_2$, and is of the same order as the additional loss incurred at other points. Specifically, considering a noiseless version of Example~\ref{example:underestmation} as in \eqref{eq:noiseless}, we have
\begin{align*}
	\operatorname{CV}_{(1)}(1) - \operatorname{CV}_{(1)}(2) &\approx \frac{2\underline{\lambda}\Delta_1}{n}\left(\frac{n}{2} - \underline{\lambda}\right) +\underline{\lambda}\Delta_1 \frac{n-2\underline{\lambda}}{n} +\Delta_2 + \left(\Delta_2 - \Delta_1\frac{n-2\underline{\lambda}}{n}\right) - 2\Delta_2 \\
	&= \Delta_1\left((\underline{\lambda}-1)\frac{n-2\underline{\lambda}}{n} + \left(\frac{n}{2} - \underline{\lambda}\right)\frac{2\underline{\lambda}}{n}\right) > 0.
\end{align*}
We show in Section~\ref{sec:simulations} through numerical experiments that a 5-fold version of $\operatorname{CV}_{(1)}$ \eqref{eq:cvL1} works well in general, and can substantially outperform other methods in the realistic scenario where models have been misspecified. We therefore recommend this as a reasonable default option for cross-validation in change-point regression. In Appendix~\ref{sec:generalisedProcedure} we extend \eqref{eq:cvL1} to multivariate settings and general $V$-fold cross-validation.

\subsection{Modified cross-validation with squared error loss}\label{sec:cvL2}
A second approach to addressing the issues of standard cross-validation with squared error loss is to remove those observations that have the potential to dominate the criterion as follows:
\begin{equation}\label{eq:cvModifiedL2}
\begin{split}
\operatorname{CV}_{\mathrm{mod}}(L) := 
&\sum_{l = 0}^{L}\frac{\hat{\tau}^O_{L,l + 1} - \hat{\tau}^O_{L,l}}{\hat{\tau}^O_{L,l + 1} - \hat{\tau}^O_{L,l} - 1}\sum_{i = \hat{\tau}^O_{L,l} + 1}^{\hat{\tau}^O_{L,l + 1} - 1}{ \left(Y^E_i - \overline{Y}^O_{\hat{\tau}^O_{L,l} :\hat{\tau}^O_{L,l + 1}}\right)^2}\\
 +&\sum_{l = 0}^{L}\frac{\hat{\tau}^E_{L,l + 1} - \hat{\tau}^E_{L,l}}{\hat{\tau}^E_{L,l + 1} - \hat{\tau}^E_{L,l} - 1} \sum_{i = \hat{\tau}^E_{L,l} + 2}^{\hat{\tau}^E_{L,l + 1}}{ \left(Y^O_i - \overline{Y}^E_{\hat{\tau}^E_{L,l} :\hat{\tau}^E_{L,l + 1}}\right)^2}.
\end{split}
\end{equation}
Observe that compared to \eqref{eq:cvL2}, the squared error terms involving $Y^E_{\hat{\tau}^O_{L,l + 1}}$ and $Y^O_{\hat{\tau}^E_{L,l} + 1}$, $l = 1,\ldots,L$ have been removed. Out-of-sample prediction at these points necessarily involves extrapolation and thus large change-points immediately following or preceding these can be problematic. To compensate for the removal of these observations, we rescale the sums of squared error terms so that each estimated constant segment gives a contribution proportional to its length in the case where all change-points have been estimated correctly. We thus require that $\hat{\tau}^O_{L,l + 1} - \hat{\tau}^O_{L,l} \geq 2$ and  $\hat{\tau}^E_{L,l + 1} - \hat{\tau}^E_{L,l} \geq 2\ \forall\ l = 0,\ldots,L$.

In Theorem~\ref{thm:consistency} below, we show that this criterion does indeed deliver consistent estimation of the number of change-points and our numerical results in Section~\ref{sec:simulations} confirm its good performance in practice. We consider a generalisation of the setting of Section~\ref{sec:gaussianSetting} where we relax the Gaussian assumption on the errors and instead require the $\varepsilon_i$ to be independent (possibly with different distributions) and sub-Gaussian with variance proxy $\sigma^2$, so $\max_i \E[\exp(s\varepsilon_i)]\leq \exp(s^2\sigma^2/2)$ for all $s \in \R$. 
We assume additionally that all observations on the same segment have the same variance, i.e.\ $\Var[\varepsilon_i] = \Var[\varepsilon_j]$ if there exists a $k$ such that $\tau_k < i < j \leq \tau_{k+1}$, and writing $\underline{\sigma}^2 := \min_{i=1,\ldots,n}\Var[\varepsilon_i]$ we have $\limsup_{n \to \infty}\sigma / \underline{\sigma} < \infty$.

In order to present our result, we introduce some further notation. We denote by $\Delta_k := \vert \beta_k - \beta_{k - 1}\vert$ the size of the $k$th change-point and by $\Delta_{(k)}$ the $k$th smallest order statistic of $\Delta_1,\ldots,\Delta_K$, i.e.\ $\{\Delta_1, \ldots, \Delta_K\} = \{\Delta_{(1)}, \ldots, \Delta_{(K)}\}$ and $\Delta_{(1)} \leq \cdots \leq \Delta_{(K)}$. Further, we denote by $\underline{\lambda}$ and $\overline{\lambda}$ the minimal and maximal distances respectively between two change-points, i.e.\
\begin{equation*}
	\underline{\lambda} := \min_{k = 0,\ldots,K}{ \tau_{k + 1} - \tau_{k}} \text{ and } \overline{\lambda} := \max_{k = 0,\ldots,K}{ \tau_{k + 1} - \tau_{k}}.
\end{equation*}

We require the following bounds on the speed at which the number of change-points $K$ and its upper bound $K_{\max}$ are permitted to increase.
\begin{Assumption}[Number of Change-points]\label{assumption:cpNumberMultivariate}
\begin{enumerate}[label=(\roman*)]
\item \label{assumption:cpNumberMultivariateKmax}
$K_{\max} \geq K$ for all $n$ sufficiently large, and $K$ is non-decreasing.
\item \label{assumption:cpNumberBound}
$K = o(\underline{\lambda})$ and $K (\log(K\vee 1))^2 = o(\log\log\overline{\lambda})$.
\item \label{assumption:cpNumberKmaxBound} $(K_{\max} \log K_{\max})^{1/2} = o(\log\log\overline{\lambda})$.
\end{enumerate}
\end{Assumption}
Condition \ref{assumption:cpNumberMultivariateKmax} ensures that the true number of change-points is considered by the method. This is not too difficult to satisfy as comparing \ref{assumption:cpNumberBound} and \ref{assumption:cpNumberKmaxBound} shows that $K_{\max}$ can increase more than quadratically in $K$. Condition \ref{assumption:cpNumberBound} however restricts the growth of $K$ quite substantially: we have $\overline{\lambda} \geq n / K$  and hence $K (\log(K\vee 1))^2 = o(\log\log n)$. This condition is sufficient to ensure that the cost of adding a false positive, which is of order $\sigma^2 \log\log\overline{\lambda}$, dominates the cost of misestimating change-point locations when $L=K$. On the other hand, \ref{assumption:cpNumberKmaxBound} ensures that the former also dominates the variance of the cross-validation error criterion for all $K \leq L \leq K_{\max}$. Relative to other results on (non-cross-validation based) change-point approaches, the condition on the growth of $K$ is quite restrictive \citep{frick2014multiscale,Garreau2018Consistent,verzelen2020optimal}.
	We remark however that \ref{assumption:cpNumberBound} and \ref{assumption:cpNumberKmaxBound} are likely artefacts of our proof strategy, which considers worst case scenarios for each estimated change-point location, rather than a fundamental limitation of cross-validation. In practice, $K_{\max}$ does not have to be fixed in advance. Calculating $\operatorname{CV}_{(1)}(L)$ for increasing $L$ and stopping once a clear local minimum is found is possible and is the approach we use in our numerical results; further details are given in Section~\ref{sec:adaptiveKmax}.

\begin{Theorem}\label{thm:consistency} 
Suppose Assumption~\ref{assumption:cpNumberMultivariate} holds, and in the case where $K > 0$ eventually,
\begin{equation}\label{eq:conditionConsistency}
\liminf_{n\to \infty}{\frac{\underline{\lambda} \Delta_{(1)}^2}{K\sigma^2 (\log \overline{\lambda})^2}} = \infty \;\;\text{ and }\;\; K\log\log\left((\log(K) \vee 1) \frac{\sigma^2}{\Delta_{(1)}^2} \vee e \right) = o(\log\log \overline{\lambda}).
\end{equation}
Then we have
\begin{align}
\Pj\left( \hat{K} = K \right) &\to 1, \text{ as } n\to \infty, \notag \\
\int_0^1{ \left(\hat{f}(t) - f(t)\right)^2 dt}
& = \mathcal{O}_\Pj\left( n^{-1}\sigma^2 \left(\log(K) \vee 1\right) \log\log\overline{\lambda}\right). \label{eq:L2riskModifiedCriterion}
\end{align}
\end{Theorem}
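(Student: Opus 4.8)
The plan is to prove that, with probability tending to one, $L\mapsto\operatorname{CV}_{\mathrm{mod}}(L)$ is minimised at $L=K$ over $\{0,\ldots,K_{\max}\}$, and to deduce the $L_2$ bound by working on this event. The central input is Theorem~\ref{theorem:detectionPrecision}, applied to each of the two folds: it guarantees that, simultaneously for all $K\le L\le K_{\max}$, every true change-point is matched by an estimated one at a distance $\lesssim\sigma^2\log(\overline\lambda)/\Delta_{(1)}^2$. Consequently any segmentation with $L\ge K$ decomposes into short transition regions straddling true changes and long regions on which $\mu$ is constant, the latter carrying at most $L-K$ spurious change-points. Discarding the boundary observations in \eqref{eq:cvModifiedL2} is precisely what removes the transition regions' extrapolation bias responsible for the failures of Section~\ref{sec:overestimation}, while the rescaling factors are $1+\mathcal{O}(1/\underline\lambda)$ and do not affect leading order. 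Throughout I expand each summand as $Y^E_i-\overline{Y}^O=(\mu^E_i-\overline{\mu}^O)+(\varepsilon^E_i-\overline{\varepsilon}^O)$ into bias, cross and pure-noise parts, controlling the last two uniformly by sub-Gaussian maximal inequalities of law-of-the-iterated-logarithm type over the $\asymp\overline\lambda$ candidate split locations.

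Underestimation ($L<K$) is the easier direction. Every such segmentation leaves at least one true change of size $\ge\Delta_{(1)}$ inside a constant-fit region of length $\ge\underline\lambda$, so that $\operatorname{CV}_{\mathrm{mod}}(L)-\operatorname{CV}_{\mathrm{mod}}(K)$ acquires a bias contribution $\gtrsim\underline\lambda\Delta_{(1)}^2$ (the bulk noise cancels in the difference). The first half of \eqref{eq:conditionConsistency}, $\underline\lambda\Delta_{(1)}^2\gg K\sigma^2(\log\overline\lambda)^2$, lets this bias dominate the cross and noise fluctuations of the difference after a union bound over the $L<K$ segmentations, so $\Pj(\hat K<K)\to0$.

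Ruling out overestimation ($L>K$) is the crux and the main obstacle, since here the bias cancels between $\operatorname{CV}_{\mathrm{mod}}(L)$ and $\operatorname{CV}_{\mathrm{mod}}(K)$ while optimal partitioning chooses the extra splits adversarially so as to minimise the in-sample residual sum of squares on each fold. The key is a direct expansion on a constant region $(a,b]$ split at $t$: writing $n_L=t-a$, $n_R=b-t$, $n_{LR}=b-a$, the increment to the even-fold part of the criterion equals the in-sample odd gain
\[
g=\frac{n_L n_R}{n_{LR}}\bigl(\overline{\varepsilon}^{O}_{a:t}-\overline{\varepsilon}^{O}_{t:b}\bigr)^2
\]
plus a remainder $R$ that is linear in the even-fold noise with conditional mean zero given the odd fold. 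Thus the out-of-sample cost of a spurious split is exactly the in-sample reduction it extracts from the complementary fold, so overfitting one fold is automatically penalised on the other. By the law of the iterated logarithm $g\asymp\sigma^2\log\log\overline\lambda$ per spurious split, giving a positive drift $\gtrsim(L-K)\sigma^2\log\log\overline\lambda$; the remainder, collecting the conditionally mean-zero cross terms over spurious splits together with the fluctuations from the slightly different estimated locations of the $K$ true changes under $L$ versus $K$, is sub-Gaussian at scale $\sigma^2(K_{\max}\log K_{\max})^{1/2}$ uniformly over $K<L\le K_{\max}$ after the union bound. Assumption~\ref{assumption:cpNumberMultivariate}\eqref{assumption:cpNumberKmaxBound} forces this scale to be $o(\log\log\overline\lambda)$, hence of smaller order than the drift, and the second part of \eqref{eq:conditionConsistency} caps the aggregate localisation bias so that it cannot mask the drift; therefore $\Pj(\hat K>K)\to0$. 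The delicate point is exactly this calibration: ensuring that the per-split gain dominates its own square-root-scale fluctuation uniformly across a number of models and locations that are both allowed to diverge.

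Combining the two directions via a union bound gives $\Pj(\hat K=K)\to1$. On this event $\hat f$ is produced by optimal partitioning with the correct number of changes on the full sample, so Theorem~\ref{theorem:detectionPrecision} with $L=K$ applies directly. Its localisation and segment-estimation guarantees bound $\int_0^1(\hat f-f)^2\,dt$ by an aggregate change-point localisation term and a segment-mean term; controlling the relevant maxima by sub-Gaussian concentration and invoking Assumption~\ref{assumption:cpNumberMultivariate} (under which $K\ll\log\log\overline\lambda$) yields the stated $\mathcal{O}_{\Pj}\bigl(n^{-1}\sigma^2(\log K\vee1)\log\log\overline\lambda\bigr)$. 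Since $\Pj(\hat K=K)\to1$, this bound holds unconditionally in probability.
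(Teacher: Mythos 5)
Your proposal is correct and follows essentially the same route as the paper: Theorem~\ref{theorem:detectionPrecision} supplies the uniform localisation, underestimation is excluded by the $\underline{\lambda}\Delta_{(1)}^2$ bias term dominating under the first part of \eqref{eq:conditionConsistency}, overestimation is excluded because the in-sample noise gain of a spurious split on one fold reappears as an out-of-sample drift of order $\sigma^2\log\log\overline{\lambda}$ that dominates the conditionally mean-zero remainder of order $\sigma^2(K_{\max}\log K_{\max})^{1/2}$, and the $L_2$ bound follows on the event $\hat{K}=K$ from the $L=K$ case of Theorem~\ref{theorem:detectionPrecision}. The only difference is organisational: the paper factors the selection-consistency argument through the general Theorem~\ref{theorem:positiveResultRescaledCV} and then verifies its abstract assumptions for optimal partitioning (via Theorem~\ref{theorem:detectionPrecision} and Lemma~\ref{lemma:overestimation}), whereas you argue directly.
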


Theorem~\ref{thm:consistency} shows that $\operatorname{CV}_{\mathrm{mod}}$ \eqref{eq:cvRescaled}, used with least squares estimation, estimates the number of change-points consistently, under mild conditions.
%It is a consequence of our general result on the consistency of $\operatorname{CV}_{\mathrm{mod}}$ (Theorem~\ref{theorem:positiveResultRescaledCV}) which requires certain estimation error properties of the underlying change-point procedure, and Theorem~\ref{theorem:detectionPrecision} which bounds the estimation error of optimal partitioning.
The first condition in \eqref{eq:conditionConsistency} is slightly stronger than the minimax rate for change-points to be detectable, which is $\liminf_{n\to \infty} \underline{\lambda} \Delta_{(1)}^2 /(\sigma^2 \log \overline{\lambda}) = \infty$; see \citet[Equation~(2.2)]{chan2013detection} and \citet{dumbgen2001multiscale, frick2014multiscale, fryzlewicz2014wild} for related discussions. This first condition may be an artefact of our proof strategy and could perhaps be relaxed. The second condition in \eqref{eq:conditionConsistency} ensures that change-point locations are estimated accurately enough such that the costs caused by misestimating change-point locations are smaller than the costs of adding a false positive.

To the best of our knowledge, existing $L_2$-error bounds for other estimators are $\mathcal{O}_{\Pj}\big(n^{-1}\sigma^2\log(n)\big)$ or worse \citep[Remarks~4~and~10]{lin2016approximate}. In comparison, our bound is only a factor $K(\log(K) \vee 1) \log\log\overline{\lambda} \leq  K(\log(K)\vee 1) \log\log n$ larger than the lower bound in \citet[Theorem 1(ii)]{li2019multiscale}. As a side effect we also see that the same bound holds for least squares estimation given the correct number of change-points.

We note that if the second condition in \eqref{eq:conditionConsistency} does not hold, but if instead the left hand side is $o(\log(\overline{\lambda}))$, then we may conclude $\Pj\left( \hat{K} = K \text{ or } \hat{K} = K + 1 \right) \to 1, \text{ as } n\to \infty$, i.e.\ the number change-points will be over-estimated by at most one change-point.

\subsection{General multivariate model}\label{sec:model}
Our result on the consistency of $\operatorname{CV}_{\mathrm{mod}}$ used with least squares estimation is based on a general result placing conditions on an arbitrary estimation procedure that yield consistency, which we now present. We consider a multivariate parametric change-point regression model with potentially non sub-Gaussian errors. We build on earlier notation but with estimated change-point sets $\hat{\mathcal{T}}^O_L$ and $\hat{\mathcal{T}}^E_L$ \eqref{eq:estimatedCpO} now not necessarily estimated by least squares estimation. 

We consider a multivariate version of $\operatorname{CV}_{\mathrm{mod}}$:
\begin{equation}\label{eq:cvRescaled}
\begin{split}
\operatorname{CV}_{\mathrm{mod}}(L) := 
&\sum_{l = 0}^{L}\sum_{i = \hat{\tau}^O_{L,l} + 1}^{\hat{\tau}^O_{L,l + 1} - 1}{ \frac{\hat{\tau}^O_{L,l + 1} - \hat{\tau}^O_{L,l}}{\hat{\tau}^O_{L,l + 1} - \hat{\tau}^O_{L,l} - 1} \left\|Y^E_i - \overline{Y}^O_{\hat{\tau}^O_{L,l} :\hat{\tau}^O_{L,l + 1}}\right\|_2^2}\\
 + &\sum_{l = 0}^{L}\sum_{i = \hat{\tau}^E_{L,l} + 2}^{\hat{\tau}^E_{L,l + 1}}{ \frac{\hat{\tau}^E_{L,l + 1} - \hat{\tau}^E_{L,l}}{\hat{\tau}^E_{L,l + 1} - \hat{\tau}^E_{L,l} - 1} \left\|Y^O_i - \overline{Y}^E_{\hat{\tau}^E_{L,l} :\hat{\tau}^E_{L,l + 1}}\right\|_2^2},
\end{split}
\end{equation}
where here $Y_i^E, Y_i^0 \in \R^d$ for all $i$. We first establish some additional notation in order to state our result. Let $\Sigma_k := \Cov\left[ Y_i \right]$, for $i = \tau_k + 1,\ldots,\tau_{k+1},\ k = 0,\ldots,K$ and let $\overline{\sigma}(\Sigma_k)^2$ be the maximum eigenvalue of $\Sigma_k$. Further let $\overline{\sigma}^2 := \max_{k=0, \ldots,K} \overline{\sigma}(\Sigma_k)^2$. Finally, for any set of candidate change-points $\mathcal{U}=\{t_0 < t_1 < \cdots <t_M < t_{M + 1}\}$  and any collection of vectors $X = (X_1,\ldots,X_{t_{M + 1}}) \in \R^{d \times t_{M+1}}$ we use the notation
\begin{equation} \label{eq:S_notation}
S_X(\mathcal{U}) := \sum_{k = 0}^{M}\sum_{i = t_k + 1}^{t_{k+1}}{\left\| X_i - \overline{X}_{t_k :t_{k+1}}\right\|_2^2}.
\end{equation}

We now introduce assumptions under which consistent estimation of $K$ holds. In addition to Assumption~\ref{assumption:cpNumberMultivariate}, we require the following assumptions for consistent estimation of $K$. These parallel assumptions required for \citet[Theorems~1~and~2]{zou2020consistent}, but are in places stronger as they guarantee consistency even when $K$ is allowed to increase; see the discussion before and after \eqref{eq:whatZouetalHaveShown} in Section~\ref{sec:proof:RescaledCV} in the supplementary material.

We require the following uniform Bernstein condition on the errors.
\begin{Assumption}[Noise]\label{assumption:NoiseBernstein}
The covariance matrices $\Sigma_k$ are positive-definite and there exists a constant $c>0$ such that
\begin{equation*}
\limsup_{n \to \infty}\max_{k = 0,\ldots,K} \max_{\tau_k < i \leq \tau_{k+1}} \E\left[ \left\| \Sigma_k^{-1/2} \varepsilon_{i} \right\|_2^{q} \right] \leq  \frac{q!}{2} c^{q - 2}\quad \forall\ q \geq 3.
\end{equation*} 
\end{Assumption}

The next assumption ensures precise estimation of the change-point locations. In the special case where $K = 0$ for all $n$, this and Assumption~\ref{assumption:minimumSignalMultivariate} are not required.
\begin{Assumption}[Estimation precision]\label{assumption:detectionPrecision} We state the below in terms of the odd sequence, but require analogous properties with all instances of $O$ replaced by $E$.
%Let $Q := K_{\max} - K$.
There exist triangular arrays $(\delta_{q,k})_{k=1}^K$ for $q=0,1$ and 
%a sequence of matrices $(\delta_{q,k}) \in \mathbb{Z}_{\geq 0}^{(Q+1) \times K}$ where $q=0,\ldots,Q$ and $k=1,\ldots,K$,
a positive sequence $(C_n)$ with $C_n \to \infty$ such that the following are satisfied:
\begin{enumerate}[label=(\roman*)]
\item \label{assumption:detectionPrecision:L>=K} $\Pj\Big(|\hat{\tau}^O_{K,k} - \tau^O_k| \leq \delta_{0,k} \wedge \Big( \frac{\underline{\lambda}}{2}-1\Big), \; k=1,\ldots,K\Big) \to 1 $ and
\begin{equation*}
\begin{split}
	& \Pj\Big(\forall\ K < L \leq K_{\max},\,\, \exists\, \hat{\tau}^O_{L,i_1},\ldots, \hat{\tau}^O_{L,i_{K}} \in \hat{\mathcal{T}}^O_L\, :\, \left\vert \hat{\tau}^O_{L,i_k} - \tau^O_{k} \right\vert \leq \delta_{1,k} \wedge \Big(\frac{\underline{\lambda}}{2}-1\Big),\ k = 1,\ldots,K \Big)\\
	& \to 1,
	\end{split}
\end{equation*}
%\begin{equation*}
%\Pj\Big(\forall\ K \leq L \leq K_{\max},\,\, \exists\, \hat{\tau}^O_{L,i_1},\ldots, \hat{\tau}^O_{L,i_{K}} \in \hat{\mathcal{T}}^O_L\, :\, \left\vert \hat{\tau}^O_{L,i_k} - \tau^O_{k} \right\vert \leq \delta_{L-K,k} \wedge \Big(\frac{\underline{\lambda}}{2}-1\Big),\ k = 1,\ldots,K \Big) \to 1,
%\end{equation*}

\item \label{assumption:detectionPrecision:Rate}
$\max_{q=0,1;\ 1\leq k \leq K }  K \log\log{ (\delta_{q,k} \vee e)} = o(\log\log\overline{\lambda})$,
\item \label{assumption:detectionPrecision:Infq0} $\sum_{k = 1}^{K} \delta_{0, k} \Delta_k^2 = o( \overline{\sigma}^2\log\log \overline{\lambda})$,
%for each $k=1,\ldots,K$, if $\log(eK) \overline{\sigma}^2 / \Delta_k^2 \leq C_n$, then $\delta_{q,k} = 0$ for all $q =0,\ldots,Q$,
\item \label{assumption:detectionPrecision:Infq>0}
writing $\mathcal{K}_n := \{k : \overline{\sigma}^2 \log\log \overline{\lambda} / (K \Delta_k^2) \leq C_n\}$, we have that for all $k \in \mathcal{K}_n$,
%for each $k=1,\ldots,K$, if $\overline{\sigma}^2 \log\log \overline{\lambda} / (K \Delta_k^2) \leq C_n$, then
 $\delta_{0,k} =\delta_{1,k}= 0$,
\item \label{assumption:detectionPrecision:L<K} %writing $\mathcal{K}_n := \{k : \log(eK) \overline{\sigma}^2 / \Delta_k^2 > C_n\}$, we have that
%writing $\mathcal{K}_n := \{k : \overline{\sigma}^2 \log\log \overline{\lambda} / (K \Delta_k^2) \leq C_n\}$, we have that
for a constant $C > 0$ a constant and writing $\overline{\mu}^O_{L,i} := \sum_{l = 0}^{L}{\EINS_{\{\hat{\tau}^O_{L,l} + 1 \leq i \leq \hat{\tau}^O_{L,l + 1}\}} \overline{\mu}_{\hat{\tau}^O_{L,l} :\hat{\tau}^O_{L,l + 1}}}$, we have
\begin{equation*}
	\Pj\Bigg(\forall\, L < K,\, \forall\, k \in \mathcal{K}_n, \sum_{i = \tau^O_{k} - \floor{\underline{\lambda}/4} + 1}^{\tau^O_{k} + \floor{\underline{\lambda}/4}}{\big\|\mu^O_i - \overline{\mu}^O_{L,i}\big\|_2^2} \geq  C\underline{\lambda}\Delta_k^2 \text{ or } \exists\, \hat{\tau} \in \hat{\mathcal{T}}^O_L\, :\, \hat{\tau} = \tau_{k} \Bigg) \to 1.
\end{equation*}
%as $n\to\infty$, with $C > 0$ a constant and $\overline{\mu}^O_{L,i} := \sum_{l = 0}^{L}{\EINS_{\{\hat{\tau}^O_{L,l} + 1 \leq i \leq \hat{\tau}^O_{L,l + 1}\}} \overline{\mu}_{\hat{\tau}^O_{L,l} :\hat{\tau}^O_{L,l + 1}}}$.
\end{enumerate}
\end{Assumption}
Assumption \ref{assumption:detectionPrecision:L>=K} in particular states that the misestimation in the cases where $L$ is the true number of change-points, and when $L$ is too large, is uniformly bounded by triangular arrays $(\delta_{0,k})$ and $(\delta_{1,k})$ respectively. The size of those values are limited by \ref{assumption:detectionPrecision:Rate}--\ref{assumption:detectionPrecision:Infq>0} to ensure that the influence of misestimation on the cross-validation criterion is under control. While \ref{assumption:detectionPrecision:Rate} ensures that the errors are generally small, \ref{assumption:detectionPrecision:Infq0}~and~\ref{assumption:detectionPrecision:Infq>0} bound the errors of large changes when $q = 0$ and $q \geq 0$, respectively. Part \ref{assumption:detectionPrecision:L<K} deals with the case where $L < K$ and states that for large changes either the change-point is estimated precisely or the mean estimate is poor. Theorem~\ref{theorem:detectionPrecision} shows that all of these conditions are satisfied in the case of least squares estimation as long as changes are large enough; see Section~\ref{sec:optimalPartitioning} and the proof of Theorem~\ref{thm:consistency}.

Next, we have to assume that the costs of over-fitting are at least of order $\overline{\sigma}^2 \log\log\overline{\lambda}$; see the discussion after Assumption~\ref{assumption:cpNumberMultivariate}. One can show that this is satisfied by least squares estimation; see for instance Lemma~\ref{lemma:overestimation} in the supplementary material which relies on \citet[Theorem~2]{zou2020consistent}.
\begin{Assumption}[Over-fitting]\label{assumption:overestimation}
For all $M>0$, 
\begin{equation*}
	\Pj \left(\frac{\min_{L = K + 1,\ldots, K_{\max}}\left\{S_{\varepsilon^O}\left(\mathcal{T}^{O}_{K}\right) - S_{\varepsilon^O}\left(\hat{\mathcal{T}}^O_L\cup \mathcal{T}^{O}_{K}\right)\right\}}{\overline{\sigma}^2 \log\log\overline{\lambda}} < M \right)  \to 0,
\end{equation*}
and as above but with all instances of $O$ replaced by $E$.
\end{Assumption}

Finally, we assume that all change-points are sufficiently large; see also the discussion following Theorem~\ref{thm:consistency}.
\begin{Assumption}[Minimum Signal]\label{assumption:minimumSignalMultivariate}
The minimum jump size $\Delta_{(1)}$ satisfies
\begin{equation*}
\frac{\underline{\lambda} \Delta_{(1)}^2}{K \overline{\sigma}^2 (\log \overline{\lambda})^2} \to \infty, \quad \text{as } n\to\infty.
\end{equation*}
\end{Assumption}

We may now state our general result on consistency of our modified cross-validation $\operatorname{CV}_{\mathrm{mod}}$.
\begin{Theorem}[Consistency]\label{theorem:positiveResultRescaledCV}
Suppose that Assumptions \ref{assumption:cpNumberMultivariate}--\ref{assumption:minimumSignalMultivariate} hold. Then,
\begin{equation*}
\Pj\left( \hat{K} = K \right) \to 1, \text{ as } n\to \infty.
\end{equation*}
\end{Theorem}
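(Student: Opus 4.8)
The plan is to establish $\Pj(\hat K=K)\to1$ through the usual two-sided comparison for the $\argmin$ estimator $\hat K=\argmin_{0\le L\le K_{\max}}\operatorname{CV}_{\mathrm{mod}}(L)$: it suffices to show that, with probability tending to one, $\operatorname{CV}_{\mathrm{mod}}(L)>\operatorname{CV}_{\mathrm{mod}}(K)$ simultaneously for every $L\neq K$, which I would split into the under-selection regime $L<K$ and the over-selection regime $K<L\le K_{\max}$. By the $O\leftrightarrow E$ symmetry of \eqref{eq:cvRescaled} it is enough to analyse one fold (predicting the even observations from the odd ones) and add the mirror contribution at the end. The workhorse is a decomposition of each segment's contribution obtained by writing $Y^E_i=\mu^E_i+\varepsilon^E_i$ and $\overline{Y}^O_{\hat\tau^O_{L,l}:\hat\tau^O_{L,l+1}}=\overline{\mu}^O_{(l)}+\overline{\varepsilon}^O_{(l)}$ and expanding $\|Y^E_i-\overline{Y}^O_{(l)}\|_2^2$ into a deterministic \emph{bias} term $\sum_i\|\mu^E_i-\overline{\mu}^O_{(l)}\|_2^2$; a \emph{baseline noise} term $\sum_i\|\varepsilon^E_i\|_2^2$ that is common to all $L$ up to the excluded endpoints and cancels in differences; an \emph{over-fitting variance} term $\sum_l(\text{length})\,\|\overline{\varepsilon}^O_{(l)}\|_2^2$ that grows with the number of segments; and mean-zero \emph{cross terms} pairing $\varepsilon^E_i$ with $\overline{\varepsilon}^O_{(l)}$ and pairing the residual signal $\mu^E_i-\overline{\mu}^O_{(l)}$ with the noise. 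The rescaling factors and endpoint removals in \eqref{eq:cvRescaled} are carried through this expansion unchanged; they are engineered so that, at $L=K$ with well-recovered change-points, the variance term is an essentially unbiased estimate of the noise level.

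At the reference point $L=K$, Assumption~\ref{assumption:detectionPrecision}\eqref{assumption:detectionPrecision:L>=K} (with $q=0$) places each estimated change-point within $\delta_{0,k}$ of the truth, so the bias term is controlled via Assumption~\ref{assumption:detectionPrecision}\eqref{assumption:detectionPrecision:Infq0} by $\sum_k\delta_{0,k}\Delta_k^2=o(\overline{\sigma}^2\log\log\overline{\lambda})$, and the remaining terms concentrate around a baseline; this also lets me route later comparisons through the oracle partition $\mathcal{T}^O_K$, on which the signal is exactly piecewise constant. For $L<K$ the bias term is necessarily large: a counting argument shows that $L$ estimated change-points cannot match all $K$ true ones, so at least one true change $\tau_k$ is left uncaptured, and Assumption~\ref{assumption:detectionPrecision}\eqref{assumption:detectionPrecision:L<K} then forces a window of length $\underline{\lambda}$ to contribute bias of order at least $C\underline{\lambda}\Delta_k^2\gtrsim\underline{\lambda}\Delta_{(1)}^2$ (an uncaptured very large change gives an even larger bias directly). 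Assumption~\ref{assumption:minimumSignalMultivariate} guarantees $\underline{\lambda}\Delta_{(1)}^2\gg K\overline{\sigma}^2(\log\overline{\lambda})^2$, which dominates all noise and cross-term contributions in $\operatorname{CV}_{\mathrm{mod}}(L)-\operatorname{CV}_{\mathrm{mod}}(K)$ (bounded at a strictly smaller, polylogarithmic order using Assumptions~\ref{assumption:NoiseBernstein} and~\ref{assumption:cpNumberMultivariate}); hence the difference is positive uniformly over $L<K$.

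The over-selection regime $K<L\le K_{\max}$ is where the structure of the modified criterion matters, and I expect it to be the main obstacle. Here Assumption~\ref{assumption:detectionPrecision}\eqref{assumption:detectionPrecision:L>=K} together with \eqref{assumption:detectionPrecision:Rate}--\eqref{assumption:detectionPrecision:Infq>0} keeps the bias negligible, so the sign of $\operatorname{CV}_{\mathrm{mod}}(L)-\operatorname{CV}_{\mathrm{mod}}(K)$ is governed by the over-fitting variance term and the cross terms. The key identity I would exploit is that, on a true segment where the mean is constant, expanding $\|\varepsilon^E_i-\overline{\varepsilon}^O_{(l)}\|_2^2$ and summing shows that refining the partition inflates the even-fold variance term by an amount which, thanks to the near-equality of even and odd segment lengths and the rescaling factors, matches the reduction $S_{\varepsilon^O}(\mathcal{T}^O_K)-S_{\varepsilon^O}(\hat{\mathcal{T}}^O_L\cup\mathcal{T}^O_K)$ in the odd-fold residual sum of squares \eqref{eq:S_notation}. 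Assumption~\ref{assumption:overestimation} lower-bounds this reduction, uniformly over $L$, at the order $\overline{\sigma}^2\log\log\overline{\lambda}$, so that each spurious change-point inflates the criterion by at least this order.

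The crux is then to show that the cross terms $\langle\varepsilon^E_i,\overline{\varepsilon}^O_{(l)}\rangle$, summed over the data-dependent estimated partition, cannot erase this gain. Since $\varepsilon^E$ and $\varepsilon^O$ are independent these terms are mean zero, but they must be bounded \emph{uniformly} over all $L$ and over every partition the procedure might output; I would control them with a maximal inequality built on the Bernstein condition of Assumption~\ref{assumption:NoiseBernstein}, yielding a fluctuation of order $\sqrt{K_{\max}\log K_{\max}}\,\overline{\sigma}^2$, which Assumption~\ref{assumption:cpNumberMultivariate}\eqref{assumption:cpNumberKmaxBound} renders $o(\log\log\overline{\lambda})\,\overline{\sigma}^2$ and hence of strictly smaller order than the over-fitting gain. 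Securing this uniformity over a combinatorially large family of candidate partitions while keeping the fluctuation strictly below the $\overline{\sigma}^2\log\log\overline{\lambda}$ gain is the delicate part; the remaining steps—the mirror $E\to O$ fold, collecting the endpoint and rescaling corrections, and a final union bound over the three regimes—are routine once these bounds are in place.
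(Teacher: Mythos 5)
Your overall architecture coincides with the paper's: the same reduction to showing $\min_{L\neq K}\{\operatorname{CV}_{\mathrm{mod}}(L)-\operatorname{CV}_{\mathrm{mod}}(K)\}>0$ with high probability, the same $O\leftrightarrow E$ symmetry, the same decomposition into bias, over-fitting variance, and mean-zero cross terms, the same use of Assumption~\ref{assumption:detectionPrecision}\eqref{assumption:detectionPrecision:L<K} plus Assumption~\ref{assumption:minimumSignalMultivariate} to make the bias of order $\underline{\lambda}\Delta_{(1)}^2$ dominate for $L<K$, and the same pairing of Assumption~\ref{assumption:overestimation} against a uniform $\mathcal{O}_{\Pj}\bigl((K_{\max}\log K_{\max})^{1/2}\overline{\sigma}^2\bigr)$ cross-term bound for $L>K$. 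The paper organises this by first passing to a surrogate criterion $\widetilde{\operatorname{CV}}^O_{\mathrm{mod}}$ built from $E_i:=\mu^O_i+\varepsilon^E_i$ (Lemma~\ref{lemma:differenceCVLtoK}) and then bounding $|\operatorname{CV}^O_{\mathrm{mod}}(L)-\widetilde{\operatorname{CV}}^O_{\mathrm{mod}}(L)|$ (Lemma~\ref{lemma:differenceCVs}); you work with the actual criterion directly, which is a legitimate reorganisation.

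However, there is a concrete gap. Your bias term is $\sum_i\|\mu^E_i-\overline{\mu}^O_{(l)}\|_2^2$, and you claim it is controlled at $L=K$ by $\sum_k\delta_{0,k}\Delta_k^2$ via Assumption~\ref{assumption:detectionPrecision}\eqref{assumption:detectionPrecision:Infq0}. That assumption only accounts for mis-location of the estimated change-points; it says nothing about the single-point discrepancies $\mu^E_{\tau^O_k}-\mu^O_{\tau^O_k}$, which have magnitude $\Delta_k$ whenever $\tau_k$ sits at an odd location and contribute $\Delta_k^2$ to the bias \emph{regardless} of how accurately the change-point is estimated. These are precisely the terms that make the unmodified criterion fail (Theorems~\ref{theorem:underestimation} and~\ref{theorem:overestimation}), and a single large jump would contribute $\Delta_k^2\gg\overline{\sigma}^2\log\log\overline{\lambda}$, swamping the over-fitting gain you rely on for $L>K$. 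The paper's proof spends most of Lemma~\ref{lemma:differenceCVs} on exactly this: it shows that for large jumps, Assumption~\ref{assumption:detectionPrecision}\eqref{assumption:detectionPrecision:Infq>0} forces $\delta_{q,k}=0$, so $\hat{\tau}^O_{L,l+1}=\tau^O_k$ and the removed endpoint is exactly the problematic observation, while for small jumps $\Delta_k^2\leq C_n^{-1}\overline{\sigma}^2\log\log\overline{\lambda}/K$ the residual contribution summed over $K$ changes is $o(\overline{\sigma}^2\log\log\overline{\lambda})$. Your plan invokes neither Assumption~\ref{assumption:detectionPrecision}\eqref{assumption:detectionPrecision:Infq>0} nor this case split, and the assertion that ``the rescaling factors and endpoint removals are carried through this expansion unchanged'' hides the step on which the whole construction hinges. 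Until you show that the removed points absorb, or the assumptions neutralise, the $\mu^E\neq\mu^O$ discrepancies, the comparison in both the $L<K$ and $L>K$ regimes is not established.
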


\subsection{Least Squares Estimation}\label{sec:optimalPartitioning}
In this section we give theoretical guarantees for the change-points estimated by least squares estimation (i.e.\ the Segment Neighbourhood algorithm) \citep{auger1989algorithms}. This allows us to verify Assumption~\ref{assumption:detectionPrecision} for least squares estimation in the case where we have sub-Gaussian noise. More precisely, we assume the setting of Section~\ref{sec:cvL2}, with the exception that the condition
$\limsup_{n \to \infty}\sigma / \underline{\sigma} < \infty$ is not required. Similarly to Section~\ref{sec:gaussianSetting}, we denote the unknown set of true change-points by
\begin{equation}\label{eq:trueCP}
\mathcal{T} := \{0 = \tau_0 < \tau_1 < \cdots < \tau_{K} < \tau_{K + 1} = n\}.
\end{equation}
Furthermore, for $L \in \N$, we write
\begin{equation}\label{eq:estimatedCP}
\hat{\mathcal{T}}_L := \big\{0 = \hat{\tau}_{L,0} < \hat{\tau}_{L,1} < \cdots < \hat{\tau}_{L,L} < \hat{\tau}_{L, L + 1} = n\big\}.
\end{equation}
for the set of estimated change-points using least squares estimation, i.e.\ the output of \eqref{eq:optimalPartitioning} with $Z$ being $Y$. Then, we have the following guarantees for the estimated change-point locations.

\begin{Theorem}\label{theorem:detectionPrecision}
Let $1 \leq K \leq \overline{\lambda}$ eventually and assume that
\begin{equation}\label{eq:detectableCPs}
\frac{\underline{\lambda} \Delta_{(1)}^2}{K \sigma^2 \log(\overline{\lambda})} \to \infty, \text{ as } n \to \infty.
\end{equation}
Then, for any triangular arrays $(\gamma_{q,k}),\ q = 0,1,\ k = 1,\ldots,K$, such that 
\begin{equation}\label{eq:precisionRate}
\begin{split}
\max_{k = 1, \ldots, K}\big(\gamma_{0,k}\big)^{-1}(\log(K) \vee 1) \frac{\sigma^2}{\Delta_k^2} =& o\left(1\right),\\
\max_{k = 1, \ldots, K}\big(\gamma_{1,k}\big)^{-1}\left(\log\left(K\frac{\sigma^2}{\Delta_k^2} \right) \vee 1 \right) \frac{\sigma^2}{\Delta_k^2} =& o\left(1\right),
\end{split}
\end{equation}
we have
\begin{equation}\label{detectionPrecision:L>=K}
\begin{split}
& \Pj\Big(|\hat{\tau}_{K,j} - \tau_j| \leq \gamma_{0,k}, \; k=1,\ldots,K\Big) \to 1,\\
&\Pj\Bigg(\forall\, L > K\, \exists\, \hat{\tau}_{L,i_1},\ldots, \hat{\tau}_{L,i_{K}} \in \hat{\mathcal{T}}_L\, :\, \left\vert \hat{\tau}_{L,i_k} - \tau_{k} \right\vert \leq  \gamma_{1,k},\ k = 1,\ldots,K \Bigg) \to 1,\\
\end{split}
\end{equation}
and
\begin{equation}\label{detectionPrecision:L<K}
\begin{split}
\Pj\Bigg(& \forall\, L < K,\, \forall\, k = 1,\ldots,K, \, \sum_{i = \tau_{k} - \floor{\underline{\lambda}/2} + 1}^{\tau_{k} + \floor{\underline{\lambda}/2}}{\big\|\mu_i - \overline{\mu}_{L, i}\big\|^2} \geq  \frac{\underline{\lambda}\Delta_k^2}{200} \text{ or }\\
& \qquad \exists \, \hat{\tau} \in \hat{\mathcal{T}}_L : |\hat{\tau} - \tau_k| \leq \gamma_{1,k}
%\exists\, \hat{\tau}_{L,i_k} \in \hat{\mathcal{T}}_L,\ i_k = i_l \Leftrightarrow k = l\, :\, \left\vert \hat{\tau}_{L,i_k} - \tau_{k} \right\vert \leq  \gamma_{L,k}
\Bigg) \to 1,
\end{split}
\end{equation}
where $\overline{\mu}_{L, i} := \sum_{l = 0}^{L}{\EINS_{\{\hat{\tau}_{L,l} + 1 \leq i \leq \hat{\tau}_{L,l + 1}\}} \overline{\mu}_{\hat{\tau}_{L,l} :\hat{\tau}_{L,l + 1}}}$.
\end{Theorem}

Assumption~\eqref{eq:detectableCPs} ensures that each change-point is detectable; see the discussion after Theorem~\ref{thm:consistency}. The assumption $K \leq \overline{\lambda}$ is rather weak. Moreover, if it is not satisfied, we can  obtain a similar statement by replacing $\log \overline{\lambda}$ by $\log K$ in \eqref{eq:detectableCPs}. The guarantees obtained  for the estimated change-point locations are essential for the proof of Theorem~\ref{thm:consistency}. However, while \eqref{detectionPrecision:L<K} is rather technical, \eqref{detectionPrecision:L>=K} may be of independent interest. Related results about least squares estimation, which is equivalent to maximum likelihood estimation in the Gaussian case, exist in the literature \citep{yao1988estimating, yao1989least}. However, to the best of our knowledge, no existing result covers the case of $L \neq K$.
Moreover, Theorem~\ref{theorem:detectionPrecision} permits all parameters to vary with $n$. For finite sample results where $L=K$ (beyond estimating changes in mean in piecewise constant signals with sub-Gaussian noise) see \citep[Theorem 3.1]{Garreau2018Consistent} and following discussion. If we only want versions of \eqref{detectionPrecision:L>=K} and \eqref{detectionPrecision:L<K} to hold for a specific change-point (rather than for all simultaneously), the $\log K$ factors may be dropped in \eqref{eq:precisionRate}. 

The rate obtained in \eqref{eq:precisionRate} is optimal when $L = K$  \citep[Proposition~6]{verzelen2020optimal}.
Interestingly, for $L > K$ an additional $\log(\sigma^2 / \Delta_k^2)$ factor is required, which we believe to be necessary. The factor may be seen as a consequence of the fact that adding one false positive in a setting when $K=0$ (i.e.\ considering $L=1$) only decreases the least square loss \eqref{eq:optimalPartitioning} by $\mathcal{O}_{\Pj}(\sigma^2 \log\log n)$, while adding two false positives results in a decrease of $\mathcal{O}_{\Pj}(\sigma^2 \log n)$, see Lemmas~\ref{lemma:maxOneSided}~and~\ref{lemma:maxYao}, respectively. Hence, placing one incorrect change-point due to noise has a larger contribution when there is at least one false positive, i.e.\ $L > K$.

\subsection{Data-driven choice of $K_{\max}$}\label{sec:adaptiveKmax}
For our theoretical results, we have considered a deterministic choice for the maximum number of potential change-points $K_{\max}$ (though this is permitted to increase with $n$). In practice however, it is helpful to be able to set this in a data-driven way. To do this, we can seek a local minimum of the cross-validation criterion by evaluating $\operatorname{CV}(L)$ for each $L=0, 1, \ldots$ and stopping when $\operatorname{CV}(L)>\operatorname{CV}(L-1)$, i.e.\ taking $K_{\max}$ to be the first $L$ where this occurs, and $\hat{K} = K_{\max} - 1$. In order to protect against the fact that such a local minimum may occur largely as a result of the noise, we can instead insist that $\hat{K}$ is further away from the boundary of values at which we have evaluated the criterion. We take such an approach in our numerical experiments, with precise implementation details given in Appendix~\ref{sec:adaptiveKmaxDetails}.

\section{Numerical experiments}\label{sec:simulations}
In this section we study the performance of different methods and tuning parameter selection procedures empirically\footnote{Code for the simulations is available at \url{https://github.com/FlorianPein/SimulationsCrossvalidationCP}}. We consider a variety of univariate change in mean settings and report for each setting, the proportion of times the number of change-points is underestimated ($\hat{K} < K$), correctly estimated ($\hat{K} = K$) and overestimated ($\hat{K} > K$) over $M=10\,000$ simulation runs. We also report the mean integrated squared error $\operatorname{MISE}$.

In Section~\ref{sec:LargeExample}, we consider a more complicated and perhaps more realistic signal than those described in Examples~\ref{example:underestmation}~and~\ref{example:overestmation} in Section~\ref{sec:L2loss}, but which still demonstrates the sensitivity of vanilla cross-validation to the locations of large change-points. In Section~\ref{sec:arlot}, we apply our procedures to a simulation setting from \citet{arlot2011segmentation}. In Section~\ref{sec:blocksSignal}, we investigate the performance of our new criteria in settings with the famous `block' \citep{donoho1994ideal} and 'stairs' \citep{fryzlewicz2014wild} signals, and in Section~\ref{sec:robustness} we consider settings with non-Gaussian and non-i.i.d.\ error distributions. Section~\ref{sec:furtherSimulations} in the supplementary material contains the results of further simulations relating to Examples~\ref{example:underestmation}~and~\ref{example:overestmation}, and a systematic study of the detection power.

We compare our cross-validation approaches $\operatorname{CV}_{(1)}$ \eqref{eq:cvL1}, $\operatorname{CV}_{\operatorname{mod}}$ \eqref{eq:cvModifiedL2}, and a $V$-fold version of $\operatorname{CV}_{(1)}$ (see Appendix~\ref{sec:generalisedProcedure} and \eqref{eq:cvL1generlised} in particular), all used with least squares estimation, to a number of competitors. Note that for $V$-fold cross-validation with $\operatorname{CV}_{(1)}$ we use our adaptive choice for $K_{\max}$, see Section~\ref{sec:adaptiveKmax} and Appendix~\ref{sec:adaptiveKmaxDetails} for details. For all other procedures and all competitors that require such a choice we set $K_{\max} = 30$, in order to allow for a more direct comparison with the original \texttt{COPSS} procedure. Note that the choice of $K_{\max}$ and whether it is adaptively chosen or not has no significant influence on the results; see also Appendix~\ref{sec:simulationKmax} for a brief simulation study. For the simulations, we use the functions \texttt{CV1}, \texttt{CVmod}, and \texttt{VfoldCV} from our \texttt{R} package \texttt{crossvalidationCP}, available on CRAN, which is coded entirely in \texttt{R} for maximum flexibility; we note that if a fast implementation for a specialised setting is required, compiled code will be beneficial. We do however use the function \texttt{Fpsn} from the package \texttt{fpopw}, which is a fast implementation of the Segment Neighbourhood algorithm with functional pruning (see \citet{rigaill2015pruned, maidstone2017optimal}), for the runtime intensive calculation of least squares estimation for $L = 0,\ldots,K_{\max}$ change-points.

The list of competitors includes the $\operatorname{COPSS}$ procedure from \citet{zou2020consistent} described in Section~\ref{sec:gaussianSetting} (we use our own implementation given by the function \texttt{COPSS} in the \texttt{R} package \texttt{crossvalidationCP}) and the two stage $V$-fold cross-validation procedure $\operatorname{LooVF}$ from \citet{arlot2011segmentation}\footnote{Matlab implementation available at https://www.imo.universite-paris-saclay.fr/~sylvain.arlot/code/CHPTCV.htm};\ note that due to the slower run-time of the latter, we only include it in our smaller-scale simulations. We also compare to classical change-point procedures such as $\operatorname{PELT}$ \citep{killick2012optimal}, with the $\operatorname{SIC}$-penalty implemented in the R-package \texttt{changepoint}; wild binary segmentation ($\operatorname{WBS}$) \citep{fryzlewicz2014wild}, with penalty $1.3$ times the $\operatorname{SIC}$-penalty implemented in the \texttt{R} package \texttt{wbs}; $\operatorname{FDRSeg}$ \citep{li2016fdr} with $\alpha = 0.9$ as implemented by the \texttt{R} package \texttt{FDRSeg}; and $\operatorname{Ms.FPOP}$ \citep{liehrmann2023ms, verzelen2020optimal}, with recommend parameters $\alpha = 9 + 2.25 \log(n)$ and $\beta = 2.25$. We also included the robust methods from \citet{fearnhead2019changepoint}, implemented in the R-package \texttt{robseg}. We use Biweight loss (which outperformed the other available option of Huber loss in all simulations) with default parameters $\lambda = 2\log(n)$ and threshold $3$. These provide genuinely different estimation methods to least squares, and so a comparison with the cross-validation approaches using the latter loss function should be interpreted with care. We also experimented with $\operatorname{HSMUCE}$ \citep{pein2017heterogeneous}; however the strong error control it provides meant that it lacked sufficient power to detect change-points so we do not present these results.

\subsection{Sensitivity to locations of large changes}\label{sec:LargeExample}
Here we consider a setting with $n = 2048$ observations and $K = 11$ change-points. The signal, shown in Figure~\ref{fig:largerExample}, has change-points at $204, 470, 778, 878, 883, 894, 984, 1414, 1638, 1680, 1740$ and function values $-2.32$, $15.98$, $5$, $20$, $0$, $70$, $0$, $-15$, $-7.32$, $8.42$, $-2.93$, $4.76$. To form the observations, we add to the signal $\mathcal{N}(0, \sigma^2)$ errors with $\sigma = 7$. We also consider a modified signal where the change-point at $883$ is shifted to $884$, an even location.

\begin{figure}[!htb]
	\centering
\includegraphics[width = 0.9\textwidth]{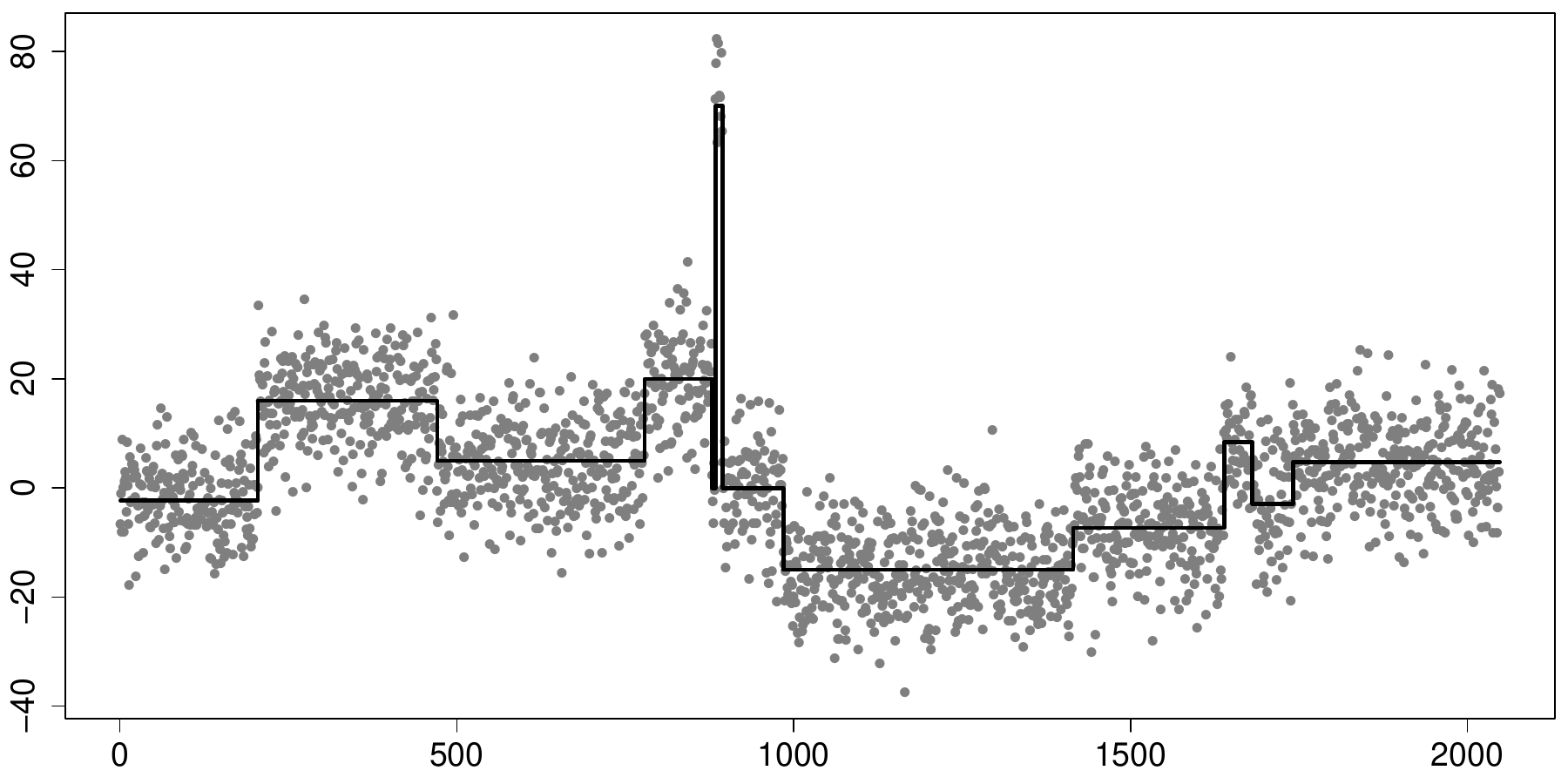} 
\caption{Signal with $n = 2048$ observations and $K = 11$ change-points.}
\label{fig:largerExample}
\end{figure}

\begin{table}[ht]
\centering
\begin{tabular}{|l|cccc|cccc|}
\hline
  &  \multicolumn{4}{c|}{Original signal} & \multicolumn{4}{c|}{Modified signal}\\
  \hline
Method & $\hat{K} < K$ & $\hat{K} = K$ & $\hat{K} > K$ & $\operatorname{MISE}$ & $\hat{K} < K$ & $\hat{K} = K$ & $\hat{K} > K$ & $\operatorname{MISE}$ \\ 
  \hline
$5$-fold $\operatorname{CV}_{(1)}$ &  3.56 & 81.15 & 15.29 & 0.9061 &  3.46 & 81.14 &  15.4 & 0.9011 \\ 
  $\operatorname{CV}_{(1)}$ & 16.27 & 73.49 & 10.24 & 1.004 & 13.28 & 75.54 & 11.18 & 0.9811 \\ 
  $\operatorname{CV}_{\operatorname{mod}}$ & 15.61 & 74.94 &  9.45 & 1.006 & 15.55 & 75.07 &  9.38 & 1.003 \\ 
    $\operatorname{COPSS}$ &  60.7 &  33.5 &   5.8 & 1.493 & 12.84 & 74.78 & 12.38 & 0.9794 \\ 
  $\operatorname{PELT}$ &  0.91 &  95.9 &  3.19 & 0.8369 &  0.78 & 95.83 &  3.39 & 0.831 \\ 
  $\operatorname{WBS}$ &   4.7 & 40.65 & 54.65 &  1.14 &  4.73 & 39.95 & 55.32 & 1.135 \\ 
  $\operatorname{FDRSeg}$ &  1.05 & 74.22 & 24.73 & 0.9133 &  0.78 & 75.17 & 24.05 & 0.9086 \\ 
  $\operatorname{Ms.FPOP}$ &  6.62 & 92.83 &  0.55 & 0.8754 &  6.04 & 93.47 &  0.49 & 0.865 \\ 
  $\operatorname{Biweight}$ &  2.69 & 94.65 &  2.66 & 0.8935 &  2.37 &  94.9 &  2.73 & 0.8888 \\ 
   \hline
\end{tabular}
\caption{Results of simulations with the signal as in Figure~\ref{fig:largerExample}. $\operatorname{COPSS}$ systematically underestimates the number of change-points in the setting with the original signal, while the suggested alternatives perform well.} 
\label{tab:largerExample}
\end{table}

We see that COPSS performs poorly compared to PELT and other change-point procedures when the large change is at an odd location, both in terms of estimation of $K$ and MISE; in contrast, the new cross-validation approaches perform well in both settings, with $5$-fold $\operatorname{CV}_{(1)}$ the best among these. This is mirrored in the additional results presented in Sections~\ref{sec:simulationUnderestimation}~and~\ref{sec:simulationOverestimation} in the supplementary material. 

\subsection{Small sample size}\label{sec:arlot}
In this section we consider a simulation setting of \citet{arlot2011segmentation} involving their $s_1$ signal (see Fig.~2 in their paper). This has four change-points with each constant segment containing 20 time points and jumps of size one in alternating directions. We consider the standard deviation functions $\sigma_c = 0.25$ (constant), $\sigma_{pc,3} = 0.6 \EINS_{[0,1/3)}(t) + 0.15 \EINS_{[1/3,1]}(t)$ (piecewise constant), and $\sigma_s = 0.5 \sin(t \pi / 4)$.

\begin{table}[ht]
\centering
\begin{tabular}{|l|cccc|cccc|}
\hline
  &  \multicolumn{4}{c|}{$\sigma_c$} & \multicolumn{4}{c|}{$\sigma_{pc,3}$}\\
  \hline
Method & $\hat{K} < K$ & $\hat{K} = K$ & $\hat{K} > K$ & $\operatorname{MISE}$ & $\hat{K} < K$ & $\hat{K} = K$ & $\hat{K} > K$ & $\operatorname{MISE}$ \\ 
  \hline
$2$-fold $\operatorname{CV}_{(1)}$ &     0 &  83.5 &  16.5 & 0.00613 &  0.34 & 67.74 & 31.92 & 0.02539 \\ 
  $5$-fold $\operatorname{CV}_{(1)}$ &     0 & 77.74 & 22.26 & 0.006659 &   0.2 & 70.66 & 29.14 & 0.02445 \\ 
  $10$-fold $\operatorname{CV}_{(1)}$ &     0 & 73.41 & 26.59 & 0.007089 &  0.13 & 68.61 & 31.26 & 0.02479 \\ 
  $20$-fold $\operatorname{CV}_{(1)}$ &     0 & 69.42 & 30.58 & 0.007407 &  0.16 & 66.76 & 33.08 & 0.02529 \\ 
  $\operatorname{CV}_{(1)}$ &     0 & 84.26 & 15.74 & 0.006199 &  0.21 & 66.97 & 32.82 & 0.02725 \\ 
  $\operatorname{CV}_{\operatorname{mod}}$ &     0 & 90.41 &  9.59 & 0.005648 &  2.87 & 81.36 & 15.77 & 0.02342 \\ 
   $\operatorname{COPSS}$ &     0 &  82.3 &  17.7 & 0.006193 &  0.76 &  71.6 & 27.64 & 0.02531 \\ 
  $\operatorname{LooVF}_2$ &     0 & 74.03 & 25.97 & 0.006866 &  1.26 & 65.33 & 33.41 & 0.02598 \\ 
  $\operatorname{LooVF}_5$ &     0 & 69.47 & 30.53 & 0.007342 &   0.6 & 68.66 & 30.74 & 0.0243 \\ 
  $\operatorname{PELT}$ &     0 & 87.11 & 12.89 & 0.006028 &     0 &  0.36 & 99.64 & 0.08918 \\ 
  $\operatorname{WBS}$ &     0 & 90.95 &  9.05 & 0.005569 &     0 &  0.13 & 99.87 & 0.08213 \\ 
  $\operatorname{FDRSeg}$ &     0 & 80.51 & 19.49 & 0.006702 &     0 &  0.02 & 99.98 & 0.1032 \\ 
  $\operatorname{Ms.FPOP}$ &     0 & 99.25 &  0.75 & 0.005122 &     0 & 12.32 & 87.68 & 0.05875 \\ 
  $\operatorname{Biweight}$ &     0 &  89.3 &  10.7 & 0.006057 &     0 & 12.35 & 87.65 & 0.0559 \\ 
   \hline
\end{tabular}
\caption{Results for the $s_1$ signal of \citet{arlot2011segmentation} with various standard deviation functions.}
\label{tab:arlotI}
\end{table}

\begin{table}[ht]
\centering
\begin{tabular}{|l|cccc|}
\hline
  &  \multicolumn{4}{c|}{$\sigma_s$}\\
  \hline
Method & $\hat{K} < K$ & $\hat{K} = K$ & $\hat{K} > K$ & $\operatorname{MISE}$ \\ 
  \hline
$2$-fold $\operatorname{CV}_{(1)}$ &     0 & 80.41 & 19.59 & 0.005233 \\ 
  $5$-fold $\operatorname{CV}_{(1)}$ &     0 & 76.49 & 23.51 & 0.005743 \\ 
  $10$-fold $\operatorname{CV}_{(1)}$ &     0 &  72.9 &  27.1 & 0.006056 \\ 
  $20$-fold $\operatorname{CV}_{(1)}$ &     0 & 70.73 & 29.27 & 0.00623 \\ 
  $\operatorname{CV}_{(1)}$ &     0 & 77.57 & 22.43 & 0.005724 \\ 
  $\operatorname{CV}_{\operatorname{mod}}$ &     0 & 87.77 & 12.23 & 0.004541 \\ 
    $\operatorname{COPSS}$ &     0 & 78.35 & 21.65 & 0.00549 \\ 
  $\operatorname{LooVF}_2$ &     0 & 72.75 & 27.25 & 0.005826 \\ 
  $\operatorname{LooVF}_5$ &     0 &  70.4 &  29.6 & 0.006228 \\ 
  $\operatorname{PELT}$ &     0 &  8.68 & 91.32 & 0.01827 \\ 
  $\operatorname{WBS}$ &     0 &  5.24 & 94.76 & 0.01726 \\ 
  $\operatorname{FDRSeg}$ &     0 &  2.76 & 97.24 & 0.02421 \\ 
  $\operatorname{Ms.FPOP}$ &     0 &  49.7 &  50.3 & 0.009256 \\ 
  $\operatorname{Biweight}$ &     0 & 24.41 & 75.59 & 0.0137 \\ 
   \hline
\end{tabular}
\caption{Continuation of Table~\ref{tab:arlotI}.}
\label{tab:arlotII}
\end{table}

We see that the proposed cross-validation criteria perform well, and in particular dominate $\operatorname{LooVF}$ here. $\operatorname{PELT}$, $\operatorname{WBS}$, $\operatorname{Ms.FPOP}$ and $\operatorname{Biweight}$ all perform well in the constant variance setting of $\sigma_c$, but their performances greatly deteriorate in the heteroscedastic settings. In contrast, our proposed  $\operatorname{CV}_{\operatorname{mod}}$ is competitive in the idealised homoscedastic setting, and is the best performer in the heteroscedastic settings.

\subsection{Blocks and stairs signal}\label{sec:blocksSignal}
We consider the famous block signal with $K = 11$ change-points. As in \citet{zou2020consistent} we choose $2048$ observations, $\mathcal{N}(0, 7^2)$ errors as before, and set change-points at $205$, $267$, $308$, $472$, $512$, $820$, $902$, $1332$, $1557$, $1598$, $1659$ with corresponding function values $0$, $14.64$, $-3.66$, $7.32$, $-7.32$, $10.98$, $-4.39$, $3.29$, $19.03$, $7.68$, $15.37$, $0$; see Figure~\ref{fig:signalBlocks}. Secondly, we consider the stairs example from \citet{fryzlewicz2014wild} with $n = 150$, errors distributed as $\mathcal{N}(0, 0.3^2)$, and a change-point with jump size $1$ every ten observations, so $K = 14$.

\begin{figure}[!htb]
	\centering
\includegraphics[width = 0.9\textwidth]{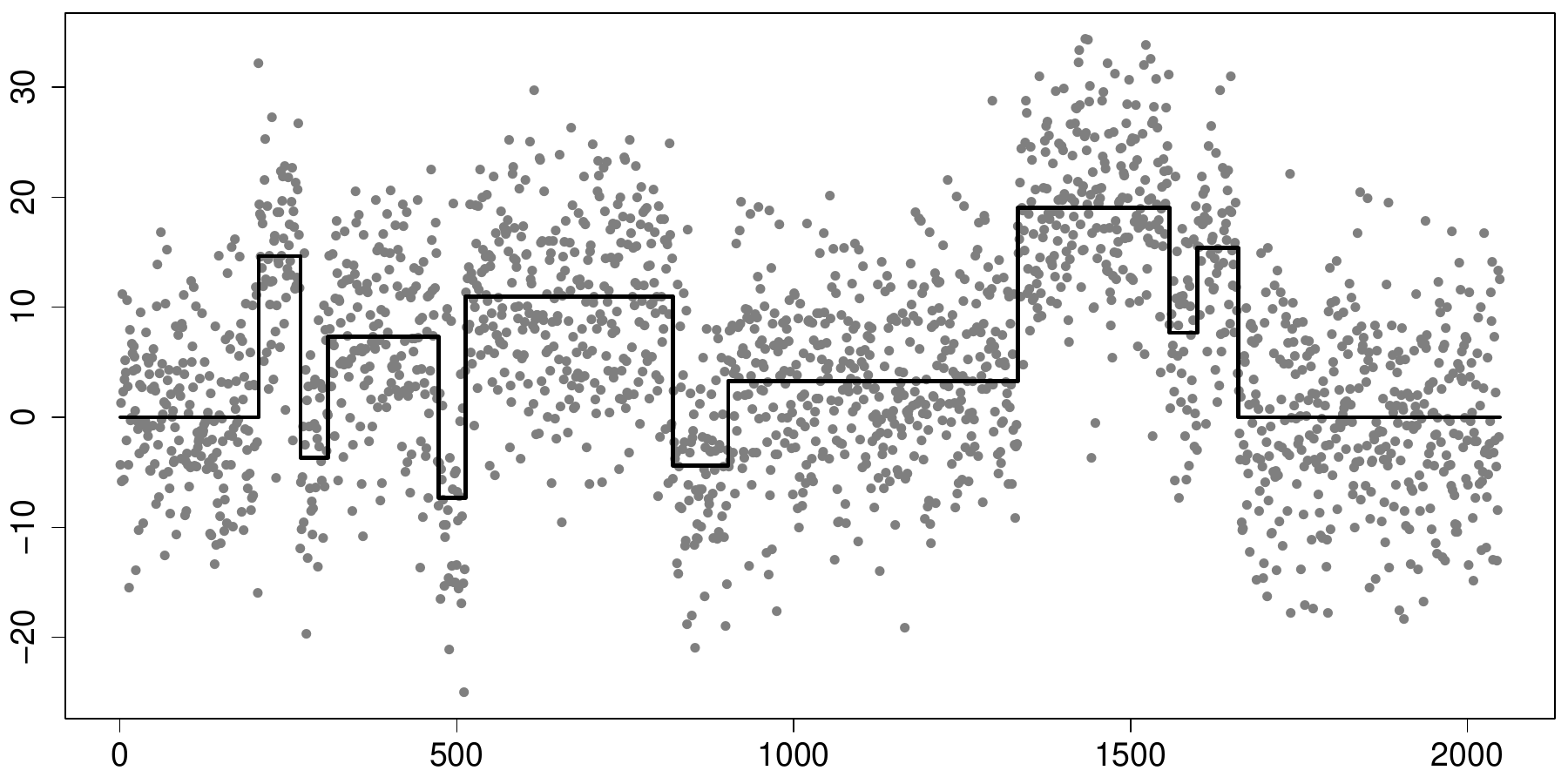} 
\caption{Blocks signal with $K = 11$ change-points.}
\label{fig:signalBlocks}
\end{figure}  

\begin{table}[ht]
\centering
\begin{tabular}{|l|cccc|cccc|}
\hline
  &  \multicolumn{4}{c|}{Blocks signal} & \multicolumn{4}{c|}{Stairs signal}\\
  \hline
Method & $\hat{K} < K$ & $\hat{K} = K$ & $\hat{K} > K$ & $\operatorname{MISE}$ & $\hat{K} < K$ & $\hat{K} = K$ & $\hat{K} > K$ & $\operatorname{MISE}$ \\ 
  \hline
$5$-fold $\operatorname{CV}_{(1)}$ &  7.78 & 76.46 & 15.76 & 1.047 &  0.27 & 75.57 & 24.16 & 0.02192 \\ 
  $\operatorname{CV}_{(1)}$ & 23.13 & 66.59 & 10.28 & 1.109 &  0.53 & 67.64 & 31.83 & 0.02275 \\ 
  $\operatorname{CV}_{\operatorname{mod}}$ & 21.76 & 69.15 &  9.09 & 1.099 &  8.63 & 82.21 &  9.16 & 0.02377 \\ 
    $\operatorname{COPSS}$ &  24.5 & 66.27 &  9.23 & 1.114 &  0.54 & 64.93 & 34.53 & 0.02296 \\ 
  $\operatorname{PELT}$ &  3.76 &  92.9 &  3.34 & 0.9741 &  0.42 &    94 &  5.58 & 0.02066 \\ 
  $\operatorname{WBS}$ & 25.66 & 72.44 &   1.9 & 1.205 &  6.14 & 86.05 &  7.81 & 0.02778 \\ 
  $\operatorname{FDRSeg}$ &  2.51 & 75.16 & 22.33 & 1.046 &  1.43 & 82.28 & 16.29 & 0.02231 \\ 
  $\operatorname{Ms.FPOP}$ & 10.08 & 89.18 &  0.74 & 0.994 &  9.74 &  90.2 &  0.06 & 0.02326 \\ 
  $\operatorname{Biweight}$ &  4.39 &  92.7 &  2.91 & 1.006 &  0.47 &  95.5 &  4.03 & 0.021 \\ 
   \hline
\end{tabular}
\caption{Results for the blocks, see Figure~\ref{fig:signalBlocks}, and stairs signal.}
\label{tab:blocks}
\end{table}

In these settings with well-specified errors, we might expect classical change-point procedures to have a noticeable advantage. However, from Table~\ref{tab:blocks} we see that cross-validation remains quite competitive. 

\subsection{Robustness}\label{sec:robustness}
In this section, we explore  the robustness of cross-validation to a misspecified models. We continue to use the blocks signal (Figure~\ref{fig:signalBlocks}) but with various violations of the change-point model in Section~\ref{sec:L2loss} as detailed below.

\subsubsection*{Misspecified error distribution} (Table~\ref{tab:wrongErrorDistribution}) We consider $t$-distributed errors with $5$ degrees of freedom and exponentially distributed errors with mean $1$ which we then mean-centre. All errors are standardised such that the standard deviation is $\sigma = 7$.

\subsubsection*{Heteroscedastic errors} (Table~\ref{tab:hetero311}) 
We consider two settings with heteroscedastic Gaussian errors: the first where the errors have different standard deviations on each piecewise constant segment, and the second where the standard deviation instead changes after each block of $32$ observations. The standard deviations are drawn independently from $U[0, 8]$.

\subsubsection*{Outliers} (Table~\ref{tab:outlier})
We use the same setting as that of Section~\ref{sec:blocksSignal} but randomly sample ten observations and add a Poisson distributed random variable with intensity $\lambda \in \{20, 30\}$.

\begin{table}[ht]
\centering
\begin{tabular}{|l|cccc|cccc|}
  \hline
Method & $\hat{K} < K$ & $\hat{K} = K$ & $\hat{K} > K$ & $\operatorname{MISE}$ & $\hat{K} < K$ & $\hat{K} = K$ & $\hat{K} > K$ & $\operatorname{MISE}$ \\ 
\hline
  &  \multicolumn{4}{c|}{$t_5$ error distribution} & \multicolumn{4}{c|}{exponential error distribution}\\
  \hline
$5$-fold $\operatorname{CV}_{(1)}$ &  17.4 & 58.08 & 24.52 & 1.722 & 27.51 & 50.73 & 21.76 & 1.488 \\ 
  $\operatorname{CV}_{(1)}$ & 28.87 & 35.62 & 35.51 & 2.153 & 40.87 & 27.13 &    32 &  1.99 \\ 
  $\operatorname{CV}_{\operatorname{mod}}$ &  72.8 & 26.71 &  0.49 & 3.565 & 83.77 & 15.88 &  0.35 & 2.893 \\ 
    $\operatorname{COPSS}$ & 48.56 & 35.54 &  15.9 & 2.152 & 65.09 &  25.3 &  9.61 & 1.945 \\ 
  $\operatorname{PELT}$ &  0.02 &  3.24 & 96.74 & 4.261 &     0 &     0 &   100 & 11.61 \\ 
  $\operatorname{WBS}$ &  0.75 &  6.65 &  92.6 & 3.324 &     0 &     0 &   100 & 7.304 \\ 
  $\operatorname{FDRSeg}$ &     0 &  0.12 & 99.88 & 6.265 &     0 &     0 &   100 & 18.81 \\ 
  $\operatorname{Ms.FPOP}$ &  0.84 & 22.98 & 76.18 & 2.873 &  0.01 &  0.03 & 99.96 & 6.978 \\ 
  $\operatorname{Biweight}$ &   1.6 & 93.11 &  5.29 & 0.8514 &  0.13 &  75.1 & 24.77 & 1.898 \\ 
   \hline
\end{tabular}
\caption{Results with misspecified error distributions.} 
\label{tab:wrongErrorDistribution} 
\end{table}

\begin{table}[ht]
\centering
\begin{tabular}{|l|cccc|cccc|}
  \hline
Method & $\hat{K} < K$ & $\hat{K} = K$ & $\hat{K} > K$ & $\operatorname{MISE}$ & $\hat{K} < K$ & $\hat{K} = K$ & $\hat{K} > K$ & $\operatorname{MISE}$ \\ 
  \hline
$5$-fold $\operatorname{CV}_{(1)}$ &  0.88 & 80.11 & 19.01 & 0.4409 &   0.2 & 81.66 & 18.14 & 0.4148 \\ 
  $\operatorname{CV}_{(1)}$ &  3.72 & 80.98 &  15.3 & 0.4408 &  1.95 & 84.44 & 13.61 & 0.4063 \\ 
  $\operatorname{CV}_{\operatorname{mod}}$ &  6.51 & 82.22 & 11.27 & 0.4454 &  4.17 & 85.69 & 10.14 & 0.4074 \\ 
  $\operatorname{COPSS}$ &  7.56 & 80.79 & 11.65 & 0.4531 &   4.9 &  84.3 &  10.8 & 0.4128 \\ 
  $\operatorname{PELT}$ &  0.01 &  3.75 & 96.24 &   4.8 &     0 &  0.01 & 99.99 & 5.034 \\ 
  $\operatorname{WBS}$ &     0 &  5.01 & 94.99 & 2.628 &     0 &  0.01 & 99.99 & 2.764 \\ 
  $\operatorname{FDRSeg}$ &     0 &  0.24 & 99.76 & 8.905 &     0 &     0 &   100 & 9.567 \\ 
  $\operatorname{Ms.FPOP}$ &  0.01 & 12.82 & 87.17 & 2.881 &     0 &  1.24 & 98.76 & 2.515 \\ 
  $\operatorname{Biweight}$ &  0.02 &  9.62 & 90.36 &  2.41 &     0 &  1.98 & 98.02 &  2.04 \\ 
   \hline
\end{tabular}
\caption{Results with heteroscedastic errors.}
\label{tab:hetero311} 
\end{table}

\begin{table}[ht]
\centering
\begin{tabular}{|l|cccc|cccc|}
  \hline
Method & $\hat{K} < K$ & $\hat{K} = K$ & $\hat{K} > K$ & $\operatorname{MISE}$ & $\hat{K} < K$ & $\hat{K} = K$ & $\hat{K} > K$ & $\operatorname{MISE}$ \\ 
\hline
  &  \multicolumn{4}{c|}{$\lambda = 20$} & \multicolumn{4}{c|}{$\lambda = 30$}\\
  \hline
$5$-fold $\operatorname{CV}_{(1)}$ & 10.21 & 77.51 & 12.28 & 1.101 & 18.86 &    71 & 10.14 & 1.264 \\ 
  $\operatorname{CV}_{(1)}$ & 27.59 & 62.76 &  9.65 & 1.204 & 40.52 & 41.85 & 17.63 & 1.573 \\ 
  $\operatorname{CV}_{\operatorname{mod}}$ & 30.88 & 64.32 &   4.8 & 1.206 & 64.02 & 35.29 &  0.69 & 1.816 \\ 
    $\operatorname{COPSS}$ & 31.42 & 62.07 &  6.51 & 1.208 & 54.54 & 39.62 &  5.84 & 1.572 \\ 
  $\operatorname{PELT}$ &  3.41 & 76.94 & 19.65 &  1.17 &  0.74 & 14.33 & 84.93 &  2.73 \\ 
  $\operatorname{WBS}$ & 22.85 & 61.79 & 15.36 & 1.364 &  7.19 & 18.13 & 74.68 & 2.372 \\ 
  $\operatorname{FDRSeg}$ &  1.54 & 41.13 & 57.33 & 1.505 &  0.08 &  1.62 &  98.3 & 3.937 \\ 
  $\operatorname{Ms.FPOP}$ & 10.81 &  87.2 &  1.99 &  1.06 &  7.96 & 59.39 & 32.65 & 1.575 \\ 
  $\operatorname{Biweight}$ &     5 & 92.13 &  2.87 & 1.034 &  5.33 & 91.83 &  2.84 & 1.031 \\ 
   \hline
\end{tabular}
\caption{Results with $10$ Poisson-distributed outliers.} 
\label{tab:outlier}  
\end{table}

The results above indicate that cross-validation and change-point regression with Biweight loss are more robust to model misspecifications than classical approaches designed for homoscedastic Gaussian errors.
Comparing these former two approaches, Biweight loss gives an impressive $75-93 \%$ probability of recovering the true number of changes in cases with $t$-distributed and exponentially distributed errors, and cases with outliers (Tables~\ref{tab:wrongErrorDistribution} and~\ref{tab:outlier}); cross-validation and in particular 5-fold $\mathrm{CV}_{(1)}$ is also fairly robust to these departures from the standard setting, delivering $50-77\%$ recovery probabilities here. However, in the heteroscedastic settings studied (Table~\ref{tab:hetero311}), the Biweight loss performs poorly, with recovery probabilities of $2-10\%$, compared to $80-82\%$ for 5-fold $\mathrm{CV}_{(1)}$. Overall, we find that cross-validation performs reasonably well without any knowledge of the type of violation to be expected, and without sacrificing much performance in idealised settings.

\section{Discussion}\label{sec:discussion}
In sharp contrast to its ubiquity in high-dimensional and non-parametric regression, cross-validation has received little attention and use in change-point problems. There is good reason for this: as we show in this work, standard cross-validation with squared error loss may not correctly estimate the number of change-points in settings where all changes are easily detectable, and can yield an estimated regression function that has an integrated squared error that is orders of magnitude larger than achievable by other methods. On the other hand, there may be much to be gained from  deeper investigation of the use of cross-validation in change-point problems. We propose two simple approaches to remedy these deficiencies and show empirically that they perform well in settings with Gaussian errors, and are relatively robust to heavy-tailed, non-symmetric and heteroscedastic errors.

We expect that cross-validation-type approaches for tuning parameter selection may be even more successful in more complex change-point settings, for example those involving piecewise smooth mean functions, in part due to the fact that no prior estimate of the noise variance is required. As well as exploring such settings, it would also be of interest to develop theory for the absolute error approach similar to that which we present for our modified squared error criterion. It may also be fruitful to develop alternatives to cross-validation that are also model agnostic, for example based on the bootstrap \citep{antoch1995change, huvskova2008bootstrapping, sharipov2016sequential}.

\begin{appendix}

\section{Generalised procedure}\label{sec:generalisedProcedure}
In this section we consider the setting of Section~\ref{sec:model} and describe a more general $V$-fold cross-validation procedure that selects a tuning parameter $\psi$ of an arbitrary change-point estimation procedure $\mathcal{A}(\psi, Y_1,\ldots,Y_n)$. The estimation procedure $\mathcal{A}$ requires a tuning parameter $\psi$ and a vector of observations $(Y_1, \ldots, Y_n)$ and returns the estimated change-point locations $0 = \hat{\tau}_{\psi, 0} < \hat{\tau}_{\psi, 1} < \cdots < \hat{\tau}_{\psi,\hat{K}_{\psi}} < \hat{\tau}_{\psi, \hat{K}_{\psi} + 1} = n$ and parameter estimates $\hat{\beta}_{\psi, 0} \neq \hat{\beta}_{\psi, 1} \neq \cdots \neq \hat{\beta}_{\psi, \hat{K}_{\psi}}$. The tuning parameter may for instance be the number of change-points as in the case of least squares estimation / Segment Neighbourhood
\citep{auger1989algorithms}.

Let $\Psi$ be a set of potential tuning parameters and define fold $F_v := \{v + i \cdot V,\ i\in \N\, : \, v + i \cdot V \leq n\}$ for $v=1,\ldots,V$. We then select $\psi \in \Psi$ as follows.

For each fold $F_v$ and every tuning parameter $\psi \in \Psi$, the procedure $\mathcal{A}$ is applied using all observations with indices not in $F_v$. We denote the estimated change-points by $0 =: \hat{\tau}^{-F_v}_{\psi,0} < \hat{\tau}^{-F_v}_{\psi,1} < \cdots < \hat{\tau}^{-F_v}_{\psi,\hat{K}^{-F_v}_{\psi}} < \hat{\tau}^{-F_v}_{\psi, \hat{K}^{-F_v}_{\psi} + 1} := n$ and the parameters by $\hat{\beta}^{-F_v}_{\psi,0},\ldots,\hat{\beta}^{-F_v}_{\psi,\hat{K}^{-F_v}_{\psi}}$. Note that we ask the estimated change-points to be a subset of $\{1,\ldots,n\}$, i.e.\ if $F_v^c = \{i_1,\ldots,i_m\}$ and $\mathcal{A}$ returns change-points $0 = \tilde{\tau}_{\psi, 0} < \tilde{\tau}_{\psi, 1} < \cdots < \tilde{\tau}_{\psi,\hat{K}^{-F_v}_{\psi}} < \tilde{\tau}_{\psi, \hat{K}^{-F_v}_{\psi} + 1} = m$, we set $\hat{\tau}^{-F_v}_{\psi, l} := i_{\tilde{\tau}_{\psi, l}},\ l = 1,\ldots,\hat{K}^{-F_v}_{\psi}$.

To evaluate the quality of the estimates, we use extended versions of the criteria we have proposed in Section~\ref{sec:methodology}. We extend cross-validation with absolute error loss \eqref{eq:cvL1} to
\begin{equation}\label{eq:cvL1generlised}
\begin{split}
& \operatorname{CV}^V_{(1)}(\psi) := \sum_{v = 1}^{V} \sum_{k = 0}^{\hat{K}^{-F_v}_{\psi}} \sum_{\substack{i \in F_v,\\\hat{\tau}^{-F_v}_{\psi, k} < i \leq \hat{\tau}^{-F_v}_{\psi, k + 1}}}{ \left\| Y_i - \hat{\beta}^{-F_v}_{\psi,k} \right\|_2}.
\end{split}
\end{equation}
Finally, the tuning parameter $\psi \in \Psi$ that minimizes cross-validation criterion is selected, i.e.\
\begin{equation}
\hat{\psi} := \argmin_{\psi \in \Psi}{\operatorname{CV}_{(1)}^V(\psi)}.
\end{equation}
%Once again, the tuning parameter $\psi$ can be the number of change-points $K$ as in the case of  optimal partitioning with $W$-weighted squared loss.
In the case where $\mathcal{A}$ is least squares estimation, but also for many other estimators, a natural choice for $\hat{\beta}^{-F_v}_{\psi,k}$ is 
\begin{equation}
\overline{Y}^{-F_v}_{\psi, k}
:= \big\vert \big\{ i \notin F_v;\  \hat{\tau}^{-F_v}_{\psi, k} < i \leq \hat{\tau}^{-F_v}_{\psi, k + 1}\big\}\big\vert^{-1} \sum_{\substack{i \notin F_v,\\ \hat{\tau}^{-F_v}_{\psi, k} < i \leq \hat{\tau}^{-F_v}_{\psi, k + 1}}}{  Y_i }.
\end{equation}

Alternatively, a slight generalisation of  \eqref{eq:cvModifiedL2} and \eqref{eq:cvRescaled} gives %$\operatorname{CV}_{(1)}^V(\psi)$ can be replaced by
\begin{equation}\label{eq:criterionVfold}
\begin{split}
&\operatorname{CV}_{\mathrm{mod}}^V(\psi) := \sum_{v = 1}^{V} \sum_{k = 0}^{\hat{K}^{-F_v}_{\psi}} \sum_{\substack{i \in F_v,\\ \hat{\tau}^{-F_v}_{\psi, k} + 1 < i \leq \hat{\tau}^{-F_v}_{\psi, k + 1}}}{ \frac{\hat{\tau}^{-F_v}_{\psi, k + 1} - \hat{\tau}^{-F_v}_{\psi, k}}{\hat{\tau}^{-F_v}_{\psi, k + 1} - \hat{\tau}^{-F_v}_{\psi, k} - 1}\left\| Y_i - \hat{\beta}^{-F_v}_{\psi,k} \right\|_2^2 }.
\end{split}
\end{equation}
%which is an extension of the cross-validation criteria \eqref{eq:cvModifiedL2} and \eqref{eq:cvRescaled}.
Note that for this criterion we require that each segment is at least of length $2(V - 1)$, i.e.\ $\hat{\tau}^{-F_v}_{\psi, k + 1}  - \hat{\tau}^{-F_v}_{\psi, k} \geq 2(V - 1)$.

%Finally, we note that we do not have to use the observations $Y_i$ directly. Following \citet{zou2020consistent} we can also consider the following transformation. Let $l(\beta, Y_i)$ be a loss function for observation $Y_i$ and parameter $\beta$ and let $s(\cdot, \cdot)$ be the partial derivative of $l(\beta, Y_i)$ with respect to $\beta$. We then define the transformation $Z_i := s(\gamma, Y_i)$, where $\gamma$ is a tuning parameter. As discussed in \citet{zou2020consistent}, the choice of the parameter $\gamma$ is often of low importance. However its choice can for instance be used to ensure that $Z_1,\ldots,Z_n$ satisfy Assumptions~\ref{assumption:cpNumberMultivariate}--\ref{assumption:minimumSignalMultivariate}. Similarly to before, we define $\mu_i := \E[Z_i]$ and $\varepsilon_i = Z_i - \mu_i$. We then replace $Y_i$ by $Z_i$, $i = 1,\ldots,n$, in all of the calculations. For example when $Y_i$ is univariate and has a Gaussian distribution and $l$ is chosen to be the log-likelihood function, then $Z_i = s(\gamma, Y_i) = - (Y_i - \gamma)$ and $\gamma$  cancels in \eqref{eq:cvL1generlised} and \eqref{eq:criterionVfold}.

\subsection{Adaptive choice of $K_{\max}$}\label{sec:adaptiveKmaxDetails}
As discussed in Section~\ref{sec:adaptiveKmax}, in practice, rather than fixing $K_{\max}$ in advance, we can choose $K_{\max}$ in a data-driven way so as to ensure that $\hat{K}$ is not too close to $K_{\max}$. To implement this, since the function \texttt{Fpsn} from the package \texttt{fpopw}, which we use to calculate the least squares estimator, only allows one to perform calculations for $L=1, 2, \ldots, K_{\max}$ for any given $K_{\max}$, we proceed as follows.  We start with $K_{\max} = 8$. If $\hat{K} < K_{\max} - 3$, we stop and return $\hat{K}$. Otherwise, we double $K_{\max}$ and rerun the procedure until either our stopping criterium is satisfied or in extreme cases $K_{\max} \geq n / 2$. The choices to start at $K_{\max} = 8$ and to subtract $3$ do not follow any specific considerations and results does not depend much on it as we see in the following section.

\subsection{Influence of $K_{\max}$ in simulations }\label{sec:simulationKmax}

In this simulation study we compare different starting values for $K_{\max}$ for the adaptive procedure and also use the non-adaptive procedure where we fixed $K_{\max} = 30$. We use the blocks setting from Section~\ref{sec:blocksSignal} with 5-fold $\operatorname{CV}_{(1)}$.

\begin{table}[ht]
\centering
\begin{tabular}{l|cccc}
  \hline
Method & $\hat{K} < K$ & $\hat{K} = K$ & $\hat{K} > K$ & $\operatorname{MISE}$ \\ 
  \hline
$K_{\max} = 30$, fixed &  7.47 & 77.26 & 15.27 & 1.041 \\ 
$K_{\max} = 5$, adaptive &  7.47 & 77.26 & 15.27 & 1.041 \\ 
$K_{\max} = 6$, adaptive &  7.47 & 77.26 & 15.27 & 1.041 \\ 
$K_{\max} = 7$, adaptive &  7.49 & 77.26 & 15.25 & 1.041 \\ 
$K_{\max} = 8$, adaptive &  7.47 & 77.31 & 15.22 &  1.04 \\ 
$K_{\max} = 9$, adaptive &  7.47 & 77.26 & 15.27 & 1.041 \\ 
$K_{\max} = 10$, adaptive &  7.47 & 77.26 & 15.27 & 1.041 \\ 
$K_{\max} = 11$, adaptive &  7.47 & 77.26 & 15.27 & 1.041 \\ 
$K_{\max} = 12$, adaptive &  7.47 & 77.26 & 15.27 & 1.041 \\ 
$K_{\max} = 13$, adaptive &  7.47 & 77.26 & 15.27 & 1.041 \\ 
$K_{\max} = 14$, adaptive &  7.49 & 77.26 & 15.25 & 1.041 \\ 
   \hline
\end{tabular}
\caption{Different choices for $K_{\max}$.}
\label{tab:adaptiveKmax} 
\end{table}

We see from Table~\ref{tab:adaptiveKmax} that results are nearly the same for all choices. We obtained similar results when we tried different choices for a fixed $K_{\max}$ (not displayed) as long as the number was a bit larger than the true number of change-points. This confirms our reasoning for the design of the adaptive procedure.

\end{appendix}

%%%%%%%%%%%%%%%%%%%%%%%%%%%%%%%%%%%%%%%%%%%%%%
%% Support information, if any,             %%
%% should be provided in the                %%
%% Acknowledgements section.                %%
%%%%%%%%%%%%%%%%%%%%%%%%%%%%%%%%%%%%%%%%%%%%%%
%\begin{acks}[Acknowledgments]
% The authors would like to thank ...
%\end{acks}
\begin{acks}[Acknowledgments]
The authors would like to thank the associate editor and two anonymous referees for their valuable feedback that helped us to improve the manuscript.
\end{acks}
%%%%%%%%%%%%%%%%%%%%%%%%%%%%%%%%%%%%%%%%%%%%%%
%% Funding information, if any,             %%
%% should be provided in the                %%
%% funding section.                         %%
%%%%%%%%%%%%%%%%%%%%%%%%%%%%%%%%%%%%%%%%%%%%%%
\begin{funding}
The authors were supported by EPSRC grant EP/N031938/1.
\end{funding}

%%%%%%%%%%%%%%%%%%%%%%%%%%%%%%%%%%%%%%%%%%%%%%
%% Supplementary Material, including data   %%
%% sets and code, should be provided in     %%
%% {supplement} environment with title      %%
%% and short description. It cannot be      %%
%% available exclusively as external link.  %%
%% All Supplementary Material must be       %%
%% available to the reader on Project       %%
%% Euclid with the published article.       %%
%%%%%%%%%%%%%%%%%%%%%%%%%%%%%%%%%%%%%%%%%%%%%%
\begin{supplement}
The supplementary material contains additional simulations and all proofs.
%\stitle{Supplementary material to Cross-validation for change-point regression: pitfalls and solutions}
%\sdescription{Supplementary material contains all proofs.}
\end{supplement}

%%%%%%%%%%%%%%%%%%%%%%%%%%%%%%%%%%%%%%%%%%%%%%%%%%%%%%%%%%%%%
%%                  The Bibliography                       %%
%%                                                         %%
%%  imsart-???.bst  will be used to                        %%
%%  create a .BBL file for submission.                     %%
%%                                                         %%
%%  Note that the displayed Bibliography will not          %%
%%  necessarily be rendered by Latex exactly as specified  %%
%%  in the online Instructions for Authors.                %%
%%                                                         %%
%%  MR numbers will be added by VTeX.                      %%
%%                                                         %%
%%  Use \cite{...} to cite references in text.             %%
%%                                                         %%
%%%%%%%%%%%%%%%%%%%%%%%%%%%%%%%%%%%%%%%%%%%%%%%%%%%%%%%%%%%%%

%\bibliographystyle{imsart-nameyear}
%\bibliography{Literature}

%
%
%
\end{cbunit}
%\end{document}
\clearpage
\setcounter{page}{1}
\noindent \begin{center}
{\large \bf Supplementary material to\\
Cross-validation for change-point regression:\\
pitfalls and solutions}
\end{center} 

\quad

%\title{Supplementary material to\\
%Cross-validation for change-point regression:\\pitfalls and solutions}
%\author{Florian Pein
%	\and
%	Rajen D.\ Shah}
%%\emptythanks
%\maketitle

\begin{cbunit}

\setcounter{section}{0}
\renewcommand{\thesection}{S\arabic{section}}
\renewcommand{\thesubsection}{S1.\arabic{subsection}}

In the following we collect additional simulations and all of our proofs. First of all, in Section~\ref{sec:furtherSimulations} we present further simulation settings. Section~\ref{sec:proof:detectionPrecision} begins  with the proof of Theorem~\ref{theorem:detectionPrecision} as the strategy used to split the least squares objective will be helpful in other proofs as well.  Theorems~\ref{theorem:underestimation},~\ref{theorem:overestimation}~and~\ref{theorem:positiveResultRescaledCV} are proved in Sections~\ref{sec:proof:underestimation}, \ref{sec:proof:overestimation} and \ref{sec:proof:RescaledCV} respectively. Finally, Section~\ref{sec:proof:consistency} gives a proof of Theorem~\ref{thm:consistency}, which largely follows from Theorems~\ref{theorem:positiveResultRescaledCV}~and~\ref{theorem:detectionPrecision} and is hence shown last.

\section{Additional simulations}\label{sec:furtherSimulations}
In this section we include further simulation results. The methods under consideration include $V$-fold $\operatorname{CV}_{(1)}$ procedures as well as those mentioned in Section~\ref{sec:simulations}.%, with leave-one-out cross-validation denoted by $\operatorname{LOOCV} \operatorname{CV}_{(1)}$.

\subsection{Underestimation example}\label{sec:simulationUnderestimation}
Observations are as in Example~\ref{example:underestmation}. We take $n = 202$, $\Delta_1 = 10$, $\sigma = 1$, and $\lambda = 5$. We vary $\Delta_2 = D \Delta_1$ as factor of $\Delta_1$, with $D \in \{2, 3, 5\}$. For all $D$ the same set of seeds was used.

\begin{table}[ht]
\centering
\begin{tabular}{|l|cccc|cccc|}
\hline
  &  \multicolumn{4}{c|}{$D = 5$} & \multicolumn{4}{c|}{$D = 3$}\\
  \hline
Method & $\hat{K} < K$ & $\hat{K} = K$ & $\hat{K} > K$ & $\operatorname{MISE}$ & $\hat{K} < K$ & $\hat{K} = K$ & $\hat{K} > K$ & $\operatorname{MISE}$ \\ 
  \hline
$2$-fold $\operatorname{CV}_{(1)}$ &     0 & 86.98 & 13.02 & 0.02159 &     0 & 86.98 & 13.02 & 0.02159 \\ 
  $5$-fold $\operatorname{CV}_{(1)}$ &     0 & 79.39 & 20.61 & 0.02747 &     0 & 79.39 & 20.61 & 0.02747 \\ 
  $10$-fold $\operatorname{CV}_{(1)}$ &     0 & 73.25 & 26.75 & 0.03249 &     0 & 73.25 & 26.75 & 0.03249 \\ 
  $20$-fold $\operatorname{CV}_{(1)}$ &     0 & 67.72 & 32.28 & 0.0374 &     0 & 67.72 & 32.28 & 0.0374 \\  
  $\operatorname{CV}_{(1)}$ &     0 & 87.18 & 12.82 & 0.02183 &     0 & 87.18 & 12.82 & 0.02183 \\ 
  $\operatorname{CV}_{\operatorname{mod}}$ &     0 & 92.03 &  7.97 & 0.0191 &     0 & 92.03 &  7.97 & 0.0191 \\ 
    $\operatorname{COPSS}$ & 99.95 &     0 &  0.05 & 2.361 & 91.28 &  0.06 &  8.66 & 2.166 \\
  $\operatorname{LooVF}_2$ &   100 &     0 &     0 & 2.362 & 99.87 &  0.08 &  0.05 & 2.359 \\ 
  $\operatorname{LooVF}_5$ &   100 &     0 &     0 & 2.362 & 99.81 &  0.09 &   0.1 & 2.358 \\ 
  $\operatorname{PELT}$ &     0 & 91.34 &  8.66 & 0.02118 &     0 & 91.34 &  8.66 & 0.02118 \\ 
  $\operatorname{WBS}$ &     0 & 87.38 & 12.62 & 0.0204 &     0 & 87.38 & 12.62 & 0.0204 \\ 
  $\operatorname{FDRSeg}$ &     0 & 82.19 & 17.81 & 0.02409 &     0 & 82.19 & 17.81 & 0.02409 \\ 
  $\operatorname{Ms.FPOP}$ &     0 & 98.89 &  1.11 & 0.01549 &     0 & 98.89 &  1.11 & 0.01549 \\ 
  $\operatorname{Biweight}$ &     0 & 93.21 &  6.79 & 0.05512 &     0 & 93.21 &  6.79 & 0.03418 \\  
   \hline
\end{tabular}
\caption{Simulation results relating to Example~\ref{example:underestmation}. $\operatorname{LooVF}_2$ and $\operatorname{LooVF}_5$ were applied to the observations in reverse order.}
\label{tab:underestimationExampleI}
\end{table}

\begin{table}[ht]
\centering
\begin{tabular}{|l|cccc|}
\hline
 & \multicolumn{4}{c|}{$D = 2$}\\
  \hline
Method & $\hat{K} < K$ & $\hat{K} = K$ & $\hat{K} > K$ & $\operatorname{MISE}$ \\ 
  \hline
$2$-fold $\operatorname{CV}_{(1)}$ &     0 & 86.98 & 13.02 & 0.02159 \\ 
  $5$-fold $\operatorname{CV}_{(1)}$ &     0 & 79.39 & 20.61 & 0.02747 \\ 
  $10$-fold $\operatorname{CV}_{(1)}$ &     0 & 73.25 & 26.75 & 0.03249 \\ 
  $20$-fold $\operatorname{CV}_{(1)}$ &     0 & 67.72 & 32.28 & 0.0374 \\ 
  $\operatorname{CV}_{(1)}$ &     0 & 87.18 & 12.82 & 0.02183 \\ 
  $\operatorname{CV}_{\operatorname{mod}}$ &     0 & 92.03 &  7.97 & 0.0191 \\ 
    $\operatorname{COPSS}$ &  0.47 & 69.69 & 29.84 & 0.04654 \\ 
  $\operatorname{LooVF}_2$ & 50.14 & 40.25 &  9.61 & 1.196 \\ 
  $\operatorname{LooVF}_5$ &  9.43 & 63.72 & 26.85 & 0.252 \\ 
  $\operatorname{PELT}$ &     0 & 91.34 &  8.66 & 0.02118 \\ 
  $\operatorname{WBS}$ &     0 & 87.38 & 12.62 & 0.0204 \\ 
  $\operatorname{FDRSeg}$ &     0 & 82.19 & 17.81 & 0.02409 \\ 
  $\operatorname{Ms.FPOP}$ &     0 & 98.89 &  1.11 & 0.01549 \\ 
  $\operatorname{Biweight}$ &     0 & 93.21 &  6.79 & 0.02723 \\  
   \hline
\end{tabular}
\caption{Simulation results relating to Example~\ref{example:underestmation}. $\operatorname{LooVF}_2$ and $\operatorname{LooVF}_5$ were applied to the observations in reverse order.}
\label{tab:underestimationExampleII}
\end{table}

The results in Tables~\ref{tab:underestimationExampleI}~and~\ref{tab:underestimationExampleII} support our theoretical findings from Section~\ref{sec:underestimation} that cross-validation with least squares loss ($\operatorname{COPSS}$ and $\operatorname{LooVF}_v$) underestimates the number of change-points when $2\Delta_2 > \underline{\lambda}\Delta_1$ in Example~\ref{example:underestmation} (see Theorem~\ref{theorem:underestimation}). Moreover these methods also have very large $\operatorname{MISE}$. Note that the results above for $\operatorname{LooVF}$ were obtained by applying the method to the observations in reverse order, which as explained in Example~\ref{example:underestmation}, is where the issue of $\operatorname{LooVF}$ occurs here.
%We think that this modification is fair, since Example~\ref{example:underestmation} and these simulations were specifically designed to highlight the issues of cross-validation with squared error loss, but this requires small variations of the setting for different methods. Applying $\operatorname{LooVF}$ to the observations in normal order gives results that are slightly worse than the ones for our proposed methods due to a higher likelihood for overestimation as observed in other simulations as well.
Other approaches do not underestimate the number of change-points and our new cross-validation approaches (particularly $\operatorname{CV}_{\operatorname{mod}}$ here) are competitive with these. With a moderate number of folds, our new cross-validation approaches are competitive with classical change-point approaches.

\subsection{Overestimation example}\label{sec:simulationOverestimation}
Consider observations as in Example \ref{example:overestmation}. We choose $n = 202$, $\Delta_1 = 1$ and vary $\sigma$. Moreover, in one simulation the change-point will be at $\tau_1 = n / 2 + 1$, an even location.

\begin{table}[ht]
\centering
\begin{tabular}{|l|cccc|cccc|}
\hline
  &  \multicolumn{4}{c|}{$\sigma = 1$} & \multicolumn{4}{c|}{$\sigma = 0.1$}\\
  \hline
Method & $\hat{K} < K$ & $\hat{K} = K$ & $\hat{K} > K$ & $\operatorname{MISE}$ & $\hat{K} < K$ & $\hat{K} = K$ & $\hat{K} > K$ & $\operatorname{MISE}$ \\ 
  \hline
$2$-fold $\operatorname{CV}_{(1)}$ & 38.54 & 53.03 &  8.43 & 0.194 &     0 & 88.13 & 11.87 & 0.0001636 \\ 
  $5$-fold $\operatorname{CV}_{(1)}$ & 25.21 & 58.53 & 16.26 & 0.1931 &     0 &  79.6 &  20.4 & 0.0002293 \\ 
  $10$-fold $\operatorname{CV}_{(1)}$ & 21.37 & 56.51 & 22.12 & 0.2055 &     0 & 73.69 & 26.31 & 0.0002821 \\ 
  $20$-fold $\operatorname{CV}_{(1)}$ & 17.45 & 54.51 & 28.04 & 0.2182 &     0 & 67.94 & 32.06 & 0.0003258 \\  
  $\operatorname{CV}_{(1)}$ & 38.62 & 52.82 &  8.56 & 0.1949 &     0 & 86.57 & 13.43 & 0.0001711 \\ 
  $\operatorname{CV}_{\operatorname{mod}}$ & 37.53 & 56.56 &  5.91 & 0.1842 &     0 & 91.33 &  8.67 & 0.0001449 \\ 
    $\operatorname{COPSS}$ & 37.18 & 56.08 &  6.74 & 0.1865 &     0 & 74.72 & 25.28 & 0.0002274 \\
  $\operatorname{LooVF}_2$ & 40.96 & 51.61 &  7.43 & 0.1938 &     0 & 76.99 & 23.01 & 0.0002136 \\ 
  $\operatorname{LooVF}_5$ & 26.01 & 59.55 & 14.44 & 0.1869 &     0 & 73.04 & 26.96 & 0.0002593 \\ 
  $\operatorname{PELT}$ & 23.32 & 66.57 & 10.11 & 0.1765 &     0 &  91.5 &   8.5 & 0.0001619 \\ 
  $\operatorname{WBS}$ & 39.42 & 50.16 & 10.42 & 0.1988 &     0 & 86.32 & 13.68 & 0.0001598 \\ 
  $\operatorname{FDRSeg}$ & 30.42 & 64.59 &  4.99 & 0.1722 &     0 & 83.93 & 16.07 & 0.0001845 \\ 
  $\operatorname{Ms.FPOP}$ &  33.4 & 65.35 &  1.25 & 0.1625 &     0 &  98.9 &   1.1 & 0.0001059 \\ 
  $\operatorname{Biweight}$ & 25.14 & 67.37 &  7.49 & 0.1736 &     0 &    93 &     7 & 0.0001644 \\ 
   \hline
\end{tabular}
\caption{Simulation results relating to Example~\ref{example:overestmation}.}
\label{tab:overestimationExampleI} 
\end{table}

\begin{table}[ht]
\centering
\begin{tabular}{|l|cccc|cccc|}
\hline
 & \multicolumn{4}{c|}{$\sigma = 0.01$} &   \multicolumn{4}{c|}{$\sigma = 0.001$}\\
  \hline
Method & $\hat{K} < K$ & $\hat{K} = K$ & $\hat{K} > K$ & $\operatorname{MISE}$ & $\hat{K} < K$ & $\hat{K} = K$ & $\hat{K} > K$ & $\operatorname{MISE}$ \\ 
  \hline
$2$-fold $\operatorname{CV}_{(1)}$ &     0 & 87.54 & 12.46 & 1.649e-06 &     0 & 87.66 & 12.34 & 1.648e-08 \\ 
  $5$-fold $\operatorname{CV}_{(1)}$ &     0 & 80.75 & 19.25 & 2.227e-06 &     0 & 79.81 & 20.19 & 2.264e-08 \\ 
  $10$-fold $\operatorname{CV}_{(1)}$ &     0 & 74.41 & 25.59 & 2.753e-06 &     0 & 73.77 & 26.23 & 2.74e-08 \\ 
  $20$-fold $\operatorname{CV}_{(1)}$ &     0 & 68.21 & 31.79 & 3.225e-06 &     0 & 68.07 & 31.93 & 3.214e-08 \\ 
  $\operatorname{CV}_{(1)}$ &     0 & 86.53 & 13.47 & 1.72e-06 &     0 & 86.51 & 13.49 & 1.712e-08 \\ 
  $\operatorname{CV}_{\operatorname{mod}}$ &     0 & 91.33 &  8.67 & 1.453e-06 &     0 & 91.31 &  8.69 & 1.463e-08 \\ 
    $\operatorname{COPSS}$ &     0 & 32.34 & 67.66 & 1.227e-05 &     0 &  18.6 &  81.4 & 1.982e-07 \\ 
  $\operatorname{LooVF}_2$ &     0 & 39.78 & 60.22 & 6.111e-05 &     0 & 23.11 & 76.89 & 5.024e-05 \\ 
  $\operatorname{LooVF}_5$ &     0 & 40.16 & 59.84 & 0.009707 &     0 & 20.25 & 79.75 & 0.0451 \\ 
  $\operatorname{PELT}$ &     0 & 91.71 &  8.29 & 1.603e-06 &     0 & 91.12 &  8.88 & 1.669e-08 \\ 
  $\operatorname{WBS}$ &     0 & 86.39 & 13.61 & 1.561e-06 &     0 & 86.61 & 13.39 & 1.595e-08 \\ 
  $\operatorname{FDRSeg}$ &     0 & 87.88 & 12.12 & 1.659e-06 &     0 & 86.97 & 13.03 & 1.739e-08 \\ 
  $\operatorname{Ms.FPOP}$ &     0 & 98.85 &  1.15 & 1.074e-06 &     0 & 98.75 &  1.25 & 1.091e-08 \\ 
  $\operatorname{Biweight}$ &     0 & 93.21 &  6.79 & 1.735e-05 &     0 & 93.04 &  6.96 & 2.081e-05 \\ 
   \hline
\end{tabular}
\caption{Simulation results relating to Example~\ref{example:overestmation}.}
\label{tab:overestimationExampleII}
\end{table}

\begin{table}[ht]
\centering
\begin{tabular}{|l|cccc|cccc|}
\hline
  & \multicolumn{4}{c|}{$\sigma = 0.0001$} & \multicolumn{4}{c|}{$\sigma = 0.0001$, $\tau_1 = n / 2 + 1$}\\
  \hline
Method & $\hat{K} < K$ & $\hat{K} = K$ & $\hat{K} > K$ & $\operatorname{MISE}$ & $\hat{K} < K$ & $\hat{K} = K$ & $\hat{K} > K$ & $\operatorname{MISE}$ \\ 
  \hline
$2$-fold $\operatorname{CV}_{(1)}$ &     0 & 87.15 & 12.85 & 1.691e-10 &     0 & 86.74 & 13.26 & 1.726e-10 \\ 
  $5$-fold $\operatorname{CV}_{(1)}$ &     0 & 79.72 & 20.28 & 2.289e-10 &     0 &  79.9 &  20.1 & 2.294e-10 \\ 
  $10$-fold $\operatorname{CV}_{(1)}$ &     0 & 72.89 & 27.11 & 2.838e-10 &     0 & 73.92 & 26.08 & 2.81e-10 \\ 
  $20$-fold $\operatorname{CV}_{(1)}$ &     0 & 67.37 & 32.63 & 3.288e-10 &     0 & 68.21 & 31.79 & 3.237e-10 \\ 
  $\operatorname{CV}_{(1)}$ &     0 & 86.13 & 13.87 & 1.752e-10 &     0 & 88.22 & 11.78 & 1.674e-10 \\ 
  $\operatorname{CV}_{\operatorname{mod}}$ &     0 & 90.94 &  9.06 & 1.49e-10 &     0 & 91.17 &  8.83 & 1.48e-10 \\ 
    $\operatorname{COPSS}$ &     0 & 18.47 & 81.53 & 2.034e-09 &     0 &  91.5 &   8.5 & 1.522e-10 \\ 
  $\operatorname{LooVF}_2$ &     0 & 22.01 & 77.99 & 5e-05 &     0 & 22.64 & 77.36 & 4.951e-05 \\ 
  $\operatorname{LooVF}_5$ &     0 &  18.9 &  81.1 & 0.05105 &     0 & 19.95 & 80.05 & 0.05203 \\ 
  $\operatorname{PELT}$ &     0 & 91.29 &  8.71 & 1.659e-10 &     0 & 90.66 &  9.34 & 1.707e-10 \\ 
  $\operatorname{WBS}$ &     0 & 86.84 & 13.16 & 1.624e-10 &     0 & 86.18 & 13.82 & 1.637e-10 \\ 
  $\operatorname{FDRSeg}$ &     0 &  88.8 &  11.2 & 1.64e-10 &     0 & 88.63 & 11.37 & 1.651e-10 \\ 
  $\operatorname{Ms.FPOP}$ &     0 & 98.82 &  1.18 & 1.097e-10 &     0 & 98.92 &  1.08 & 1.092e-10 \\ 
  $\operatorname{Biweight}$ &     0 &    93 &     7 & 1.535e-05 &     0 & 92.59 &  7.41 & 8.911e-06 \\ 
   \hline
\end{tabular}
\caption{Simulation results relating to Example~\ref{example:overestmation}.}
\label{tab:overestimationExampleIII}
\end{table}

Tables~\ref{tab:overestimationExampleI}--\ref{tab:overestimationExampleIII} show that cross-validation with least squares loss is likely to overestimate the number of change-points in the setting of Example~\ref{example:overestmation} when the signal-to-noise ratio increases. In accordance with the explanations in Section~\ref{sec:L2loss}, this phenomenon does not occur when the change-point is at an even location. Cross-validation with our proposed criteria does not suffer from this deficiency and is insensitive to whether the change-point is at an odd or even location. Similarly to the previous set of results, the likelihood of overestimation increases with the number of folds, and classical change-point approaches have a slightly smaller tendency to overestimate than cross-validation.

\subsection{Detection power}\label{sec:detectionPower}
In this section we provide a more systematic study of detection power. To this end, we consider a signal with a single bump in the middle. We chose $n = 200$, $K = 2$, $\tau_1 = (n - \lambda) / 2$, $\tau_2 = (n + \lambda) / 2$, $\beta_0 = \beta_2 = 0$ and $\beta_1 = \Delta_1 = -\Delta_2 = \delta$, and $\sigma = 1$. We vary $\lambda$ and $\delta$ to change the length and size of the bump.

\begin{table}[ht]
\centering
\begin{tabular}{|l|cccc|cccc|}
\hline
  &  \multicolumn{4}{c|}{$\lambda = 6$, $\delta = 2$} & \multicolumn{4}{c|}{$\lambda = 6$, $\delta = 3$}\\
  \hline
Method & $\hat{K} < K$ & $\hat{K} = K$ & $\hat{K} > K$ & $\operatorname{MISE}$ & $\hat{K} < K$ & $\hat{K} = K$ & $\hat{K} > K$ & $\operatorname{MISE}$ \\ 
  \hline
$2$-fold $\operatorname{CV}_{(1)}$ & 57.91 & 32.98 &  9.11 & 0.09507 & 11.29 & 75.34 & 13.37 & 0.06415 \\ 
  $5$-fold $\operatorname{CV}_{(1)}$ & 31.82 & 50.91 & 17.27 & 0.07822 &  2.09 &  76.7 & 21.21 & 0.04884 \\ 
  $10$-fold $\operatorname{CV}_{(1)}$ & 26.91 & 50.92 & 22.17 & 0.07852 &  1.62 &  71.4 & 26.98 & 0.05183 \\ 
  $20$-fold $\operatorname{CV}_{(1)}$ & 24.35 & 48.35 &  27.3 & 0.08044 &  1.49 & 66.44 & 32.07 & 0.05528 \\  
  $\operatorname{CV}_{(1)}$ & 67.33 &  25.7 &  6.97 & 0.1033 & 27.37 & 62.29 & 10.34 & 0.103 \\ 
  $\operatorname{CV}_{\operatorname{mod}}$ & 53.08 & 37.46 &  9.46 & 0.08971 &  6.86 & 80.83 & 12.31 & 0.05266 \\ 
    $\operatorname{COPSS}$ & 67.59 & 26.38 &  6.03 & 0.1029 & 26.68 & 62.75 & 10.57 & 0.1017 \\
  $\operatorname{LooVF}_2$ & 61.92 &    29 &  9.08 & 0.09794 & 18.61 & 66.11 & 15.28 & 0.08379 \\ 
  $\operatorname{LooVF}_5$ & 33.03 & 48.53 & 18.44 & 0.07943 &  2.26 & 73.44 &  24.3 & 0.05007 \\ 
  $\operatorname{PELT}$ & 30.06 & 62.79 &  7.15 & 0.07085 &  0.56 &  91.1 &  8.34 & 0.03865 \\ 
  $\operatorname{WBS}$ &  22.9 & 65.67 & 11.43 & 0.06609 &  0.28 & 87.44 & 12.28 & 0.03777 \\ 
  $\operatorname{FDRSeg}$ & 56.55 & 35.87 &  7.58 & 0.09902 &  4.87 & 84.75 & 10.38 & 0.05035 \\ 
  $\operatorname{Ms.FPOP}$ &  60.7 & 38.56 &  0.74 & 0.09193 &  3.97 & 94.97 &  1.06 & 0.04143 \\ 
  $\operatorname{Biweight}$ & 38.47 & 56.35 &  5.18 & 0.07644 &  1.78 & 91.56 &  6.66 & 0.04118 \\ 
   \hline
\end{tabular}
\caption{Detection of a single bump of length $\lambda$ and size $\delta$.}
\label{tab:detectionPowerI} 
\end{table}

\begin{table}[ht]
\centering
\begin{tabular}{|l|cccc|cccc|}
\hline
  &  \multicolumn{4}{c|}{$\lambda = 6$, $\delta = 4$} & \multicolumn{4}{c|}{$\lambda = 8$, $\delta = 2$}\\
  \hline
Method & $\hat{K} < K$ & $\hat{K} = K$ & $\hat{K} > K$ & $\operatorname{MISE}$ & $\hat{K} < K$ & $\hat{K} = K$ & $\hat{K} > K$ & $\operatorname{MISE}$ \\ 
  \hline
$2$-fold $\operatorname{CV}_{(1)}$ &  1.11 & 85.19 &  13.7 & 0.03563 & 36.39 & 51.07 & 12.54 & 0.09098 \\ 
  $5$-fold $\operatorname{CV}_{(1)}$ &  0.05 & 78.73 & 21.22 & 0.0366 & 14.44 & 65.34 & 20.22 & 0.07011 \\ 
  $10$-fold $\operatorname{CV}_{(1)}$ &  0.05 & 73.53 & 26.42 & 0.04118 & 11.18 & 63.67 & 25.15 & 0.07032 \\ 
  $20$-fold $\operatorname{CV}_{(1)}$ &  0.02 & 67.78 &  32.2 & 0.0454 &    10 & 59.21 & 30.79 & 0.07334 \\ 
  $\operatorname{CV}_{(1)}$ &  9.21 & 77.75 & 13.04 & 0.07238 & 30.77 & 53.74 & 15.49 & 0.08562 \\ 
  $\operatorname{CV}_{\operatorname{mod}}$ &  0.33 & 89.91 &  9.76 & 0.03021 & 31.48 &  56.9 & 11.62 & 0.08361 \\ 
   $\operatorname{COPSS}$ &  9.56 & 75.71 & 14.73 & 0.07406 &  28.7 & 57.08 & 14.22 & 0.08195 \\ 
  $\operatorname{LooVF}_2$ &  3.96 & 78.78 & 17.26 & 0.05039 & 33.15 & 53.26 & 13.59 & 0.08703 \\ 
  $\operatorname{LooVF}_5$ &  0.06 & 74.79 & 25.15 & 0.03826 & 15.48 & 64.13 & 20.39 & 0.07086 \\ 
  $\operatorname{PELT}$ &     0 & 91.43 &  8.57 & 0.02975 & 12.12 & 79.95 &  7.93 & 0.06073 \\ 
  $\operatorname{WBS}$ &     0 & 88.42 & 11.58 & 0.02889 &  8.78 & 79.34 & 11.88 & 0.05714 \\ 
  $\operatorname{FDRSeg}$ &  0.01 & 87.06 & 12.93 & 0.03281 & 35.91 & 56.87 &  7.22 & 0.09241 \\ 
  $\operatorname{Ms.FPOP}$ &     0 & 99.01 &  0.99 & 0.0244 & 34.08 & 65.01 &  0.91 & 0.08182 \\ 
  $\operatorname{Biweight}$ &  0.02 &  93.1 &  6.88 & 0.02935 & 16.75 &  77.1 &  6.15 & 0.06554 \\ 
   \hline
\end{tabular}
\caption{Detection of a single bump of length $\lambda$ and size $\delta$.}
\label{tab:detectionPowerII} 
\end{table}

\begin{table}[ht]
\centering
\begin{tabular}{|l|cccc|cccc|}
\hline
  &  \multicolumn{4}{c|}{$\lambda = 12$, $\delta = 2$} & \multicolumn{4}{c|}{$\lambda = 20$, $\delta = 2$}\\
  \hline
Method & $\hat{K} < K$ & $\hat{K} = K$ & $\hat{K} > K$ & $\operatorname{MISE}$ & $\hat{K} < K$ & $\hat{K} = K$ & $\hat{K} > K$ & $\operatorname{MISE}$ \\ 
  \hline
$2$-fold $\operatorname{CV}_{(1)}$ & 13.29 &  71.2 & 15.51 & 0.07345 &  1.23 & 83.32 & 15.45 & 0.05142 \\ 
  $5$-fold $\operatorname{CV}_{(1)}$ &  2.68 & 76.03 & 21.29 & 0.05779 &  0.06 & 78.33 & 21.61 & 0.05305 \\ 
  $10$-fold $\operatorname{CV}_{(1)}$ &  1.89 & 71.16 & 26.95 & 0.06026 &  0.05 & 73.14 & 26.81 & 0.05731 \\ 
  $20$-fold $\operatorname{CV}_{(1)}$ &  1.73 & 66.36 & 31.91 & 0.06397 &  0.02 & 67.75 & 32.23 & 0.06125 \\  
  $\operatorname{CV}_{(1)}$ & 10.61 & 72.83 & 16.56 & 0.0693 &  0.97 & 83.14 & 15.89 & 0.05101 \\ 
  $\operatorname{CV}_{\operatorname{mod}}$ &  9.75 & 77.59 & 12.66 & 0.06499 &  0.75 & 86.53 & 12.72 & 0.0479 \\ 
    $\operatorname{COPSS}$ &  8.92 & 75.85 & 15.23 & 0.06492 &  0.66 & 84.38 & 14.96 & 0.04919 \\
  $\operatorname{LooVF}_2$ & 10.92 & 71.37 & 17.71 & 0.06956 &  1.01 & 80.68 & 18.31 & 0.05152 \\ 
  $\operatorname{LooVF}_5$ &  2.61 & 74.58 & 22.81 & 0.05815 &  0.04 & 77.58 & 22.38 & 0.0525 \\ 
  $\operatorname{PELT}$ &  1.38 &  90.1 &  8.52 & 0.04915 &  0.01 & 90.75 &  9.24 & 0.04632 \\ 
  $\operatorname{WBS}$ &  1.01 & 86.82 & 12.17 & 0.04803 &  0.01 & 87.84 & 12.15 & 0.0455 \\ 
  $\operatorname{FDRSeg}$ &  6.03 & 81.59 & 12.38 & 0.06085 &  0.07 & 86.51 & 13.42 & 0.04913 \\ 
  $\operatorname{Ms.FPOP}$ &  6.59 & 92.32 &  1.09 & 0.05332 &  0.05 & 98.94 &  1.01 & 0.04055 \\ 
  $\operatorname{Biweight}$ &  2.01 &  91.1 &  6.89 & 0.05034 &  0.01 & 92.64 &  7.35 & 0.04613 \\ 
   \hline
\end{tabular}
\caption{Detection of a single bump of length $\lambda$ and size $\delta$.}
\label{tab:detectionPowerIII} 
\end{table}

We see from Tables~\ref{tab:detectionPowerI}--\ref{tab:detectionPowerIII} that the newly proposed criteria have only a slightly smaller detection power than the one based on least square loss. Note that the issues of least square loss do not have a significance influence in this setting. Detection power increases with the number of folds, but this also increases the likelihood of false positives. We find that overall, $5$-fold cross-validation with absolute error offers a good balance between the risks of under- or overestimate, and also performs well with respect to $\operatorname{MISE}$. Its performance is also quite competitive with classical change-point regression approaches.

\section{Proof of Theorem~\ref{theorem:detectionPrecision}}\label{sec:proof:detectionPrecision}
%\Rajen{TODO:
%\begin{itemize}
%	\item Ceiling / floor issues throughout
%	\item $C_n$ issue in Thm 3
%	\item Moving to sequence of half $\pm 1$ length in Thm 3. 
%	\item `Without loss of generality' in Thm 5
%	\item Last part of Thm 5 unclear.
%	\end{itemize}} all done
Our proof strategy is as follows.
We first show that with high probability, for each true change-point, there is an estimated change-point that is $\underline{\lambda} / 4$ close; see \eqref{eq:CPdetection}. Then working on this event, we show that any change-point set that violates the condition in Theorem~\ref{theorem:detectionPrecision} has larger costs than the true change-point set. To this end, we use Lemma~\ref{lemma:differenceCosts} to split the difference of these costs into several terms, which we use the following large deviations bounds to control. Throughout the proofs, we write $k\in [a, b]$ instead of $k \in \{\lceil a \rceil,\ldots,\lfloor b \rfloor\}$ when it is clear from the context that $k$ is an integer.

\begin{Lemma}[Lemma~4 on p. 33 in \citet{verzelen2020optimal}]\label{lemma:largeDeviationInterval}
	Let $\varepsilon_1,\varepsilon_2,\ldots$ be independent centred sub-Gaussian random variables with variance proxy $1$. Then, for any integer $d \geq 1$, any $\alpha > 0$ and any $x > 0$,
	\begin{equation*}
		\Pj\left( \max_{k\in [d,(1+\alpha)d]} \frac{\sum_{i=1}^{k}{\varepsilon_i}}{\sqrt{k}} \geq x \right)\leq \exp\left(-\frac{x^2}{2(1+\alpha)}\right).
	\end{equation*}
\end{Lemma}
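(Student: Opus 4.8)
The plan is to recognise the left-hand side as the probability that the martingale $S_k := \sum_{i=1}^{k}\varepsilon_i$ crosses the curved boundary $k \mapsto x\sqrt{k}$ somewhere in the window, and to control this with an exponential supermartingale together with a maximal inequality. The integers in the window are exactly $\{d,\ldots,N\}$, where $N := \lfloor (1+\alpha)d\rfloor$ is the largest integer with $N \le (1+\alpha)d$; I fix the tilting parameter $\lambda := x/\sqrt{N}$. Since each $\varepsilon_i$ is centred and sub-Gaussian with variance proxy $1$, we have $\E[\exp(\lambda\varepsilon_i)]\le \exp(\lambda^2/2)$, so that
\[
M_k := \exp\!\big(\lambda S_k - \tfrac{1}{2}\lambda^2 k\big), \qquad k = 0,1,\ldots,
\]
is a nonnegative supermartingale with respect to the natural filtration $(\mathcal F_k)$, with $M_0 = 1$; indeed $\E[M_k \mid \mathcal F_{k-1}] = M_{k-1}\,\E[\exp(\lambda\varepsilon_k)]\exp(-\lambda^2/2)\le M_{k-1}$.

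The key step is to replace the curved boundary by the linear one $k \mapsto \lambda k$, which lies below it throughout the window: for $k \le N$ one has $x\sqrt{k} \ge (x/\sqrt{N})\,k = \lambda k$. Hence on the event $E := \{\max_{d \le k \le N} S_k/\sqrt{k}\ge x\}$ there is some $k \in \{d,\ldots,N\}$ with $S_k \ge x\sqrt{k}\ge \lambda k$, and for this $k$,
\[
\lambda S_k - \tfrac{1}{2}\lambda^2 k \;\ge\; \lambda^2 k - \tfrac{1}{2}\lambda^2 k \;=\; \tfrac{1}{2}\lambda^2 k \;\ge\; \tfrac{1}{2}\lambda^2 d \;=\; \frac{x^2 d}{2N} \;\ge\; \frac{x^2}{2(1+\alpha)},
\]
using $\lambda>0$, $k\ge d$, and $N\le (1+\alpha)d$. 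Consequently $E$ implies $\max_{0\le k\le N} M_k \ge \exp\!\big(x^2/(2(1+\alpha))\big)$.

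It then remains to apply Doob's maximal inequality (Ville's inequality) for the nonnegative supermartingale $M_k$, giving $\Pj(\max_{0\le k\le N} M_k \ge a)\le \E[M_0]/a = 1/a$ with $a = \exp(x^2/(2(1+\alpha)))$, which yields the claimed bound. The only genuinely delicate point is the choice of $\lambda$ and the linearisation that decouples the estimate from the running index $k$: a single fixed tilt cannot match the optimal exponent at every $k$ simultaneously, and taking $\lambda = x/\sqrt{N}$ is precisely what makes the linear boundary dominate the square-root boundary over the entire window while being tangent at its far end, so that the worst case $k=d$ still produces the exponent $x^2/(2(1+\alpha))$. Handling the non-integer endpoint $(1+\alpha)d$ via the floor $N$ is routine and, since $N\le(1+\alpha)d$, only sharpens the constant.
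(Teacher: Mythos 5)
Your proof is correct: the single exponential tilt $\lambda = x/\sqrt{N}$ tuned to the right endpoint of the window, combined with Ville's maximal inequality for the nonnegative supermartingale $\exp(\lambda S_k - \lambda^2 k/2)$, gives exactly the claimed bound, and every step (the domination $x\sqrt{k}\ge \lambda k$ for $k\le N$ and the worst-case exponent at $k=d$) checks out. The paper does not supply its own argument but simply cites \citet[Lemma~4]{verzelen2020optimal}; your derivation is the standard proof underlying that result, so it is essentially the same approach, just written out in full.
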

\begin{Lemma}\label{lemma:largeDeviationLargeLeftIndex}
	Let $\varepsilon_1,\varepsilon_2,\ldots$ be independent centred sub-Gaussian random variables with variance proxy $1$. Then, for any integer $c \geq 1$ and any $x > 0$ such that $cx^2 > 4$,
	\begin{equation*}
		\Pj\left( \sup_{k \geq c} \frac{\sum_{i=1}^{k}{\varepsilon_i}}{k} \geq x \right)\leq 2\exp\left(-\frac{1}{4}cx^2\right).
	\end{equation*}
\end{Lemma}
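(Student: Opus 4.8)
The plan is to reduce this running-average maximal inequality to the fixed-normalisation bound of Lemma~\ref{lemma:largeDeviationInterval} by a dyadic peeling argument. Write $S_k := \sum_{i=1}^{k}\varepsilon_i$ and set $a := cx^2/4$, noting that the hypothesis $cx^2 > 4$ is precisely the statement $a > 1$. I cover the index range $\{k \geq c\}$ by the dyadic blocks $B_j := \{k : 2^j c \leq k \leq 2^{j+1} c\}$ for $j = 0,1,2,\ldots$, which together exhaust $\{k \geq c\}$ (here $c$ and $2^j$ being integers, each $B_j$ is a block of consecutive integers).

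First I would observe that on the block $B_j$ the event $\{S_k/k \geq x\}$ forces $S_k \geq xk$, hence $S_k/\sqrt{k} \geq xk/\sqrt{k} = x\sqrt{k} \geq x\sqrt{2^j c}$, so that the condition passes to the more weakly normalised quantity controlled by Lemma~\ref{lemma:largeDeviationInterval}. Consequently
\begin{equation*}
\Pj\left( \max_{k \in B_j} \frac{S_k}{k} \geq x \right) \leq \Pj\left( \max_{k \in B_j} \frac{S_k}{\sqrt{k}} \geq x\sqrt{2^j c}\right).
\end{equation*}
Applying Lemma~\ref{lemma:largeDeviationInterval} with $d = 2^j c$, $\alpha = 1$ (so that $(1+\alpha)d = 2^{j+1}c$ and the interval is exactly $B_j$), and threshold $x\sqrt{2^j c}$ bounds the right-hand side by $\exp\left(-(x\sqrt{2^j c})^2/4\right) = \exp(-a 2^j)$.

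It then remains to union bound over the blocks and sum the resulting series, giving
\begin{equation*}
\Pj\left( \sup_{k \geq c} \frac{S_k}{k} \geq x \right) \leq \sum_{j=0}^{\infty} \exp(-a 2^j).
\end{equation*}
The final quantitative step---and the only place requiring any care---is to show this sum is at most $2\exp(-a)$. Using $2^j - 1 \geq j$ for all $j \geq 0$ together with $a > 1$, each summand obeys $\exp(-a 2^j) = \exp(-a)\exp(-a(2^j-1)) \leq \exp(-a)\exp(-j)$, so the series is dominated by $\exp(-a)\sum_{j \geq 0}\exp(-j) = \exp(-a)\,e/(e-1) < 2\exp(-a)$, which is the claimed bound. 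I do not anticipate any genuine obstacle here: the argument is a standard peeling reduction, and the hypothesis $cx^2 > 4$ enters solely to guarantee $a > 1$, so that the dyadic series is controlled by its leading term.
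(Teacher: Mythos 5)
Your proof is correct and follows essentially the same route as the paper: a dyadic peeling over blocks $[2^j c,\, 2^{j+1}c]$, an application of Lemma~\ref{lemma:largeDeviationInterval} with $\alpha=1$ on each block, and a summation of the resulting series using $cx^2>4$. The only (immaterial) difference is in the final summation, where the paper bounds $\sum_{s\ge 0}\exp(-a2^s)$ via $2^s\ge s+1$ and a geometric series in $e^{-a}$, while you use $2^j-1\ge j$ and a geometric series in $e^{-1}$; both yield the constant $2$.
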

\begin{proof}
	It follows from a union bound and Lemma \ref{lemma:largeDeviationInterval} that
	\begin{equation*}
		\begin{split}
			&\Pj\left( \sup_{k \geq c} \frac{\sum_{i=1}^{k}{\varepsilon_i}}{k} \geq x \right)
			=  \Pj\left( \sup_{s \in \mathbb{Z} : s\geq 0}\max_{k \in [c 2^s, c 2^{s+1}]} \frac{\sum_{i=1}^{k}{\varepsilon_i}}{k} \geq x \right)\\
			\leq & \sum_{s = 0}^{\infty}{\Pj\left(\max_{k \in [c 2^s, c 2^{s+1}]} \frac{\sum_{i=1}^{k}{\varepsilon_i}}{\sqrt{k}} \geq x \sqrt{c 2^s}\right)}\\
			\leq & \sum_{s = 0}^{\infty}{\exp\left(- \frac{1}{4} c x^2 2^{s}\right)}
			\leq  \sum_{s = 1}^{\infty}{\exp\left(- \frac{1}{4} c x^2 s\right)}
			=  \frac{\exp\left(- \frac{1}{4} c x^2\right)}{1 - \exp\left(-\frac{1}{4} c x^2\right)}
			\leq 2\exp\left(- \frac{1}{4} c x^2\right).
		\end{split}
	\end{equation*}
\end{proof}

\begin{Lemma}\label{lemma:largeDeviationShortInterval}
	Let $\varepsilon_1,\varepsilon_2,\ldots$ be independent centred sub-Gaussian random variables with variance proxy $1$. Then, for any integer $c \geq 1$ and any $x > 0$ such that $cx^2 \geq 4$,
	\begin{equation*}
		\Pj\left( \sup_{k \geq c,\, l \geq 1} \frac{\sum_{i = k + 1}^{k + l}{\varepsilon_i}}{\sqrt{l}\sqrt{k + l}} \geq x \right)\leq 2c^2\exp\left(-\frac{1}{2}cx^2\right).
	\end{equation*}
\end{Lemma}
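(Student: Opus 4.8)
The plan is to avoid any maximal inequality over intervals (unlike Lemmas~\ref{lemma:largeDeviationInterval} and~\ref{lemma:largeDeviationLargeLeftIndex}) and instead exploit the strong normalisation $\sqrt{l}\sqrt{k+l}$ directly through a pointwise union bound over the two-dimensional index set $\{(k,l): k\geq c,\ l\geq 1\}$. The key observation is that because of the factor $\sqrt{k+l}$, the per-index tail probability decays geometrically in $k+l$, so the resulting double sum converges without any need to account for the correlations between overlapping partial sums.

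First I would fix $(k,l)$ and write $T_{k,l} := \sum_{i=k+1}^{k+l}\varepsilon_i$, which as a sum of $l$ independent centred sub-Gaussian variables of variance proxy $1$ is itself sub-Gaussian with variance proxy $l$. A Chernoff bound gives $\Pj(T_{k,l}\geq t)\leq\exp(-t^2/(2l))$ for $t>0$; taking $t = x\sqrt{l}\sqrt{k+l}$ and simplifying yields the key per-index estimate
\[
\Pj\!\left(\frac{T_{k,l}}{\sqrt{l}\sqrt{k+l}}\geq x\right)\leq\exp\!\left(-\tfrac{1}{2}x^2(k+l)\right).
\]

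Next I would pass to the supremum via countable subadditivity, using that $\{\sup_{k\geq c,\,l\geq 1}(\cdot)>x\}=\bigcup_{k\geq c,\,l\geq1}\{(\cdot)>x\}$ is an exact set identity (the boundary case of $\geq$ versus $>$ being handled by a routine limiting argument in $x$, since the bound is continuous in $x$). Writing $r:=\exp(-x^2/2)\in(0,1)$, the union bound combined with the per-index estimate gives
\[
\Pj\!\left(\sup_{k\geq c,\,l\geq1}\frac{T_{k,l}}{\sqrt{l}\sqrt{k+l}}\geq x\right)\leq\sum_{k\geq c}\sum_{l\geq1}r^{\,k+l}=\frac{r^{c}}{1-r}\cdot\frac{r}{1-r}=\frac{r^{c+1}}{(1-r)^2}.
\]

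Finally, the only genuine work is to show $\frac{r^{c+1}}{(1-r)^2}\leq 2c^2 r^{c}=2c^2\exp(-\tfrac12 cx^2)$, i.e.\ that $r/(1-r)^2\leq 2c^2$. Since $r\in(0,1)$ it suffices to prove $1-r\geq 1/(\sqrt{2}\,c)$; because $1-r$ is increasing in $x^2$ while the hypothesis $cx^2\geq 4$ forces $r\leq e^{-2/c}$, this reduces to the elementary inequality $1-e^{-2/c}\geq 1/(\sqrt2\,c)$ for every integer $c\geq1$. Substituting $u=2/c\in(0,2]$, this becomes $h(u):=1-e^{-u}-u/(2\sqrt2)\geq0$, which follows since $h$ is concave (as $h''(u)=-e^{-u}<0$) with $h(0)=0$ and $h(2)=1-e^{-2}-1/\sqrt2>0$, so that $h(u)\geq\tfrac{u}{2}h(2)\geq0$ on $[0,2]$. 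The main, and essentially only, obstacle is thus this constant-chasing step: the probabilistic content collapses to a single Chernoff bound, and the choice of normalisation is precisely what makes the naive union bound sharp enough to recover the $2c^2$ prefactor.
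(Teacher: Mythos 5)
Your proof is correct, and it takes a genuinely different and more elementary route than the paper's. The paper splits the index set into $l \in [1,c-1]$ and $l \geq c$, handles the first case by a blocked union bound over $k$ and the second by dyadic blocking in both $k$ and $l$ via Lemma~\ref{lemma:largeDeviationInterval}, and then sums several geometric series. You instead observe that the normalisation $\sqrt{l}\sqrt{k+l}$ already makes the pointwise Chernoff bound $\exp(-x^2(k+l)/2)$ summable over the full two-dimensional index set, so a naive union bound suffices; the whole proof then reduces to the elementary inequality $1-e^{-2/c}\geq 1/(\sqrt{2}\,c)$, which your concavity argument (function above the chord on $[0,2]$) establishes cleanly. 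Each step checks out: the per-index tail bound, the factorisation $\sum_{k\geq c}\sum_{l\geq 1}r^{k+l}=r^{c+1}/(1-r)^2$, the passage from the union over strict inequalities to the supremum with $\geq$ by continuity in $x$, and the final constant-chasing. What your approach buys is brevity, independence from the dyadic machinery of Lemma~\ref{lemma:largeDeviationInterval}, and in fact a sharper exponent: you obtain $2c^2\exp(-cx^2/2)$ exactly as stated, whereas the paper's own two-case argument, as written, only yields terms of the form $c^2\exp(-cx^2/4)$ and $c\exp(-cx^2/4)$, i.e.\ the weaker exponent $1/4$. The only thing you give up is that the paper's blocking technique generalises to settings where the normalisation is too weak for a pointwise union bound (as in Lemma~\ref{lemma:largeDeviationInterval} itself), but that is irrelevant here.
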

\begin{proof}
We have that
\begin{equation*}
\Pj\left( \sup_{k \geq c,\, l \geq 1} \frac{\sum_{i = k + 1}^{k + l}{\varepsilon_i}}{\sqrt{l}\sqrt{k + l}} \geq x \right)
\leq \Pj\left( \sup_{k \geq c,\, l \in [1,c-1]} \frac{\sum_{i = k + 1}^{k + l}{\varepsilon_i}}{\sqrt{l}\sqrt{k + l}} \geq x \right)
+ \Pj\left( \sup_{k \geq c,\, l \geq c} \frac{\sum_{i = k + 1}^{k + l}{\varepsilon_i}}{\sqrt{l}\sqrt{k + l}} \geq x \right).
\end{equation*}

From a union bound, sub-Gaussianity and $cx^2 \geq 4$, we have that
\begin{equation*}
	\begin{split}
		\Pj\left( \sup_{k \geq c,\, l \in [1,c-1]} \frac{\sum_{i = k + 1}^{k + l}{\varepsilon_i}}{\sqrt{l}\sqrt{k + l}} \geq x \right)
		\leq & \sum_{l = 1}^{c-1} \sum_{s = 1}^{\infty}\sum_{k\in[cs, c(s+1))} \Pj\left( \frac{\sum_{i = k + 1}^{k + l}{\varepsilon_i}}{\sqrt{l}} \geq x \sqrt{cs+l} \right) \\
		\leq &c \sum_{s = 1}^{\infty} \sum_{l = 1}^{c-1} \exp\left( -\frac{1}{2}x^2(cs+l)\right) \\
		\leq &c^2 \frac{\exp\left(-\frac{1}{2}cx^2 \right)}{1 - \exp\left(-\frac{1}{2}cx^2 \right)} \\
		= & c^2 \frac{\exp\left(-\frac{1}{4}cx^2 \right)}{1 - \exp\left(-\frac{1}{2}cx^2 \right)}  \exp\left(-\frac{1}{4}cx^2 \right) \leq c^2 \exp\left(-\frac{1}{4}cx^2\right).
	\end{split}
\end{equation*}
It follows from a union bound, Lemma \ref{lemma:largeDeviationInterval} and the fact that $cx^2 \geq 4$, that
\begin{equation*}
\begin{split}
 \Pj\left( \sup_{k \geq c,\, l \geq c} \frac{\sum_{i = k + 1}^{k + l}{\varepsilon_i}}{\sqrt{l}\sqrt{k + l}} \geq x \right)
\leq & \sum_{s = 1}^{\infty}\sum_{t = 0}^{\infty}\sum_{k \in [s c, (s + 1) c)}{\Pj\left( \max_{l \in [c 2^t, c 2^{t + 1}]} \frac{\sum_{i = k + 1}^{k + l}{\varepsilon_i}}{\sqrt{l}} \geq x \sqrt{s c + 2^t c} \right)}\\
\leq & \sum_{s = 1}^{\infty}\sum_{t = 0}^{\infty}{c \exp\left(-\frac{1}{4}c x^2 (s + 2^t)\right)}\\
\leq & c\left(\sum_{s = 1}^{\infty}{ \exp\left(-\frac{1}{4}c x^2 s\right)}\right) \left( \sum_{t = 1}^{\infty}{ \exp\left(-\frac{1}{4}c x^2 t\right)} \right)\\
\leq & c \left(\frac{\exp\left(-\frac{1}{4}c x^2\right)}{1 - \exp\left(-\frac{1}{4}c x^2\right)}\right)^2
\leq  c \exp\left(-\frac{1}{4}c x^2 \right),
\end{split}
\end{equation*}
in the final line using the fact that $\{u/(1-u)\}^2 \leq u$ for $0\leq u \leq (3-\sqrt(5))/2$ and that $e^{-1} \leq (3-\sqrt(5))/2$.
Combing all inequalities shows the statement.
\end{proof}
For a set of putative change-points $\mathcal{U} = \{t_0 < t_1 < \cdots < t_L < t_{L + 1}\}$
%and any vector $Z = (Z_1,\ldots,Z_n) \in \R^n$, we use the notation
%\begin{equation*}
%S_Z(\mathcal{U}) := \sum_{l = 0}^{L}\sum_{i = t_l + 1}^{t_{l+1}}{\left(Z_i - \overline{Z}_{t_l :t_{l+1}}\right)^2}.
%\end{equation*}
%For ease of presentation,
we will write $\mathcal{U} \setminus t_1$ and $\mathcal{U} \cup t_2$ instead of $\mathcal{U} \setminus \{t_1\}$ and $\mathcal{U} \cup \{t_2\}$ throughout the paper, for ease of presentation.
\begin{Lemma}\label{lemma:differenceCosts}
Let $Y_i = \mu_i + \varepsilon_i$ and let $\mathcal{U} = \{0 = t_0 < t_1 < \cdots < t_{L} < t_{L + 1} = n\}$ be an arbitrary set of candidate change-points. Suppose integer $t$ is such that $t_{l - 1} < t < t_l$ and $\overline{\mu}_{t : t_l} = 0$. Then,
\begin{equation*}
\begin{split}
& S_Y\left( \mathcal{U} \right)  - S_Y\left( (\mathcal{U} \setminus t_l) \cup t \right)\\
= & \left(t_l - t\right)\left[
 \frac{t - t_{l - 1}}{t_l - t_{l - 1}} \overline{\mu}_{t_{l - 1} : t}^2 - 
 \frac{t_{l + 1} - t_l}{t_{l + 1} - t} \overline{\mu}_{t_l : t_{l + 1}}^2\right]\\
 & + 2\left(t_l - t\right)\left[
 \frac{t - t_{l - 1}}{t_l - t_{l - 1}} \overline{\mu}_{t_{l - 1} : t}\left(\overline{\varepsilon}_{t_{l - 1} : t} - \overline{\varepsilon}_{t : t_l}\right) + \frac{t_{l + 1} - t_l}{t_{l + 1} - t} \overline{\mu}_{t_l : t_{l + 1}}\left(\overline{\varepsilon}_{t : t_l} - \overline{\varepsilon}_{t_l : t_{l + 1}} \right)\right]\\
& + \left(t_l - t\right)\left[ \frac{\left(t_l - t\right) \overline{\varepsilon}_{t :t_l}^2 + 2 \left(t_{l + 1} - t_l\right) \overline{\varepsilon}_{t_l :t_{l + 1}} \overline{\varepsilon}_{t :t_l} - \left(t_{l + 1} - t_l\right)\overline{\varepsilon}_{t_l :t_{l + 1}}^2}{t_{l + 1} - t}\right.\\
& \hspace*{52pt} + \left. \frac{\left(t - t_{l - 1}\right) \overline{\varepsilon}_{t_{l - 1} :t}^2 - 2\left(t - t_{l - 1}\right) \overline{\varepsilon}_{t_{l - 1} :t}\overline{\varepsilon}_{t :t_l} - \left(t_l - t\right) \overline{\varepsilon}_{t :t_l}^2}{t_l - t_{l - 1}} 
 \right].
\end{split}
\end{equation*}
\end{Lemma}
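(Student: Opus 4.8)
The plan is to exploit the fact that replacing $t_l$ by $t$ alters the partition only on the two segments adjacent to $t_l$, namely $(t_{l-1},t_l]$ and $(t_l,t_{l+1}]$, which together cover exactly the same index range $(t_{l-1},t_{l+1}]$ as the two new segments $(t_{l-1},t]$ and $(t,t_{l+1}]$. Every other summand in $S_Y(\mathcal{U})$ and $S_Y((\mathcal{U}\setminus t_l)\cup t)$ is identical and cancels. First I would apply the elementary identity $\sum_{i=a+1}^{b}(Y_i-\overline{Y}_{a:b})^2 = \sum_{i=a+1}^{b}Y_i^2 - (b-a)\overline{Y}_{a:b}^2$ to each of the four affected segments. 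Since the combined index ranges agree, the $\sum_i Y_i^2$ contributions cancel, leaving the deterministic reduction
\[
S_Y(\mathcal{U}) - S_Y\big((\mathcal{U}\setminus t_l)\cup t\big) = (t-t_{l-1})\overline{Y}_{t_{l-1}:t}^2 + (t_{l+1}-t)\overline{Y}_{t:t_{l+1}}^2 - (t_l-t_{l-1})\overline{Y}_{t_{l-1}:t_l}^2 - (t_{l+1}-t_l)\overline{Y}_{t_l:t_{l+1}}^2.
\]

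Second, I would reduce everything to the three ``atomic'' segment means $\overline{Y}_{t_{l-1}:t}$, $\overline{Y}_{t:t_l}$, $\overline{Y}_{t_l:t_{l+1}}$ by writing the two merged means as length-weighted averages, namely $(t_l-t_{l-1})\overline{Y}_{t_{l-1}:t_l} = (t-t_{l-1})\overline{Y}_{t_{l-1}:t} + (t_l-t)\overline{Y}_{t:t_l}$ and $(t_{l+1}-t)\overline{Y}_{t:t_{l+1}} = (t_l-t)\overline{Y}_{t:t_l} + (t_{l+1}-t_l)\overline{Y}_{t_l:t_{l+1}}$. Substituting these and expanding the squares, the contributions from the common middle mean partially telescope, and a direct simplification factors out $(t_l-t)$. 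Writing $a := t-t_{l-1}$, $m := t_l-t$, $b := t_{l+1}-t_l$ and $u,v,w$ for the three atomic means, this should collapse to $m\big[(au^2-2auv-mv^2)/(a+m) + (mv^2+2bvw-bw^2)/(m+b)\big]$, with $a+m = t_l-t_{l-1}$ and $m+b = t_{l+1}-t$. This is the one genuinely computational step, but it is a pure algebraic identity with no inequalities, so the only risk is bookkeeping of signs and denominators.

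Finally, I would substitute $\overline{Y}_{a':b'} = \overline{\mu}_{a':b'} + \overline{\varepsilon}_{a':b'}$ into $u,v,w$ and crucially invoke the hypothesis $\overline{\mu}_{t:t_l}=0$, so that the middle atomic mean $v = \overline{\varepsilon}_{t:t_l}$ is pure noise. Expanding and grouping the result into its purely-signal part (the $\overline{\mu}^2$ terms), its signal--noise cross part (the $\overline{\mu}\,\overline{\varepsilon}$ terms), and its purely-noise part (the $\overline{\varepsilon}^2$ terms) then reproduces exactly the three displayed lines of the claim, with the common prefactor $(t_l-t)$. I do not anticipate any real obstacle: the whole lemma is an identity, and the only care required is in tracking signs and keeping the denominators $t_l-t_{l-1}$ and $t_{l+1}-t$ correctly assigned throughout the expansion.
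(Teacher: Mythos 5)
Your proposal is correct and is essentially the paper's argument: both are direct algebraic verifications that cancel everything outside $(t_{l-1},t_{l+1}]$, decompose the merged segment means into length-weighted combinations of the three atomic means, and use $\overline{\mu}_{t:t_l}=0$ to make the middle mean pure noise. The only difference is organizational — you collapse to the $(b-a)\overline{Y}_{a:b}^2$ form before splitting $Y=\mu+\varepsilon$, whereas the paper splits into signal, cross, and noise components at the outset via \eqref{eq:costSplit} — and your three-variable reduction does reproduce the stated identity exactly.
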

The assumption $\overline{\mu}_{t : t_l} = 0$ may appear restrictive at first glance. But note that we can replace $Y_i$ with $Y_i - \overline{\mu}_{t : t_l}$, since it does not change the costs. Alternatively, one can replace every $\mu_i$ with $\mu_i - \overline{\mu}_{t : t_l}$ in the right hand side to obtain a more general lemma that does not require this assumption.
\begin{proof}[Proof of Lemma~\ref{lemma:differenceCosts}]
We have that
\begin{equation}\label{eq:costSplit}
\begin{split}
&\sum_{i = t_l + 1}^{t_{l+1}}{\left(Y_i - \overline{Y}_{t_l :t_{l + 1}}\right)^2}\\
=& \sum_{i = t_l + 1}^{t_{l+1}}{\left(\mu_i + \varepsilon_i - \overline{\mu}_{t_l :t_{l + 1}} - \overline{\varepsilon}_{t_l :t_{l + 1}}\right)^2} \\
=& \sum_{i = t_l + 1}^{t_{l+1}}{\left(\mu_i - \overline{\mu}_{t_l :t_{l + 1}}\right)^2}
+ 2\sum_{i = t_l + 1}^{t_{l+1}}{\left(\mu_i - \overline{\mu}_{t_l :t_{l + 1}} \right) \left(\varepsilon_i - \overline{\varepsilon}_{t_l :t_{l + 1}}\right)} 
+ \sum_{i = t_l + 1}^{t_{l+1}}{\left(\varepsilon_i - \overline{\varepsilon}_{t_l :t_{l + 1}}\right)^2}\\
=& \sum_{i = t_l + 1}^{t_{l+1}}{\mu_i\left(\mu_i - \overline{\mu}_{t_l :t_{l + 1}}\right)}
+ 2\sum_{i = t_l + 1}^{t_{l+1}}{\mu_i\left(\varepsilon_i - \overline{\varepsilon}_{t_l :t_{l + 1}} \right)} 
- \left(t_{l+1} - t_l\right)\overline{\varepsilon}_{t_l :t_{l + 1}}^2
+ \sum_{i = t_l + 1}^{t_{l+1}}{\varepsilon_i^2},
\end{split}
\end{equation}
since
\begin{equation*}
\sum_{i = t_l + 1}^{t_{l+1}}{\overline{\mu}_{t_l :t_{l + 1}}}\left(\mu_i - \overline{\mu}_{t_l :t_{l + 1}} \right)  = 0 = \sum_{i = t_l + 1}^{t_{l+1}}{\overline{\mu}_{t_l :t_{l + 1}} \left(\varepsilon_i - \overline{\varepsilon}_{t_l :t_{l + 1}} \right) }.
\end{equation*}

Thus,
\begin{equation*}
\begin{split}
& S_Y\left( \mathcal{U} \right)  - S_Y\left( (\mathcal{U} \setminus t_l) \cup t \right)\\
= & -\sum_{i =  t_{l - 1} + 1}^{t_l}{\mu_i \overline{\mu}_{t_{l - 1} :t_l}}
-\sum_{i =  t_l + 1}^{t_{l + 1}}{\mu_i \overline{\mu}_{t_l :t_{l + 1}}}
 + \sum_{i = t_{l - 1} + 1}^{t}{\mu_i \overline{\mu}_{t_{l - 1} :t}}
+ \sum_{i =  t + 1}^{t_{l + 1}}{\mu_i \overline{\mu}_{t :t_{l + 1}}}\\
& - 2\sum_{i =  t_{l - 1} + 1}^{t_l}{\mu_i \overline{\varepsilon}_{t_{l - 1} :t_l}}
- 2\sum_{i =  t_l + 1}^{t_{l + 1}}{\mu_i \overline{\varepsilon}_{t_l :t_{l + 1}}}
 + 2\sum_{i = t_{l - 1} + 1}^{t}{\mu_i \overline{\varepsilon}_{t_{l - 1} :t}}
+ 2\sum_{i =  t + 1}^{t_{l + 1}}{\mu_i \overline{\varepsilon}_{t :t_{l + 1}}}\\
& + \left(t - t_{l - 1}\right) \overline{\varepsilon}_{t_{l - 1} :t}^2
+ \left(t_{l + 1} - t\right) \overline{\varepsilon}_{t :t_{l + 1}}^2
- \left(t_l - t_{l - 1}\right) \overline{\varepsilon}_{t_{l - 1} :t_l}^2
- \left(t_{l + 1} - t_l\right) \overline{\varepsilon}_{t_l :t_{l + 1}}^2.
\end{split}
\end{equation*}

Since $\overline{\mu}_{t : t_l} = 0$, we have that $\left(t_l - t_{l - 1}\right) \overline{\mu}_{t_{l - 1} :t_l} = \left(t - t_{l - 1}\right) \overline{\mu}_{t_{l - 1} :t}$ and\\ $\left(t_{l + 1} - t\right) \overline{\mu}_{t :t_{l + 1}} = \left(t_{l + 1} - t_l\right) \overline{\mu}_{t_l :t_{l + 1}}$. Thus,
\begin{equation*}
\begin{split}
& -\sum_{i =  t_{l - 1} + 1}^{t_l}{\mu_i \overline{\mu}_{t_{l - 1} :t_l}}
-\sum_{i =  t_l + 1}^{t_{l + 1}}{\mu_i \overline{\mu}_{t_l :t_{l + 1}}}
 + \sum_{i = t_{l - 1} + 1}^{t}{\mu_i \overline{\mu}_{t_{l - 1} :t}}
+ \sum_{i =  t + 1}^{t_{l + 1}}{\mu_i \overline{\mu}_{t :t_{l + 1}}}\\
= & -\sum_{i =  t_{l - 1} + 1}^{t_l}{\mu_i \frac{\left(t - t_{l - 1}\right) \overline{\mu}_{t_{l - 1} :t}}{t_l - t_{l - 1}}}
 + \sum_{i = t_{l - 1} + 1}^{t}{\mu_i \overline{\mu}_{t_{l - 1} :t}}\\
& + \sum_{i =  t + 1}^{t_{l + 1}}{\mu_i \frac{\left(t_{l + 1} - t_l\right) \overline{\mu}_{t_l :t_{l + 1}}}{t_{l + 1} - t}}
- \sum_{i =  t_l + 1}^{t_{l + 1}}{\mu_i \overline{\mu}_{t_l :t_{l + 1}}}\\
= & \left(t_l - t\right)\left[
 \frac{t - t_{l - 1}}{t_l - t_{l - 1}} \overline{\mu}_{t_{l - 1} : t}^2 - 
 \frac{t_{l + 1} - t_l}{t_{l + 1} - t} \overline{\mu}_{t_l : t_{l + 1}}^2\right].
\end{split}
\end{equation*}

Since $\left(t_l - t_{l - 1}\right) \overline{\varepsilon}_{t_{l - 1} :t_l} = \left(t - t_{l - 1}\right) \overline{\varepsilon}_{t_{l - 1} :t} + \left(t_l - t\right) \overline{\varepsilon}_{t :t_l}$ and\\ $\left(t_{l + 1} - t\right) \overline{\varepsilon}_{t :t_{l + 1}} = \left(t_{l + 1} - t_l\right) \overline{\varepsilon}_{t_l :t_{l + 1}} + \left(t_l - t\right) \overline{\varepsilon}_{t :t_l}$, we have that
\begin{equation*}
\begin{split}
& - \sum_{i =  t_{l - 1} + 1}^{t_l}{\mu_i \overline{\varepsilon}_{t_{l - 1} :t_l}}
- \sum_{i =  t_l + 1}^{t_{l + 1}}{\mu_i \overline{\varepsilon}_{t_l :t_{l + 1}}}
 + \sum_{i = t_{l - 1} + 1}^{t}{\mu_i \overline{\varepsilon}_{t_{l - 1} :t}}
+ \sum_{i =  t + 1}^{t_{l + 1}}{\mu_i \overline{\varepsilon}_{t :t_{l + 1}}}\\
= & -\sum_{i =  t_{l - 1} + 1}^{t_l}{\mu_i \frac{\left(t - t_{l - 1}\right) \overline{\varepsilon}_{t_{l - 1} :t} + \left(t_l - t\right) \overline{\varepsilon}_{t :t_l}}{t_l - t_{l - 1}}}
 + \sum_{i = t_{l - 1} + 1}^{t}{\mu_i \overline{\varepsilon}_{t_{l - 1} :t}}\\
& + \sum_{i =  t + 1}^{t_{l + 1}}{\mu_i \frac{\left(t_{l + 1} - t_l\right) \overline{\varepsilon}_{t_l :t_{l + 1}} + \left(t_l - t\right) \overline{\varepsilon}_{t :t_l}}{t_{l + 1} - t}}
- \sum_{i =  t_l + 1}^{t_{l + 1}}{\mu_i \overline{\varepsilon}_{t_l :t_{l + 1}}}\\
= & \left(t_l - t\right)
 \left[ \frac{t - t_{l - 1}}{t_l - t_{l - 1}} \overline{\mu}_{t_{l - 1} :t}  \left(\overline{\varepsilon}_{t_{l - 1} :t} - \overline{\varepsilon}_{t :t_l}\right)  + \frac{t_{l + 1} - t_l}{t_{l + 1} - t} \overline{\mu}_{t_l :t_{l + 1}} \left(\overline{\varepsilon}_{t :t_l} - \overline{\varepsilon}_{t_l :t_{l + 1}}\right) \right],
\end{split}
\end{equation*}
where we have used the assumption $\overline{\mu}_{t : t_l} = 0$. Moreover, it follows from the same splitting of the errors that
\begin{equation*}
\begin{split}
& \left(t - t_{l - 1}\right) \overline{\varepsilon}_{t_{l - 1} :t}^2
+ \left(t_{l + 1} - t\right) \overline{\varepsilon}_{t :t_{l + 1}}^2
- \left(t_l - t_{l - 1}\right) \overline{\varepsilon}_{t_{l - 1} :t_l}^2
- \left(t_{l + 1} - t_l\right) \overline{\varepsilon}_{t_l :t_{l + 1}}^2\\
= &\frac{\left[ \left(t_{l + 1} - t_l\right) \overline{\varepsilon}_{t_l :t_{l + 1}} + \left(t_l - t\right) \overline{\varepsilon}_{t :t_l} \right]^2}{t_{l + 1} - t} - \left(t_{l + 1} - t_l\right) \overline{\varepsilon}_{t_l :t_{l + 1}}^2\\
&- \frac{\left[ \left(t - t_{l - 1}\right) \overline{\varepsilon}_{t_{l - 1} :t} + \left(t_l - t\right) \overline{\varepsilon}_{t :t_l} \right]^2}{t_l - t_{l - 1}} 
+ \left(t - t_{l - 1}\right) \overline{\varepsilon}_{t_{l - 1} :t}^2\\
= &\left(t_l - t\right)
\left[ \frac{\left(t_l - t\right) \overline{\varepsilon}_{t :t_l}^2 + 2 \left(t_{l + 1} - t_l\right) \overline{\varepsilon}_{t_l :t_{l + 1}} \overline{\varepsilon}_{t :t_l} - \left(t_{l + 1} - t_l\right)\overline{\varepsilon}_{t_l :t_{l + 1}}^2}{t_{l + 1} - t}\right.\\
& \hspace*{42pt} + \left. \frac{\left(t - t_{l - 1}\right) \overline{\varepsilon}_{t_{l - 1} :t}^2 - 2\left(t - t_{l - 1}\right) \overline{\varepsilon}_{t_{l - 1} :t}\overline{\varepsilon}_{t :t_l} - \left(t_l - t\right) \overline{\varepsilon}_{t :t_l}^2}{t_l - t_{l - 1}} 
 \right].
\end{split}
\end{equation*}
Combining all equalities completes the proof.
\end{proof}

\begin{proof}[Proof of Theorem~\ref{theorem:detectionPrecision}]
	\hfill
\subsubsection*{Proof of \eqref{detectionPrecision:L>=K}} We begin by showing that with high probability, for $L \geq K$ there is always an estimated change-point in each $\underline{\lambda} / 4$-neighbourhood of a true change, i.e.\ that the sequence of events
\begin{equation}\label{eq:CPdetection}
	\Omega_{1n} := \left\{\forall\, L \geq K\, \exists\, \hat{\tau}_{L,i_1},\ldots, \hat{\tau}_{L,i_{K}} \in \hat{\mathcal{T}}_L\, :\, \max_{1\leq k \leq K}{\left\vert \hat{\tau}_{L,i_k} - \tau_{k} \right\vert} \leq \frac{\underline{\lambda}}{4} \right\}
\end{equation}
has $\Pj(\Omega_{1n}) \to 1$.

\subsubsection*{Step 1: $\Pj(\Omega_{1n}) \to 1$.} We denote by $\hat{\mathcal{T}}_L\left( [a, b] \right)$ the set of $L\in \N$ change-points estimated by least squares estimation, with the restriction that no change-point is in $[a, b]$, so
\begin{equation*}
\hat{\mathcal{T}}_L\left( [a, b] \right) :=
\argmin_{\substack{0 =: t_0 < t_1 < \cdots < t_L < t_{L+1} := n,\\ t_1,\ldots,t_l \not\in [a, b]}}{\sum_{l = 0}^{L}{\sum_{i = t_l + 1}^{t_{l+1}}{\left(Y_i - \overline{Y}_{t_l :t_{l + 1}}\right)^2}}}.
\end{equation*}
We also use the notation $\hat{\mathcal{T}}_L^{-k} := \hat{\mathcal{T}}_L\left( \left[ \tau_k - \frac{\underline{\lambda}}{4}, \tau_k + \frac{\underline{\lambda}}{4}\right]\right)$. Further define $\hat{\mathcal{T}}^{\varepsilon}_{L}$ to be the set of $L$ change-points that in conjunction with $\mathcal{T}$ minimises $S_\varepsilon$, i.e.\ $\hat{\mathcal{T}}^{\varepsilon}_{L} := \argmin_{\mathcal{U}:|\mathcal{U}|=L} S_\varepsilon\left(\mathcal{U} \cup \mathcal{T}\right)$.

First note that the event
\begin{equation}
 \Big\{ S_Y\left( \hat{\mathcal{T}}_L^{-k} \right) > S_Y\left( \hat{\mathcal{T}}_L\right)\ \forall\ L \geq K,\ \forall\ k = 1,\ldots,K \Big\},
\end{equation}
is contained in $\Omega_{1n}$ \eqref{eq:CPdetection}; thus is suffices to show the above as probability converging to one. To this end,  observe that
\begin{equation}
	S_Y\left( \hat{\mathcal{T}}_L\right) \leq S_Y\left( \mathcal{T} \cup \hat{\mathcal{T}}^{\varepsilon}_{L - K} \right) = S_\varepsilon\left( \mathcal{T} \cup \hat{\mathcal{T}}^{\varepsilon}_{L - K} \right). \label{eq:S_Y_bd}
\end{equation}
Now from \eqref{eq:costSplit}, for any $k=0,\ldots,K$, and for any $a,b\in\N$ such that $\tau_k \leq a < b\leq \tau_{k+1}$, we have that
\begin{equation*}
\begin{split}
&\sum_{i = a + 1}^{b}{\left(Y_i - \overline{Y}_{a :\tau_{l + 1}}\right)^2}\\
=& \sum_{i = a + 1}^{b}{\left(\mu_i - \overline{\mu}_{a :\tau_{l + 1}}\right)^2}
+ 2\sum_{i = a + 1}^{b}{\left(\mu_i - \overline{\mu}_{a :\tau_{l + 1}} \right) \left(\varepsilon_i - \overline{\varepsilon}_{a :\tau_{l + 1}}\right)} 
+ \sum_{i = a + 1}^{b}{\left(\varepsilon_i - \overline{\varepsilon}_{a :\tau_{l + 1}}\right)^2}\\
=& \sum_{i = a + 1}^{b}{\left(\mu_i - \overline{\mu}_{a :\tau_{l + 1}}\right)^2}
+ 2\sum_{i = a + 1}^{b}{\left(\mu_i - \overline{\mu}_{a :\tau_{l + 1}} \right) \varepsilon_i} 
+ \sum_{i = a + 1}^{b}{\left(\varepsilon_i - \overline{\varepsilon}_{a :\tau_{l + 1}}\right)^2},
\end{split}
\end{equation*}
since $\sum_{i = a + 1}^{b}\left(\mu_i - \overline{\mu}_{a :\tau_{l + 1}} \right) = 0$. Thus, and since adding change-points can never increase the cost,
\begin{align}
 S_Y\bigg( \hat{\mathcal{T}}_L^{-k} \bigg)
 \geq & S_Y\bigg( \hat{\mathcal{T}}_L^{-k} \cup \bigcup_{\substack{j = 1,\ldots,K,\\ j \neq k}} \tau_j \cup \left\{ \tau_k - \Bfloor{\frac{\underline{\lambda}}{4}}, \tau_k + \Bfloor{\frac{\underline{\lambda}}{4}} \right\} \bigg) \notag\\
= & \sum_{i =  \tau_k - \floor{\frac{\underline{\lambda}}{4}} + 1}^{\tau_k + \floor{\frac{\underline{\lambda}}{4}}}{\left[\left(\mu_i - \overline{\mu}_{\tau_k - \floor{\frac{\underline{\lambda}}{4}} :\tau_k + \floor{\frac{\underline{\lambda}}{4}}}\right)^2 + 2\left(\mu_i - \overline{\mu}_{\tau_k - \floor{\frac{\underline{\lambda}}{4}} :\tau_k + \floor{\frac{\underline{\lambda}}{4}}}\right)\varepsilon_i\right]} \notag\\
& + S_\varepsilon\bigg( \hat{\mathcal{T}}_L^{-k} \cup \bigcup_{\substack{j = 1,\ldots,K,\\ j \neq k}} \tau_j \cup \left\{ \tau_k - \Bfloor{\frac{\underline{\lambda}}{4}}, \tau_k + \Bfloor{\frac{\underline{\lambda}}{4}} \right\} \bigg) \notag\\
\geq & 2\Bfloor{\frac{\underline{\lambda}}{4}}\left(\frac{\Delta_k}{2}\right)^2
- 2\operatorname{sgn}(\beta_k - \beta_{k-1}) \Bfloor{\frac{\underline{\lambda}}{4}}\frac{\Delta_k}{2}\left(\overline{\varepsilon}_{\tau_k - \floor{\frac{\underline{\lambda}}{4}}:\tau_k} - \overline{\varepsilon}_{\tau_k :\tau_k + \floor{\frac{\underline{\lambda}}{4}}} \right) \notag\\
& + S_\varepsilon\bigg( \hat{\mathcal{T}}_L^{-k} \cup \bigcup_{\substack{j = 1,\ldots,K,\\ j \neq k}} \tau_j \cup \left\{ \tau_k - \Bfloor{\frac{\underline{\lambda}}{4}}, \tau_k + \Bfloor{\frac{\underline{\lambda}}{4}} \right\} \bigg). \label{eq:T_K_bd1}
\end{align}
We now derive inequalities to bound the terms in \eqref{eq:T_K_bd1}. We will combine all of them in \eqref{eq:summaryboundcpsclose}, which will complete the proof of $\Pj(\Omega_{1n}) \to 1$. As a consequence of the sub-Gaussianity of $\varepsilon$,
\begin{equation}
\max_{k = 1,\ldots,K}\left\vert\overline{\varepsilon}_{\tau_k - \floor{\frac{\underline{\lambda}}{4}}:\tau_k} - \overline{\varepsilon}_{\tau_k :\tau_k + \floor{\frac{\underline{\lambda}}{4}}}\right\vert = \mathcal{O}_\Pj\left( \frac{\sigma \sqrt{\log(e K)}}{\sqrt{\underline{\lambda}}} \right). \label{eq:eps_bar_bd}
\end{equation}
Since removing restrictions and adding change-points can never increase the cost,
\begin{equation}
S_\varepsilon\bigg( \hat{\mathcal{T}}_L^{-k} \cup \bigcup_{\substack{j = 1,\ldots,K,\\ j \neq k}} \tau_j \cup \left\{ \tau_k - \Bfloor{\frac{\underline{\lambda}}{4}}, \tau_k + \Bfloor{\frac{\underline{\lambda}}{4}} \right\} \bigg)
\geq  S_\varepsilon\left( \mathcal{T} \cup \hat{\mathcal{T}}^{\varepsilon}_{L + 2} \right). \label{eq:add_change}
\end{equation}

We next aim to bound
\[\max_{L \geq K}\left\{S_\varepsilon\left( \mathcal{T} \cup \hat{\mathcal{T}}^{\varepsilon}_{L - K} \right) - S_\varepsilon\left( \mathcal{T} \cup \hat{\mathcal{T}}^{\varepsilon}_{L + 2} \right)\right\}. \]
To this end, we first argue that for any fixed $M \in \N$, we have
\begin{equation*}
S_\varepsilon\left( \mathcal{T} \cup \hat{\mathcal{T}}^{\varepsilon}_{M} \right) - S_\varepsilon\left( \mathcal{T} \cup \hat{\mathcal{T}}^{\varepsilon}_{M + 1} \right) 
	\leq 2\max_{1\leq t_1 < t_2 \leq n}{\left(t_2 - t_1\right)\overline{\varepsilon}_{t_1 :t_2}^2}.
\end{equation*}
Let $\tilde{\mathcal{T}}^{\varepsilon}_{M}$ be a set of $M$ change-points, where all $M$ change-points coincide with change-points in $\hat{\mathcal{T}}^{\varepsilon}_{M + 1}$, i.e. we have removed an arbitrary one. Let $0 = \hat{\tau}^M_0 < \hat{\tau}^M_1 < \cdots < \hat{\tau}^M_{K_M} < \hat{\tau}^M_{K_M + 1} = n$ be the change-points in $\mathcal{T} \cup \hat{\mathcal{T}}^{\varepsilon}_{M + 1}$. Then, there exists a $j$ such that $\hat{\tau}^M_0 < \cdots < \hat{\tau}^M_{j - 1} < \hat{\tau}^M_{j + 1} < \cdots < \hat{\tau}^M_{K_M + 1}$ are the change-points in $\mathcal{T} \cup \tilde{\mathcal{T}}^{\varepsilon}_{M}$. Finally, it follows from the fact that for any putative change-points $t_1, t_2$,
$\sum_{i = t_1 + 1}^{t_2}{\left(\varepsilon_i - \overline{\varepsilon}_{(t_1+1):t_{2}}\right)^2}
= \sum_{i = t_1 + 1}^{t_{2}}{\varepsilon_i^2} - \left(t_{2} - t_1\right)\overline{\varepsilon}_{t_1 :t_{2}}^2$ that
\begin{align*}
& S_\varepsilon\left( \mathcal{T} \cup \hat{\mathcal{T}}^{\varepsilon}_{M} \right) - S_\varepsilon\left( \mathcal{T} \cup \hat{\mathcal{T}}^{\varepsilon}_{M + 1} \right)\\
\leq & S_\varepsilon\left( \mathcal{T} \cup \tilde{\mathcal{T}}^{\varepsilon}_{M} \right) - S_\varepsilon\left( \mathcal{T} \cup \hat{\mathcal{T}}^{\varepsilon}_{M + 1} \right)\\
= & (\hat{\tau}^M_{j} - \hat{\tau}^M_{j - 1}) \overline{\varepsilon}_{(\hat{\tau}^M_{j - 1} + 1) : \hat{\tau}^M_{j}}^2 + (\hat{\tau}^M_{j + 1} - \hat{\tau}^M_{j}) \overline{\varepsilon}_{(\hat{\tau}^M_{j} + 1) : \hat{\tau}^M_{j + 1}}^2 - (\hat{\tau}^M_{j + 1} - \hat{\tau}^M_{j - 1}) \overline{\varepsilon}_{(\hat{\tau}^M_{j - 1} + 1) : \hat{\tau}^M_{j + 1}}^2\\
\leq & (\hat{\tau}^M_{j} - \hat{\tau}^M_{j - 1}) \overline{\varepsilon}_{(\hat{\tau}^M_{j - 1} + 1) : \hat{\tau}^M_{j}}^2 + (\hat{\tau}^M_{j + 1} - \hat{\tau}^M_{j}) \overline{\varepsilon}_{(\hat{\tau}^M_{j} + 1) : \hat{\tau}^M_{j + 1}}^2\\
\leq & 2\max_{1\leq t_1 < t_2 \leq n}{\left(t_2 - t_1\right)\overline{\varepsilon}_{t_1 :t_2}^2}.
\end{align*}
Thus
\begin{align}
	&\max_{L \geq K}\left\{S_\varepsilon\left( \mathcal{T} \cup \hat{\mathcal{T}}^{\varepsilon}_{L - K} \right) - S_\varepsilon\left( \mathcal{T} \cup \hat{\mathcal{T}}^{\varepsilon}_{L + 2} \right)\right\} \notag\\
	\leq & 2(K + 2) \max_{1\leq t_1 < t_2 \leq n}{\left(t_2 - t_1\right)\overline{\varepsilon}_{t_1 :t_2}^2} \notag\\
	= & \mathcal{O}_\Pj\left(K \sigma^2 \log(n) \right) = \mathcal{O}_\Pj\left(K \sigma^2 \log(K \overline{\lambda}) \right) = \mathcal{O}_\Pj\left(K \sigma^2 \log(\overline{\lambda}) \right), \label{eq:T_K_bd2}
\end{align}
appealing to sub-Gaussianity of the errors in the final line.

%Fixing $L \geq K$, we have that $S_\varepsilon\left( \mathcal{T} \cup \hat{\mathcal{T}}^{\varepsilon}_{L - K} \right) \leq S_\varepsilon\left( \hat{\mathcal{T}}^{\varepsilon}_{L - K} \right)$, since removing change-points only increases the costs. We also have that $S_\varepsilon\left( \mathcal{T} \cup \hat{\mathcal{T}}^{\varepsilon}_{L + 2} \right) \geq S_\varepsilon\left( \hat{\mathcal{T}}^{\varepsilon}_{L + K + 2} \right)$, since removing restrictions only decreases the costs. Hence, it follows from the arguments above for the first inequality, applying \eqref{eq:costsdifferenceonechangepoint} iteratively for the second inequality, sub-Gaussianity for the first equality and $K < \overline{\lambda}$ for the last inequality that
%\begin{align}
%&\max_{L \geq K}\left\{S_\varepsilon\left( \mathcal{T} \cup \hat{\mathcal{T}}^{\varepsilon}_{L - K} \right) - S_\varepsilon\left( \mathcal{T} \cup \hat{\mathcal{T}}^{\varepsilon}_{L + 2} \right)\right\} \notag\\
%\leq &\max_{L \geq K}\left\{S_\varepsilon\left( \hat{\mathcal{T}}^{\varepsilon}_{L - K} \right) - S_\varepsilon\left( \hat{\mathcal{T}}^{\varepsilon}_{L + K + 2} \right)\right\} \notag\\
%\leq & 2(2K + 2) \max_{1\leq t_1 < t_2 \leq n}{\left(t_2 - t_1\right)\overline{\varepsilon}_{t_1 :t_2}^2} \notag\\
%= & \mathcal{O}_\Pj\left(K \sigma^2 \log(n) \right) = \mathcal{O}_\Pj\left(K \sigma^2 \log(K \overline{\lambda}) \right) = \mathcal{O}_\Pj\left(K \sigma^2 \log(\overline{\lambda}) \right). \label{eq:T_K_bd2}
%\end{align}

Hence combining \eqref{eq:S_Y_bd}-\eqref{eq:T_K_bd2} yields
\begin{equation}\label{eq:summaryboundcpsclose}
\begin{split}
& \min_{L > K}\min_{k = 1,\ldots,K} \left\{S_Y\left( \hat{\mathcal{T}}_L^{-k} \right) - S_Y\left( \hat{\mathcal{T}}_L\right)\right\}\\
\geq & 2\Bfloor{\frac{\underline{\lambda}}{4}}\left(\frac{\Delta_{(1)}}{2}\right)^2
+ \mathcal{O}_\Pj\left(\sqrt{\underline{\lambda} \Delta_{(1)}^2 \log(eK)} \sigma  \right) + \mathcal{O}_\Pj\left(K \sigma^2 \log(\overline{\lambda}) \right),
\end{split}
\end{equation}
where the second inequality holds for large $n$ because of $\eqref{eq:detectableCPs}$. Moreover, the right-hand side is positive with probability converging to 1 due to \eqref{eq:detectableCPs}. Thus $\Pj(\Omega_{1n}) \to 1$ as claimed.
%
%\begin{equation*}
%\begin{split}
%&\Pj\left(\forall\, L \geq K\, \exists\, \hat{\tau}^O_{L,i_1},\ldots, \hat{\tau}^O_{L,i_{K}} \in \hat{\mathcal{T}}_L\, :\, \max_{1\leq k \leq K}{\left\vert \hat{\tau}^O_{L,i_k} - \tau_{k} \right\vert} \leq \frac{\underline{\lambda}}{4} \right)\\
%\geq & \Pj\left(S_Y\left( \hat{\mathcal{T}}_L^{-k} \right) > S_Y\left( \hat{\mathcal{T}}_L\right)\ \forall\ L \geq K,\ \forall\ k = 1,\ldots,K \right) \to 1,
%\end{split}
%\end{equation*}
%as $n \to \infty$, which shows $\Pj(\Omega_{1n}) \to 1$.

\subsubsection*{Step 2: Bounding cost differences.}
We will now introduce classes of change-point sets $\mathfrak{T}_L(\cdot)$ that violate the statement for $\hat{\mathcal{T}}_L$ in Theorem~\ref{theorem:detectionPrecision}. We then compute the difference of the costs of such change-point sets and of $\hat{\mathcal{T}}_L$. To complete the proof, it will remain to show that with probability converging to $1$, this difference is positive simultaneously for all $L$ and all change-point sets.

Let
\begin{equation*}
c_{L, k} := 
\begin{cases}
\log(eK)  \frac{\sigma^2}{\Delta_k^2}, &\text{ if }  L = K,\\
\log\left(eK\frac{\sigma^2}{\Delta_k^2} \right) \frac{\sigma^2}{\Delta_k^2}, &\text{ if }  L \neq K.
\end{cases}
\end{equation*}
Then, by definition $\max_{L,k}c_{L, k}\gamma_{L,k}^{-1} = o\left(1\right)$. Now let $C_n$ be a sequence with $C_n \to \infty$, but 
\begin{equation}\label{eq:conditionOnC}
C_n\frac{\sigma^2 \log(\overline{\lambda})}{\underline{\lambda}\Delta_{(1)}^2} \to 0,\quad \text{as } n \to \infty,
\end{equation}
which is possible because of \eqref{eq:detectableCPs}. Then, for fixed $n$ and $L \geq K$, for any $\mathcal{I} \subseteq \{1,\ldots,K\}$, let $\mathfrak{T}_L(\mathcal{I})$ be the class of change-point sets $\mathcal{U}$ of size $L$ such that the following two properties hold:
\begin{gather}
	\text{for each } k=1,\ldots,K, \,\, \exists \ \tau \in \mathcal{U} : |\tau - \tau_k| \leq \underline{\lambda}/4, \notag\\
\{k : |\tau_k - \tau| > C_nc_{L,k} \,\, \forall \tau \in \mathcal{U}\} = \mathcal{I}. \label{eq:definitionCounterExampleCpSet}
\end{gather}
We note that $\mathcal{I}$ denotes the indices of those change-points which are not well estimated. Conversely, $\{1,\ldots,K\} \setminus \mathcal{I}$ are the indices that satisfy the result in Theorem~\ref{theorem:detectionPrecision}. Hence defining
\begin{equation}\label{eq:eventToShow}
\Omega_{2n} := \big\{ S_Y\left( \mathcal{U} \right)  > S_Y\left( \hat{\mathcal{T}}_L\right)\ \forall\ \mathcal{U} \in \mathfrak{T}_L(\mathcal{I}),\ \forall\ \mathcal{I} \subseteq \{1,\ldots,K\},\ \mathcal{I} \neq \emptyset, \text{ and } \forall\ L \geq K \big\},
\end{equation}
we have that $\Omega_{1n}\cap \Omega_{2n}$ 
implies the event in Theorem~\ref{theorem:detectionPrecision}. Indeed, suppose that $\hat{\mathcal{I}}_L := \{k : |\tau_k - \hat{\tau}| > C_nc_{L,k} \,\, \forall \hat{\tau} \in \hat{\mathcal{T}}_L\}$ is non-empty for some $L \geq K$. Then on $\Omega_{1n}$ we have $\hat{\mathcal{T}}_L \in \mathfrak{T}_L(\hat{\mathcal{I}}_L)$, implying that $\Omega_{2n}$ has not occurred. Thus it suffices to show $\Pj(\Omega_{2n})\to 1$, as $n\to\infty$. To this end, we will bound $S_Y\left( \mathcal{U} \right) - S_Y\big( \hat{\mathcal{T}}_L\big)$ next. More precisely, we will bound the cost difference iteratively, starting with the largest change-point.
%Additionally, note that $\Pj(\Omega_{2n}) \geq \Pj(\Omega_{2n} \cup \Omega_{1n}) - \Pj(\Omega_{1n}^C)$, but $\Pj(\Omega_{1n}^C)\to 0$. Hence, we can restrict us further to those $\mathcal{U}$ that satisfy the constraint in $\Omega_{1n}$.

Let $n$ and $L \geq K$ be fixed. Take a non-empty $\mathcal{I} = \{i_1,\ldots,i_{\tilde{K}}\} \subseteq \{1,\ldots,K\}$, and $i_1,\ldots,i_{\tilde{K}}$ ordered such that $\Delta_{i_k}\geq \Delta_{i_l} \Leftrightarrow i_k < i_l$. Further let $\mathcal{U} = \{0 = \tilde{\tau}_0 < \tilde{\tau}_1 < \cdots \tilde{\tau}_L < \tilde{\tau}_{L + 1} = n\} \in \mathfrak{T}_L(\mathcal{I})$ be fixed.
%and such that constraint in $\Omega_{1n}$ \eqref{eq:CPdetection} is satisfied, i.e. $\exists\, \{j_1,\ldots,j_K\}\subset \{1,\ldots,L\}\, :\, \max_{1\leq k \leq K}{\left\vert \tilde{\tau}_{j_k} - \tau_{k} \right\vert} \leq \underline{\lambda}/4$ and $j_k = j_l \Rightarrow k = l$ (no change-point is matched twice).
 For each $k \in \mathcal{I}$, let $j_k$ be such that $\tilde{\tau}_{j_k} \in \mathcal{U}$ is the closest change-point in $\mathcal{U}$ to $\tau_{i_k}$ (in the case of a tie we pick the change-point on the right-hand side). Note that the $j_k$ for $k \in \mathcal{I}$ are all unique elements of $\{1,\ldots,L\}$ due to the definition of $\mathcal{U}$.
 
 By the definition of $\hat{\mathcal{T}}_L$, we have
\begin{equation*}
\begin{split}
&S_Y\left( \mathcal{U} \right) - S_Y\left( \hat{\mathcal{T}}_L\right)\\
\geq & S_Y\left( \mathcal{U} \right) - 
S_Y\left( \left[ \mathcal{U} \setminus  \bigcup_{l = 1}^{\tilde{K}}{\tilde{\tau}_{j_l}} \right] \cup  \bigcup_{l = 1}^{\tilde{K}}{\tau_{i_l}} \right)\\
= & \sum_{k = 1}^{\tilde{K}}{\left[ S_Y\left( \left[ \mathcal{U} \setminus  \bigcup_{l = 1}^{k - 1}{\tilde{\tau}_{j_l}} \right] \cup  \bigcup_{l = 1}^{k - 1}{\tau_{i_l}} \right) - S_Y\left( \left[ \mathcal{U} \setminus  \bigcup_{l = 1}^{k}{\tilde{\tau}_{j_l}} \right] \cup  \bigcup_{l = 1}^{k}{\tau_{i_l}} \right)   \right]}.
\end{split}
\end{equation*}
The above telescoping sum replaces poorly estimated changepoints with their true versions one at a time, in the order of their jump sizes.
We now show that with probability converging to $1$, each summand, and hence the sum, is positive, simultaneously for all $L$, $\mathcal{I}$ and $\mathcal{U} \in \mathfrak{T}_L(\mathcal{I})$.

For any fixed $\tilde{k} \in \{1,\ldots,\tilde{K}\}$, write
\begin{equation}\label{eq:definitionTildeU}
\tilde{\mathcal{U}} := \left( \mathcal{U} \setminus  \bigcup_{m = 1}^{\tilde{k} - 1}{\tilde{\tau}_{j_m}} \right) \cup  \bigcup_{m = 1}^{\tilde{k} - 1}{\tau_{i_m}} .
\end{equation}
Further write $l:=j_{\tilde{k}}$ and $k:=i_{\tilde{k}}$. We distinguish the cases $\tau_{k} < \tilde{\tau}_{l}$ and $\tau_k > \tilde{\tau}_{l}$. We will detail the calculation for $\tau_k < \tilde{\tau}_{l}$. The case $\tau_k > \tilde{\tau}_{l}$ follows similarly. Also, since replacing every $\mu_i$ by $\mu_i - c$, with $c$ a global constant leaves the costs unchanged, we can assume $\overline{\mu}_{\tau_k : \tilde{\tau}_{l}} = 0$ for the calculations below without loss of generality. Indeed, each subsequent instance of $\mu_i$ can be interpreted as the original mean with $\overline{\mu}_{\tau_k : \tilde{\tau}_{l}}$ subtracted. Hence, it follows from Lemma~\ref{lemma:differenceCosts} that
\begin{equation}\label{eq:differenceCosts}
\begin{split}
& S_Y\left( \tilde{\mathcal{U}} \right)  - S_Y\left( (\tilde{\mathcal{U}} \setminus \tilde{\tau}_{l}) \cup \tau_k \right)\\
= & \left(\tilde{\tau}_{l} - \tau_k\right)\left[
 \frac{\tau_k - \tilde{\tau}_{l - 1}}{\tilde{\tau}_{l} - \tilde{\tau}_{l - 1}} \overline{\mu}_{\tilde{\tau}_{l - 1} : \tau_k}^2 - 
 \frac{\tilde{\tau}_{l + 1} - \tilde{\tau}_{l}}{\tilde{\tau}_{l + 1} - \tau_k} \overline{\mu}_{\tilde{\tau}_{l} : \tilde{\tau}_{l + 1}}^2\right]\\
 & + 2\left(\tilde{\tau}_{l} - \tau_k\right)\left[
 \frac{\tau_k - \tilde{\tau}_{l - 1}}{\tilde{\tau}_{l} - \tilde{\tau}_{l - 1}} \overline{\mu}_{\tilde{\tau}_{l - 1} : \tau_k}\left(\overline{\varepsilon}_{\tilde{\tau}_{l - 1} : \tau_k} - \overline{\varepsilon}_{\tau_k : \tilde{\tau}_{l}}\right)\right.\\
 & \hspace*{64pt} + \left. \frac{\tilde{\tau}_{l + 1} - \tilde{\tau}_{l}}{\tilde{\tau}_{l + 1} - \tau_k} \overline{\mu}_{\tilde{\tau}_{l} : \tilde{\tau}_{l + 1}}\left(\overline{\varepsilon}_{\tau_k : \tilde{\tau}_{l}} - \overline{\varepsilon}_{\tilde{\tau}_{l} : \tilde{\tau}_{l + 1}} \right)\right]\\
& + \left(\tilde{\tau}_{l} - \tau_k\right)
\left[ \frac{\left(\tilde{\tau}_{l} - \tau_k\right) \overline{\varepsilon}_{\tau_k :\tilde{\tau}_{l}}^2 + 2 \left(\tilde{\tau}_{l + 1} - \tilde{\tau}_{l}\right) \overline{\varepsilon}_{\tilde{\tau}_{l} :\tilde{\tau}_{l + 1}} \overline{\varepsilon}_{\tau_k :\tilde{\tau}_{l}} - \left(\tilde{\tau}_{l + 1} - \tilde{\tau}_{l}\right)\overline{\varepsilon}_{\tilde{\tau}_{l} :\tilde{\tau}_{l + 1}}^2}{\tilde{\tau}_{l + 1} - \tau_k}\right.\\
& \hspace*{58pt} + \left. \frac{\left(\tau_k - \tilde{\tau}_{l - 1}\right) \overline{\varepsilon}_{\tilde{\tau}_{l - 1} :\tau_k}^2 - 2\left(\tau_k - \tilde{\tau}_{l - 1}\right) \overline{\varepsilon}_{\tilde{\tau}_{l - 1} :\tau_k}\overline{\varepsilon}_{\tau_k :\tilde{\tau}_{l}} - \left(\tilde{\tau}_{l} - \tau_k\right) \overline{\varepsilon}_{\tau_k :\tilde{\tau}_{l}}^2}{\tilde{\tau}_{l} - \tilde{\tau}_{l - 1}} 
 \right].
\end{split}
\end{equation}
We now bound this cost difference. We first bound the mean parts and then the noise parts before we combine all results in \eqref{eq:positiveDifference}. For notational convenience we assume from now on that the $k$th jump at $\tau_k$ is downwards. Note that this does not change the final bound, since we bound absolute values. Thus, since $\overline{\mu}_{\tau_k : \tilde{\tau}_{l}} = 0 = \beta_k$ (as $\tilde{\tau}_l < \tau_{k+1}$), we have that $\beta_{k-1} = \Delta_k$, $\beta_{k-2} = \Delta_k \pm \Delta_{k - 1}$ and $\beta_{k + 1} = \pm \Delta_{k + 1}$. Recall that we consider $\tau_k < \tilde{\tau}_{l}$. 
Thus, if $\tilde{\tau}_{l + 1} \leq \tau_{k + 1}$, then $\overline{\mu}_{\tilde{\tau}_{l} : \tilde{\tau}_{l + 1}} = 0$. If $\tilde{\tau}_{l + 1} > \tau_{k + 1}$, then
\begin{equation*}
\overline{\mu}_{\tilde{\tau}_{l} : \tilde{\tau}_{l + 1}}\leq \frac{\tilde{\tau}_{l + 1} - \tau_{k + 1}}{\tilde{\tau}_{l + 1} - \tilde{\tau}_{l}} \Delta_{k + 1} \leq \frac{1}{4} \Delta_k,
\end{equation*}
since $\tilde{\tau}_{l + 1} - \tilde{\tau}_{l} \geq \tilde{\tau}_{l + 1} - \tau_{k + 1} + \frac{3}{4} \underline{\lambda}$ and $\tilde{\tau}_{l + 1} - \tau_{k + 1} \leq \frac{1}{4} \underline{\lambda}$. If $\Delta_{k + 1} > \Delta_k$, we have also that $\tilde{\tau}_{l + 1} - \tau_{k + 1} \leq C_n c_{L,k + 1}$, since we have ordered the changes in $\mathcal{I}$ according to their size and hence a larger misestimated change-point would have been replaced earlier in the telescoping sum in \eqref{eq:definitionTildeU}. Hence, it follows from \eqref{eq:detectableCPs} and \eqref{eq:conditionOnC} that
\begin{equation*}
\begin{split}
\frac{\overline{\mu}_{\tilde{\tau}_{l} : \tilde{\tau}_{l + 1}}}{\Delta_k}
\leq & \frac{\tilde{\tau}_{l + 1} - \tau_{k + 1}}{\tilde{\tau}_{l + 1} - \tilde{\tau}_{l}} \frac{\Delta_{k + 1}}{\Delta_k}
\leq \frac{C_n c_{L,k+1} \Delta_{k + 1}}{\frac{3}{4} \underline{\lambda}\Delta_k}\\
\leq & C_n\frac{(\log(K) \vee 1) \sigma^2}{\frac{3}{4} \underline{\lambda}\Delta_{(1)}^2}
+ C_n\frac{\sigma^2 \log(\overline{\lambda})}{\frac{3}{4} \underline{\lambda}\Delta_{(1)}^2} \frac{\log\left(\frac{\sigma^2}{\Delta_{(1)}^2} \vee 1\right)}{\log(\overline{\lambda})} \to 0,
\end{split}
\end{equation*}
as $n\to\infty$. Thus for all $n$ sufficiently large, we have $\overline{\mu}_{\tilde{\tau}_{l} : \tilde{\tau}_{l + 1}} \leq \frac{1}{4} \Delta_k$ uniformly in $k$ and $\mathcal{U}$.

Using similar arguments, we also have $\overline{\mu}_{\tilde{\tau}_{l - 1} : \tau_k} \geq \frac{4}{5} \Delta_k$ for $n$ sufficiently large, uniformly in $k$ and $\mathcal{U}$. Hence for such $n$,
\begin{equation}\label{eq:boundMu}
\frac{\tau_k - \tilde{\tau}_{l - 1}}{\tilde{\tau}_{l} - \tilde{\tau}_{l - 1}} \overline{\mu}_{\tilde{\tau}_{l - 1} : \tau_k}^2 - 
 \frac{\tilde{\tau}_{l + 1} - \tilde{\tau}_{l}}{\tilde{\tau}_{l + 1} - \tau_k} \overline{\mu}_{\tilde{\tau}_{l} : \tilde{\tau}_{l + 1}}^2
 \geq \left(\frac{1}{2}\frac{16}{25} - 1 \frac{1}{16} \right) \Delta_k^2 > \frac{1}{4} \Delta_k^2,
\end{equation}
since $\tau_k - \tilde{\tau}_{l - 1} \geq \tilde{\tau}_{l} - \tau_k$ as  we have chosen $l$ such that $\tilde{\tau}_{l}\in \mathcal{U}$ is the closest change-point in $\mathcal{U}$ to $\tau_k$.

We now argue that the averages of the noise terms in \eqref{eq:differenceCosts} are small with high probability. To this end, we introduce, for a fixed $\tilde{C} > 0$ to be chosen, the collection of events
\begin{equation}\label{eq:conditionsOnMeanEpsilon}
	\Omega_{<}(\mathcal{U}, k) := \left\{
	\vert\overline{\varepsilon}_{\tau_k : \tilde{\tau}_{l}}\vert \leq \tilde{C} \Delta_k,\
	\vert\overline{\varepsilon}_{\tilde{\tau}_{l - 1} : \tau_k}\vert \leq \tilde{C} \Delta_k \text{ and }
	\left\vert  \frac{\tilde{\tau}_{l + 1} - \tilde{\tau}_{l}}{\tilde{\tau}_{l + 1} - \tau_k} \overline{\varepsilon}_{\tilde{\tau}_{l} : \tilde{\tau}_{l + 1}}^2\right\vert \leq \tilde{C}^2 \Delta_k^2
	\right\}.
\end{equation}
On this event, it follows from similar arguments to that used in \eqref{eq:boundMu} and above, that
\begin{equation}\label{eq:boundMixedLeft}
\begin{split}
\left \vert \frac{\tau_k - \tilde{\tau}_{l - 1}}{\tilde{\tau}_{l} - \tilde{\tau}_{l - 1}} \overline{\mu}_{\tilde{\tau}_{l - 1} : \tau_k}\left(\overline{\varepsilon}_{\tilde{\tau}_{l - 1} : \tau_k} - \overline{\varepsilon}_{\tau_k : \tilde{\tau}_{l}}\right) \right\vert
\leq \frac{6}{5} \tilde{C} \Delta_k^2,
\end{split}
\end{equation}
\begin{equation}\label{eq:boundMixedRight}
\begin{split}
\left\vert \frac{\tilde{\tau}_{l + 1} - \tilde{\tau}_{l}}{\tilde{\tau}_{l + 1} - \tau_k} \overline{\mu}_{\tilde{\tau}_{l} : \tilde{\tau}_{l + 1}}\left(\overline{\varepsilon}_{\tau_k : \tilde{\tau}_{l}} - \overline{\varepsilon}_{\tilde{\tau}_{l} : \tilde{\tau}_{l + 1}} \right)\right\vert
\leq \frac{1}{4} \tilde{C} \Delta_k^2,
\end{split}
\end{equation}
and that
\begin{equation}\label{eq:boundEpsilon}
\begin{split}
&\left\vert\frac{\left(\tilde{\tau}_{l} - \tau_k\right) \overline{\varepsilon}_{\tau_k :\tilde{\tau}_{l}}^2 + 2 \left(\tilde{\tau}_{l + 1} - \tilde{\tau}_{l}\right) \overline{\varepsilon}_{\tilde{\tau}_{l} :\tilde{\tau}_{l + 1}} \overline{\varepsilon}_{\tau_k :\tilde{\tau}_{l}} - \left(\tilde{\tau}_{l + 1} - \tilde{\tau}_{l}\right)\overline{\varepsilon}_{\tilde{\tau}_{l} :\tilde{\tau}_{l + 1}}^2}{\tilde{\tau}_{l + 1} - \tau_k}\right.\\
& \left. +  \frac{\left(\tau_k - \tilde{\tau}_{l - 1}\right) \overline{\varepsilon}_{\tilde{\tau}_{l - 1} :\tau_k}^2 - 2\left(\tau_k - \tilde{\tau}_{l - 1}\right) \overline{\varepsilon}_{\tilde{\tau}_{l - 1} :\tau_k}\overline{\varepsilon}_{\tau_k :\tilde{\tau}_{l}} - \left(\tilde{\tau}_{l} - \tau_k\right) \overline{\varepsilon}_{\tau_k :\tilde{\tau}_{l}}^2}{\tilde{\tau}_{l} - \tilde{\tau}_{l - 1}}\right\vert\\
\leq & 8 \tilde{C}^2 \Delta_k^2.
\end{split}
\end{equation}

If $\tau_k > \tilde{\tau}_{l}$, then we obtain similar bounds on the event
\begin{equation}\label{eq:conditionsOnMeanEpsilon>}
	\Omega_{>}(\mathcal{U}, k) := \left\{
	\vert\overline{\varepsilon}_{\tilde{\tau}_{l}:\tau_k}\vert \leq \tilde{C} \Delta_k,\
	\vert\overline{\varepsilon}_{\tau_k:\tilde{\tau}_{l + 1}}\vert \leq \tilde{C} \Delta_k \text{ and }
	\left\vert  \frac{\tilde{\tau}_{l} - \tilde{\tau}_{l - 1}}{\tau_k - \tilde{\tau}_{l - 1}} \overline{\varepsilon}_{\tilde{\tau}_{l-1} : \tilde{\tau}_{l}}^2\right\vert \leq \tilde{C}^2 \Delta_k^2
	\right\}.
\end{equation}

Thus, on the event
\begin{equation} \label{eq:Omega_U_def}
\Omega(\mathcal{U}, k):= \Omega_{<}(\mathcal{U}, k) \cap \Omega_{>}(\mathcal{U}, k)
\end{equation} with $\tilde{C} := \frac{1}{24}$, it follows from \eqref{eq:differenceCosts}, \eqref{eq:boundMu}, \eqref{eq:boundMixedLeft}, \eqref{eq:boundMixedRight}, and \eqref{eq:boundEpsilon} that
\begin{equation}\label{eq:positiveDifference}
\begin{split}
S_Y\left( \tilde{\mathcal{U}} \right)  - S_Y\left( (\tilde{\mathcal{U}} \setminus \tilde{\tau}_{j_{\tilde{k}}}) \cup \tau_{i_{\tilde{k}}} \right)
 > \left\vert\tilde{\tau}_{j_{\tilde{k}}} - \tau_{i_{\tilde{k}}}\right\vert\Delta_{i_{\tilde{k}}}^2\left(\frac{1}{4} - \frac{12}{5} \tilde{C} - \frac{1}{2} \tilde{C} - 8 \tilde{C}^2\right) > 0,
\end{split}
\end{equation}
with $\tilde{\mathcal{U}}$ as in \eqref{eq:definitionTildeU}.

%We have now showed that difference is positive on the events $\Omega(\mathcal{U}, k)$.
To complete the proof, we have to show that the $\Omega(\mathcal{U},k)$ hold simultaneously with high probability. To this end, let
\begin{equation}\label{eq:defOmega3n}
	\Omega_{3n} := \bigcap_{\substack{\mathcal{U} \in \mathfrak{T}_L(\mathcal{I}), \; k \in \mathcal{I}, \\ L \geq K, \; \emptyset \subsetneq \mathcal{I} \subseteq \{1,\ldots,K\}}} \Omega(\mathcal{U}, k).
\end{equation}
We have that $\Omega_{3n} \subseteq \Omega_{2n}$, so it suffices to show $\Pj(\Omega_{3n}) \to 1$. %To this end, let $\Omega_{3n, +}$ and $\Omega_{3n, -}$ be as $\Omega_{3n}$, but with the absolute values in the definition of $\Omega(\mathcal{U}, k)$ \eqref{eq:conditionsOnMeanEpsilon} replaced by positive and negative parts respectively. Then, $\Pj(\Omega_{3n}) \geq 1- (1 - \Pj(\Omega_{3n, +})) - (1 - \Pj(\Omega_{3n, -}))$. Furthermore, 
Furthermore, the definition of $\mathfrak{T}_L(\mathcal{I})$ \eqref{eq:definitionCounterExampleCpSet} implies that we only have to consider $\vert t - \tau_k\vert >  C_n c_{L,k}$. Then, it follows from a union bound and symmetry arguments that
\begin{equation*}
\begin{split}
\Pj(\Omega_{3n})
\geq  1 
& - \Pj\bigg(
%\max_{\substack{\mathcal{I} \subseteq \{1,\ldots,K\} \\ \mathcal{I} \neq \emptyset}}\max_{k \in \mathcal{I}}
\max_{k \in \{1,\ldots,K\}}
\max_{L \geq K}\max_{t - \tau_k > C_n c_{L, k}} (\overline{\varepsilon}_{\tau_k : t}  - \tilde{C} \Delta_k) > 0\bigg)\\
& - \Pj\bigg(
%\max_{\substack{\mathcal{I} \subseteq \{1,\ldots,K\} \\ \mathcal{I} \neq \emptyset}}\max_{k \in \mathcal{I}}
\max_{k \in \{1,\ldots,K\}}
\max_{L \geq K}\max_{t - \tau_k > C_n c_{L, k}} (\overline{\varepsilon}_{\tau_k : t}  + \tilde{C} \Delta_k) < 0\bigg)\\
& - \Pj\bigg(
%\max_{\substack{\mathcal{I} \subseteq \{1,\ldots,K\} \\ \mathcal{I} \neq \emptyset}}\max_{k \in \mathcal{I}}
\max_{k \in \{1,\ldots,K\}}
\max_{L \geq K}\max_{\tau_k - t > C_n c_{L, k}} (\overline{\varepsilon}_{t : \tau_k} - \tilde{C} \Delta_k) > 0\bigg)\\
& - \Pj\bigg(
%\max_{\substack{\mathcal{I} \subseteq \{1,\ldots,K\} \\ \mathcal{I} \neq \emptyset}}\max_{k \in \mathcal{I}}
\max_{k \in \{1,\ldots,K\}}
\max_{L \geq K}\max_{\tau_k - t > C_n c_{L, k}} (\overline{\varepsilon}_{t : \tau_k} + \tilde{C} \Delta_k) < 0\bigg)\\
& - \Pj\bigg(
%\max_{\substack{\mathcal{I} \subseteq \{1,\ldots,K\} \\ \mathcal{I} \neq \emptyset}}\max_{k \in \mathcal{I}}
\max_{k \in \{1,\ldots,K\}}
\max_{L > K}\max_{t_2 > t_1 > \tau_k + C_n c_{L, k}} \Big( \frac{t_2 - t_1}{t_2 - \tau_k} \overline{\varepsilon}_{t_1 : t_2}^2 - \tilde{C}^2 \Delta_k^2 \Big) > 0\bigg)\\
& - \Pj\bigg(
%\max_{\substack{\mathcal{I} \subseteq \{1,\ldots,K\} \\ \mathcal{I} \neq \emptyset}}\max_{k \in \mathcal{I}}
\max_{k \in \{1,\ldots,K\}}
\max_{L > K}\max_{t_2 < t_1 < \tau_k - C_n c_{L, k}} \Big( \frac{t_1 - t_2}{\tau_k - t_2} \overline{\varepsilon}_{t_2 : t_1}^2 - \tilde{C}^2 \Delta_k^2 \Big) > 0\bigg)\\
& - \Pj\bigg(
%\max_{\substack{\mathcal{I} \subseteq \{1,\ldots,K\} \\ \mathcal{I} \neq \emptyset}}\max_{k \in \mathcal{I}}
\max_{k \in \{1,\ldots,K\}}
\max_{\substack{t_2 > t_1 > \tau_k + C_n c_{K, k}, \\ \vert \tau_{k + 1} - t_2\vert \leq \underline{\lambda}/4}} \Big( \frac{t_2 - t_1}{t_2 - \tau_k} \overline{\varepsilon}_{t_1 : t_2}^2 - \tilde{C}^2 \Delta_k^2 \Big) > 0 \bigg)\\
& - \Pj\bigg(
%\max_{\substack{\mathcal{I} \subseteq \{1,\ldots,K\} \\ \mathcal{I} \neq \emptyset}}\max_{k \in \mathcal{I}}
\max_{k \in \{1,\ldots,K\}}
\max_{\substack{t_2 < t_1 < \tau_k - C_n c_{K, k}, \\ \vert \tau_{k - 1} - t_2\vert \leq \underline{\lambda}/4}} \Big( \frac{t_1 - t_2}{\tau_k - t_2} \overline{\varepsilon}_{t_2 : t_1}^2 - \tilde{C}^2 \Delta_k^2 \Big) > 0 \bigg)\\
\geq & 1 - 4\sum_{k = 1}^K\Pj\bigg(\max_{L \geq K}\max_{t - \tau_k > C_n c_{L, k}}\overline{\varepsilon}_{\tau_k : t} > \tilde{C} \Delta_k\bigg)\\
& - 2\sum_{k =1}^K\Pj\bigg(\max_{L > K}\max_{t_2 > t_1 > \tau_k + C_n c_{L, k}} \frac{t_2 - t_1}{t_2 - \tau_k} \overline{\varepsilon}_{t_1 : t_2}^2 > \tilde{C}^2 \Delta_k^2\bigg)\\
& - 2\Pj\bigg(\max_{k=1,\ldots,K} \max_{\substack{t_2 > t_1 > \tau_k + C_n c_{K, k}, \\ \vert \tau_{k + 1} - t_2\vert \leq \underline{\lambda}/4}} \frac{t_2 - t_1}{t_2 - \tau_k} \overline{\varepsilon}_{t_1 : t_2}^2 > \tilde{C}^2 \Delta_{(1)}^2\bigg).
\end{split}
\end{equation*}

Then, since $c_{L, k} \geq (\log(K) \vee 1)\sigma^2 / \Delta_k^2$ for all $L \geq K$, it follows from Lemma \ref{lemma:largeDeviationLargeLeftIndex} that
\begin{equation*}
\begin{split}
& \Pj\Big(\max_{L \geq K}\max_{t - \tau_k > C_n c_{L, k}}\overline{\varepsilon}_{\tau_k : t} > \tilde{C} \Delta_k\Big)\\
\leq & \Pj\Bigg( \max_{r \geq C_n\frac{(\log(K) \vee 1)\sigma^2}{\Delta_k^2}} \frac{\sum_{i=1}^{r}{\varepsilon_i / \sigma}}{r} \geq \tilde{C} \frac{\Delta_k}{\sigma} \Bigg)\\
\leq & 2\exp\left(-\frac{1}{4}C_n \tilde{C}^2 (\log(K) \vee 1)\right).
\end{split}
\end{equation*}

Next, since $c_{L, k} = c_{K + 1, k} \geq \log\left(eK \sigma^2 / \Delta_k^2 \right) \sigma^2 / \Delta_k^2$ for all $L>K$, it follows from Lemma \ref{lemma:largeDeviationShortInterval} that for $n$ large enough
\begin{equation*}
\begin{split}
& \Pj\left(\max_{L > K}\max_{t_2 > t_1 > \tau_k + C_n c_{L, k}} \frac{t_2 - t_1}{t_2 - \tau_k} \overline{\varepsilon}_{t_1 : t_2}^2 > \tilde{C}^2 \Delta_k^2\right)\\
\leq & \Pj\left( \max_{r \geq C_n c_{K + 1, k},\, s \geq 1} \frac{\sum_{j = r + 1}^{r + s}{\varepsilon_j / \sigma}}{\sqrt{s}\sqrt{r + s}} \geq \tilde{C}\frac{\Delta_k}{\sigma} \right)\\
\leq & 2 C_n^2 c_{K + 1, k}^2\exp\left(-\frac{1}{2}C_n c_{K + 1, k} \tilde{C}^2\frac{\Delta_k^2}{\sigma^2}\right)\\
\leq & 2 \exp\left(-\frac{1}{4}C_n \tilde{C}^2 (\log(K) \vee 1)\right).
\end{split}
\end{equation*}

Because of sub-Gaussianity and since $\vert \tau_{k + 1} - t_2\vert \leq \underline{\lambda}/4$ implies $\tau_k + \frac{3}{4}\underline{\lambda} \leq t_2 \leq \tau_k + \overline{\lambda} + \frac{1}{4}\underline{\lambda}$, we have that 
\begin{align*}
& \max_{k=1,\ldots, K} \max_{\substack{t_2 > t_1 > \tau_k + C_n c_{K, k},\\ \vert \tau_{k + 1} - t_2\vert \leq \underline{\lambda}/4}} \frac{t_2 - t_1}{t_2 - \tau_k} \overline{\varepsilon}_{t_1 : t_2}^2\\
\leq & \max_{k=1,\ldots, K}\max_{\tau_k \leq t_1 \leq t_2 \leq \tau_k + \overline{\lambda} + \frac{1}{4}\underline{\lambda}}
\frac{4}{3}\underline{\lambda}^{-1}{(t_2 - t_1) \overline{\varepsilon}_{t_1 :t_2}^2}
\leq  \mathcal{O}_\Pj \left(  \underline{\lambda}^{-1}\log( K \overline{\lambda}) \sigma^2 \right).
\end{align*}
Thus, it follows from \eqref{eq:detectableCPs} that
\begin{equation*}
\Pj\bigg(\max_{k=1,\ldots, K}\max_{\substack{t_2 > t_1 > \tau_k + C_n c_{K, k},\\ \vert \tau_{k + 1} - t_2\vert \leq \underline{\lambda}/4}} \frac{t_2 - t_1}{t_2 - \tau_k} \overline{\varepsilon}_{t_1 : t_2}^2 > \tilde{C}^2 \Delta_{(1)}^2\bigg) \to 0,\text{ as } n \to \infty.
\end{equation*}

Thus,
\begin{equation*}
\begin{split}
\Pj(\Omega_{3n})
\geq & 1 - 12 K \exp\left(-\frac{1}{4}C_n \tilde{C}^2 (\log(K) \vee 1)\right) + o(1)\\
\geq & 1  - 12\exp\left(-\frac{1}{4}C_n \tilde{C}^2\right) + o(1) \to 1,
\end{split}
\end{equation*}
since $C_n \to \infty$, as $n \to \infty$. This completes the proof of \eqref{detectionPrecision:L>=K}.

\subsubsection*{Proof of \eqref{detectionPrecision:L<K}}
If $\left\vert \hat{\tau} - \tau_{k} \right\vert > \frac{\underline{\lambda}}{4}\, \forall\, \hat{\tau} \in \hat{\mathcal{T}}_L$, then
\begin{equation*}
\sum_{i = \tau_{k} - \floor{\frac{\underline{\lambda}}{2}} + 1}^{\tau_{k} + \floor{\frac{\underline{\lambda}}{2}}}{(\mu_i - \overline{\mu}_{L, i})^2} \geq \min_{a \in \R} \sum_{i = \tau_{k} - \floor{\frac{\underline{\lambda}}{4}} + 1}^{\tau_{k} + \floor{\frac{\underline{\lambda}}{4}}}{(\mu_i - a)^2} \geq \Bfloor{\frac{\underline{\lambda}}{4}}\frac{\Delta_k^2}{4} \geq \frac{\underline{\lambda}\Delta_k^2}{200}.
\end{equation*}

Otherwise, let $l$ be such that $\hat{\tau}_{L,l}$ is the closest change-point to $\tau_{k}$. In case of a tie, we pick the change-point on the right-hand side. We consider the case where $\hat{\tau}_{L,l} \geq \tau_{k}$; if this is not true, we can instead consider the observations in reverse order. Additionally, since replacing every $\mu_i$ by $\mu_i - c$, with $c$ a global constant, does not change the costs, we can assume $\beta_k = \overline{\mu}_{\tau_k : \hat{\tau}_{L,l}} = 0$ as in our previous argument \eqref{eq:differenceCosts}. Finally, without loss of generality we may assume that the $k$th jump at $\tau_k$ is downwards, as otherwise we can consider $-Y_i$ instead of $Y_i$, which does not change the costs.

If $\overline{\mu}_{\hat{\tau}_{L,l - 1} : \tau_k} > \frac{6}{5}\Delta_k$ or $\overline{\mu}_{\hat{\tau}_{L,l - 1} : \tau_k} < \frac{4}{5} \Delta_k $, then $\hat{\tau}_{L,l - 1} < \tau_{k - 1}$, since otherwise $\overline{\mu}_{\hat{\tau}_{L,l - 1} : \tau_k} = \Delta_k$. Hence,
\begin{equation*}
\sum_{i = \tau_{k} - \floor{\frac{\underline{\lambda}}{2}} + 1}^{\tau_{k} + \floor{\frac{\underline{\lambda}}{2}}}{(\mu_i - \overline{\mu}_{L, i})^2} \geq \sum_{i = \tau_{k} - \floor{\frac{\underline{\lambda}}{4}} + 1}^{\tau_{k}}{(\mu_i - \overline{\mu}_{L, i})^2} \geq \Bfloor{\frac{\underline{\lambda}}{4}}\frac{\Delta_k^2}{25} \geq \frac{\underline{\lambda}\Delta_k^2}{200}.
\end{equation*}

If $\vert \overline{\mu}_{\hat{\tau}_{L,l} : \hat{\tau}_{L,l + 1}}\vert > \frac{1}{4} \Delta_k$, then $\hat{\tau}_{L,l + 1} > \tau_{k + 1}$, since otherwise $\overline{\mu}_{\hat{\tau}_{L,l} : \hat{\tau}_{L,l + 1}} = \beta_k = 0$. Hence,
\begin{equation*}
\sum_{i = \tau_{k} - \floor{\frac{\underline{\lambda}}{2}} + 1}^{\tau_{k} + \floor{\frac{\underline{\lambda}}{2}}}{(\mu_i - \overline{\mu}_{L, i})^2} \geq \sum_{i = \tau_{k} + \floor{\frac{\underline{\lambda}}{4}} + 1}^{\tau_{k} + \floor{\frac{\underline{\lambda}}{2}}}{(\mu_i - \overline{\mu}_{L, i})^2} \geq \Bfloor{\frac{\underline{\lambda}}{4}}\frac{\Delta_k^2}{16} \geq \frac{\underline{\lambda}\Delta_k^2}{200}.
\end{equation*}

But, if $\overline{\mu}_{\hat{\tau}_{L,l - 1} : \tau_k}\in \left[\frac{4}{5}\Delta_k, \frac{6}{5}\Delta_k\right]$ and $\vert \overline{\mu}_{\hat{\tau}_{L,l} : \hat{\tau}_{L,l + 1}}\vert \leq \frac{1}{4} \Delta_k$, then the statement in \eqref{detectionPrecision:L<K} follows from the same arguments as used to show \eqref{detectionPrecision:L>=K} but we only consider $k$ in the set
\begin{equation*}
\mathcal{K} := \Bigg\{1\leq k \leq K: \sum_{i = \tau_{k} - \floor{\frac{\underline{\lambda}}{2}} + 1}^{\tau_{k} + \floor{\frac{\underline{\lambda}}{2}}}{\big(\mu_i - \overline{\mu}_{L, i}\big)^2} < \frac{\underline{\lambda}\Delta_k^2}{200}\Bigg\}.
\end{equation*}
More precisely, we replace the definition of $\mathfrak{T}_L(\mathcal{I})$ in \eqref{eq:definitionCounterExampleCpSet} by the following definition. Let $\mathfrak{T}_L(\mathcal{I})$ be the class of change-point sets $\mathcal{U}$ of size $L$ such that the follow two properties hold:
\begin{gather*}
	\text{for each } k \in \mathcal{K}, \text{ there exists } \tau \in \mathcal{U} : |\tau - \tau_k| \leq \underline{\lambda}/4, \notag\\
\{k : |\tau_k - \tau| > C_nc_{L,k} \,\, \forall \tau \in \mathcal{U}\} = \mathcal{I}.
\end{gather*}
Then on the event $\Omega(\mathcal{U}, k)$  \eqref{eq:Omega_U_def}, we have that \eqref{eq:positiveDifference} holds, i.e. that
\begin{equation*}
S_Y\left( \tilde{\mathcal{U}} \right)  - S_Y\left( (\tilde{\mathcal{U}} \setminus \tilde{\tau}_{l}) \cup \tau_k \right)
 > 0,
\end{equation*}
since $\overline{\mu}_{\hat{\tau}_{L,l - 1} : \tau_k}\in \left[\frac{4}{5}\Delta_k, \frac{6}{5}\Delta_k\right]$ and $\vert \overline{\mu}_{\hat{\tau}_{L,l} : \hat{\tau}_{L,l + 1}}\vert \leq \frac{1}{4} \Delta_k$ and \eqref{eq:boundMixedLeft}-\eqref{eq:boundEpsilon} still hold. Finally, $\Pj(\Omega_{3n}) \to 1$, for $\Omega_{3n}$ as defined in \eqref{eq:defOmega3n}, follows from the same steps that we have outline after \eqref{eq:positiveDifference}. This completes the proof of \eqref{detectionPrecision:L<K}.
\end{proof}

\section{Proof of Theorem~\ref{theorem:underestimation}}\label{sec:proof:underestimation}
\begin{proof}[Proof of Theorem \ref{theorem:underestimation}]
%We will show that $\Pj\left(\operatorname{CV}_{(2)}(2) \leq \operatorname{CV}_{(2)}(1)\right) \to 0$, as $n\to\infty$, since this implies $\Pj\left(\hat{K} = 2\right) \to 0$. To this end, we first show with \eqref{eq:estimateCPunderestimationL1}~and~\eqref{eq:estimateCPunderestimationL2} statements how well we can estimate the change-points. Afterwards, we will bound $\operatorname{CV}_{(2)}(2) - \operatorname{CV}_{(2)}(1)$ condition on the intersection of events in these two statements to complete the proof. Note that having the change-points estimated accurately will simplify the cross-validation criteria and hence help us to bound the difference.

	We first show that, writing $\delta_0 := \lfloor \sqrt{\underline{\lambda}}\, \rfloor$,
\begin{equation}\label{eq:estimateCPunderestimationL2}
	\Pj\big( \vert \hat{\tau}^O_{2, 1} - \tau^O_1\vert \leq \delta_0,\ \vert \hat{\tau}^E_{2, 1} - \tau^E_1\vert \leq \delta_0,\ \hat{\tau}^O_{2, 2} = \tau^O_2,\ \hat{\tau}^E_{2, 2} = \tau^E_2\big) \to 1.
\end{equation}
and
\begin{equation}\label{eq:estimateCPunderestimationL1}
	\Pj\Big(\hat{\tau}^O_{1, 1} = \tau^O_2, \hat{\tau}^E_{1, 1} = \tau^E_2\Big) \to 1,
\end{equation}
so in particular, the second change-point is estimated precisely with high probability. Working on the intersection of events above will facilitate study of the cross-validation scores, from which we will conclude that $\Pj\left(\operatorname{CV}_{(2)}(2) \leq \operatorname{CV}_{(2)}(1)\right) \to 0$, implying the final result. The first of these statements \eqref{eq:estimateCPunderestimationL2} is given by Theorem~\ref{theorem:detectionPrecision}, since $\Delta_2 / \sigma \to \infty$, $\underline{\lambda}\Delta_1^2 / (\sigma^2 \log n) \to \infty$ and $\sqrt{\underline{\lambda}} \Delta_1^2 / \sigma^2 \to \infty$. For the second statement \eqref{eq:estimateCPunderestimationL1}, we proceed as follows. 

%We start by showing that 
%\begin{equation}\label{eq:estimateCPunderestimationL1}
%\Pj\Big(\hat{\tau}^O_{1, 1} = \tau^O_2, \hat{\tau}^E_{1, 1} = \tau^E_2\Big) \to 1.
%\end{equation}
\subsubsection*{Step 1:} We first show that
\begin{equation*}
	\big\{\hat{\tau}^O_{1, 1} = \tau^O_2\Big\}
	= \big\{S_Y\left( \left\{\tau^O_0, \tau, \tau^O_3 \right\} \right) > S_Y\left( \left\{\tau^O_0, \tau^O_2, \tau^O_3 \right\} \right)\ \forall\ \tau \neq \tau^O_2 \big\}=: \Gamma.
	\end{equation*}
satisfies $\Pj(\Gamma) \to 1$. Similar arguments (not shown) may be used to obtain $\Pj(\hat{\tau}^E_{1, 1} = \tau^E_2)\to 1$, which then together with the above gives \eqref{eq:estimateCPunderestimationL1}.
We have that
\begin{equation*}
\big\{\hat{\tau}^O_{1, 1} = \tau^O_2\Big\}
= \big\{S_Y\left( \left\{\tau^O_0, \tau, \tau^O_3 \right\} \right) > S_Y\left( \left\{\tau^O_0, \tau^O_2, \tau^O_3 \right\} \right)\ \forall\ \tau \neq \tau^O_2 \big\}.
\end{equation*}
We distinguish the cases $\tau > \tau^O_2$, $\tau \in [\tau^O_1,\tau^O_2)$ and $\tau < \tau^O_1$ and denote the corresponding events as $\Gamma_1$, $\Gamma_2$ and $\Gamma_3$, so $\Gamma = \cap_{j=1, 2, 3} \Gamma_j$. To this end, recall that $\tau^O_1 = \tau^E_1 = (n / 2 - \underline{\lambda}) / 2$, $\tau^O_2 = (n / 2 + 1) / 2$ and $\tau^E_2 = (n / 2 - 1) / 2$.
	
\emph{Case $\tau > \tau^O_2$}. We make use of Lemma~\ref{lemma:differenceCosts} in Section~\ref{sec:proof:detectionPrecision}; in order to use this we consider $-(Y_i - \Delta_2)$ instead of $Y_i$. Note that this does not change the costs. Hence, $\beta_0 = \Delta_2 - \Delta_1$, $\beta_1 = \Delta_2$ and $\beta_2 = 0$. Thus, $\overline{\mu}_{\tau^O_2 :\tau} = 0$, $\overline{\mu}_{\tau : \tau^O_3} = 0$ and $\overline{\mu}_{0:\tau^O_2} = \Delta_2 -\tau^O_1\Delta_1/\tau^O_2 = c \Delta_2$ for a $c\in(0,1)$, since $\Delta_1 < \Delta_2$. Consequently, Lemma~\ref{lemma:differenceCosts} gives us
\begin{equation*}
\begin{split}
& \big(\tau - \tau^O_2\big)^{-1} \Big[S_Y\Big( \big\{\tau^O_0, \tau, \tau^O_3 \big\} \Big) - S_Y\Big( \big\{\tau^O_0, \tau^O_2, \tau^O_3 \big\} \Big)\Big]\\
= & \frac{\tau^O_2}{\tau} \overline{\mu}_{0 : \tau^O_2}^2 + 2\frac{\tau^O_2}{\tau} \overline{\mu}_{0 : \tau^O_2}\big(\overline{\varepsilon}_{0 : \tau^O_2} - \overline{\varepsilon}_{\tau^O_2 : \tau}\big)\\
&+ \frac{\big(\tau - \tau^O_2\big) \overline{\varepsilon}_{\tau^O_2 :\tau}^2 + 2 \big(\tau^O_3 - \tau\big) \overline{\varepsilon}_{\tau :\tau^O_3} \overline{\varepsilon}_{\tau^O_2 :\tau} - \big(\tau^O_3 - \tau\big)\overline{\varepsilon}_{\tau :\tau^O_3}^2}{\tau^O_3 - \tau^O_2}\\
& + \frac{\tau^O_2 \overline{\varepsilon}_{0:\tau^O_2}^2 - 2\tau^O_2 \overline{\varepsilon}_{0:\tau^O_2}\overline{\varepsilon}_{\tau^O_2 :\tau} - \big(\tau - \tau^O_2\big) \overline{\varepsilon}_{\tau^O_2 :\tau}^2}{\tau}\\
\geq &\frac{1}{2} c^2 \Delta_2^2 + \mathcal{O}_\Pj\left(\Delta_2 \sigma \right) + \mathcal{O}_\Pj\big(\sigma^2\big),
\end{split}
\end{equation*}
since $\overline{\varepsilon}_{0 : \tau^O_2} = \mathcal{O}_\Pj\left(\sigma\right)$, $\overline{\varepsilon}_{\tau^O_2 :\tau} = \mathcal{O}_\Pj\left(\sigma\right)$, and $\overline{\varepsilon}_{\tau^O_2 :\tau} = \mathcal{O}_\Pj\left(\sigma\right)$ for all $\tau > \tau^O_2$ simultaneously. Then as \eqref{eq:minmax} and \eqref{eq:conditionUnderestimation} imply $\Delta_2^2 / \sigma^2 \to \infty$, we have that
\[
\frac{1}{\sigma^2}\Big(\tau - \tau^O_2\Big)^{-1} \Big[S_Y\Big( \big\{\tau^O_0, \tau, \tau^O_3 \big\} \Big) - S_Y\Big( \big\{\tau^O_0, \tau^O_2, \tau^O_3 \big\} \Big)\Big] \inprob \infty,
\]
showing in particular that $\Pj(\Gamma_1) \to 1$.

\emph{Case $\tau \in [\tau^O_1,\tau^O_2)$.}
Next consider $\tau \in [\tau^O_1,\tau^O_2)$. To use Lemma~\ref{lemma:differenceCosts}, we consider $Y^O_1,\ldots,Y^O_{ n / 2 }$ in reverse order. Note that this does not change the costs. Hence, $\beta_0 = \Delta_2$, $\beta_1 = 0$ and $\beta_2 = \Delta_1$. It follows $\overline{\mu}_{0:\tau^O_1} = \Delta_2$, $\overline{\mu}_{\tau^O_1 :\tau} = 0$ and $\overline{\mu}_{\tau:\tau^O_3} = \frac{\tau^O_3 - \tau^O_2}{\tau^O_3 - \tau}\Delta_1$. Consequently, Lemma~\ref{lemma:differenceCosts} gives us
\begin{equation}
\begin{split}
& \Big(\tau - \tau^O_1\Big)^{-1} \Big[S_Y\Big( \big\{\tau^O_0, \tau, \tau^O_3 \big\} \Big) - S_Y\Big( \big\{\tau^O_0, \tau^O_1, \tau^O_3 \big\} \big)\Big]\\
= & \frac{\tau^O_1}{\tau} \overline{\mu}_{0 : \tau^O_1}^2 - 
 \frac{\tau^O_3 - \tau}{\tau^O_3 - \tau^O_1} \overline{\mu}_{\tau : \tau^O_3}^2\\
 & + 2\frac{\tau^O_1}{\tau} \overline{\mu}_{0 : \tau^O_1}\Big(\overline{\varepsilon}_{0 : \tau^O_1} - \overline{\varepsilon}_{\tau^O_1 : \tau}\Big) + \frac{\tau^O_3 - \tau}{\tau^O_3 - \tau^O_1} \overline{\mu}_{\tau : \tau^O_3}\Big(\overline{\varepsilon}_{\tau^O_1 : \tau} - \overline{\varepsilon}_{\tau : \tau^O_3} \Big)\\
&+\frac{\big(\tau - \tau^O_1\big) \overline{\varepsilon}_{\tau^O_1 :\tau}^2 + 2 \big(\tau^O_3 - \tau\big) \overline{\varepsilon}_{\tau :\tau^O_3} \overline{\varepsilon}_{\tau^O_1 :\tau} - \big(\tau^O_3 - \tau\big)\overline{\varepsilon}_{\tau :\tau^O_3}^2}{\tau^O_3 - \tau^O_1}\\
& + \frac{\tau^O_1\overline{\varepsilon}_{0:\tau^O_1}^2 - 2\tau^O_1  \overline{\varepsilon}_{0:\tau^O_1}\overline{\varepsilon}_{\tau^O_1 :\tau} - \big(\tau - \tau^O_1\big) \overline{\varepsilon}_{\tau^O_1 :\tau}^2}{\tau}.
\end{split}
\end{equation}

We now return to our original notation, i.e.\ we consider $Y^O_1,\ldots,Y^O_{ n / 2 }$ in their original order. Then, using similar arguments to earlier, we see there exists a $c \in (0,1)$ such that
\begin{align*}
& \frac{1}{\sigma^2}\left(\tau - \tau^O_1\right)^{-1} \left[S_Y\left( \left\{\tau^O_0, \tau, \tau^O_3 \right\} \right) - S_Y\left( \left\{\tau^O_0, \tau^O_2, \tau^O_3 \right\} \right)\right]\\
\geq & c \frac{\Delta_2^2}{\sigma^2} + \mathcal{O}_\Pj\left(\frac{\Delta_2}{ \sigma}\right) + \mathcal{O}_\Pj\left(1\right) \inprob \infty.
\end{align*}
Thus $\Pj(\Gamma_2) \to 1$.
%Thus,
%\begin{equation*}
%\Pj\left(S_Y\left( \left\{\tau^O_0, \tau, \tau^O_3 \right\} \right) > S_Y\left( \left\{\tau^O_0, \tau^O_2, \tau^O_3 \right\} \right)\ \forall\ \tau \in [\tau^O_1,\tau^O_2) \right)\to 1.
%\end{equation*}

\emph{Case $\tau < \tau^O_1$.} This follows from similar arguments to the $\tau > \tau^O_2$ case.

We conclude that $\Pj\big(\hat{\tau}^O_{1, 1} = \tau^O_2\big)\to 1$. Together with $\Pj\big(\hat{\tau}^E_{1, 1} = \tau^E_2\big) \to 1$, which may be shown via similar arguments, this gives \eqref{eq:estimateCPunderestimationL1} as desired.
%The case $\tau < \tau^O_1$ follows from similar arguments. Consequently, $\Pj\Big(\hat{\tau}^O_{1, 1} = \tau^O_2\Big)\to 1$, as $n\to \infty$. Similarly we may show that $\Pj\Big(\hat{\tau}^E_{1, 1} = \tau^E_2\Big) \to 1$, as $n\to\infty$, giving \eqref{eq:estimateCPunderestimationL1} as desired.
%Next, since $\Delta_2 / \sigma \to \infty$, $\underline{\lambda}\Delta_1^2 / (\sigma^2 \log n) \to \infty$ and $\sqrt{\underline{\lambda}} \Delta_1^2 / \sigma^2 \to \infty$, as $n \to \infty$, it follows from Theorem~\ref{theorem:detectionPrecision} that writing $\delta_0 := \lfloor \sqrt{\underline{\lambda}} \rfloor$,
%\begin{equation}\label{eq:estimateCPunderestimationL2}
%\Pj\big( \vert \hat{\tau}^O_{2, 1} - \tau^O_1\vert \leq \delta_0,\ \vert \hat{\tau}^E_{2, 1} - \tau^E_1\vert \leq \delta_0,\ \hat{\tau}^O_{2, 2} = \tau^O_2,\ \hat{\tau}^E_{2, 2} = \tau^E_2\big) \to 1.
%\end{equation}
%, as $n\to\infty$. We therefore set $\delta_0 = \sqrt{\underline{\lambda}}$.

We denote by $\Omega_n$ the intersection of the events in \eqref{eq:estimateCPunderestimationL2} and \eqref{eq:estimateCPunderestimationL1}. We have shown $\Pj(\Omega_n)\to 1$.

\subsubsection*{Step 2:} In the following we work on the sequence $\Omega_n$ and bound $\operatorname{CV}_{(2)}(2) - \operatorname{CV}_{(2)}(1)$.
Observe that $\mu^E_i - \mu^O_i = \Delta_2$ if $i = (n / 2 + 1) / 2$, but $\mu^E_i - \mu^O_i = 0$ otherwise. Thus,
\begin{equation*}
\begin{split}
&\operatorname{CV}_{(2)}(L)
=\sum_{l = 0}^{L}\sum_{i = \hat{\tau}^O_{L,l} + 1}^{\hat{\tau}^O_{L,l + 1}}{ \left(Y^E_i - \overline{Y}^O_{\hat{\tau}^O_{L,l} :\hat{\tau}^O_{L,l + 1}}\right)^2 } + \sum_{l = 0}^{L}\sum_{i = \hat{\tau}^E_{L,l} + 1}^{\hat{\tau}^E_{L,l + 1}}{ \left(Y^O_i - \overline{Y}^E_{\hat{\tau}^E_{L,l} :\hat{\tau}^E_{L,l + 1}}\right)^2 }\\
=& \sum_{l = 0}^{L}\sum_{i = \hat{\tau}^O_{L,l} + 1}^{\hat{\tau}^O_{L,l + 1}}{ \left(\varepsilon^E_i + \mu^O_i - \overline{Y}^O_{\hat{\tau}^O_{L,l} :\hat{\tau}^O_{L,l + 1}}\right)^2 } + \Delta_2^2 + 2\Delta_2\left(\varepsilon^E_{\frac{\frac{n}{2} + 1}{2}} + \mu^O_{\frac{\frac{n}{2} + 1}{2}} - \overline{Y}^O_{\hat{\tau}^O_{L,L - 1} :\hat{\tau}^O_{L,L}}\right)\\
+ &\sum_{l = 0}^{L}\sum_{i = \hat{\tau}^E_{L,l} + 1}^{\hat{\tau}^E_{L,l + 1}}{ \left(\varepsilon^O_i + \mu^E_i - \overline{Y}^E_{\hat{\tau}^E_{L,l} :\hat{\tau}^E_{L,l + 1}}\right)^2 } + \Delta_2^2 - 2\Delta_2\left(\varepsilon^O_{\frac{\frac{n}{2} + 1}{2}} + \mu^E_{\frac{\frac{n}{2} + 1}{2}} - \overline{Y}^E_{\hat{\tau}^E_{L,L} :\hat{\tau}^E_{L,L + 1}}\right).
\end{split}
\end{equation*}

Consequently,
\begin{equation*}
\begin{split}
&\operatorname{CV}_{(2)}(2) - \operatorname{CV}_{(2)}(1)\\
= & \sum_{i = 1}^{\hat{\tau}^O_{2, 1}}{ \left(\varepsilon^E_i + \mu^O_i - \overline{Y}^O_{0:\hat{\tau}^O_{2,1}}\right)^2 }
 + \sum_{i = \hat{\tau}^O_{2, 1} + 1}^{\tau^O_{2}}{ \left(\varepsilon^E_i + \mu^O_i - \overline{Y}^O_{\hat{\tau}^O_{2, 1} :\tau^O_{2}}\right)^2 }
 - \sum_{i = 1}^{\tau^O_{2}}{ \left(\varepsilon^E_i + \mu^O_i - \overline{Y}^O_{0:\tau^O_{2}}\right)^2 }\\
& + \sum_{i = 1}^{\hat{\tau}^E_{2, 1}}{ \left(\varepsilon^O_i + \mu^E_i - \overline{Y}^E_{0:\hat{\tau}^E_{2, 1}}\right)^2 }
 + \sum_{i = \hat{\tau}^E_{2, 1} + 1}^{\tau^E_{2}}{ \left(\varepsilon^O_i + \mu^E_i - \overline{Y}^E_{\hat{\tau}^E_{2, 1} :\tau^E_{2}}\right)^2 }
 - \sum_{i = 1}^{\tau^E_{2}}{ \left(\varepsilon^O_i + \mu^E_i - \overline{Y}^E_{0:\tau^E_{2}}\right)^2 }\\
 & + 2\Delta_2\left(\overline{Y}^O_{0:\tau^O_{2}} - \overline{Y}^O_{\hat{\tau}^O_{2,1} :\tau^O_{2}}\right)\\
 = & 2\sum_{i = 1}^{\hat{\tau}^O_{2, 1}}{ \varepsilon^E_i\left( \overline{Y}^O_{0:\tau^O_{2}} - \overline{Y}^O_{0:\hat{\tau}^O_{2,1}}\right) }
 + 2\sum_{i = \hat{\tau}^O_{2, 1} + 1}^{\tau^O_{2}}{ \varepsilon^E_i\left( \overline{Y}^O_{0:\tau^O_{2}} - \overline{Y}^O_{\hat{\tau}^O_{2, 1} :\tau^O_{2}}\right) }\\
& + 2\sum_{i = 1}^{\hat{\tau}^E_{2, 1}}{ \varepsilon^O_i\left( \overline{Y}^E_{0:\tau^E_{2}} - \overline{Y}^E_{0:\hat{\tau}^E_{2,1}}\right) }
 + 2\sum_{i = \hat{\tau}^E_{2, 1} + 1}^{\tau^E_{2}}{ \varepsilon^O_i\left( \overline{Y}^E_{0:\tau^E_{2}} - \overline{Y}^E_{\hat{\tau}^E_{2, 1} :\tau^E_{2}}\right) }\\ 
& +\sum_{i = 1}^{\hat{\tau}^O_{2, 1}}{ \left(\mu^O_i - \overline{Y}^O_{0:\hat{\tau}^O_{2,1}}\right)^2 }
 + \sum_{i = \hat{\tau}^O_{2, 1} + 1}^{\tau^O_{2}}{ \left(\mu^O_i - \overline{Y}^O_{\hat{\tau}^O_{2, 1} :\tau^O_{2}}\right)^2 }
 - \sum_{i = 1}^{\tau^O_{2}}{ \left(\mu^O_i - \overline{Y}^O_{0:\tau^O_{2}}\right)^2 }\\ 
& + \sum_{i = 1}^{\hat{\tau}^E_{2, 1}}{ \left(\mu^E_i - \overline{Y}^E_{0:\hat{\tau}^E_{2, 1}}\right)^2 }
 + \sum_{i = \hat{\tau}^E_{2, 1} + 1}^{\tau^E_{2}}{ \left(\mu^E_i - \overline{Y}^E_{\hat{\tau}^E_{2, 1} :\tau^E_{2}}\right)^2 }
 - \sum_{i = 1}^{\tau^E_{2}}{ \left(\mu^E_i - \overline{Y}^E_{0:\tau^E_{2}}\right)^2 }\\
 & + 2\Delta_2\left(\overline{Y}^O_{0:\tau^O_{2}} - \overline{Y}^O_{\hat{\tau}^O_{2,1} :\tau^O_{2}}\right)=:A_n. 
\end{split}
\end{equation*}

We have shown $\{\operatorname{CV}_{(2)}(2) - \operatorname{CV}_{(2)}(1)\} \EINS_{\Omega_n} = A_n \EINS_{\Omega_n}$. Note that to complete the proof it suffices to show that $\Pj(A_n \EINS_{\Omega_n} > 0)\to 1$ as $n \to \infty$.

Now let $A_n = A_{1,n} + A_{2, n}$, with
\begin{equation*}
\begin{split}
A_{1,n} := & 2\sum_{i = 1}^{\hat{\tau}^O_{2, 1}}{ \varepsilon^E_i\left( \overline{Y}^O_{0:\tau^O_{2}} - \overline{Y}^O_{0:\hat{\tau}^O_{2,1}}\right) }
 + 2\sum_{i = \hat{\tau}^O_{2, 1} + 1}^{\tau^O_{2}}{ \varepsilon^E_i\left( \overline{Y}^O_{0:\tau^O_{2}} - \overline{Y}^O_{\hat{\tau}^O_{2, 1} :\tau^O_{2}}\right) }\\
& + 2\sum_{i = 1}^{\hat{\tau}^E_{2, 1}}{ \varepsilon^O_i\left( \overline{Y}^E_{0:\tau^E_{2}} - \overline{Y}^E_{0:\hat{\tau}^E_{2,1}}\right) }
 + 2\sum_{i = \hat{\tau}^E_{2, 1} + 1}^{\tau^E_{2}}{ \varepsilon^O_i\left( \overline{Y}^E_{0:\tau^E_{2}} - \overline{Y}^E_{\hat{\tau}^E_{2, 1} :\tau^E_{2}}\right) }
\end{split}
\end{equation*}
and $A_{2, n} := A_n - A_{1,n}$. We will show that $A_{1,n}$ is dominated by $A_{2,n}$.
\subsubsection*{Bounding $A_{1,n} \EINS_{\Omega_n}$.} Let us consider the first two terms of $A_{1,n}\EINS_{\Omega_n}$. It follows from the total law of probability that
\begin{align*}
&\Bigg(\sum_{i = 1}^{\hat{\tau}^O_{2, 1}}{\varepsilon^E_i\left( \overline{Y}^O_{0:\tau^O_{2}} - \overline{Y}^O_{0:\hat{\tau}^O_{2,1}}\right)} + \sum_{i = \hat{\tau}^O_{2, 1} + 1}^{\tau^O_{2}}{ \varepsilon^E_i\left( \overline{Y}^O_{0:\tau^O_{2}} - \overline{Y}^O_{\hat{\tau}^O_{2, 1} :\tau^O_{2}}\right)}\Bigg) \EINS_{\Omega_n}\\
= &\mathcal{O}_\Pj\left( \sigma \left[\hat{\tau}^O_{2, 1}\left( \overline{Y}^O_{0:\tau^O_{2}} - \overline{Y}^O_{0:\hat{\tau}^O_{2,1}}\right)^2 + \left(\tau^O_{2} - \hat{\tau}^O_{2, 1}\right)\left( \overline{Y}^O_{0:\tau^O_{2}} - \overline{Y}^O_{\hat{\tau}^O_{2, 1} :\tau^O_{2}}\right)^2\right]^{1/2} \EINS_{\Omega_n} \right),
\end{align*}
since conditional on $\varepsilon^O$, $\varepsilon^E_i\left( \overline{Y}^O_{0:\tau^O_{2}} - \overline{Y}^O_{0:\hat{\tau}^O_{2,1}}\right)$ is centred Gaussian with variance $\sigma^2 \left( \overline{Y}^O_{0:\tau^O_{2}} - \overline{Y}^O_{0:\hat{\tau}^O_{2,1}}\right)^2$. Next,
\begin{equation*}
\begin{split}
&\left[\hat{\tau}^O_{2, 1}\left( \overline{Y}^O_{0:\tau^O_{2}} - \overline{Y}^O_{0:\hat{\tau}^O_{2,1}}\right)^2 + \left(\tau^O_{2} - \hat{\tau}^O_{2, 1}\right)\left( \overline{Y}^O_{0:\tau^O_{2}} - \overline{Y}^O_{\hat{\tau}^O_{2, 1} :\tau^O_{2}}\right)^2\right]\EINS_{\Omega_n}\\
\leq & 2 \left[\hat{\tau}^O_{2, 1}\left( \overline{\mu}^O_{0:\tau^O_{2}} - \overline{\mu}^O_{0:\hat{\tau}^O_{2,1}}\right)^2 + \left(\tau^O_{2} - \hat{\tau}^O_{2, 1}\right)\left( \overline{\mu}^O_{0:\tau^O_{2}} - \overline{\mu}^O_{\hat{\tau}^O_{2, 1} :\tau^O_{2}}\right)^2\right]\EINS_{\Omega_n}\\
& + 2\left[\hat{\tau}^O_{2, 1}\left( \overline{\varepsilon}^O_{0:\tau^O_{2}} - \overline{\varepsilon}^O_{0:\hat{\tau}^O_{2,1}}\right)^2 + \left(\tau^O_{2} - \hat{\tau}^O_{2, 1}\right)\left( \overline{\varepsilon}^O_{0:\tau^O_{2}} - \overline{\varepsilon}^O_{\hat{\tau}^O_{2, 1} :\tau^O_{2}}\right)^2\right]\EINS_{\Omega_n}\\
\leq & 2\left[\max_{\tau \in \{1,\ldots,\tau^O_{2} - 1\}}{\left\{\tau\left( \overline{\mu}^O_{0:\tau^O_{2}} - \overline{\mu}^O_{0:\tau}\right)^2 + \left(\tau^O_{2} - \tau\right)\left( \overline{\mu}^O_{0:\tau^O_{2}} - \overline{\mu}^O_{\tau :\tau^O_{2}}\right)^2\right\}}\right]\\
& + 4\left[\hat{\tau}^O_{2, 1}\left( \overline{\varepsilon}^O_{0:\hat{\tau}^O_{2,1}}\right)^2 + \tau^O_{2}\left( \overline{\varepsilon}^O_{0:\tau^O_{2}}\right)^2 + \left(\tau^O_{2} - \hat{\tau}^O_{2, 1}\right)\left(\overline{\varepsilon}^O_{\hat{\tau}^O_{2, 1} :\tau^O_{2}}\right)^2\right]\EINS_{\Omega_n}\\
\leq & 2\left[\tau^O_{1}\left( \overline{\mu}^O_{0:\tau^O_{2}} - \overline{\mu}^O_{0:\tau^O_{1}}\right)^2 + \left(\tau^O_{2} - \tau^O_{1}\right)\left( \overline{\mu}^O_{0:\tau^O_{2}} - \overline{\mu}^O_{\tau^O_{1} :\tau^O_{2}}\right)^2\right]\\
& + 4\left[\hat{\tau}^O_{2, 1}\left( \overline{\varepsilon}^O_{0:\hat{\tau}^O_{2,1}}\right)^2 + \tau^O_{2}\left( \overline{\varepsilon}^O_{0:\tau^O_{2}}\right)^2 + \left(\tau^O_{2} - \hat{\tau}^O_{2, 1}\right)\left(\overline{\varepsilon}^O_{\hat{\tau}^O_{2, 1} :\tau^O_{2}}\right)^2\right]\EINS_{\Omega_n}.
\end{split}
\end{equation*}
We now bound the last two terms separately:
\begin{equation*}
\begin{split}
&\tau^O_{1}\left( \overline{\mu}^O_{0:\tau^O_{2}} - \overline{\mu}^O_{0:\tau^O_{1}}\right)^2 + \left(\tau^O_{2} - \tau^O_{1}\right)\left( \overline{\mu}^O_{0:\tau^O_{2}} - \overline{\mu}^O_{\tau^O_{1} :\tau^O_{2}}\right)^2\\
= &\tau^O_{1}\left( \frac{\tau^O_{1}}{\tau^O_{2}} \Delta_1 - \Delta_1\right)^2 + \left(\tau^O_{2} - \tau^O_{1}\right)\left(\frac{\tau^O_{1}}{\tau^O_{2}} \Delta_1\right)^2  \\
=& \Delta_1^2\frac{\tau^O_1\left(\tau^O_2 - \tau^O_1\right)}{\tau^O_2} = \mathcal{O}\left(\Delta_1^2 \underline{\lambda}(1 - 2\underline{\lambda}/n)\right).
\end{split}
\end{equation*}
Next, since $\tau^O_1 > \tau^O_2 - \tau^O_1 \geq (\underline{\lambda} - 1) / 2$,
\begin{equation*}
\begin{split}
& \left(\hat{\tau}^O_{2,1}\right)^{\frac{1}{2}} \left\vert \overline{\varepsilon}^O_{0:\hat{\tau}^O_{2,1}} \right\vert \EINS_{\Omega_n}
\leq \max_{\tau \in \{\tau^O_1 - \delta_0,\ldots,\tau^O_1 + \delta_0\}}{\tau^{-\frac{1}{2}}\left\vert \sum_{i = 1}^{\tau}{\varepsilon^O_i} \right\vert}\\
\leq &  \left(\tau^O_1 - \delta_0\right)^{-\frac{1}{2}} \left(\left\vert \sum_{i = 1}^{\tau^O_1}{\varepsilon^O_i} \right\vert + \max_{\tau \in \{\tau^O_1 - \delta_0,\ldots,\tau^O_1 - 1\}}{\left\vert \sum_{i = \tau}^{\tau^O_1 - 1}{\varepsilon^O_i} \right\vert} + \max_{\tau \in \{\tau^O_1 + 1,\ldots,\tau^O_1 + \delta_0\}}{\left\vert \sum_{i = \tau^O_1 + 1}^{\tau} \varepsilon^O_i \right\vert}\right)\\
\leq & \mathcal{O}_\Pj\left(\sigma \left(\tau^O_1 - \sqrt{\underline{\lambda}}\right)^{-\frac{1}{2}} \left(\sqrt{\tau^O_1}+\sqrt{\sqrt{\underline{\lambda}} \log \underline{\lambda}} \right)\right)
\leq \mathcal{O}_\Pj\left(\sigma\right).
\end{split}
\end{equation*}

Similarly,
\begin{equation}\label{eq:boundEpsilonA1n}
\left(\tau^O_2 - \hat{\tau}^O_{2,1}\right)^{\frac{1}{2}}\left\vert \overline{\varepsilon}^O_{\hat{\tau}^O_{2,1} :\tau^O_2} \right\vert \EINS_{\Omega_n} \leq \mathcal{O}_\Pj\left(\sigma\right).
\end{equation}
Finally,
\begin{equation*}
\tau^O_{2}\left( \overline{\varepsilon}^O_{0:\tau^O_{2}}\right)^2 \EINS_{\Omega_n} \leq \tau^O_{2}\left( \overline{\varepsilon}^O_{0:\tau^O_{2}}\right)^2 = \mathcal{O}_\Pj\left(\sigma^2\right).
\end{equation*}
The same bounds can be achieved for the last two terms in $A_{1,n} \EINS_{\Omega_n}$ by interchanging $O$'s and $E$'s. Thus, we obtain
\begin{equation}\label{eq:BoundA1}
	A_{1,n} \EINS_{\Omega_n} = \mathcal{O}_{\Pj} \left( \sigma \sqrt{\Delta_1^2 \underline{\lambda}(1 - 2\underline{\lambda}/n) + \sigma^2} \right) = \mathcal{O}_{\Pj}\left(\sigma \sqrt{\Delta_1^2 \underline{\lambda}(1 - 2\underline{\lambda}/n)}\right),
\end{equation}
the last equality following from \eqref{eq:minmax}.
%\begin{equation}\label{eq:BoundA1}
%A_{1,n} \EINS_{\Omega_n} = \mathcal{O}_{\Pj}\left(\sigma \sqrt{\Delta_1^2\left(\frac{\tau^E_1\left(\tau^E_2 - \tau^E_1\right)}{\tau^E_2} + \frac{\tau^O_1\left(\tau^O_2 - \tau^O_1\right)}{\tau^O_2}\right) + \sigma^2} \right).
%\end{equation}
%A_{1,n} \EINS_{\Omega_n} = \mathcal{O}_{\Pj}\left(\Delta_1 \sigma \sqrt{\frac{\tau^E_1\left(\tau^E_2 - \tau^E_1\right)}{\tau^E_2} + \frac{\tau^O_1\left(\tau^O_2 - \tau^O_1\right)}{\tau^O_2}} + \sigma^2 \right).

\subsubsection*{Bounding $A_{2,n}$:}
\begin{equation*}
\begin{split}
A_{2,n} = &\sum_{i = 1}^{\hat{\tau}^O_{2, 1}}{ \left(\mu^O_i - \overline{Y}^O_{0:\hat{\tau}^O_{2,1}}\right)^2 }
 + \sum_{i = \hat{\tau}^O_{2, 1} + 1}^{\tau^O_{2}}{ \left(\mu^O_i - \overline{Y}^O_{\hat{\tau}^O_{2, 1} :\tau^O_{2}}\right)^2 }
 - \sum_{i = 1}^{\tau^O_{2}}{ \left(\mu^O_i - \overline{Y}^O_{0:\tau^O_{2}}\right)^2 }\\ 
& + \sum_{i = 1}^{\hat{\tau}^E_{2, 1}}{ \left(\mu^E_i - \overline{Y}^E_{0:\hat{\tau}^E_{2, 1}}\right)^2 }
 + \sum_{i = \hat{\tau}^E_{2, 1} + 1}^{\tau^E_{2}}{ \left(\mu^E_i - \overline{Y}^E_{\hat{\tau}^E_{2, 1} :\tau^E_{2}}\right)^2 }
 - \sum_{i = 1}^{\tau^E_{2}}{ \left(\mu^E_i - \overline{Y}^E_{0:\tau^E_{2}}\right)^2 }\\
 & + 2\Delta_2\left(\overline{Y}^O_{0:\tau^O_{2}} - \overline{Y}^O_{\hat{\tau}^O_{2,1} :\tau^O_{2}}\right)\\
\geq & 2\Delta_2\left(\overline{Y}^O_{0:\tau^O_{2}} - \overline{Y}^O_{\hat{\tau}^O_{2,1} :\tau^O_{2}}\right) 
- \sum_{i = 1}^{\tau^O_{2}}{ \left(\mu^O_i - \overline{Y}^O_{0:\tau^O_{2}}\right)^2 } 
- \sum_{i = 1}^{\tau^E_{2}}{ \left(\mu^E_i - \overline{Y}^E_{0:\tau^E_{2}}\right)^2 }\\
\geq & 2\Delta_2\left(\overline{\mu}^O_{0:\tau^O_{2}} - \overline{\mu}^O_{\hat{\tau}^O_{2,1} :\tau^O_{2}} - \left\vert\overline{\varepsilon}^O_{0:\tau^O_{2}}\right\vert - \left\vert\overline{\varepsilon}^O_{\hat{\tau}^O_{2,1} :\tau^O_{2}}\right\vert\right)\\ 
- & \sum_{i = 1}^{\tau^O_{2}}{ \left(\mu^O_i - \overline{\mu}^O_{0:\tau^O_{2}}\right)^2 }
 - 2\sqrt{\sum_{i = 1}^{\tau^O_{2}}{ \left(\mu^O_i - \overline{\mu}^O_{0:\tau^O_{2}}\right)^2 }}\sqrt{\tau^O_{2}\left(\overline{\varepsilon}^O_{0:\tau^O_{2}}\right)^2} 
 - \tau^O_{2}\left(\overline{\varepsilon}^O_{0:\tau^O_{2}}\right)^2\\
& - \sum_{i = 1}^{\tau^E_{2}}{ \left(\mu^E_i - \overline{\mu}^E_{0:\tau^E_{2}}\right)^2 }
 - 2\sqrt{\sum_{i = 1}^{\tau^E_{2}}{ \left(\mu^E_i - \overline{\mu}^E_{0:\tau^E_{2}}\right)^2 }}\sqrt{\tau^E_{2}\left(\overline{\varepsilon}^E_{0:\tau^E_{2}}\right)^2}
  - \tau^E_{2}\left(\overline{\varepsilon}^E_{0:\tau^E_{2}}\right)^2.
\end{split}
\end{equation*}
We bound the terms separately. Using \eqref{eq:boundEpsilonA1n} gives us
\begin{align*}
& 2\Delta_2\left(\overline{\mu}^O_{0:\tau^O_{2}} - \overline{\mu}^O_{\hat{\tau}^O_{2,1} :\tau^O_{2}} - \left\vert\overline{\varepsilon}^O_{0:\tau^O_{2}}\right\vert - \left\vert\overline{\varepsilon}^O_{\hat{\tau}^O_{2,1} :\tau^O_{2}}\right\vert\right) \EINS_{\Omega_n} \notag\\
\geq & 2\Delta_2\left(\overline{\mu}^O_{0:\tau^O_{2}} - \overline{\mu}^O_{\tau^O_1 - \sqrt{\underline{\lambda}} :\tau^O_{2}} - \left\vert\overline{\varepsilon}^O_{0:\tau^O_{2}}\right\vert - \left(\tau^O_2 - \tau^O_1 - \delta_0\right)^{-\frac{1}{2}}\left(\tau^O_2 - \hat{\tau}^O_{2,1}\right)^{\frac{1}{2}}\left\vert \overline{\varepsilon}^O_{\hat{\tau}^O_{2,1} :\tau^O_2} \right\vert \EINS_{\Omega_n}\right) \notag\\ 
\geq & 2\Delta_2\left[\frac{\tau^O_1}{\tau^O_2}\Delta_1 - \frac{\sqrt{\underline{\lambda}}}{\tau^O_2 - \tau^O_1 + \sqrt{\underline{\lambda}}}\Delta_1 + \mathcal{O}_\Pj\left(\sigma \left(\tau^O_2\right)^{-\frac{1}{2}}\right) + \mathcal{O}_\Pj\left(\sigma \left(\tau^O_2 - \tau^O_1 - \sqrt{\underline{\lambda}}\right)^{-\frac{1}{2}}\right)\right].   \label{eq:term1}
\end{align*}

Next we have,
\begin{equation*}
\sum_{i = 1}^{\tau^O_{2}}{ \left(\mu^O_i - \overline{\mu}^O_{0:\tau^O_{2}}\right)^2 } \EINS_{\Omega_n} \leq \sum_{i = 1}^{\tau^O_{2}}{ \left(\mu^O_i - \overline{\mu}^O_{0:\tau^O_{2}}\right)^2 } = \frac{\tau^O_1\left(\tau^O_2 - \tau^O_1\right)}{\tau^O_2}\Delta_1^2 =\mathcal{O}\left(\Delta_1^2 \underline{\lambda}(1 - 2\underline{\lambda}/n)\right)
\end{equation*}
and
\begin{equation*}
\tau^O_{2}\left(\overline{\varepsilon}^O_{0:\tau^O_{2}}\right)^2\EINS_{\Omega_n}
\leq \tau^O_{2}\left(\overline{\varepsilon}^O_{0:\tau^O_{2}}\right)^2
=\mathcal{O}_\Pj\left(\sigma^2 \right).
\end{equation*}
The same bounds can be achieved when interchanging $O$'s and $E$'s.
Now as
\begin{equation*}
\frac{\tau^O_1}{\tau^O_2} 
= \frac{n / 2 - \underline{\lambda}}{n / 2 + 1}
= \left(1 - \frac{2\underline{\lambda}}{n}\right)\left(1 - \frac{1}{n / 2 + 1}\right) 
= \left(1 - \frac{2\underline{\lambda}}{n}\right) \left(1 + o(1)\right),
\end{equation*}
%and, since $\tau^E_1 = \tau^O_1$, $\tau^E_2 = \tau^O_2 - 1$,
%\begin{equation*}
%\begin{split}
%&\frac{\tau^E_1\left(\tau^E_2 - \tau^E_1\right)}{\tau^E_2} + \frac{\tau^O_1\left(\tau^O_2 - \tau^O_1\right)}{\tau^O_2}
%= \frac{\tau^O_1}{\tau^O_2} \left( 2 (\tau^O_2 - \tau^O_1) - \frac{\tau^O_1}{\tau^O_2 - 1} \right)\\
%= &\frac{\tau^O_1}{\tau^O_2} \underline{\lambda} \left(1 + \frac{1 - \underline{\lambda}^{-1}}{n / 2 - 1}\right)
%= \underline{\lambda} \left(1 - \frac{2\underline{\lambda}}{n}\right) \left(1 + o(1)\right).
%\end{split}
%\end{equation*}
$\sqrt{\underline{\lambda}} = \mathcal{O}\left(\sqrt{\tau^O_2 - \tau^O_1}\right)$, $\tau^O_1 > \tau^O_2 - \tau^O_1$, and by assumption $\underline{\lambda}\Delta_1^2/ \sigma^2 \to \infty$, it follows that
\begin{equation*}
\begin{split}
A_{2,n}\EINS_{\Omega_n} \geq & 2\Delta_2\left[\frac{\tau^O_1}{\tau^O_2}\Delta_1 - \frac{\sqrt{\underline{\lambda}}}{\tau^O_2 - \tau^O_1 + \sqrt{\underline{\lambda}}}\Delta_1 + \mathcal{O}_\Pj\left(\sigma \left(\tau^O_2\right)^{-\frac{1}{2}}\right) + \mathcal{O}_\Pj\left(\sigma \left(\tau^O_2 - \tau^O_1 - \sqrt{\underline{\lambda}}\right)^{-\frac{1}{2}}\right)\right]\\ 
& + \mathcal{O}\left(\Delta_1^2 \underline{\lambda}(1 - 2\underline{\lambda}/n)\right) +  \mathcal{O}_\Pj\left(\Delta_1^2 \sqrt{\underline{\lambda}(1 - 2\underline{\lambda}/n)}  \sigma \right)
+ \mathcal{O}_\Pj\left(\sigma^2 \right)\\
= &\Delta_1^2 \left(1 - \frac{2\underline{\lambda}}{n}\right) \left[ 2\frac{\Delta_2}{\Delta_1} - \underline{\lambda} \right] \left(1 + o_\Pj(1)\right).
\end{split}
\end{equation*}
This and \eqref{eq:BoundA1} then imply
%and
%\begin{equation*}
%\begin{split}
%A_{1,n}\EINS_{\Omega_n} = & \mathcal{O}_{\Pj}\left(\sigma \sqrt{\Delta_1^2\left(\frac{\tau^E_1\left(\tau^E_2 - \tau^E_1\right)}{\tau^E_2} + \frac{\tau^O_1\left(\tau^O_2 - \tau^O_1\right)}{\tau^O_2}\right) + \sigma^2} \right)
%=  \mathcal{O}_{\Pj}\left(\Delta_1 \sigma \sqrt{\underline{\lambda} \left(1 - \frac{2\underline{\lambda}}{n}\right)}\right).
%\end{split}
%\end{equation*}
%Hence,
\begin{equation*}
\lim_{n \to \infty}{\Pj\left(A_{1,n}\EINS_{\Omega_n} + A_{2,n}\EINS_{\Omega_n} > 0  \right)} = 1,
\end{equation*}
since the assumptions of the theorem imply
\begin{equation*}
\begin{split}
\frac{\Delta_1^2 \left(1 - 2\underline{\lambda}/n\right) \left[ 2\Delta_2/\Delta_1 - \underline{\lambda} \right]}{\left( \Delta_1^2 \sigma^2 \underline{\lambda} \left(1 - 2\underline{\lambda}/n\right) \right)^{\frac{1}{2}}} 
= \frac{\Delta_1}{\sigma} \sqrt{\underline{\lambda}} \sqrt{1 - \frac{2\underline{\lambda}}{n}}\left(\frac{2}{\underline{\lambda}}\frac{\Delta_2}{\Delta_1} - 1 \right) \to \infty.
\end{split}
\end{equation*}
This completes the proof of $\Pj(\hat{K} = 2) \to 0$.

\subsubsection*{Derivation of \eqref{eq:L2riskSquaredErrorLoss}:}
%It remains to show \eqref{eq:L2riskSquaredErrorLoss}.
We define $\hat{f}_L:\ [0,1] \to \R,\ t \mapsto  \sum_{l = 0}^{L}{\overline{Y}_{\hat{\tau}^O_{L,l} :\hat{\tau}^O_{L,l + 1}} \EINS_{(\hat{\tau}^O_{L,l} / n, \hat{\tau}^O_{L,l + 1} / n]}(t)}$. Given $K_{\max} = 2$, it is straightforward to see that $\Pj\big( \hat{K} = K\big)\to 1$, as $n\to \infty$. Secondly, with the same arguments as before, we see that $\Pj\left(\hat{\tau}_{1, 1} = \tau_2\right) \to 1$. Moreover, a straightforward calculation shows that the $L_2$-loss minimizer and hence $\hat{f}_{1}$ satisfies
\begin{equation*}
\int_0^1{ \left(\hat{f}_{1}(t) - f(t)\right)^2 dt} \EINS_{\{\hat{\tau}_{1, 1} = \tau_2\}} \geq \left(1 - \frac{\underline{\lambda}}{n}\right)\frac{\underline{\lambda} \Delta_1^2}{n} \EINS_{\hat{\tau}_{1, 1} = \tau_2} \geq \frac{\underline{\lambda} \Delta_1^2}{2n}\EINS_{\{\hat{\tau}_{1, 1} = \tau_2\}}\quad \text{a.s.}
\end{equation*}
Thus, \eqref{eq:L2riskSquaredErrorLoss} follows from the following calculation
\begin{equation*}
\begin{split}
&\Pj\left( \left[\int_0^1{ \left(\hat{f}_{\hat{K}}(t) - f(t)\right)^2 dt}\right]^{-1} \leq \frac{2n}{\underline{\lambda} \Delta_1^2}\right)\\
= & \Pj\left( \int_0^1{ \left(\hat{f}_{\hat{K}}(t) - f(t)\right)^2 dt} \geq \frac{\underline{\lambda} \Delta_1^2}{2n}\right)\\
= & \Pj\left( \int_0^1{ \left(\hat{f}_{1}(t) - f(t)\right)^2 dt} \geq \frac{\underline{\lambda} \Delta_1^2}{2n}, \hat{\tau}_{1, 1} = \tau_2\right) - \Pj(\hat{K} \neq 1) - \Pj(\hat{\tau}_{1, 1} \neq \tau_2) \to 1,
\end{split}
\end{equation*}
as $n\to\infty$.
\end{proof}

\section{Proof of Theorem~\ref{theorem:overestimation}}\label{sec:proof:overestimation}
\begin{proof}[Proof of Theorem~\ref{theorem:overestimation}]
We focus to begin with on the first term of the cross-validation criterion: 
\begin{equation*}
\operatorname{CV}_{(2)}^{O}(L)
:= \sum_{l = 0}^{L}\sum_{i = \hat{\tau}^O_{L,l} + 1}^{\hat{\tau}^O_{L,l + 1}}{ \left(Y^E_i - \overline{Y}^O_{\hat{\tau}^O_{L,l} :\hat{\tau}^O_{L,l + 1}}\right)^2 }.
\end{equation*}
We now briefly outline how our arguments proceed. We begin by showing
\begin{equation} \label{eq:AnEqualCV}
\{\operatorname{CV}_{(2)}^{O}(1) - \operatorname{CV}_{(2)}^{O}(2)\}\EINS_{\Omega^O_1 \cap \Omega^O_2} = A^O_n \EINS_{\Omega^O_1 \cap \Omega^O_2}
\end{equation}
for some events $\Omega^O_1$ and $\Omega^O_2$ and a random variable $A^O_n$. We then lower bound the probability of the event $\{A^O_n > 0\} \cap \Omega^O_1 \cap \Omega^O_2$. To this end, we split the event $\{A^O_n > 0\}$ into further events $\Omega^O_3,\ldots,\Omega^O_{10}$. The proof concludes by symmetry arguments to include the second term of the cross-validation criterion, which we denote with a superscript $E$.

\subsubsection*{Deriving \eqref{eq:AnEqualCV}}
We now define the events
\begin{equation*}
\begin{split}
\Omega^O_1 := \{\hat{\tau}^O_{1, 1} = \tau^O_1\} \text{ and } 
\Omega^O_2 := \{\hat{\tau}^O_{2, 2} = \tau^O_1\}.
\end{split}
\end{equation*}
In the following, we work on $\Omega^O_1 \cap \Omega^O_2$. Furthermore, we observe that $\tau^O_1 = (n / 2 + 1) / 2$ and $\mu^E_i - \mu^O_i = \Delta_1$ if $i = (n / 2 + 1) / 2$, but $\mu^E_i - \mu^O_i = 0$ otherwise. We have,
\begin{equation*}
\begin{split}
\operatorname{CV}_{(2)}^{O}(1)
=& \sum_{i = 1}^{\tau^O_1}{ \left(Y^E_i - \overline{Y}^O_{0:\tau^O_1}\right)^2 }
+ \sum_{i = \tau^O_1 + 1}^{\tau^O_2}{ \left(Y^E_i - \overline{Y}^O_{\tau^O_1 :\tau^O_2}\right)^2 }\\
=& \sum_{i = 1}^{\tau^O_1}{ \left(\varepsilon^E_i - \overline{\varepsilon}^O_{0:\tau^O_1}\right)^2 } + \Delta_1^2 + 2\Delta_1\left(\varepsilon^E_{\tau^O_1} - \overline{\varepsilon}^O_{0:\tau^O_1}\right)
 + \sum_{i = \tau^O_1 + 1}^{\tau^O_2}{ \left(Y^E_i - \overline{Y}^O_{\tau^O_1 :\tau^O_2}\right)^2 }
\end{split}
\end{equation*}
and
\begin{equation*}
\begin{split}
&\operatorname{CV}_{(2)}^{O}(2)
= \sum_{i = 1}^{\tilde{\tau}^O_{2,1}}{ \left(Y^E_i - \overline{Y}^O_{0:\tilde{\tau}^O_{2,1}}\right)^2 }
+ \sum_{i = \tilde{\tau}^O_{2,1} + 1}^{\tau^O_1}{ \left(Y^E_i - \overline{Y}^O_{\tilde{\tau}^O_{2,1} :\tau^O_1}\right)^2 }
+ \sum_{i = \tau^O_1 + 1}^{\tau^O_2}{ \left(Y^E_i - \overline{Y}^O_{\tau^O_1 :\tau^O_2}\right)^2 }\\
=& \sum_{i = 1}^{\tilde{\tau}^O_{2,1}}{ \left(\varepsilon^E_i - \overline{\varepsilon}^O_{0:\tilde{\tau}^O_{2,1}}\right)^2 }
+ \sum_{i = \tilde{\tau}^O_{2,1} + 1}^{\tau^O_1}{ \left(\varepsilon^E_i - \overline{\varepsilon}^O_{\tilde{\tau}^O_{2,1} :\tau^O_1}\right)^2 }\\
& + \Delta_1^2 + 2\Delta_1\left(\varepsilon^E_{\tau^O_1} - \overline{\varepsilon}^O_{\tilde{\tau}^O_{2,1} :\tau^O_1}\right)
 + \sum_{i = \tau^O_1 + 1}^{\tau^O_2}{ \left(Y^E_i - \overline{Y}^O_{\tau^O_1 :\tau^O_2}\right)^2 },
\end{split}
\end{equation*}
with
\begin{equation}\label{eq:definitonTildeTau1}
\begin{split}
\tilde{\tau}^O_{2,1}
:= & \argmin_{\tau = 1,\ldots, \tau^O_1 - 1}{\left\{ \sum_{i = 1}^{\tau}{\left( \varepsilon^O_i - \overline{\varepsilon}^O_{0:\tau}\right)^2} + \sum_{\tau + 1}^{\tau^O_1}{\left(\varepsilon^O_i - \overline{\varepsilon}^O_{\tau :\tau^O_1}\right)^2}\right\}}\\
= & \argmin_{\tau = 1,\ldots, \tau^O_1 - 1}{\left\{ \sum_{i = 1}^{\tau^O_1}{\left(\varepsilon^O_i\right)^2} - \tau\left(\overline{\varepsilon}^O_{0:\tau}\right)^2 - \left(\tau^O_1 - \tau\right)\left(\overline{\varepsilon}^O_{\tau :\tau^O_1}\right)^2\right\}}\\
= & \argmax_{\tau = 1,\ldots, \tau^O_1 - 1}{\left\{\tau\left(\overline{\varepsilon}^O_{0:\tau}\right)^2 + \left(\tau^O_1 - \tau\right)\left(\overline{\varepsilon}^O_{\tau :\tau^O_1}\right)^2\right\}}.
\end{split}
\end{equation}
Hence,
\begin{equation*}
\begin{split}
\operatorname{CV}_{(2)}^{O}(1) - \operatorname{CV}_{(2)}^{O}(2)
= & \sum_{i = 1}^{\tau^O_1}{ \left(\varepsilon^E_i - \overline{\varepsilon}^O_{0:\tau^O_1}\right)^2 }
- \sum_{i = 1}^{\tilde{\tau}^O_{2,1}}{ \left(\varepsilon^E_i - \overline{\varepsilon}^O_{0:\tilde{\tau}^O_{2,1}}\right)^2 }
- \sum_{i = \tilde{\tau}^O_{2,1} + 1}^{\tau^O_1}{ \left(\varepsilon^E_i - \overline{\varepsilon}^O_{\tilde{\tau}^O_{2,1} :\tau^O_1}\right)^2 } \\
& + 2\Delta_1\left(\overline{\varepsilon}^O_{\tilde{\tau}^O_{2,1} :\tau^O_1} - \overline{\varepsilon}^O_{0:\tau^O_1}\right) 
=: A^O_n.
\end{split}
\end{equation*}
To summarise, we have shown \eqref{eq:AnEqualCV}.
%\begin{equation}\label{eq:AnEqualCV}
%\{\operatorname{CV}_{(2)}^{O}(1) - \operatorname{CV}_{(2)}^{O}(2)\}\EINS_{\Omega^O_1 \cap \Omega^O_2} = A^O_n \EINS_{\Omega^O_1 \cap \Omega^O_2}.
%\end{equation}

\subsubsection*{The event $\{A^O_n > 0\} \cap \Omega^O_1 \cap \Omega^O_2$.}
From \eqref{eq:conditionOverestimation} we have that
\begin{equation*}
r_n:=\frac{\Delta_1}{\sigma \sqrt{n\log\log{n}}} \to \infty,\quad \text{as } n\to \infty.
\end{equation*}
Then, let
\begin{equation*}
\begin{split}
&\Omega^O_3 := \left\{\overline{\varepsilon}^O_{\tilde{\tau}^O_{2,1} :\tau^O_1} > s_1 \sigma\sqrt{\frac{\log\log{n}}{n}}\right\},\\
&\Omega^O_4 := \left\{\left(\tau^O_1\right)^{1/2}\overline{\varepsilon}^O_{0:\tau^O_1} \leq s_2 \sigma\right\},\\
&\Omega^O_5 := \left\{ \tilde{\tau}^O_{2,1} \left(\overline{\varepsilon}^O_{0:\tilde{\tau}^O_{2,1}}\right)^2 + \left(\tau^O_1 - \tilde{\tau}^O_{2,1}\right) \left(\overline{\varepsilon}^O_{\tilde{\tau}^O_{2,1} :\tau^O_1}\right)^2 \leq s_3 \sigma^2 \log\log{n} \right\},\\
&\Omega^O_6 := \left\{\left(\tau^O_1\right)^{1/2}\overline{\varepsilon}^E_{0:\tau^O_1} \leq s_4 \sigma,\
\left(\tilde{\tau}^O_{2,1}\right)^{1/2}\overline{\varepsilon}^E_{0:\tilde{\tau}^O_{2,1}} \leq s_4 \sigma, \text{ and }
\left(\tau^O_1 - \tilde{\tau}^O_{2,1}\right)^{1/2}\overline{\varepsilon}^E_{\tilde{\tau}^O_{2,1} :\tau^O_1} \leq s_4 \sigma\right\},
\end{split}
\end{equation*}
with $s_1 := \max(r_n^{-1/2}, (\log \log n)^{-1/4})),\ s_2:=(\log\log n)^{1/8},\ s_3 := (r_n s_1)^{1/2} \text{ and } s_4:= (\log\log n)^{1/4}$. Hence, $s_1 \to 0,\ s_2 \to \infty,\ s_3 \to \infty,\ s_4 \to \infty, \text{ as } n \to \infty$. Moreover, we have that
\begin{equation*}
\begin{split}
& \sum_{i = 1}^{\tau^O_1}{ \left(\varepsilon^E_i - \overline{\varepsilon}^O_{0:\tau^O_1}\right)^2 }
- \sum_{i = 1}^{\tilde{\tau}^O_{2,1}}{ \left(\varepsilon^E_i - \overline{\varepsilon}^O_{0:\tilde{\tau}^O_{2,1}}\right)^2 }
- \sum_{i = \tilde{\tau}^O_{2,1} + 1}^{\tau^O_1}{ \left(\varepsilon^E_i - \overline{\varepsilon}^O_{\tilde{\tau}^O_{2,1} :\tau^O_1}\right)^2 }\\
= & \tau^O_1 \left(\overline{\varepsilon}^O_{0:\tau^O_1}\right)^2 - \tilde{\tau}^O_{2,1} \left(\overline{\varepsilon}^O_{0:\tilde{\tau}^O_{2,1}}\right)^2 - \left(\tau^O_1 - \tilde{\tau}^O_{2,1}\right) \left(\overline{\varepsilon}^O_{\tilde{\tau}^O_{2,1} :\tau^O_1}\right)^2\\
& -2\left[\tau^O_1 \overline{\varepsilon}^E_{0:\tau^O_1}\overline{\varepsilon}^O_{0:\tau^O_1} - \tilde{\tau}^O_{2,1} \overline{\varepsilon}^E_{0:\tilde{\tau}^O_{2,1}}\overline{\varepsilon}^O_{0:\tilde{\tau}^O_{2,1}} - \left(\tau^O_1 - \tilde{\tau}^O_{2,1}\right) \overline{\varepsilon}^E_{\tilde{\tau}^O_{2,1} :\tau^O_1}\overline{\varepsilon}^O_{\tilde{\tau}^O_{2,1} :\tau^O_1}
\right].
\end{split}
\end{equation*}
Thus, for $n$ large enough,
\begin{equation}\label{eq:AOn>0}
\bigcap_{j=3,4,5,6}\Omega^O_j \subseteq \{A^O_n > 0\}.
\end{equation}

Next, let
\begin{equation*}
\begin{split}
\Omega^O_7 := \bigg\{&\max_{t < \tau^O_1}\Big\{t \big(\overline{\varepsilon}^O_{0:t}\big)^2 + (\tau^O_1  - t) \big(\overline{\varepsilon}^O_{t :\tau^O_1}\big)^2\Big\} > \\
& \max_{\tau^O_1 <t  < n/2}\Big\{(t - \tau^O_1) \big(\overline{\varepsilon}^O_{\tau^O_1 :t}\big)^2 + (n/2  - t) \big(\overline{\varepsilon}^O_{t :n/2}\big)^2\Big\}\bigg\}
\end{split}
\end{equation*}
and
\begin{equation*}
\Omega^O_8:= \{\hat{\tau}^O_{2, 1} = \tau^O_1 \text{ or } \hat{\tau}^O_{2, 2} = \tau^O_1\}.
\end{equation*}
Then, it follows from the fact that $\sum_{i = a}^b \big(\varepsilon^O_i -  \overline{\varepsilon}^O_{a:b}\big)^2 = \sum_{i = a}^b \big(\varepsilon^O_i\big)^2 - (b - a + 1)\big(\overline{\varepsilon}^O_{a:b}\big)^2$ that $\Omega^O_2 = \Omega^O_7\cap \Omega^O_8$.

\subsubsection*{Symmetry arguments.}
We define $\operatorname{CV}_{(2)}^{E}(L)$ and $\Omega^E_1,\ldots,\Omega^E_8$ in the same way but with $O$ and $E$ interchanged and the observations considered in reverse order. Note that considering the observations in reverse order does not change the value of $\operatorname{CV}_{(2)}^{E}(L)$. Hence, we conclude that the same statements hold as for their counterparts denoted by $O$.

Note that $\operatorname{CV}_{(2)}(L) = \operatorname{CV}_{(2)}^{O}(L) + \operatorname{CV}_{(2)}^{E}(L)$. Hence,
\begin{equation*}
\left\{\operatorname{CV}_{(2)}(1) > \operatorname{CV}_{(2)}(2)\right\} \supseteq
\left\{\operatorname{CV}_{(2)}^{O}(1) > \operatorname{CV}_{(2)}^{O}(2), \operatorname{CV}_{(2)}^{E}(1) > \operatorname{CV}_{(2)}^{E}(2)\right\}.
\end{equation*}
Then, it follows from \eqref{eq:AnEqualCV}, $\Omega^O_2 = \Omega^O_7\cap \Omega^O_8$ and \eqref{eq:AOn>0}, that if $n$ is large enough for \eqref{eq:AOn>0} to hold, then
\begin{equation*}
\begin{split}
& \Pj\left(\operatorname{CV}_{(2)}^{O}(1) > \operatorname{CV}_{(2)}^{O}(2), \operatorname{CV}_{(2)}^{E}(1) > \operatorname{CV}_{(2)}^{E}(2)\right)\\
\geq & \Pj\left( A^O_n > 0,\ \Omega^O_1 \cap \Omega^O_2,\ A^E_n > 0,\ \Omega^E_1 \cap \Omega^E_2\right)\\
=& \Pj\left( A^O_n > 0,\ \Omega^O_1 \cap \Omega^O_7\cap \Omega^O_8,\ A^E_n > 0,\ \Omega^E_1 \cap \Omega^E_7 \cap \Omega^E_8\right)\\
\geq & \Pj\left( \Omega^O_1 \cap \bigcap_{j = 3}^{8}\Omega^O_j,\ \Omega^E_1  \cap \bigcap_{j = 3}^{8}\Omega^E_j\right).
\end{split}
\end{equation*}
Moreover, it follows from the symmetry of events and from the fact that $\Omega^O_3 \cap \Omega^O_7$ and $\Omega^E_3 \cap \Omega^E_7$ are independent since they depend on different $\varepsilon_i$'s, that 
\begin{equation*}
\begin{split}
 & \Pj\left( \Omega^O_1 \cap \bigcap_{j = 3}^{8}\Omega^O_j,\ \Omega^E_1  \cap \bigcap_{j = 3}^{8}\Omega^E_j\right)\\
\geq &\Pj\left( \Omega^O_3 \cap \Omega^O_7,\ \Omega^E_3 \cap \Omega^E_7\right) - \Pj\left(\big(\Omega^O_1\big)^C \cup \big(\Omega^O_8\big)^C\right) - \Pj\left(\big(\Omega^E_1\big)^C \cup \big(\Omega^E_8\big)^C\right)\\
& - \sum_{j = 4}^{6}\Pj\left(\big(\Omega^O_j\big)^C\right) - \sum_{j = 4}^{6}\Pj\left(\big(\Omega^E_j\big)^C\right)\\
= & \Pj\left( \Omega^O_3 \cap \Omega^O_7\right)^2 - 2 \Pj\left(\big(\Omega^O_1\big)^C \cup \big(\Omega^O_2\big)^C\right) - 2 \sum_{j = 4}^{6}\Pj\left(\big(\Omega^O_j\big)^C\right).
\end{split}
\end{equation*}
By combining all the inequalities, we obtain, for $n$ sufficiently large that \eqref{eq:AOn>0} holds, 
\begin{equation}\label{eq:probCVoverestimation}
\begin{split}
& \Pj\left(\operatorname{CV}_{(2)}(1) > \operatorname{CV}_{(2)}(2)\right)\\
\geq & \Pj\left( \Omega^O_3 \cap \Omega^O_7\right)^2 - 2 \Pj\left(\big(\Omega^O_1\big)^C \cup \big(\Omega^O_2\big)^C\right) - 2 \sum_{j = 4}^{6}\Pj\left(\big(\Omega^O_j\big)^C\right).
\end{split}
\end{equation}

\subsubsection*{Bounding probability of events $\Omega_j$.}
As $\Delta_1 / \sigma \to \infty$ due to \eqref{eq:conditionOverestimation}, it follows from Theorem \ref{theorem:detectionPrecision} that $\Pj(\Omega^O_1 \cap \Omega^O_2) \to 1$, as $n\to\infty$. Since $\overline{\varepsilon}^O_{a:b} = \mathcal{O}_{\Pj}\left( \sigma (b - a)^{-\frac{1}{2}} \right)$ and $\overline{\varepsilon}^E_{a:b} = \mathcal{O}_{\Pj}\left( \sigma (b - a)^{-\frac{1}{2}} \right)$ for all $a \leq b$ we have that $\Pj\left((\Omega^O_4)^C\right) \to 0$, as $n \to \infty$. Additionally,  $\Pj\left((\Omega^O_6)^C | \tilde{\tau}^O_{2,1}\right) \to 0$ almost surely, and so by dominated convergence, $\Pj\left((\Omega^O_6)^C \right) \to 0$ as well. Furthermore, it follows from standard extreme value theory that %(or from \citep[Lemma 2]{zou2020consistent})
\begin{equation*}
\begin{split}
&\tilde{\tau}^O_{2,1} \left(\overline{\varepsilon}^O_{0:\tilde{\tau}^O_{2,1}}\right)^2 + \left(\tau^O_1 - \tilde{\tau}^O_{2,1}\right) \left(\overline{\varepsilon}^O_{\tilde{\tau}^O_{2,1} :\tau^O_1}\right)^2
=  \max_{\tau < \tau^O_1}\Big\{\tau \big(\varepsilon^O_{0:\tau}\big)^2 + (\tau^O_1  - \tau) \big(\varepsilon^O_{\tau :\tau^O_1}\big)^2\Big\}\\
\leq & \max_{\tau = 1,\ldots, \tau^O_1 - 1}{\left\{\tau \left(\overline{\varepsilon}^O_{0:\tau}\right)^2\right\}}
+ \max_{\tau = 1,\ldots, \tau^O_1 - 1}{\left\{\left(\tau^O_1 - \tau\right) \left(\overline{\varepsilon}^O_{\tau :\tau^O_1}\right)^2\right\}}
 = \mathcal{O}_\Pj\left(\sigma^2 \log\log{n}\right)
\end{split}
\end{equation*}
and hence $\Pj\left(\big(\Omega^O_5\big)^C\right) \to 0$, as $n \to \infty$. Finally, let 
\begin{equation*}
\begin{split}
\Omega^O_9 := & \left\{ \tilde{\tau}^O_{2,1}\left(\overline{\varepsilon}^O_{0:\tilde{\tau}^O_{2,1}}\right)^2 + \left(\tau^O_1 - \tilde{\tau}^O_{2,1}\right)\left(\overline{\varepsilon}^O_{\tilde{\tau}^O_{2,1} :\tau^O_1}\right)^2 > 2 s_1^2 \sigma^2 \log\log n \right\} \text{ and }\\
\Omega^O_{10} := & \left\{ \left(\tau^O_1 - \tilde{\tau}^O_{2,1}\right)\left(\overline{\varepsilon}^O_{\tilde{\tau}^O_{2,1} :\tau^O_1}\right)^2 > \tilde{\tau}^O_{2,1}\left(\overline{\varepsilon}^O_{0:\tilde{\tau}^O_{2,1}}\right)^2 \right\}.
\end{split}
\end{equation*}
Note that on $\Omega^O_9 \cap \Omega^O_{10}$, $\left(\tau^O_1 - \tilde{\tau}^O_{2,1}\right)\left(\overline{\varepsilon}^O_{\tilde{\tau}^O_{2,1} :\tau^O_1}\right)^2 >  s_1^2 \sigma^2 \log\log n$, so
\begin{equation*}
\begin{split}
 \Pj\left( \Omega^O_3 \cap \Omega^O_7\right)
= & \Pj\left( \Omega^O_7 \right)  \Pj\bigg( \overline{\varepsilon}^O_{\tilde{\tau}^O_{2,1} :\tau^O_1} > s_1 \sigma\sqrt{\frac{\log\log{n}}{n}}\, \bigg\vert \Omega^O_7\bigg)\\
\geq & \Pj\left( \Omega^O_7 \right)
\Pj\Big( \overline{\varepsilon}^O_{\tilde{\tau}^O_{2,1} :\tau^O_1} > 0\, \big| \Omega^O_7 \cap \Omega^O_9 \cap \Omega^O_{10} \Big)
\Pj\Big( \Omega^O_{10}\, \big| \Omega^O_7 \cap \Omega^O_9 \Big)
\Pj\big( \Omega^O_9\, \vert \Omega^O_7 \big).
\end{split}
\end{equation*}

Because of symmetry and since the left segment is one observation longer, it follows that $\Pj\left( \Omega^O_7 \right) \geq \frac{1}{2}$. Additionally, since the $\varepsilon^O_i$'s are independent and symmetric around zero as centred Gaussian distributed errors we have that
\begin{equation*}
\Pj\Big( \overline{\varepsilon}^O_{\tilde{\tau}^O_{2,1} :\tau^O_1} > 0\, \big|\, \Omega^O_7 \cap \Omega^O_9 \cap \Omega^O_{10} \Big) = \Pj\left(\overline{\varepsilon}^O_{\tilde{\tau}^O_{2,1} :\tau^O_1} > 0\right) = \frac{1}{2}.
\end{equation*}
Moreover, $\tilde{\tau}^O_{2,1}\big(\overline{\varepsilon}^O_{0:\tilde{\tau}^O_{2,1}}\big)^2$ and $(\tau^O_1 - \tilde{\tau}^O_{2,1})\big(\overline{\varepsilon}^O_{\tilde{\tau}^O_{2,1} :\tau^O_1}\big)^2$  have the same distribution. This still holds when we condition on $\Omega^O_7 \cap \Omega^O_9$. Thus, $\Pj\big( \Omega^O_{10}\, \vert \Omega^O_7 \cap \Omega^O_9 \big) \geq \frac{1}{2}$.

Finally, from the definition of $\tilde{\tau}^O_{2,1}$ \eqref{eq:definitonTildeTau1} and standard extreme value theory, it follows that
\begin{align*}
	\tilde{\tau}^O_{2,1}\left(\overline{\varepsilon}^O_{0:\tilde{\tau}^O_{2,1}}\right)^2 + \left(\tau^O_1 - \tilde{\tau}^O_{2,1}\right)\left(\overline{\varepsilon}^O_{\tilde{\tau}^O_{2,1} :\tau^O_1}\right)^2 \geq \max_{\tau = 1,\ldots, \tau^O_1 - 1}\tau\left(\overline{\varepsilon}^O_{0:\tau}\right)^2 
\end{align*}
and
\begin{equation} \label{eq:maxtau}
	 \left\{\max_{\tau = 1,\ldots, \tau^O_1 - 1} \tau\left(\overline{\varepsilon}^O_{0:\tau}\right)^2 \right\}^{-1} = \mathcal{O}_\Pj\{(\sigma^2 \log\log n)^{-1}\}.
\end{equation}
%\begin{align}
%&\mathcal{O}_\Pj\left(\tilde{\tau}^O_{2,1}\left(\overline{\varepsilon}^O_{0:\tilde{\tau}^O_{2,1}}\right)^2 + \left(\tau^O_1 - \tilde{\tau}^O_{2,1}\right)\left(\overline{\varepsilon}^O_{\tilde{\tau}^O_{2,1} :\tau^O_1}\right)^2\right) \notag\\
%= & \mathcal{O}_\Pj\left(\max_{\tau = 1,\ldots, \tau^O_1 - 1}{\left\{\tau\left(\overline{\varepsilon}^O_{0:\tau}\right)^2 + \left(\tau^O_1 - \tau\right)\left(\overline{\varepsilon}^O_{\tau :\tau^O_1}\right)^2\right\}}\right) \label{eq:maxtau}\\
%\geq & \mathcal{O}_\Pj\left(\max_{\tau = 1,\ldots, \tau^O_1 - 1}{\left\{\tau\left(\overline{\varepsilon}^O_{0:\tau}\right)^2 \right\}} \right)
%=\sigma^2 \log\log(n). \notag
%\end{align}
Moreover, the conditional distribution of the maximum on the l.h.s.\ of \eqref{eq:maxtau} given $\Omega^O_7$ is stochastically larger than its unconditional distribution.
Hence we may conclude that $\liminf_{n \to \infty}{\Pj\left( \Omega^O_3 \cap \Omega^O_7\right)} \geq 1/8$.
Then, combining \eqref{eq:probCVoverestimation} and the probabilities above gives us
\[
	\liminf_{n \to \infty}{\Pj(\operatorname{CV}_{(2)}(1) > \operatorname{CV}_{(2)}(2))} > 1/64.
\]
The proof concludes by noting that $\Pj(\operatorname{CV}_{(2)}(0) > \operatorname{CV}_{(2)}(1))\to 1$, as $n\to \infty$, and hence
\begin{equation*}
\liminf_{n \to \infty} \Pj\left( \hat{K} > 1 \right) \geq \liminf_{n \to \infty}{\Pj\left(\operatorname{CV}_{(2)}(1) > \operatorname{CV}_{(2)}(2)\right)} \geq 1/64 > 0. \qedhere
\end{equation*}
\end{proof}

\section{Proof of Theorem~\ref{theorem:positiveResultRescaledCV}}\label{sec:proof:RescaledCV}
We will show that
\begin{equation}\label{eq:sketchMainIdea}
\Pj\left( \hat{K} = K \right) \geq \Pj\Bigg( \min_{\substack{L = 0,\ldots,K_{\max},\\ L \neq K}} \operatorname{CV}_{\mathrm{mod}}(L) - \operatorname{CV}_{\mathrm{mod}}(K) > 0\Bigg) \to 1, \text{ as } n\to \infty.
\end{equation}
Note that the two terms in $\operatorname{CV}_{\mathrm{mod}}(L)$ are symmetric. Hence, we can focus most of the time on the first term only. %We will begin by treating the case where $K \geq 1\ \forall n$ for all sufficiently large $n$. Next we will consider the case where $K=0$ for all $n$.

To this end, we use the notations $E_i := \varepsilon^E_i + \mu_i^O,\ i = 1,\ldots,n / 2$,
\begin{equation}\label{eq:CVmodO}
\operatorname{CV}^O_{\mathrm{mod}}(L) := 
\sum_{l = 0}^{L}\sum_{i = \hat{\tau}^O_{L,l} + 1}^{\hat{\tau}^O_{L,l + 1} - 1}{ \frac{\hat{n}^O_l}{\hat{n}^O_l - 1} \left(Y^E_i - \overline{Y}^O_{\hat{\tau}^O_{L,l} :\hat{\tau}^O_{L,l + 1}}\right)^2 },
\end{equation}
and
\begin{equation}\label{eq:simplifiedCVterm}
\widetilde{\operatorname{CV}}^O_{\mathrm{mod}}(L) := 
\sum_{l = 0}^{L}\sum_{i = \hat{\tau}^O_{L,l} + 1}^{\hat{\tau}^O_{L,l + 1}} \left(E_i - \overline{Y}^O_{\hat{\tau}^O_{L,l} :\hat{\tau}^O_{L,l + 1}}\right)^2.
\end{equation}
We later also use the same definitions with all instances of $O$'s and $E$'s interchanged to define $O_i$, $\operatorname{CV}^E_{\mathrm{mod}}$ and $\widetilde{\operatorname{CV}}^E_{\mathrm{mod}}(L)$.

Then, our main argument proceeds as follows. In Lemma~\ref{lemma:differenceCVLtoK} we lower bound
\begin{equation} \label{eq:CV_tilde}
\min_{L \neq K} \widetilde{\operatorname{CV}}^O_{\mathrm{mod}}(L) - \widetilde{\operatorname{CV}}^O_{\mathrm{mod}}(K).
\end{equation}
Next, in Lemma~\ref{lemma:differenceCVs} we upper bound
\[
\max_{L=0,\ldots,K_{\max}} |\operatorname{CV}^O_{\mathrm{mod}}(L) - \widetilde{\operatorname{CV}}^O_{\mathrm{mod}}(L)|.
\]
Putting these together, along with corresponding results for $\operatorname{CV}^E_{\mathrm{mod}}$ and $\widetilde{\operatorname{CV}}^E_{\mathrm{mod}}(L)$, which follow from identical arguments, gives the final result.
%the maximal difference for any $L$ between $\operatorname{CV}^O_{\mathrm{mod}}(L)$ and $\widetilde{\operatorname{CV}}^O_{\mathrm{mod}}(L)$.
To show Lemma~\ref{lemma:differenceCVLtoK}, we follow the strategy in \citet{zou2020consistent} and split $\widetilde{\operatorname{CV}}^O_{\mathrm{mod}}(L)$ as follows.

For any set of time points $\mathcal{U}=\{t_0 < t_1 < \cdots <t_K < t_{K + 1}\}$ and any collection of vectors $X = (X_1,\ldots,X_{t_{K + 1}}) \in \R^{d \times t_{K+1}}$ and $Y = (Y_1,\ldots,Y_{t_{K + 1}}) \in \R^{d \times t_{K+1}}$, define
\begin{equation}\label{eq:Sxy_notation}
	S_{X, Y}(\mathcal{U}) := \sum_{k = 0}^{K}\sum_{i = t_k + 1}^{t_{k+1}} \left(X_i - \overline{X}_{t_k :t_{k+1}}\right)\left(Y_i - \overline{Y}_{t_k :t_{k+1}}\right).
\end{equation}
Then for any $L = 0,\ldots,K_{\max}$, it may be shown that
\begin{align*}
& \widetilde{\operatorname{CV}}^O_{\mathrm{mod}}(L)
=  S_{E}\left( \hat{\mathcal{T}}^O_L \right) - S_{\varepsilon^O}\left( \hat{\mathcal{T}}^O_L \right) - S_{\varepsilon^E}\left( \hat{\mathcal{T}}^O_L \right)
 + 2 S_{\varepsilon^O, \varepsilon^E}\left( \hat{\mathcal{T}}^O_L \right) + \sum_{i = 1}^{n / 2} \left(\varepsilon^O_i - \varepsilon^E_i\right)^2,
\end{align*}
with $S_{X}\left( \hat{\mathcal{T}}^O_L \right)$ defined as in \eqref{eq:S_notation}.
%Furthermore, for any set of time points $\mathcal{U}=\{t_0 < t_1 < \cdots <t_K < t_{K + 1}\}$ and any collection of vectors $X = (X_1,\ldots,X_{t_{K + 1}}) \in \R^{d \times t_{K+1}}$ and $Y = (Y_1,\ldots,Y_{t_{K + 1}}) \in \R^{d \times t_{K+1}}$ we used the notation
%\begin{equation}\label{eq:Sxy_notation}
%S_{X, Y}(\mathcal{U}) := \sum_{k = 0}^{K}\sum_{i = t_k + 1}^{t_{k+1}} \left(X_i - \overline{X}_{t_k :t_{k+1}}\right)\left(Y_i - \overline{Y}_{t_k :t_{k+1}}\right).
%\end{equation}

Consequently, for any $L \neq K$,
\begin{equation}\label{eq:splitCostDifferenceModCost}
\begin{split}
& \widetilde{\operatorname{CV}}^O_{\mathrm{mod}}(L)  - \widetilde{\operatorname{CV}}^O_{\mathrm{mod}}(K)\\
= & \left\{ S_{E}\left( \hat{\mathcal{T}}^O_L \right) - S_{E}\left( \hat{\mathcal{T}}^O_K \right)\right\}
- \left\{ S_{\varepsilon^O}\left( \hat{\mathcal{T}}^O_L \right) - S_{\varepsilon^O}\left( \hat{\mathcal{T}}^O_K \right)\right\}
- \left\{ S_{\varepsilon^E}\left( \hat{\mathcal{T}}^O_L \right) - S_{\varepsilon^E}\left( \hat{\mathcal{T}}^O_K \right)\right\}\\
& + 2 \left\{ S_{\varepsilon^O, \varepsilon^E}\left( \hat{\mathcal{T}}^O_L \right) - S_{\varepsilon^O, \varepsilon^E}\left( \hat{\mathcal{T}}^O_K \right)\right\}.
\end{split}
\end{equation}
The main argument needed to bound the terms in \eqref{eq:splitCostDifferenceModCost} uniformly will be derived in Lemmas~\ref{lemma:boundsForEpsilonE}--\ref{lemma:boundsForEpsilonO}. Preliminary results needed for this are given in Lemmas~\ref{lemma:maxYao}--\ref{lemma:maxEpsilonEsquared}, which bound the maximum of rescaled local sums, and in Lemmas~\ref{lemma:differenceCostsNestedChangePointSets}~and~\ref{lemma:differenceCostsMixedTerm}, where we bound $S_{X}(\mathcal{U})-S_{X}(\mathcal{V})$ and $S_{X, Y}(\mathcal{U})-S_{X, Y}(\mathcal{V})$, respectively, when $\mathcal{U} \subset \mathcal{V}$. Lemma~\ref{lemma:differenceCVs} is shown by splitting up terms in a similar fashion.

A lower bound of $\widetilde{\operatorname{CV}}^O_{\mathrm{mod}}(L) - \widetilde{\operatorname{CV}}^O_{\mathrm{mod}}(K)$ is developed in \citet{zou2020consistent} (see \eqref{eq:whatZouetalHaveShown} below) where a simplifying assumption is made that
change-points do not occur at odd locations.
This additional assumption ensures that the first term in $\operatorname{CV}_{(2)}(L)$ is identical to $\widetilde{\operatorname{CV}}^O_{\mathrm{mod}}(L)$ as claimed by their second displayed formula on pg 433, where it is assumed that $O_i$ and $E_i$ have the same expectation.
As discussed at the beginning of the proof of Theorem~1 on pg.~8 in the supplementary material of \citet{wang2021data}, such an assumption may be justified when the noise variance is bounded away from zero and the signal magnitude is bounded; we however do not make these assumptions in our results in order to maintain greater fidelity to phenomena observed in finite sample settings.

%lower bounds of
%\begin{equation} \label{eq:CV_tilde}
%	\min_{L \neq K} \widetilde{\operatorname{CV}}^O_{\mathrm{mod}}(L) - \widetilde{\operatorname{CV}}^O_{\mathrm{mod}}(K).
%\end{equation}
In \citet{zou2020consistent}, results of the form
\begin{equation}\label{eq:whatZouetalHaveShown}
	\Pj\left( \operatorname{CV}_{(2)}(L) - \operatorname{CV}_{(2)}(K) > a_n \right) \to 1, \text{ as } n\to \infty,
\end{equation}
 are shown,
for each fixed $L = 0,\ldots,K_{\max},\ L \neq K$, where $a_n$ is a positive sequence. However while these immediately give consistency when $K$ and $K_{\max}$ are finite, it is not clear to us how the arguments may be extended directly to allow for diverging $K$ and $K_{\max}$ as this would require delicate control of the rates at which the probabilities above approach $1$.
Several of the lemmas below, which we require in our proof of Lemma~\ref{lemma:differenceCVLtoK}, are therefore uniform versions of Lemmas~1--5 in \citet{zou2020consistent}.

For notational convenience we use the following convention: whenever $u$ and $v$ are vectors of the same dimension, $u v$ should be understood to mean $u^{\top} v$, and similarly $|u|$ and $u^2$ should be taken to mean $\|u\|_2$ and $\|u\|_2^2$ respectively. Furthermore, we use the notation $\hat{n}^O_l := \hat{\tau}^O_{L,l + 1} - \hat{\tau}^O_{L,l}$.

The following two lemmas are uniform versions of Lemmas~1~and~2 in \citet{zou2020consistent}.
\begin{Lemma}\label{lemma:maxYao}
Let $(Y_{nkj})_{n=1,2,\ldots, k = 1,\ldots,K_n, j = 1,\ldots,n}$ be independent (potentially multivariate) mean-zero random variables with $\E[\| Y_{nkj}\|_2^2] \leq 1$ and $\E \|Y_{nkj}\|_2^{q} \leq \frac{q!}{2} c^{q - 2}$ for all $q\geq 3$ and a constant $c > 0$, $n \in \N$, $k = 1,\ldots,K_n$, $j = 1,\ldots,n$. Then,
\begin{equation*}
\max_{k = 1,\ldots,K_n} \max_{0 \leq i < j \leq n} \frac{1}{j - i} \left\|\sum_{l = i + 1}^{j} Y_{nkl}\right\|_2^2 = \mathcal{O}_\Pj\big( \left(\log n\right)^2 + \left(\log K_n\right)^2 \big).
\end{equation*}
\end{Lemma}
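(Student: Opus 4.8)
The plan is to derive the bound from a Bernstein tail inequality for each individual partial sum, combined with a union bound over all $\mathcal{O}(n^2 K_n)$ admissible triples $(i,j,k)$. Write $d$ for the fixed ambient dimension and set $S_{ijk} := \sum_{l=i+1}^{j} Y_{nkl}$ with $m := j - i$. First I would pass from the Euclidean norm to coordinates via $\|S_{ijk}\|_2^2 \leq d\,\max_{1\leq a\leq d}(S_{ijk})_a^2$, observing that each coordinate process $(Y_{nkl})_a$ inherits the hypotheses $\E[(Y_{nkl})_a^2]\leq 1$ and $\E|(Y_{nkl})_a|^q \leq \E\|Y_{nkl}\|_2^q \leq \tfrac{q!}{2}c^{q-2}$ because $|(Y_{nkl})_a| \leq \|Y_{nkl}\|_2$. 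It therefore suffices to prove that $\max_{i,j,k,a}\tfrac{1}{m}(S_{ijk})_a^2 = \mathcal{O}_\Pj((\log n)^2 + (\log K_n)^2)$, since multiplying by the constant $d$ preserves the rate.

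For fixed $i,j,k,a$, the variance is bounded by $\sum_{l=i+1}^{j}\E[(Y_{nkl})_a^2]\leq m$ and the moment hypothesis is exactly that required by the classical Bernstein inequality, giving
\[
\Pj\big(|(S_{ijk})_a| \geq t\big) \leq 2\exp\!\left(-\frac{t^2/2}{m + ct}\right).
\]
Setting $t = \sqrt{my}$, so that $\{\tfrac{1}{m} (S_{ijk})_a^2 \geq y\} = \{|(S_{ijk})_a| \geq \sqrt{my}\}$, I would split into two regimes. When $m \geq c^2 y$ the denominator is at most $2m$ and the exponent is at least $y/4$, giving the Gaussian-type bound $2\exp(-y/4)$. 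When $m < c^2 y$ the denominator is at most $2c\sqrt{my}$ and the exponent is at least $\sqrt{my}/(4c) \geq \sqrt{y}/(4c)$ (using $m\geq 1$), giving the sub-exponential bound $2\exp(-\sqrt{y}/(4c))$.

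The final step is the union bound. There are at most $n^2$ intervals, $K_n$ indices $k$ and $d$ coordinates, so I would take $y := M\big((\log n)^2 + (\log K_n)^2\big)$ with $M$ a large constant. In the long-interval regime the bound $2\exp(-y/4)$ already dominates $n^2 K_n$ for $M$ large. In the short-interval regime, using $\sqrt{y} \geq \sqrt{M/2}\,(\log n + \log K_n)$, the aggregate failure probability is at most $2dn^2 K_n\exp\!\big(-\sqrt{M/2}\,(\log n + \log K_n)/(4c)\big)$, which tends to $0$ once $M$ is chosen large enough. Since this argument works for every sufficiently large $M$, the claimed $\mathcal{O}_\Pj$ statement follows.

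The main obstacle is the short-interval regime: it is precisely the sub-exponential (rather than Gaussian) tail that is active for small $m$ which forces the quadratic $(\log)^2$ rate, as opposed to the $\log$ rate one would naively expect. Making the union bound work there requires checking that the threshold $M((\log n)^2 + (\log K_n)^2)$ beats the square of the union-bound exponent, which holds because $(\log n + \log K_n)^2 \leq 2((\log n)^2 + (\log K_n)^2)$. Retaining the explicit joint dependence on $n$ and $K_n$ throughout — rather than treating $K_n$ as fixed, as in the corresponding Lemma~1 of \citet{zou2020consistent} — is exactly what makes this a genuinely uniform version suitable for the regime where $K$ diverges.
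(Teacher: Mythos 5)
Your proposal is correct and follows essentially the same route as the paper: reduce to one coordinate at a time (picking up a dimension-dependent constant), apply Bernstein's inequality to each partial sum, and union bound over the $\mathcal{O}(n^2 K_n)$ triples, with the $(\log)^2$ rate forced by the sub-exponential part of the tail. The paper merely packages the two regimes into the single threshold $\sqrt{2mt}+ct \leq (\sqrt{2}+c)\sqrt{m}\,t$ with $t = \xi(\log n + \log K_n)$ instead of splitting cases explicitly, which is an equivalent formulation of the same bound.
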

\begin{proof}
Without loss of generality, we may assume that $Y_{nkj}$ is one-dimensional. If it is multidimensional, we can bound each coordinate individually. To this end, note that the conditions of Bernstein's inequality (see Theorem~2.10 in \citet{boucheron2013concentration}) are still satisfied. Hence, we observe that the bound only increases by a constant depending solely on the dimension.

It follows from union bounds and Bernstein's inequality that for $\xi > 0$ sufficiently large,
\begin{equation*}
\begin{split}
&\Pj\left(\max_{k = 1,\ldots,K_n} \max_{0 \leq i < j \leq n} \frac{1}{j - i} \left(\sum_{l = i + 1}^{j} Y_{nkl}\right)^2 > \big((\sqrt{2}+c)\xi (\log n + \log K_n )\big)^2\right)\\
\leq & \sum_{k = 1}^{K_n}\sum_{0 \leq i < j \leq n} \Pj\left(\bigg| \sum_{l = i + 1}^{j} Y_{nkl} \bigg| > (\sqrt{2}+c)\xi (\log n + \log K_n) \sqrt{j - i} \right)\\
\leq & 2K_n \sum_{0 \leq i < j \leq n} \exp\left( - \xi (\log n + \log K_n )\right)\\
\leq & 2K_n^{-\xi + 1} n^{-\xi + 2}, 
\end{split}
\end{equation*}
Hence, the r.h.s.\ can be made arbitrarily small by choosing $\xi$ large enough.
\end{proof}

\begin{Lemma}\label{lemma:maxOneSided}
%Let $K_n$ be a sequence and $n_k \leq n,\ \forall\ k = 1,\ldots,K_n, n \in \N$. For all $n \in \N$ suppose further that $(Y_{k, j})_{k = 1,\ldots,K_n, j = 1,\ldots,n_k}$ are sequences of independent, potentially multivariate random variables with $\max_{k = 1,\ldots,K_n}\max_{i = 1,\ldots,n_k}\E \|Y_{k,i}\|_2^{2m} \leq C$ for a constant $0< C <\infty$ and some fixed even $m \geq 2$.
%Moreover, let
Consider the setup of Lemma~\ref{lemma:maxYao}. Then,
\begin{equation*}
\max_{k = 1,\ldots,K_n} \max_{1 \leq j \leq n} \frac{1}{j} \left\|\sum_{i = 1}^{j} Y_{nki}\right\|_2^2 = \mathcal{O}_\Pj\Big( \log\log n + (\log K_n)^2 \Big).
\end{equation*}
\end{Lemma}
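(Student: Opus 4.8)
The plan is to mimic the union-bound-over-Bernstein strategy of Lemma~\ref{lemma:maxYao}, but to replace the crude union bound over \emph{all} index pairs by a \emph{dyadic peeling} argument, so that the union over the time index involves only $\mathcal{O}(\log n)$ terms rather than $\mathcal{O}(n^2)$; this is exactly what turns the $(\log n)^2$ of Lemma~\ref{lemma:maxYao} into the law-of-the-iterated-logarithm rate $\log\log n$. As in Lemma~\ref{lemma:maxYao}, I would first reduce to the scalar case by bounding each coordinate separately, which is legitimate since the Bernstein moment condition passes to coordinates up to a dimension-dependent constant. Writing $S_j^{(k)} := \sum_{i=1}^j Y_{nki}$, it then suffices to control $\max_k\max_{1\le j\le n} j^{-1/2}|S_j^{(k)}|$ and square at the end.

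First I would establish a one-sided maximal Bernstein inequality: since $\exp(\theta S_j^{(k)})$ is a non-negative submartingale and the Bernstein moment condition gives the bound $\E\exp(\theta S_m^{(k)})\le\exp\{\theta^2 m/(2(1-c\theta))\}$ for $0<\theta<1/c$, Doob's maximal inequality followed by optimisation over $\theta$ yields $\Pj(\max_{j\le m}|S_j^{(k)}|\ge t)\le 2\exp\{-t^2/(2(m+ct))\}$. Next I would split $\{1,\dots,n\}$ into the dyadic blocks $[2^s,2^{s+1})$, $s=0,\dots,\lceil\log_2 n\rceil$. On the $s$-th block $j^{-1/2}\le 2^{-s/2}$, so the block maximum is at most $2^{-s/2}\max_{j\le 2^{s+1}}|S_j^{(k)}|$, and applying the maximal inequality with $m=2^{s+1}$ and threshold $t=x\,2^{s/2}$ gives a bound whose exponent is of order $-x^2/\{1+cx\,2^{-s/2}\}$. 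This naturally separates two regimes: for large blocks ($2^{s}\gtrsim x^2$) one is in the Gaussian regime and the bound is $\lesssim\exp(-cx^2)$, whereas for small blocks ($2^s\lesssim x^2$) only the sub-exponential tail $\lesssim\exp(-c\,x\,2^{s/2})\le\exp(-cx)$ is available.

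I would then union bound over the $\mathcal{O}(\log n)$ blocks and the $K_n$ indices $k$, and choose the threshold $x=\xi(\sqrt{\log\log n}+\log K_n)$. In the Gaussian regime the factor $\exp(-cx^2)\le(\log n)^{-c\xi^2}\exp(-c\xi^2(\log K_n)^2)$ absorbs both the $\log n$ blocks (leaving only a $\log\log n$ cost, hence the $\sqrt{\log\log n}$ in $x$) and the union over $k$; in the sub-exponential regime the number of small blocks is only $\mathcal{O}(\log x)$ while the factor $\exp(-cx)\le K_n^{-c\xi}$ dominates the union over $k$, which is precisely where the linear $\log K_n$ term (and hence, after squaring, $(\log K_n)^2$) is forced. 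Letting $\xi\to\infty$ makes the total probability tend to $0$, both when $K_n$ stays bounded and when it diverges, and squaring $x$ delivers the claimed $\mathcal{O}_\Pj(\log\log n+(\log K_n)^2)$ rate.

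The main obstacle I anticipate is the regime analysis in the union bound: one must verify that the $\mathcal{O}(\log n)$ factor from the number of dyadic blocks is genuinely absorbed at a cost of only $\log\log n$ (this relies on the Gaussian tail governing the $n$-dependence through the large blocks), while simultaneously ensuring that the sub-exponential tails of the small blocks, combined with the union over the possibly diverging number $K_n$ of sequences, contribute no worse than $(\log K_n)^2$. Getting these two thresholds to combine into the single rate $\sqrt{\log\log n}+\log K_n$, uniformly over the allowed growth of $K_n$, is the delicate point; the remaining steps are routine constant bookkeeping.
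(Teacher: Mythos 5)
Your proposal is correct, but it reaches the bound by a genuinely different route from the paper. The paper also reduces to scalar coordinates and works with the partial-sum martingale $M_t=\sum_{i\le t}Y_{nki}$, but it then splits time at $\tau_0\asymp\log K_n$: for $t\ge\tau_0$ it invokes an off-the-shelf finite-time law-of-the-iterated-logarithm bound for martingales satisfying the Bernstein condition (Theorem~5 of \citet{balsubramani2014sharp}), which delivers the uniform-in-$t$ factor $\log\log n$ in one stroke, and for the initial segment $t<\tau_0$ it performs a direct union bound over the $\mathcal{O}(\log K_n)$ time points with Bernstein's inequality at level $(\log K_n)^2$. You instead build the uniform-in-$t$ control by hand: a maximal Bernstein inequality via Doob applied to $\exp(\theta S_j)$ on dyadic blocks $[2^s,2^{s+1})$, with the threshold $x=\xi(\sqrt{\log\log n}+\log K_n)$; the Gaussian-regime blocks absorb the $\mathcal{O}(\log n)$ block count at a cost of only $\log\log n$, while the sub-exponential tails on the blocks with $2^s\lesssim x^2$ are what force the threshold to be linear in $\log K_n$ and hence produce $(\log K_n)^2$ after squaring. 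Your regime analysis is sound (the small-block geometric sum is dominated by its first term, and the $K_n$ union is beaten by $K_n^{1-\xi/(2c)}$ when $K_n\to\infty$ and by $\exp(-\xi\sqrt{\log\log n}/(2c))$ otherwise), and your diagnosis of where each of the two terms in the rate originates matches the paper's. What your approach buys is self-containedness — only Doob's inequality and a Chernoff bound, no external self-normalized concentration result — at the price of the block-by-block bookkeeping; the paper's approach is shorter on the large-$t$ side but relies on the cited uniform martingale bound and has to handle a minor technicality there (the case $|M_t|<1$ inside the iterated logarithm).
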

\begin{proof}
Without loss of generality, we may assume $K_n \to \infty$ as $n \to \infty$.
Similarly to the proof of Lemma~\ref{lemma:maxYao}, we may assume without loss of generality that  $Y_{nkj}$ is one-dimensional.

Let $n$ and $k \in \{1,\ldots,K_n\}$ be fixed. Furthermore, let $M_t := \sum_{i = 1}^{t} Y_{nki}$, $\xi_t := M_t - M_{t - 1} = Y_{nki}$, and $V_t := \sum_{i = 1}^{t} \E\left[ \xi_i^2 \right] = \sum_{i = 1}^{t}\E\left[Y_{nki}^2\right] \leq t$. Then $M_t$ is a martingale and the Bernstein condition in Lemma~23 in \citet{balsubramani2014sharp} is satisfied. Hence, it follows from Theorem~5 in \citet{balsubramani2014sharp} (see the discussion after their Theorem~5 which explains that the Bernstein condition may replace the interval condition in Theorem~5) that for all $\delta \in (0,1)$,
\begin{equation*}
\Pj\left( \vert M_t \vert \leq \sqrt{6(\sqrt{2c} - 2) t \left(2 \log\log\left(\frac{3(\sqrt{2c} - 2) t}{\vert M_t\vert}\right) + \log\left(\frac{2}{\delta}\right)\right)}\quad \forall\ t \geq \tau_0 \right) \geq 1 - \delta,
\end{equation*}
with $\tau_0 := \lceil 173 \log(4/\delta) / \{2(\sqrt{2c}-2)\} \rceil$. (Note that we may assume without loss of generality that $c > 2$.)

Therefore for all $\delta \in (0, 1)$ and all $n$ sufficiently large, we have
\begin{equation*}
\Pj\left( \max_{t = \tau_0,\ldots,n} M_t^2 / t > 6(\sqrt{2c} - 2) \big\{2 \log\log(3(\sqrt{2c} - 2) n) + \log\left(2 / \delta\right)\big\}\right) < \delta.
\end{equation*}
Note here we have used the fact that for those $t$ for which $\vert M_t \vert < 1$, the inequality is trivially satisfied for $n$ sufficiently large.

Now let $\delta := 2 \exp\left(-\frac{x}{6(\sqrt{2c} - 2)}\right)$. Then,
\begin{equation*}
\Pj\left( \max_{t = \tau_0,\ldots,n} M_t^2 / t > 12(\sqrt{2c} - 2)\log\log\left(3(\sqrt{2c} - 2) n\right) + x\right) < 2 \exp\left(-\frac{x}{6(\sqrt{2c} - 2)}\right)
\end{equation*}
and
\[
\tau_0 = \left\lceil\frac{173}{12(\sqrt{2c} - 2)^2} x + \frac{173\log(2)}{2(\sqrt{2c}-2)}\right\rceil.
\]
Let us consider $x = A \log K_n$, with $A > 0$ a constant. Then a union bound yields
\begin{equation}\label{eq:maxOneSideLargeScales}
\begin{split}
& \Pj\left( \max_{k = 1,\ldots,K_n} \max_{t = \tau_0,\ldots,n} \left(\sum_{l = 1}^{t} Y_{nkt}\right)^2 / t > 12(\sqrt{2c} - 2)\log\log\left(3(\sqrt{2c} - 2) n\right) + A \log K_n\right)\\
< & 2 K_n \exp\left(-\frac{A \log K_n}{6(\sqrt{2c} - 2)}\right) = 2 K_n^{-A / (6(\sqrt{2c} - 2)) +1},
\end{split}
\end{equation}
with $\tau_0 = B \log K_n$, where $B > 0$ is a constant only depending on $c$ and $A$. The r.h.s.\ in \eqref{eq:maxOneSideLargeScales} can be made arbitrarily small by choosing $A$ large enough.

It remains to consider $t < \tau_0 = B \log K_n$. It follows from a union bound and Bernstein's inequality, see Corollary~2.11 in \citet{boucheron2013concentration}, that for any constant $C > 0$,
\begin{equation*}
\begin{split}
&\Pj\left(\max_{k = 1,\ldots,K_n} \max_{t = 1,\ldots,\tau_0-1} \frac{1}{t}\left(\sum_{l = 1}^{t} Y_{nkt}\right)^2 > C (\log K_n)^2\right)\\
\leq & \sum_{k = 1}^{K_n}\sum_{t = 1}^{\tau_0 - 1} \left\{ \Pj\left(\sum_{l = 1}^{t} Y_{nkt} > \sqrt{C t} \log K_n \right) + \Pj\left(\sum_{l = 1}^{t} Y_{nkt} < - \sqrt{C t} \log K_n \right) \right\}\\
\leq & 2 K_n B \log K_n \exp\left( - \sqrt{C} / (\sqrt{2} + c) \log K_n  \right)
= 2 B K_n^{- \sqrt{C} / (\sqrt{2} + c) - 1} \log K_n. 
\end{split}
\end{equation*}
The r.h.s.\ can be made arbitrarily small by choosing $C$ large enough, which completes the proof.
\end{proof}

\begin{Lemma}\label{lemma:specialCaseMaxOneSided}
Consider the setup of Lemma~\ref{lemma:maxYao} but where $K_n \geq 2$ eventually. Then, for any constant $C>2$ and any sequence $(\delta_n)_{n=1}^\infty$, with $1<\delta_n<n/C$, we have that
\[
\max_{k = 1,\ldots,K_n} \delta_n \max_{\lceil C \delta_n \rceil \leq j \leq n} \left\|\frac{1}{j}\sum_{i = 1}^{j} Y_{nki}\right\|_2^2 = \mathcal{O}_\Pj\big( (\log K_n)^2 \big).
\]
\end{Lemma}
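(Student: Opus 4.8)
The plan is to mirror the structure of the proof of Lemma~\ref{lemma:maxOneSided}, but to exploit the two features that distinguish the present statement—the prefactor $\delta_n$ and the lower truncation $j \geq \lceil C\delta_n\rceil$—in order to eliminate the $\log\log n$ term that is unavoidable in Lemma~\ref{lemma:maxOneSided}. As in that lemma, I would first reduce to the scalar case: writing $M_{k,j} := \sum_{i=1}^{j} Y_{nki}$, for $Y_{nki} \in \R^d$ one has $\|M_{k,j}\|_2^2 = \sum_{r=1}^{d}(M_{k,j}^{(r)})^2$, each coordinate process satisfies the one-dimensional Bernstein moment condition with variance proxy $1$, and $\max_j \delta_n \|M_{k,j}\|_2^2/j^2 \leq \sum_{r=1}^d \max_j \delta_n (M_{k,j}^{(r)})^2/j^2$, so that the claimed bound for fixed $d$ follows from the scalar case at the cost of a constant depending only on $d$. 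Hence I assume $Y_{nki}$ scalar below and write $M_j = M_{k,j}$ for a fixed $k$.

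The key estimate is a one-sided, Bernstein analogue of Lemma~\ref{lemma:largeDeviationLargeLeftIndex}, proved by the same dyadic peeling. Decompose $\{j : j \geq \lceil C\delta_n\rceil\}$ into blocks $[2^s,2^{s+1})$ with $s$ ranging from $s_0 := \lfloor \log_2 \lceil C\delta_n\rceil\rfloor$ (so that $2^{s_0} \geq C\delta_n/2$) up to $\lfloor \log_2 n\rfloor$. On the $s$-th block, bound $\delta_n M_j^2/j^2 \leq (\delta_n/2^{2s})\max_{j < 2^{s+1}} M_j^2$, so that for a threshold $t>0$,
\begin{equation*}
\Pj\Big( \tfrac{\delta_n}{2^{2s}}\max_{j<2^{s+1}} M_j^2 > t \Big) = \Pj\Big( \max_{j<2^{s+1}} |M_j| > 2^s\sqrt{t/\delta_n}\Big) \leq 2\exp\Big( -\tfrac{2^s (t/\delta_n)}{2(2 + c\sqrt{t/\delta_n})} \Big),
\end{equation*}
using a maximal version of Bernstein's inequality (valid since the variance of $M_{2^{s+1}}$ is at most $2^{s+1}$; it follows, for instance, from Doob's inequality applied to the exponential supermartingale $\exp(\lambda M_j - j\psi(\lambda))$ afforded by the moment condition). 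Summing this geometric-in-$2^s$ tail over $s \geq s_0$ is dominated by the first block, giving a bound of order $\exp(-2^{s_0}\beta)$ with $\beta := (t/\delta_n)/\{2(2+c\sqrt{t/\delta_n})\}$ and $2^{s_0}\beta \gtrsim Ct/(2+c\sqrt{t/\delta_n})$. Treating the two regimes of Bernstein's inequality separately, this exponent is $\gtrsim Ct$ when $t \lesssim \delta_n$ and $\gtrsim C\sqrt{t\delta_n}/c \geq C\sqrt{t}/c$ when $t \gtrsim \delta_n$, where $\delta_n > 1$ is used.

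To conclude, I would apply the same bound to $-M_j$, take a union bound over $k = 1,\ldots,K_n$, and set $t = A(\log K_n)^2$. In the first regime the total probability is at most a constant times $K_n \exp(-c' C A (\log K_n)^2)$, and in the second at most $K_n \exp(-c'' C\sqrt{A}\log K_n) = K_n^{1 - c'' C \sqrt{A}}$; since $K_n \geq 2$ eventually, so that $\log K_n \geq \log 2$, both can be made smaller than any prescribed $\epsilon$ by choosing $A$ large, uniformly in $n$ and in the admissible sequences $(\delta_n)$. This is exactly the required $\mathcal{O}_\Pj((\log K_n)^2)$ statement.

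I expect the main obstacle to be conceptual rather than computational: the naive estimate $\delta_n \max_{j \geq C\delta_n} M_j^2/j^2 \leq C^{-1}\max_j M_j^2/j$ discards the truncation and reinstates the iterated-logarithm term $\log\log n$ of Lemma~\ref{lemma:maxOneSided}. The crux is therefore to carry out the analysis scale by scale and verify that the geometrically decaying weight $\delta_n/2^{2s}$ beats the per-block fluctuations of $\max M_j^2$—whose pooling over the $\asymp \log n$ blocks is precisely what would otherwise produce the $\log\log n$—and to keep all constants uniform over the whole admissible range $1 < \delta_n < n/C$, which forces one to track both the sub-Gaussian and sub-exponential regimes of Bernstein's inequality simultaneously.
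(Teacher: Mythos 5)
Your proof is correct, but it takes a genuinely different route from the paper's. The paper's argument is a flat union bound over every individual index $j \in \{\lceil C\delta_n\rceil,\ldots,n\}$: it applies Bernstein's inequality to each partial sum at the threshold $(j/\sqrt{\delta_n})A\log K_n$, observes that on the truncated range $j \geq C\delta_n > \delta_n$ one has $j/\sqrt{\delta_n} \geq \sqrt{j}$, so each tail is at most $K_n^{-\sqrt{j}A/B}$, and then sums these over $j$ by comparison with $\int_1^\infty (K_n^{-D})^{\sqrt{x}}\,\dx$. You instead perform a dyadic peeling of the index range into blocks $[2^s,2^{s+1})$ starting at $2^{s_0}\asymp C\delta_n$, control each block with a maximal Bernstein inequality (Doob applied to the exponential supermartingale), and sum the geometrically decaying block tails; the truncation enters through the size $2^{s_0}\gtrsim C\delta_n$ of the first block rather than through the pointwise inequality $j/\sqrt{\delta_n}\geq\sqrt{j}$. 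Both arguments exploit the same structural fact — that the restriction $j\geq C\delta_n$ together with the weight $\delta_n/j^2$ removes the $\log\log n$ term that pooling over all scales would otherwise produce, as you correctly diagnose — and both require tracking the sub-Gaussian and sub-exponential regimes of the Bernstein exponent separately (your two-regime split of $2^{s_0}\beta$ plays the role of the paper's intermediate bound $\exp(-jA\log K_n/(B\sqrt{\delta_n}))$). Your version buys a cleaner geometric summation at the price of invoking a maximal inequality; the paper's buys elementarity (only pointwise Bernstein plus an explicit integral) at the price of a slightly more computational tail summation. Your reduction to the scalar case and your use of $K_n\geq 2$ to keep $\log K_n$ bounded away from zero both match the paper's treatment.
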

\begin{proof}
A union bound and Bernstein's inequality, see Corollary~2.11 in \citet{boucheron2013concentration}, yield that for any $A > 0$,
\begin{equation*}
\begin{split}
&\Pj\left( \max_{k = 1,\ldots,K_n} \delta_n \max_{\lceil C \delta_n \rceil \leq j \leq n} \left(\frac{\sum_{i = 1}^{j} Y_{nki}}{j}\right)^2 > A^2 (\log K_n)^2 \right)\\
\leq & \sum_{k = 1}^{K_n} \sum_{j = \lceil C\delta_n \rceil,\ldots,n} \left\{\Pj\left( \sum_{i = 1}^{j} Y_{nki} > \frac{j}{\sqrt{\delta_n}} A \log K_n\right) + \Pj\left( \sum_{i = 1}^{j} Y_{nki} < -\frac{j}{\sqrt{\delta_n}} A \log K_n\right)\right\}\\
\leq & 2K_n \sum_{j = \lceil C\delta_n \rceil,\ldots,n} \exp\left(-\frac{j^2 A^2(\log K_n)^2/ \delta_n}{2\big(j + c A \log (K_n) j / \sqrt{\delta_n}\big)}\right).
\end{split}
\end{equation*}
Moreover, there exist constants $B > 0$ and $D'>0$ not depending on $A$, such that for $D:=D'A$,
\begin{equation*}
\begin{split}
& K_n \sum_{j = \lceil C\delta_n \rceil,\ldots,n} \exp\left(-\frac{j^2 A^2(\log K_n)^2 / \delta_n}{2\big(j + c A \log (K_n) j / \sqrt{\delta_n}\big)}\right)
\leq K_n \sum_{j = \lceil C\delta_n \rceil,\ldots,n} \exp\left(- \frac{j A \log K_n}{B \sqrt{\delta_n}}\right)\\
\leq & K_n \sum_{j = \lceil C\delta_n \rceil,\ldots,n} \exp\left(-\sqrt{j} A \log (K_n) / B\right)
= \sum_{j = \lceil C\delta_n \rceil,\ldots,n} K_n^{-\sqrt{j} A / B + 1}\\
\leq & \int_{1}^{\infty} \left(K_n^{-D}\right)^{\sqrt{x}} dx
=  \left[ \frac{2 \left(K_n^{-D}\right)^{\sqrt{x}}\left(-\sqrt{x} D \log K_n - 1\right) }{D^2 (\log K_n)^2} \right]_1^\infty\\
= & \frac{2 K_n^{-D} (D \log K_n + 1)}{D^2 (\log K_n)^2}.
\end{split}
\end{equation*}
The r.h.s.\ can be made arbitrarily small by choosing $A$ and hence $D$ large enough.
\end{proof}

\begin{Lemma}\label{lemma:maxEpsilonEsquared}
Let $(Y_{nj})_{n=1,2,\ldots, j = 1,\ldots,n}$ be independent (potentially multivariate) mean-zero random variables with $\E[\| Y_{nj}\|_2^2] \leq 1$ and $\E \|Y_{nj}\|_2^{q} \leq \frac{q!}{2} c^{q - 2}$ for all $q\geq 3$ and a constant $c > 0$, $n \in \N$, $j = 1,\ldots,n$. For any $L = 1,\ldots,L_{\max}$, where $L_{\max} \geq 2$, let $0\leq \tau_{L,1,s} < \tau_{L,1,e} \leq \tau_{L,2,s} < \tau_{L,2,e} \leq \cdots \tau_{L,L_{\max},s} < \tau_{L,L_{\max},e} \leq n$ be sequences of (random) time points that are independent of $(Y_{nj})_{n=1,2,\ldots, j = 1,\ldots,n}$. Then,
\begin{equation} \label{eq:bd1}
\max_{L = 1,\ldots,L_{\max}} \sum_{l = 1}^{L_{\max}} \left(\tau_{L,l,e} - \tau_{L,l,s} \right)^{-1} \left\|\sum_{j = \tau_{L,l,s} + 1}^{\tau_{L,l,e}} Y_{nj}\right\|_2^2
= \mathcal{O}_\Pj\big( (L_{\max}\log L_{\max})^{1/2} \big) 
\end{equation}
and
\begin{equation*}
\begin{split}
&\max_{L = 1,\ldots,L_{\max}} \left\vert \sum_{l = 1}^{L_{\max}} \left\{ \left(\tau_{L,l,e} - \tau_{L,l,s} \right)^{-1} \sum_{j = \tau_{L,l,s} + 1}^{\tau_{L,l,e}} \left\| Y_{nj}\right\|_2^2 - \E\left[ \left(\tau_{L,l,e} - \tau_{L,l,s} \right)^{-1} \sum_{j = \tau_{L,l,s} + 1}^{\tau_{L,l,e}} \left\| Y_{nj}\right\|_2^2 \right]\right\} \right\vert\\
= & \mathcal{O}_\Pj\big( (L_{\max}\log L_{\max})^{1/2} \big).
\end{split}
\end{equation*}
\end{Lemma}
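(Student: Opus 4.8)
The plan is to prove the two displays by reducing each to a single tail bound that is uniform over $L$, then controlling the union over $L$ by the factor $L_{\max}$ built into the rate. The key structural observation is that the random time points $(\tau_{L,l,s},\tau_{L,l,e})$ are \emph{independent} of the $(Y_{nj})$, so I may condition on them and treat them as fixed, and then integrate out. Under this conditioning, each summand $(\tau_{L,l,e}-\tau_{L,l,s})^{-1}\|\sum_{j}Y_{nj}\|_2^2$ is of the form ``squared normalised partial sum over an interval''. First I would reduce to the one-dimensional case exactly as in Lemmas~\ref{lemma:maxYao} and~\ref{lemma:maxOneSided}: the Bernstein condition $\E\|Y_{nj}\|_2^q\leq \frac{q!}{2}c^{q-2}$ passes to each coordinate (up to a dimension-dependent constant), so it suffices to handle scalar $Y_{nj}$ and sum the contributions of the $d$ coordinates, which only inflates the eventual constant.

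For the first display \eqref{eq:bd1}, conditional on the time points write $Z_{L,l}:=(\tau_{L,l,e}-\tau_{L,l,s})^{-1/2}\sum_{j=\tau_{L,l,s}+1}^{\tau_{L,l,e}}Y_{nj}$, so the inner sum is $\sum_{l=1}^{L_{\max}}Z_{L,l}^2$. I would first show, via a union bound and Bernstein's inequality (Corollary~2.11 in \citet{boucheron2013concentration}), a pointwise-in-$L$ bound: for each fixed $L$, $\Pj(\sum_l Z_{L,l}^2 > t)$ is small once $t$ exceeds a constant multiple of $(L_{\max}\log L_{\max})^{1/2}$. The mechanism is that $\sum_l Z_{L,l}^2$ is a sum of $L_{\max}$ sub-exponential terms each of order $O_\Pj(1)$; the relevant deviation regime at level $(L_{\max}\log L_{\max})^{1/2}$ is the Gaussian (rather than the exponential) tail of Bernstein, which yields a bound like $\exp(-c\log L_{\max})=L_{\max}^{-c}$ per value of $L$. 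Taking a union over the $L_{\max}$ values of $L$ (and over the $\le L_{\max}$ choices of $l$ where needed) costs only a factor $L_{\max}$, which is absorbed by choosing the constant in front of $(L_{\max}\log L_{\max})^{1/2}$ large enough. Crucially, the bound obtained after conditioning does \emph{not} depend on the particular realised time points, only on $L_{\max}$ and $n$, so it survives integrating out the $\tau$'s by the tower property. The second display is handled by the same template, now applied to the centred variables $W_{L,l}:=(\tau_{L,l,e}-\tau_{L,l,s})^{-1}\sum_j(\|Y_{nj}\|_2^2-\E\|Y_{nj}\|_2^2)$; here the summands $\|Y_{nj}\|_2^2$ satisfy a Bernstein condition inherited from the moment bound on $\|Y_{nj}\|_2$, so the same uniform Bernstein-plus-union-bound argument applies to $\sum_l W_{L,l}$.

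The main obstacle is obtaining the \emph{uniformity over $L$} with the correct rate $(L_{\max}\log L_{\max})^{1/2}$ rather than something larger. The naive route---bounding each of the $L_{\max}$ summands individually at scale $(\log L_{\max})$ or $(\log L_{\max})^2$ and summing---would give a rate of order $L_{\max}\log L_{\max}$ or worse, which is too crude. The improvement to the square-root rate comes from treating $\sum_{l=1}^{L_{\max}}Z_{L,l}^2$ as a single sub-exponential sum and applying Bernstein to the aggregate: its variance proxy is $O(L_{\max})$ and its deviations at scale $(L_{\max}\log L_{\max})^{1/2}$ sit in the Gaussian regime, giving the $L_{\max}^{-c}$ probability that the outer union bound can afford. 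I would need to verify carefully that the sum-level Bernstein condition holds, i.e.\ that the $Z_{L,l}^2$ (equivalently the $\|Y_{nj}\|_2^2$) are genuinely sub-exponential with uniformly controlled Orlicz norms, and that the conditioning on the independent time points does not disturb the moment bounds. Once this aggregate tail bound is in place, the union over $L=1,\dots,L_{\max}$ and integration over the time points are routine.
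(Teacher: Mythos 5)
Your overall architecture matches the paper's: reduce to one dimension, condition on the time points (legitimate by the assumed independence), obtain a single aggregate tail bound for $\sum_{l=1}^{L_{\max}}Z_{L,l}^2$ at level $(L_{\max}\log L_{\max})^{1/2}$, and pay a factor $L_{\max}$ in the union bound over $L$. However, the concentration step you propose --- applying Bernstein's inequality (Corollary~2.11 of \citet{boucheron2013concentration}) to the sum $\sum_l Z_{L,l}^2$ --- does not go through, and the verification you defer to the end is exactly where it fails. Bernstein's inequality requires the summands to satisfy a Bernstein moment condition, i.e.\ to be sub-exponential. But $Z_{L,l}$ is, uniformly in the interval length, only sub-exponential rather than sub-Gaussian (the intervals may have length $1$, in which case $Z_{L,l}=Y_{n,\tau_{L,l,e}}$), so $Z_{L,l}^2$ has tails of order $\exp(-c\sqrt{x})$ and moments $\E[(Z_{L,l}^2)^q]\asymp (2q)! \asymp (q!)^2 4^q$, which violate the Bernstein condition for every constant. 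The same objection applies to the $\|Y_{nj}\|_2^2$ in the second display. So the ``sum-level Bernstein condition'' you hope to verify is false, and Corollary~2.11 cannot be invoked on the aggregate.

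The repair is what the paper does: observe (via Bernstein applied to $Z_{L,l}$ itself, then Corollary~3 of \citet{zhang2021sharper}) that the $Z_{L,l}^2$ are sub-Weibull with parameter $\theta=1/2$ and uniformly bounded sub-Weibull norms, and invoke a concentration inequality for sums of independent sub-Weibull variables (Corollary~6.4 of \citet{zhang2020concentration}), which gives, conditionally on the time points, $\Pj\big(\sum_l Z_{L,l}^2\geq C_1\sqrt{L_{\max}t}+C_2t^2\big)\leq 2e^{-t}$ uniformly over the realised times. Taking $t\asymp\log L_{\max}$ produces the additional term $C_2(\log L_{\max})^2$, which is still $\mathcal{O}\big((L_{\max}\log L_{\max})^{1/2}\big)$ since $(\log L_{\max})^3=\mathcal{O}(L_{\max})$, so your target rate and the union bound over $L$ survive unchanged. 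Your intuition that the relevant deviations ``sit in the Gaussian regime'' is correct in the end, but only after replacing Bernstein by a sub-Weibull inequality whose Gaussian-regime term is $\sqrt{L_{\max}t}$; the same substitution closes the second display once you note that the centred $\|Y_{nj}\|_2^2$ are likewise sub-Weibull of parameter $1/2$.
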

\begin{proof}
Similarly to the proof of Lemma~\ref{lemma:maxYao}, we may assume without loss of generality that  $Y_{nj}$ is one-dimensional.

We consider the first bound to begin with. For any $L = 1,\ldots,L_{\max}$, let $\mathcal{T}_L$ denote the $\sigma$-algebra generated by $\tau_{L,1,s}$, $\tau_{L,1,e}$, $\tau_{L,2,s}$, $\tau_{L,2,e}, \ldots ,\tau_{L,L_{\max},s}$, $\tau_{L,L_{\max},e}$. Then, it follows from a union bound and the law of total probability that for any $\xi > 0$,
\begin{align*}
& \Pj\left( \max_{L = 1,\ldots,L_{\max}} \sum_{l = 1}^{L_{\max}} \left(\tau_{L,l,e} - \tau_{L,l,s} \right)^{-1} \left(\sum_{j = \tau_{L,l,s} + 1}^{\tau_{L,l,e}} Y_{nj}\right)^2  > \xi \right)\\
\leq & \sum_{L = 1}^{L_{\max}} \Pj\left(\sum_{l = 1}^{L_{\max}} \left(\tau_{L,l,e} - \tau_{L,l,s} \right)^{-1} \left(\sum_{j = \tau_{L,l,s} + 1}^{\tau_{L,l,e}} Y_{nj}\right)^2  > \xi \right)\\
= & \sum_{L = 1}^{L_{\max}} \E\left[ \Pj\left(\left. \sum_{l = 1}^{L_{\max}} \left(\tau_{L,l,e} - \tau_{L,l,s} \right)^{-1} \left(\sum_{j = \tau_{L,l,s} + 1}^{\tau_{L,l,e}} Y_{nj}\right)^2  > \xi \right\vert \mathcal{T}_L \right)\right].
\end{align*}
In the following we focus on the conditional probability inside the expectation. Conditional on $\mathcal{T}_L$,  $\left(\tau_{L,l,e} - \tau_{L,l,s} \right)^{-1} \left(\sum_{j = \tau_{L,l,s} + 1}^{\tau_{L,l,e}} Y_{nj}\right)^2,\ l = 1,\ldots,L_{\max}$ are independent.  Bernstein's inequality, see Theorem~2.10 in \citet{boucheron2013concentration}, yields that $\left(\tau_{L,l,e} - \tau_{L,l,s} \right)^{-1/2} \sum_{j = \tau_{L,l,s} + 1}^{\tau_{L,l,e}} Y_{nj}$ is sub-exponentially distributed with Orlicz norm bounded uniformly over $l = 1,\ldots,L_{\max},\ L = 1,\ldots,L_{\max}$, i.e.\ there exist a constant $\psi^{(1)} < \infty$ such that
\[
\inf\left\{C \in (0,\infty)\,:\, \E\left[\exp\left( C^{-1}\left(\tau_{L,l,e} - \tau_{L,l,s} \right)^{-1} \left(\sum_{j = \tau_{L,l,s} + 1}^{\tau_{L,l,e}} Y_{nj}\right)^2\right)\right] \leq 2 \right\} \leq \psi^{(1)}
\]
for all $l = 1,\ldots,L_{\max},\ L = 1,\ldots,L_{\max}$. Consequently, it follows from Corollary~3 in \citet{zhang2021sharper} that $\left(\tau_{L,l,e} - \tau_{L,l,s} \right)^{-1} \left(\sum_{j = \tau_{L,l,s} + 1}^{\tau_{L,l,e}} Y_{nj}\right)^2$ are sub-Weibull distributed with parameter $\theta = 1/2$ and uniformly bounded sub-Weibull norms, i.e.\ there exist a constant $\psi^{(1/2)} < \infty$ such that
\[
\inf\left\{C \in (0,\infty)\,:\, \E\left[\exp\left( C^{-1/2}\left(\left(\tau_{L,l,e} - \tau_{L,l,s} \right)^{-1} \left(\sum_{j = \tau_{L,l,s} + 1}^{\tau_{L,l,e}} Y_{nj}\right)^2\right)^{1/2}\right)\right] \leq 2 \right\} \leq \psi^{(1/2)}
\]
for all $l = 1,\ldots,L_{\max},\ L =L 1,\ldots,L_{\max}$. Hence, it follows from Corollary~6.4 in \citet{zhang2020concentration} (see also Theorem~1 in \citet{zhang2021sharper}) that there exist constants $C_1,C_2 > 0$ (not depending on $L$) such that for any $t > 0$,
\begin{equation}\label{eq:boundForSubWeibull}
\Pj\left( \left. \sum_{l = 1}^{L_{\max}} \left(\tau_{L,l,e} - \tau_{L,l,s} \right)^{-1} \left(\sum_{j = \tau_{L,l,s} + 1}^{\tau_{L,l,e}} Y_{nj}\right)^2 \geq C_1 \sqrt{L_{\max} t} + C_2 t^2 \right\vert \mathcal{T}_L \right) \leq 2 e^{-t}
\end{equation}
for all $L = 1,\ldots,L_{\max}$. Thus, for any $a > 1$,
\begin{align*}
&\Pj\left( \max_{L = 1,\ldots,L_{\max}} \sum_{l = 1}^{L_{\max}} \left(\tau_{L,l,e} - \tau_{L,l,s} \right)^{-1} \left(\sum_{j = \tau_{L,l,s} + 1}^{\tau_{L,l,e}} Y_{nj}\right)^2  > a \left[ C_1 \left(L_{\max}\log L_{\max}\right)^{1/2} + C_2 \left(\log L_{\max}\right)^{2}\right] \right)\\
\leq &\sum_{L = 1}^{L_{\max}} 2 \exp\left(-\sqrt{a} \log L_{\max}\right) = 2 L_{\max}^{-\sqrt{a} + 1}.
\end{align*}
The r.h.s.\ can be made arbitrarily small by choosing $a$ large enough, thus showing $\eqref{eq:bd1}$.

We will now show the second bound. From following the same steps as before we see that it suffices to show that
\[\sum_{l = 1}^{L_{\max}} \left\{ \left(\tau_{L,l,e} - \tau_{L,l,s} \right)^{-1} \sum_{j = \tau_{L,l,s} + 1}^{\tau_{L,l,e}} \left\| Y_{nj}\right\|_2^2 - \E\left[ \left(\tau_{L,l,e} - \tau_{L,l,s} \right)^{-1} \sum_{j = \tau_{L,l,s} + 1}^{\tau_{L,l,e}} \left\| Y_{nj}\right\|_2^2 \right]\right\}\] conditional on $\mathcal{T}_L$ is sub-Weibull distributed with parameter $\theta = 1/2$ and uniformly bounded sub-Weibull norms.

It follows from the assumptions that the $\left\| Y_{nj}\right\|_2$'s are independent and sub-exponentially distributed with uniformly bounded Orlicz norms. Consequently, it follows from Corollary~3 in \citet{zhang2021sharper} that the $\left\| Y_{nj}\right\|_2^2$'s are sub-Weibull distributed with parameter $\theta = 1/2$ and uniformly bounded sub-Weibull norms. The same applies for the centred random variables and the mean of those centred random variables. Hence, the bound follows from the same arguments as used to derive the bound for the first term.
\end{proof}

\begin{Lemma}\label{lemma:differenceCostsNestedChangePointSets}
For any $L \in \N$ and any sequence of time points $1\leq a =: t_0 < t_1 < \cdots < t_L < t_{L + 1} := b \leq n$ and any vectors $X_1,\ldots,X_n \in \R^d$ we have that
\begin{align*}
S_{X}\left(\left\{ a, b \right\}\right) - S_{X}\left(\left\{ t_0,t_1,\ldots,t_L, t_{L + 1} \right\}\right)
\leq & 2\sum_{l = 0}^{L} \frac{(b - a) - (t_{l + 1} - t_l)}{b - a} (t_{l + 1} - t_l) \left(\overline{X}_{t_{l} :t_{l + 1}}\right)^2\\
\leq & 2\sum_{l = 0}^{L} (t_{l + 1} - t_l) \left(\overline{X}_{t_{l} :t_{l + 1}}\right)^2.
\end{align*}
\end{Lemma}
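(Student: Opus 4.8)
The plan is to recognise the left-hand side as the between-group sum of squares in the classical analysis-of-variance decomposition, and then to reduce the claim to an elementary algebraic inequality. Throughout I would abbreviate $n_l := t_{l+1} - t_l$, $N := b - a = \sum_{l=0}^{L} n_l$, $m_l := \overline{X}_{t_l :t_{l+1}}$ and $\bar m := \overline{X}_{a:b} = N^{-1}\sum_{l=0}^{L} n_l m_l$, using the convention recorded above whereby $u^2$ and $uv$ denote $\|u\|_2^2$ and $u^t v$.

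First I would establish the decomposition. Writing $S_X(\{a,b\}) = \sum_{i=a+1}^{b}(X_i - \bar m)^2$ and expanding each summand about the segment mean $m_l$ on $[t_l+1, t_{l+1}]$, the cross terms vanish because $\sum_{i=t_l+1}^{t_{l+1}}(X_i - m_l) = 0$ on each segment. This yields
\begin{equation*}
S_X(\{a,b\}) - S_X(\{t_0,\ldots,t_{L+1}\}) = \sum_{l=0}^{L} n_l (m_l - \bar m)^2 = \sum_{l=0}^{L} n_l m_l^2 - N \bar m^2,
\end{equation*}
the last equality following by expanding $(m_l-\bar m)^2$ and using $\sum_l n_l m_l = N\bar m$.

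It then remains to show $\sum_l n_l m_l^2 - N\bar m^2 \le 2\sum_l \tfrac{N-n_l}{N} n_l m_l^2$. Writing $\tfrac{N-n_l}{N} n_l m_l^2 = n_l m_l^2 - N^{-1} n_l^2 m_l^2$, this is equivalent to $\tfrac{2}{N}\sum_l n_l^2 m_l^2 \le \sum_l n_l m_l^2 + N\bar m^2$. The key step I would use here is to rewrite the right-hand side as a single double sum: using $\sum_l n_l m_l^2 = N^{-1}\sum_{l,j} n_l n_j m_l^2$ and $N\bar m^2 = N^{-1}\sum_{l,j} n_l n_j\, m_l m_j$, and symmetrising the first double sum in $l,j$, the right-hand side becomes $\tfrac{1}{2N}\sum_{l,j} n_l n_j (m_l + m_j)^2$. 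The diagonal terms $l=j$ of this sum equal $\tfrac{1}{2N}\sum_l n_l^2 (2m_l)^2 = \tfrac{2}{N}\sum_l n_l^2 m_l^2$, which is exactly the left-hand side, while the off-diagonal terms are manifestly nonnegative. This gives the first inequality, and the second is immediate since $0 \le (N-n_l)/N \le 1$.

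I do not expect a serious obstacle: the ANOVA decomposition is standard and the remaining inequality is purely algebraic. The only point requiring a little care is the symmetrisation that recasts the target as the sum of squares $\sum_{l,j} n_l n_j (m_l + m_j)^2$, since it is this reformulation that makes the required nonnegativity transparent and, crucially, explains why the term $-N\bar m^2$ cannot simply be discarded.
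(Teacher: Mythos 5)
Your proof is correct and follows essentially the same route as the paper's: both reduce the difference to the ANOVA identity $\sum_l n_l m_l^2 - N\bar m^2$, expand $N\bar m^2$ as a double sum over segment pairs, and control the off-diagonal cross terms via $2m_l^t m_j \le \|m_l\|_2^2 + \|m_j\|_2^2$ (your symmetrisation into $\tfrac{1}{2N}\sum_{l,j} n_l n_j \|m_l + m_j\|_2^2 \ge$ its diagonal is exactly this inequality repackaged). The only cosmetic difference is that you rearrange the target first and exhibit the slack as a nonnegative sum of squares, whereas the paper bounds the cross term directly.
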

\begin{proof}
%To show positivity, let $P_0$ and $P_1$ be projection matrices that project vectors onto the constant segments given by the change-point sets $\left\{ a, b \right\}$ and $\left\{ t_0,t_1,\ldots,t_L, t_{L + 1} \right\}$. Then,
%\begin{align*}
%& S_{X}\left(\left\{ a, b \right\}\right) - S_{X}\left(\left\{ t_0,t_1,\ldots,t_L, t_{L + 1} \right\}\right)\\
%= & X^t (I - P_0) X - X^t (I - P_1) X = X^t (P_1 - P_0) X = \| (P_1 - P_0) X \|_2^2 \geq 0.
%\end{align*}
%The first inequality is clear. For the second, we argue as follows.
%For the second part, it follows
From \eqref{eq:costSplit}, we have that
\begin{align*}
& S_{X}\left(\left\{ a, b \right\}\right) - S_{X}\left(\left\{ t_0,t_1,\ldots,t_L, t_{L + 1} \right\}\right)\\
= & \sum_{l = 0}^{L} (t_{l + 1} - t_l) \left(\overline{X}_{t_{l} :t_{l + 1}}\right)^2 - (b - a) \left(\overline{X}_{a :b}\right)^2\\
= & \sum_{l = 0}^{L} (t_{l + 1} - t_l) \left(\overline{X}_{t_{l} :t_{l + 1}}\right)^2 - \sum_{l_1 = 0}^{L}\sum_{l_2 = 0}^{L} \frac{(t_{l_1 + 1} - t_{l_1})(t_{l_2 + 1} - t_{l_2})}{b - a}\overline{X}_{t_{l_1} :t_{l_1 + 1}}\overline{X}_{t_{l_2} :t_{l_2 + 1}}\\
= & \sum_{l = 0}^{L} \left(1 - \frac{t_{l + 1} - t_l}{b - a}\right) (t_{l + 1} - t_l) \left(\overline{X}_{t_{l} :t_{l + 1}}\right)^2\\
& -\sum_{l = 0}^{L} (b-a)^{-1}\sum_{l^{'} = 0,\ldots,L;\ l^{'} \neq l} (t_{l^{'} + 1} - t_{l^{'}})  (t_{l + 1} - t_l) \overline{X}_{t_{l} :t_{l + 1}}\overline{X}_{t_{l^{'}} :t_{l^{'} + 1}}.
\end{align*}
Using the inequality $2xy \leq x^2 + y^2$ yields
\begin{align*}
& \left\vert \sum_{l = 0}^{L} (b-a)^{-1}\sum_{l^{'} = 0,\ldots,L;\ l^{'} \neq l} (t_{l^{'} + 1} - t_{l^{'}})  (t_{l + 1} - t_l) \overline{X}_{t_{l} :t_{l + 1}}\overline{X}_{t_{l^{'}} :t_{l^{'} + 1}} \right\vert\\
\leq & \frac{1}{2} \sum_{l = 0}^{L} (b-a)^{-1}\sum_{l^{'} = 0,\ldots,L;\ l^{'} \neq l} (t_{l^{'} + 1} - t_{l^{'}})  (t_{l + 1} - t_l)  
\left\{\left(\overline{X}_{t_{l} :t_{l + 1}}\right)^2 + \left(\overline{X}_{t_{l^{'}} :t_{l^{'} + 1}}\right)^2\right\}\\
= & \sum_{l = 0}^{L} (b-a)^{-1} (t_{l + 1} - t_l) \left(\overline{X}_{t_{l} :t_{l + 1}}\right)^2 \sum_{l^{'} = 0,\ldots,L;\ l^{'} \neq l} (t_{l^{'} + 1} - t_{l^{'}}).
\end{align*}
The proof is completed by noting that $\sum_{l^{'} = 0,\ldots,L;\ l^{'} \neq l} (t_{l^{'} + 1} - t_{l^{'}}) = (b - a) - (t_{l + 1} - t_l) < b - a$.
\end{proof}

\begin{Lemma}\label{lemma:differenceCostsMixedTerm}
For any $L \in \N$ and any sequence of time points $1\leq a =: t_0 < t_1 < \cdots < t_L < t_{L + 1} := b \leq n$ and any vectors $X_1,\ldots,X_n \in \R^d$ and $Y_1,\ldots,Y_n \in \R^d$ we have that
\begin{align*}
&2\left\vert S_{X, Y}\left(\left\{ t_0,t_1,\ldots,t_L, t_{L + 1} \right\}\right) - S_{X, Y}\left(\left\{ a, b \right\}\right) \right\vert\\
\leq & \left\{ S_{X}\left(\left\{ a, b \right\}\right) - S_{X}\left(\left\{ t_0,t_1,\ldots,t_L, t_{L + 1} \right\}\right) \right\} + \left\{ S_{Y}\left(\left\{ a, b \right\}\right) - S_{Y}\left(\left\{ t_0,t_1,\ldots,t_L, t_{L + 1} \right\}\right) \right\}.
\end{align*}
\end{Lemma}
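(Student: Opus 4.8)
The plan is to deduce the mixed-term bound from the two ``pure'' bounds by a polarisation argument. The starting observation is that the quantity $S_{X,Y}(\mathcal{U})$ defined in \eqref{eq:Sxy_notation} is a symmetric bilinear form in the pair of collections $(X,Y)$, whose associated quadratic form is exactly $S_{X}(\mathcal{U})$ from \eqref{eq:S_notation}; that is, $S_{V,V}(\mathcal{U}) = S_V(\mathcal{U})$, and for any collections $V,W$ and any set of time points $\mathcal{U}$,
\[
S_{V+W}(\mathcal{U}) = S_V(\mathcal{U}) + 2 S_{V,W}(\mathcal{U}) + S_W(\mathcal{U}).
\]
This is immediate from the bilinearity of $(V,W) \mapsto S_{V,W}(\mathcal{U})$ and the symmetry $u^t v = v^t u$, and may be checked by expanding each block $\sum_{i=t_k+1}^{t_{k+1}}\|(V_i+W_i) - \overline{(V+W)}_{t_k:t_{k+1}}\|_2^2$.

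Write $\mathcal{F} := \{t_0,t_1,\ldots,t_L,t_{L+1}\}$ for the fine partition and $\mathcal{C} := \{a,b\}$ for the coarse one. First I would apply the displayed identity to $V=X$, $W=\pm Y$ over each of $\mathcal{C}$ and $\mathcal{F}$ and subtract, obtaining
\[
S_{X\pm Y}(\mathcal{C}) - S_{X\pm Y}(\mathcal{F}) = \big[S_X(\mathcal{C}) - S_X(\mathcal{F})\big] \pm 2\big[S_{X,Y}(\mathcal{C}) - S_{X,Y}(\mathcal{F})\big] + \big[S_Y(\mathcal{C}) - S_Y(\mathcal{F})\big].
\]
The key remaining ingredient is that refining a partition never increases the least-squares cost, i.e.\ $S_V(\mathcal{C}) - S_V(\mathcal{F}) \geq 0$ for every collection $V$. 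This is exactly the nonnegativity already used throughout the paper (``adding change-points can never increase the cost''), and it follows from the opening identity in the proof of Lemma~\ref{lemma:differenceCostsNestedChangePointSets}, namely $S_V(\mathcal{C}) - S_V(\mathcal{F}) = \sum_{l=0}^{L}(t_{l+1}-t_l)\|\overline{V}_{t_l:t_{l+1}}\|_2^2 - (b-a)\|\overline{V}_{a:b}\|_2^2$, together with convexity of $\|\cdot\|_2^2$ (Jensen's inequality applied to the block means, since $(b-a)\overline{V}_{a:b} = \sum_l (t_{l+1}-t_l)\overline{V}_{t_l:t_{l+1}}$).

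Applying this nonnegativity to $V = X+Y$ and to $V = X-Y$ and rearranging the two instances of the displayed identity yields the two one-sided inequalities
\[
\mp\, 2\big[S_{X,Y}(\mathcal{C}) - S_{X,Y}(\mathcal{F})\big] \leq \big[S_X(\mathcal{C}) - S_X(\mathcal{F})\big] + \big[S_Y(\mathcal{C}) - S_Y(\mathcal{F})\big],
\]
and combining the two signs gives the claimed bound, since $|S_{X,Y}(\mathcal{F}) - S_{X,Y}(\mathcal{C})| = |S_{X,Y}(\mathcal{C}) - S_{X,Y}(\mathcal{F})|$. I do not anticipate a genuine obstacle here: the argument is a clean polarisation, and the only points requiring care are the bilinearity bookkeeping in the first identity and the nonnegativity of the refinement difference, both of which are routine and already available from the preceding lemma.
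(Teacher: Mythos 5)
Your proposal is correct and is essentially the paper's own argument in different notation: the paper writes $S_{X,Y}(\{a,b\}) - S_{X,Y}(\{t_0,\ldots,t_{L+1}\}) = X^t(P_1-P_0)Y$ for the two segment-projection matrices and invokes $2\vert X^t(P_1-P_0)Y\vert \leq X^t(P_1-P_0)X + Y^t(P_1-P_0)Y$, which is exactly your polarisation step $(X\mp Y)^t(P_1-P_0)(X\mp Y)\geq 0$ combined with the refinement-monotonicity $V^t(P_1-P_0)V = S_V(\{a,b\}) - S_V(\{t_0,\ldots,t_{L+1}\}) \geq 0$ that you justify via Jensen. There is no gap.
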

\begin{proof}
Let $P_0 \in \R^{n \times n}$ and $P_1 \in \R^{n \times n}$ be the orthogonal projection matrices that project vectors onto the constant segments given by the change-point sets $\left\{ a, b \right\}$ and $\left\{ t_0,t_1,\ldots,t_L, t_{L + 1} \right\}$ respectively.
 Then, $X^{\top} (I - P_0) Y = S_{X, Y}\left(\left\{ a, b \right\}\right)$ and $X^{\top} (I - P_1) Y = S_{X, Y}\left(\left\{ t_0,t_1,\ldots,t_L, t_{L + 1} \right\}\right)$. Hence,
\begin{align*}
&2\left\vert S_{X, Y}\left(\left\{ a, b \right\}- S_{X, Y}\left(\left\{ t_0,t_1,\ldots,t_L, t_{L + 1} \right\}\right)\right) \right\vert\\
= & 2\left\vert X^{\top} (P_1 - P_0) Y \right\vert \leq X^{\top} (P_1 - P_0) X + Y^{\top} (P_1 - P_0) Y\\
= & \left\{ S_{X}\left(\left\{ a, b \right\}\right) - S_{X}\left(\left\{ t_0,t_1,\ldots,t_L, t_{L + 1} \right\}\right) \right\} + \left\{ S_{Y}\left(\left\{ a, b \right\}\right) - S_{Y}\left(\left\{ t_0,t_1,\ldots,t_L, t_{L + 1} \right\}\right) \right\}. 
\end{align*}
\end{proof}

\begin{Lemma}\label{lemma:boundsForEpsilonE}
Suppose that Assumptions~\ref{assumption:cpNumberMultivariate}--\ref{assumption:minimumSignalMultivariate} hold in the case where $K\geq 1$ eventually, or only Assumptions~\ref{assumption:cpNumberMultivariate}, \ref{assumption:NoiseBernstein}~and~\ref{assumption:overestimation} in the case $K = 0\ \forall\ n$. Then,
\begin{enumerate}[label=(\roman*)]
\item \label{item1:lemma:boundsForEpsilonE} $\max_{L = 0,\ldots,K_{\max}} \left\{ S_{\varepsilon^E}\left(\hat{\mathcal{T}}^O_L\right) - S_{\varepsilon^E}\left(\hat{\mathcal{T}}^O_L \cup \mathcal{T}^{O}_{K} \right) \right\} = o_{\Pj}(\overline{\sigma}^2 \log\log \overline{\lambda})$.
\item \label{item2:lemma:boundsForEpsilonE} $\max_{L = 0,\ldots,K_{\max}} \left\{ S_{\varepsilon^E}\left(\mathcal{T}^{O}_{K}\right) - S_{\varepsilon^E}\left(\hat{\mathcal{T}}^O_L \cup \mathcal{T}^{O}_{K} \right) \right\} = o_{\Pj}(\overline{\sigma}^2 \log\log \overline{\lambda})$.
\end{enumerate}
\end{Lemma}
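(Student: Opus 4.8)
The plan is to prove both parts at once, since each difference is controlled by the same quantity. First I would observe that, whichever of $\hat{\mathcal{T}}^O_L$ and $\mathcal{T}^{O}_{K}$ plays the role of the coarser partition, adjoining the points of the other set refines it to the common partition $\hat{\mathcal{T}}^O_L \cup \mathcal{T}^{O}_{K}$. Applying Lemma~\ref{lemma:differenceCostsNestedChangePointSets} to each segment of the coarser partition, with the interior breakpoints taken to be the adjoined points falling strictly inside that segment, and summing over segments (each summand being nonnegative, so unsplit segments may be included harmlessly), gives for either choice
\begin{equation*}
\max_{L = 0,\ldots,K_{\max}}\Big\{S_{\varepsilon^E}(\mathcal{P}_L) - S_{\varepsilon^E}(\hat{\mathcal{T}}^O_L \cup \mathcal{T}^{O}_{K})\Big\} \leq 2\max_{L = 0,\ldots,K_{\max}} \sum_{[s,e]} \frac{1}{e - s}\bigg\|\sum_{i = s + 1}^{e} \varepsilon^E_i\bigg\|_2^2,
\end{equation*}
where $\mathcal{P}_L \in \{\hat{\mathcal{T}}^O_L, \mathcal{T}^{O}_{K}\}$ and the inner sum runs over the segments $[s,e]$ of $\hat{\mathcal{T}}^O_L \cup \mathcal{T}^{O}_{K}$. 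Thus it suffices to bound the right-hand side, which is identical for both \eqref{item1:lemma:boundsForEpsilonE} and \eqref{item2:lemma:boundsForEpsilonE}.

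Next I would verify the hypotheses needed to apply Lemma~\ref{lemma:maxEpsilonEsquared} to this maximum. The refined partition has at most $L + K + 1 \leq K_{\max} + K + 1$ segments, uniformly in $L$, using $K \leq K_{\max}$ eventually (Assumption~\ref{assumption:cpNumberMultivariate}\eqref{assumption:cpNumberMultivariateKmax}). Crucially, the segment endpoints are measurable functions of $Y^O$---the set $\hat{\mathcal{T}}^O_L$ is computed from the odd observations while $\mathcal{T}^{O}_{K}$ is fixed given the signal---and are therefore independent of the holdout errors $\varepsilon^E$, exactly as required by Lemma~\ref{lemma:maxEpsilonEsquared}. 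Setting $Y_{nj} := \varepsilon^E_j / \overline{\sigma}$, the bound $\Sigma_k \preceq \overline{\sigma}^2 I$ combined with Assumption~\ref{assumption:NoiseBernstein} shows the $Y_{nj}$ meet the moment conditions of that lemma up to a dimension-dependent constant. Applying its first bound with $L_{\max} = K_{\max} + K + 1$ (relabelling the $K_{\max}+1$ values of $L$ and padding each to exactly $L_{\max}$ intervals by appending length-one intervals, which only adds nonnegative terms) then yields
\begin{equation*}
\max_{L = 0,\ldots,K_{\max}} \sum_{[s,e]} \frac{1}{e - s}\bigg\|\sum_{i = s + 1}^{e} \varepsilon^E_i\bigg\|_2^2 = \mathcal{O}_\Pj\Big(\overline{\sigma}^2 (K_{\max} \log K_{\max})^{1/2}\Big).
\end{equation*}

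Finally, Assumption~\ref{assumption:cpNumberMultivariate}\eqref{assumption:cpNumberKmaxBound} gives $(K_{\max}\log K_{\max})^{1/2} = o(\log\log\overline{\lambda})$, so the displayed quantity is $o_\Pj(\overline{\sigma}^2\log\log\overline{\lambda})$, establishing both parts; when $K = 0$ the first difference vanishes identically and the second is just the refinement of the trivial partition, covered by the same argument using only Assumptions~\ref{assumption:cpNumberMultivariate} and \ref{assumption:NoiseBernstein}. The main obstacle is not any single analytic estimate but obtaining the bound \emph{uniformly} over all $L = 0,\ldots,K_{\max}$ simultaneously: a fixed-$L$ concentration inequality would not suffice, and it is precisely the independence of the partition from $\varepsilon^E$ that lets us condition on the random endpoints and invoke the sub-Weibull maximal inequality of Lemma~\ref{lemma:maxEpsilonEsquared}, whose rate $(K_{\max}\log K_{\max})^{1/2}$ is what Assumption~\ref{assumption:cpNumberMultivariate}\eqref{assumption:cpNumberKmaxBound} is tailored to absorb.
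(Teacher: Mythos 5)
Your proposal is correct and follows essentially the same route as the paper: bound each difference via Lemma~\ref{lemma:differenceCostsNestedChangePointSets} applied to the nested partitions, control the resulting sums of scaled squared segment means of $\varepsilon^E$ uniformly in $L$ via Lemma~\ref{lemma:maxEpsilonEsquared} (valid because the segment endpoints depend only on $Y^O$ and the true change-points, hence are independent of $\varepsilon^E$), and conclude with Assumption~\ref{assumption:cpNumberMultivariate}\eqref{assumption:cpNumberKmaxBound}. The only cosmetic difference is that you treat both parts with one sum over all segments of the refined partition (at most $L+K+1$ of them), whereas the paper handles the two parts separately with at most $2K$ and $2L$ intervals respectively; the rate $\mathcal{O}_\Pj(\overline{\sigma}^2(K_{\max}\log K_{\max})^{1/2})$ is unchanged.
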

\begin{proof}
To show \ref{item1:lemma:boundsForEpsilonE}, first let $L \in \{0,\ldots, K_{\max}\}$ be fixed. Let $\hat{\tau}^O_{L,l} =: \hat{\tau}^O_{L,l,0} < \hat{\tau}^O_{L,l, 1} < \cdots < \hat{\tau}^O_{L,l,\hat{K}_l} < \hat{\tau}^O_{L,l,\hat{K}_l + 1} := \hat{\tau}^O_{L,l + 1}$ be the true change-points between $\hat{\tau}^O_{L,l}$ and $\hat{\tau}^O_{L,l + 1}$, so
$\bigcup_{l = 0}^{L}\bigcup_{k = 0}^{\hat{K}_l}{\hat{\tau}^O_{L,l,k}} = \bigcup_{l = 0}^{L}{\hat{\tau}^O_{L,l}} \cup \bigcup_{k = 0}^{K}{\tau_{k}^O}$.
It follows from Lemma~\ref{lemma:differenceCostsNestedChangePointSets} that
\begin{align*}
 0 \leq S_{\varepsilon^E}\left(\hat{\mathcal{T}}^O_L\right) - S_{\varepsilon^E}\left(\hat{\mathcal{T}}^O_L \cup \mathcal{T}^{O}_{K} \right)
\leq 2\sum_{\substack{l = 0,\ldots,L,\\ \hat{K}_l > 0}} \sum_{k = 0}^{\hat{K}_l} \left(\hat{\tau}^O_{L,l,k + 1} - \hat{\tau}^O_{L,l,k} \right)\left(\overline{\varepsilon}^E_{\hat{\tau}^O_{L,l,k} :\hat{\tau}^O_{L,l,k + 1}}\right)^2.
\end{align*}

Then, the fact that $\sum_{\substack{l = 0,\ldots, L,\\ \hat{K}_l > 0}} (\hat{K}_l + 1) \leq 2 K$ and Lemma~\ref{lemma:maxEpsilonEsquared} gives us that
\begin{equation*}
\max_{L = 0,\ldots,K_{\max}} \sum_{\substack{l = 0,\ldots,L,\\ \hat{K}_l > 0}} \sum_{k = 0}^{\hat{K}_l} \left(\hat{\tau}^O_{L,l,k + 1} - \hat{\tau}^O_{L,l,k} \right)\left(\overline{\varepsilon}^E_{\hat{\tau}^O_{L,l,k} :\hat{\tau}^O_{L,l,k + 1}}\right)^2
=  \mathcal{O}_\Pj\left(\overline{\sigma}^2 (K_{\max} \log K_{\max})^{1/2}\right).
\end{equation*}
Finally, Assumption~\ref{assumption:cpNumberMultivariate}\ref{assumption:cpNumberKmaxBound} yields $(K_{\max} \log K_{\max})^{1/2} = o(\log\log \overline{\lambda})$.

We now show \ref{item2:lemma:boundsForEpsilonE}. Let $L \in \{0,\ldots, K_{\max}\}$ be fixed. Let $\tau_{k}^O =: \tilde{\tau}^O_{L,k,0} \leq \tilde{\tau}^O_{L,k, 1} < \cdots < \tilde{\tau}^O_{L,k,\tilde{L}_k} \leq \tilde{\tau}^O_{L,k,\tilde{L}_k + 1} := \tau_{k + 1}^O$ be the estimated change-points in $\hat{\mathcal{T}}^O_L$ between $\tau_{k}^O$ and $\tau_{k + 1}^O$, so $\bigcup_{k = 0}^{K}\bigcup_{l = 0}^{\tilde{L}_k}{\tilde{\tau}^O_{L,k,l}} = \bigcup_{k = 0}^{K}{\tau_{k}^O} \cup \bigcup_{l = 0}^{L}{\hat{\tau}^O_{L,l}}$ and $\sum_{k = 0}^{K} L_{k} = L$.
It follows from Lemma~\ref{lemma:differenceCostsNestedChangePointSets} that
\begin{align*}
0 \leq S_{\varepsilon^E}\left(\mathcal{T}^{O}_{K}\right) - S_{\varepsilon^E}\left(\hat{\mathcal{T}}^O_L \cup \mathcal{T}^{O}_{K} \right)
\leq 2\sum_{\substack{k = 0,\ldots, K,\\ \tilde{L}_k > 0}} \sum_{l = 0}^{\tilde{L}_k} \left(\tilde{\tau}^O_{L,k,l + 1} - \tilde{\tau}^O_{L,k,l} \right)\left(\overline{\varepsilon}^E_{\tilde{\tau}^O_{L,k,l} :\tilde{\tau}^O_{L,k,l + 1}}\right)^2.
\end{align*}

Then, the fact that $\sum_{\substack{k = 0,\ldots, K,\ \tilde{L}_k > 0}} (\tilde{L}_k + 1) \leq 2 L$ and Lemma~\ref{lemma:maxEpsilonEsquared} gives us that
\begin{align*}
\max_{L = 0,\ldots,K_{\max}} \bigg\{\sum_{\substack{k = 0,\ldots, K,\\ \tilde{L}_k > 0}} \sum_{l = 0}^{\tilde{L}_k} \left(\tilde{\tau}^O_{L,k,l + 1} - \tilde{\tau}^O_{L,k,l} \right)\left(\overline{\varepsilon}^E_{\tilde{\tau}^O_{L,k,l} :\tilde{\tau}^O_{L,k,l + 1}}\right)^2 \bigg\}
= \mathcal{O}_{\Pj}\left( \overline{\sigma}^2 (K_{\max} \log K_{\max})^{1/2} \right),
\end{align*}
and so the result follows similarly to (i).
%Finally, Assumption~\ref{assumption:cpNumberMultivariate}\ref{assumption:cpNumberKmaxBound} yields that $(K_{\max} \log K_{\max})^{1/2} = o(\log\log \overline{\lambda})$.
\end{proof}

The following lemma is a uniform version of Lemma~4 in \citet{zou2020consistent}. Note that \ref{item1:lemma:boundsForYE} below extends \citet[Lemma~4 (i)]{zou2020consistent} to allow for sequences of index sets $\mathcal{I}_L$ of missing change-points rather than a fixed single change-point.
\begin{Lemma}\label{lemma:boundsForYE}
Let $E_i := \mu^O_i + \varepsilon^E_i,\ i = 1,\ldots, n / 2 $. Suppose that Assumptions~\ref{assumption:cpNumberMultivariate}--\ref{assumption:minimumSignalMultivariate} hold in the case where $K\geq 1$ eventually, or only Assumptions~\ref{assumption:cpNumberMultivariate}, \ref{assumption:NoiseBernstein}~and~\ref{assumption:overestimation} in the case $K = 0\ \forall\ n$. Then we have the following.
\begin{enumerate}[label=(\roman*)]
\item \label{item1:lemma:boundsForYE} 
Suppose that, in addition, for a constant $A>0$, there exist sequences of non-empty sets $\mathcal{I}_L\subseteq \{1,\ldots,K\}$ such that
\begin{equation}\label{eq:assumptionMeanO}
\Pj\Bigg(\forall\ L < K,\  \forall\ k\in \mathcal{I}_L,\ \sum_{i = \tau_{k}^O - \floor{\frac{\underline{\lambda}}{4}} + 1}^{\tau_{k}^O + \floor{\frac{\underline{\lambda}}{4}}}{\big(\mu^O_i - \overline{\mu}^O_{L, i}\big)^2} \geq  A\underline{\lambda}\Delta_k^2 \Bigg) \to 1,
\end{equation}
with $\overline{\mu}^O_{L, i} := \sum_{l = 0}^{L}{\EINS_{\{\hat{\tau}^O_{L,l} + 1 \leq i \leq \hat{\tau}^O_{L,l + 1}\}} \overline{\mu}^O_{\hat{\tau}^O_{L,l} :\hat{\tau}^O_{L,l + 1}}}$. Then
\begin{equation*}
\begin{split}
\min_{L = 0,\ldots,K-1} \left(\sum_{k\in \mathcal{I}_L}{\Delta_k^2}\right)^{-1} \left\{S_{E}\left(\hat{\mathcal{T}}^O_L\right) - S_{E}\left(\hat{\mathcal{T}}^O_{K}\right)\right\}\geq \underline{\lambda} \left(A + o_\Pj(1)\right).
\end{split}
\end{equation*}
\item \label{item3:lemma:boundsForYE} $\max_{L = 0,\ldots,K_{\max}} \left\{ S_{E}\left(\mathcal{T}^{O}_{K}\right) - S_{E}\left(\hat{\mathcal{T}}^O_L \cup \mathcal{T}^{O}_{K} \right) \right\} = o_{\Pj}(\overline{\sigma}^2 \log\log \overline{\lambda})$.
\item \label{item4:lemma:boundsForYE} $S_{E}\left(\hat{\mathcal{T}}^O_K \right) - S_{E}\left(\mathcal{T}^{O}_{K}\right) = o_{\Pj}(\overline{\sigma}^2 \log\log \overline{\lambda})$.
\end{enumerate}
\end{Lemma}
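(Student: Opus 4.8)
The plan is to exploit the decomposition $E_i = \mu^O_i + \varepsilon^E_i$, which by bilinearity of the projection residuals gives, for every candidate set $\mathcal{U}$,
\begin{equation*}
S_E(\mathcal{U}) = S_{\mu^O}(\mathcal{U}) + 2 S_{\mu^O, \varepsilon^E}(\mathcal{U}) + S_{\varepsilon^E}(\mathcal{U}),
\end{equation*}
with $S_{\mu^O,\varepsilon^E}$ as in \eqref{eq:Sxy_notation}. The crucial structural fact is that whenever $\mathcal{U}$ refines the true set $\mathcal{T}^O_{K}$, the signal $\mu^O$ is constant on every segment of $\mathcal{U}$, so both $S_{\mu^O}(\mathcal{U}) = 0$ and $S_{\mu^O,\varepsilon^E}(\mathcal{U}) = 0$; this holds in particular for $\mathcal{T}^O_{K}$ and for $\hat{\mathcal{T}}^O_L \cup \mathcal{T}^O_{K}$. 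A second ingredient is a weighted form of Lemma~\ref{lemma:differenceCostsMixedTerm}: applying that lemma segment-by-segment to the rescaled vectors $c\mu^O$ and $\varepsilon^E$ on the nested pair $\mathcal{U}' \subseteq R := \mathcal{U}'\cup \mathcal{T}^O_{K}$, and using $S_{\mu^O,\varepsilon^E}(R)=0=S_{\mu^O}(R)$, yields
\begin{equation*}
2\left\vert S_{\mu^O,\varepsilon^E}(\mathcal{U}')\right\vert \leq c\, S_{\mu^O}(\mathcal{U}') + c^{-1}\left\{ S_{\varepsilon^E}(\mathcal{U}') - S_{\varepsilon^E}(\mathcal{U}'\cup\mathcal{T}^O_{K})\right\}
\end{equation*}
for any $c>0$, where the bracketed noise term is controlled uniformly in $L$ by Lemma~\ref{lemma:boundsForEpsilonE}\eqref{item1:lemma:boundsForEpsilonE} when $\mathcal{U}'=\hat{\mathcal{T}}^O_L$.

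For part \eqref{item3:lemma:boundsForYE}, both $\mathcal{T}^O_{K}$ and $\hat{\mathcal{T}}^O_L\cup\mathcal{T}^O_{K}$ refine $\mathcal{T}^O_{K}$, so the signal and cross terms cancel identically and the statement reduces verbatim to Lemma~\ref{lemma:boundsForEpsilonE}\eqref{item2:lemma:boundsForEpsilonE}. For part \eqref{item4:lemma:boundsForYE}, I would write $S_E(\hat{\mathcal{T}}^O_K) - S_E(\mathcal{T}^O_{K}) = S_{\mu^O}(\hat{\mathcal{T}}^O_K) + 2S_{\mu^O,\varepsilon^E}(\hat{\mathcal{T}}^O_K) + \{S_{\varepsilon^E}(\hat{\mathcal{T}}^O_K) - S_{\varepsilon^E}(\mathcal{T}^O_{K})\}$. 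Matching each estimated change-point to a true one via Assumption~\ref{assumption:detectionPrecision}\eqref{assumption:detectionPrecision:L>=K} and a standard bias computation gives $S_{\mu^O}(\hat{\mathcal{T}}^O_K) = O_\Pj(\sum_k \delta_{0,k}\Delta_k^2) = o_\Pj(\overline{\sigma}^2\log\log\overline{\lambda})$ by Assumption~\ref{assumption:detectionPrecision}\eqref{assumption:detectionPrecision:Infq0}; the cross term is then $o_\Pj(\overline{\sigma}^2\log\log\overline{\lambda})$ from the displayed bound with $c=1$ and Lemma~\ref{lemma:boundsForEpsilonE}\eqref{item1:lemma:boundsForEpsilonE}; and the noise difference is handled by inserting the common refinement $\hat{\mathcal{T}}^O_K\cup\mathcal{T}^O_{K}$ and applying both parts of Lemma~\ref{lemma:boundsForEpsilonE}.

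The main work, and the main obstacle, is part \eqref{item1:lemma:boundsForYE}, where the bias must genuinely dominate. Decomposing $S_E(\hat{\mathcal{T}}^O_L)-S_E(\hat{\mathcal{T}}^O_K)$ as above, I would lower bound the leading term by restricting $S_{\mu^O}(\hat{\mathcal{T}}^O_L)=\sum_i(\mu^O_i-\overline{\mu}^O_{L,i})^2$ to the disjoint windows of radius $\underline{\lambda}/4$ around the missed change-points indexed by $\mathcal{I}_L$, which by \eqref{eq:assumptionMeanO} gives $S_{\mu^O}(\hat{\mathcal{T}}^O_L) \geq A\underline{\lambda}\sum_{k\in\mathcal{I}_L}\Delta_k^2$ on an event of probability tending to one. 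The danger is that the cross term $2 S_{\mu^O,\varepsilon^E}(\hat{\mathcal{T}}^O_L)$, being of order $\overline{\sigma}\sqrt{S_{\mu^O}(\hat{\mathcal{T}}^O_L)}$, could cancel this bias, and the naive split of Lemma~\ref{lemma:differenceCostsMixedTerm} (taking $c=1$) loses a full factor $S_{\mu^O}(\hat{\mathcal{T}}^O_L)$ and is too weak. I would instead use the displayed bound with a slowly vanishing constant, e.g. $c_n=(\log\overline{\lambda})^{-1/2}\to 0$, so that $2|S_{\mu^O,\varepsilon^E}(\hat{\mathcal{T}}^O_L)| \leq c_n S_{\mu^O}(\hat{\mathcal{T}}^O_L) + c_n^{-1} o_\Pj(\overline{\sigma}^2\log\log\overline{\lambda})$ uniformly in $L$. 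The remaining noise difference $S_{\varepsilon^E}(\hat{\mathcal{T}}^O_L)-S_{\varepsilon^E}(\hat{\mathcal{T}}^O_K)$ I would bound below by $-o_\Pj(\overline{\sigma}^2\log\log\overline{\lambda})$ through a common refinement and Lemma~\ref{lemma:maxEpsilonEsquared} (using Assumption~\ref{assumption:cpNumberMultivariate}\eqref{assumption:cpNumberKmaxBound}), while $S_{\mu^O}(\hat{\mathcal{T}}^O_K)$ and $S_{\mu^O,\varepsilon^E}(\hat{\mathcal{T}}^O_K)$ are absorbed by the estimates from part \eqref{item4:lemma:boundsForYE}. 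Dividing by $\sum_{k\in\mathcal{I}_L}\Delta_k^2 \geq \Delta_{(1)}^2$, every error contribution is $o_\Pj(\overline{\sigma}^2\log\log\overline{\lambda}/\Delta_{(1)}^2)=o_\Pj(\underline{\lambda})$ by Assumption~\ref{assumption:minimumSignalMultivariate}, whose factor $(\log\overline{\lambda})^2$ of slack comfortably absorbs $c_n^{-1}$, whereas the leading term contributes $\underline{\lambda}(A+o_\Pj(1))$, giving the claim uniformly over $L=0,\ldots,K-1$. Finally, when $K=0$ eventually, part \eqref{item1:lemma:boundsForYE} is vacuous and the remaining parts reduce directly to Lemma~\ref{lemma:boundsForEpsilonE} since $S_{\mu^O}\equiv 0$ and $S_{\mu^O,\varepsilon^E}\equiv 0$.
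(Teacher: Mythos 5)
Your proposal is correct and follows essentially the same route as the paper: the same signal/cross/noise decomposition of $S_E$, the same control of the pure-noise differences via Lemma~\ref{lemma:boundsForEpsilonE} and Lemma~\ref{lemma:maxEpsilonEsquared}, the same use of Assumption~\ref{assumption:detectionPrecision} for the bias in part \eqref{item4:lemma:boundsForYE}, and the same dominance argument for part \eqref{item1:lemma:boundsForYE}. The only difference is cosmetic: you absorb the cross term via the projection inequality of Lemma~\ref{lemma:differenceCostsMixedTerm} with a vanishing weight $c_n$, whereas the paper applies Cauchy--Schwarz to the cross term directly and notes that the resulting geometric mean is $o_\Pj(1)$ relative to the signal term by Assumption~\ref{assumption:minimumSignalMultivariate} --- these are equivalent rearrangements of the same estimate.
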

\begin{proof}
We will show \ref{item1:lemma:boundsForYE} after \ref{item3:lemma:boundsForYE}~and~\ref{item4:lemma:boundsForYE}, since \ref{item1:lemma:boundsForYE} is more complex and uses other results.

Turning to \ref{item3:lemma:boundsForYE}, note that
\begin{align*}
 S_{E}\left(\mathcal{T}^{O}_{K}\right) - S_{E}\left(\hat{\mathcal{T}}^O_L \cup \mathcal{T}^{O}_{K} \right) = S_{\varepsilon^E}\left(\mathcal{T}^{O}_{K}\right) - S_{\varepsilon^E}\left(\hat{\mathcal{T}}^O_L \cup \mathcal{T}^{O}_{K} \right).
\end{align*}
Then, \ref{item3:lemma:boundsForYE} follows from Lemma~\ref{lemma:boundsForEpsilonE}\ref{item1:lemma:boundsForEpsilonE}.

We now show \ref{item4:lemma:boundsForYE}.
It follows from \eqref{eq:costSplit} that
\begin{equation*}
\begin{split}
& S_{E}\left(\hat{\mathcal{T}}^O_K \right) - S_{E}\left(\mathcal{T}^{O}_{K}\right)\\
= & \sum_{k = 0}^{K}\sum_{i = \hat{\tau}^O_{K, k} + 1}^{\hat{\tau}^O_{K,k + 1}}\left(\mu^O_i - \overline{\mu}^O_{\hat{\tau}^O_{K, k} :\hat{\tau}^O_{K,k + 1}}\right)^2
+ 2\sum_{k = 0}^{K}\sum_{i = \hat{\tau}^O_{K,k} + 1}^{\hat{\tau}^O_{K,k + 1}}\left(\mu^O_i - \overline{\mu}^O_{\hat{\tau}^O_{K,k} :\hat{\tau}^O_{K,k + 1}} \right) \varepsilon^E_i\\
& - \sum_{k = 0}^{K}\left(\hat{\tau}^O_{K,k + 1} - \hat{\tau}^O_{K,k}\right)\left(\overline{\varepsilon}^E_{\hat{\tau}^O_{K,k} :\hat{\tau}^O_{K,k + 1}}\right)^2 + \sum_{k = 0}^{K}\left(\tau^O_{k + 1} - \tau^O_{k}\right)\left(\overline{\varepsilon}^E_{\tau^O_{k} :\tau^O_{k + 1}}\right)^2\\
=: & A_1 + 2A_2 + A_3 + A_4.
\end{split}
\end{equation*}
By Markov's inequality, $A_3 = \mathcal{O}_{\Pj}\big( K \overline{\sigma}^2 \big)$ and $A_4 = \mathcal{O}_{\Pj}\big( K \overline{\sigma}^2 \big)$.

In the following we will bound $A_1$ and $A_2$. To this end, let $\hat{\tau}^{\min}_{k} := \min \big\{\hat{\tau}^O_{K,k}, \tau^O_{k}\big\}$ and $\hat{\tau}^{\max}_k := \max \big\{\hat{\tau}^O_{K,k}, \tau^O_{k}\big\}$, $k = 0\ldots,K + 1$. Furthermore, we define $\Omega_n$ to be the event in Assumption~\ref{assumption:detectionPrecision}\ref{assumption:detectionPrecision:L>=K}. Note that $\Pj(\Omega_n) \to 1$. In the following we work on $\Omega_n$.

From Assumptions~\ref{assumption:detectionPrecision}\ref{assumption:detectionPrecision:Infq0}~and~\ref{assumption:minimumSignalMultivariate}, we have that $\hat{\tau}^{\max}_k  - \hat{\tau}^{\min}_{k} \leq \delta_{0,k} \leq \underline{\lambda} / 4$. Consequently, $\hat{\tau}^O_{K,k + 1} - \hat{\tau}^O_{K, k} \geq \underline{\lambda} / 2$, $k = 0,\ldots,K$.

Now for $i \in [\hat{\tau}^{\max}_k, \hat{\tau}^{\min}_{k+1}]$ we know that $\mu^O_i = \beta_k$, and
\begin{equation*}
\begin{split}
\left\vert \overline{\mu}^O_{\hat{\tau}^O_{K, k} :\hat{\tau}^O_{K,k + 1}} - \beta_k\right\vert
\leq &\frac{(\hat{\tau}^{\max}_{k} - \hat{\tau}^{\min}_{k})\Delta_{k} + (\hat{\tau}^{\max}_{k + 1} - \hat{\tau}^{\min}_{k + 1})\Delta_{k + 1}}{\hat{\tau}^O_{K,k + 1} - \hat{\tau}^O_{K, k}}\\
\leq & \frac{\delta_{0, k} \Delta_k + \delta_{0, k + 1} \Delta_{k + 1}}{\max\left\{\underline{\lambda} / 2, \hat{\tau}^{\min}_{k + 1} - \hat{\tau}^{\max}_{k}\right\}},
\end{split}
\end{equation*}
for every $k = 0,\ldots,K$, where we have used the notation $\Delta_{0} = \Delta_{K + 1} = 0$. Thus it follows from the fact that $(x + y)^2\leq 2x^2 + 2y^2$ and $\delta_{0, k}\leq \underline{\lambda} / 2$, $k = 1,\ldots,K$, that
\begin{equation*}
\begin{split}
A_1 \leq & \sum_{k = 0}^{K} \left(\hat{\tau}^{\min}_{k + 1} - \hat{\tau}^{\max}_{k}\right) \left(\beta_k - \overline{\mu}^O_{\hat{\tau}^O_{K, k} :\hat{\tau}^O_{K,k + 1}}\right)^2\\
& + \sum_{k = 1}^{K} \left(\hat{\tau}^{\max}_{k} - \hat{\tau}^{\min}_{k}\right) \max\left\{\left(\beta_k - \overline{\mu}^O_{\hat{\tau}^O_{K, k - 1} :\hat{\tau}^O_{K,k}}\right)^2 , \left(\beta_{k - 1} - \overline{\mu}^O_{\hat{\tau}^O_{K, k} :\hat{\tau}^O_{K,k + 1}}\right)^2\right\}\\
= & \mathcal{O}\left( \sum_{k = 1}^{K} \delta_{0, k} \Delta_k^2 \right).
\end{split}
\end{equation*}
Turning to $A_2$, for the following equality we use that, by definition of $\hat{\tau}^{\min}_{k}$ and $\hat{\tau}^{\max}_{k}$, $\mu^O_i - \overline{\mu}^O_{\hat{\tau}^O_{K,k} :\hat{\tau}^O_{K,k + 1}}$ is constant between $\hat{\tau}^{\max}_{k} + 1$ and $\hat{\tau}^{\min}_{k + 1}$ as well as between $\hat{\tau}^{\min}_{k}$ and $\hat{\tau}^{\max}_{k}$.

Next, the Cauchy--Schwarz inequality and $2\vert xy\vert \leq x^2 + y^2$ yield
\begin{equation*}
\begin{split}
\left\vert A_2 \right\vert
= & \left\vert\sum_{k = 0}^{K}\sum_{i = \hat{\tau}^{\max}_{k} + 1}^{\hat{\tau}^{\min}_{k + 1}} \left(\mu^O_i - \overline{\mu}^O_{\hat{\tau}^O_{K,k} :\hat{\tau}^O_{K,k + 1}} \right)\overline{\varepsilon}^E_{\hat{\tau}^{\max}_{k} :\hat{\tau}^{\min}_{k + 1}}\right.\\
& + \left. \sum_{k = 1}^{K} \sum_{i = \hat{\tau}^{\min}_{k}}^{\hat{\tau}^{\max}_{k}}  \left(\mu^O_i - \overline{\mu}^O_{\hat{\tau}^O_{K,k} :\hat{\tau}^O_{K,k + 1}} \right)\overline{\varepsilon}^E_{\hat{\tau}^{\min}_{k} :\hat{\tau}^{\max}_{k}} \right\vert\\
\leq & \sum_{k = 0}^{K}\sum_{i = \hat{\tau}^O_{K, k} + 1}^{\hat{\tau}^O_{K,k + 1}}\left(\mu^O_i - \overline{\mu}^O_{\hat{\tau}^O_{K, k} :\hat{\tau}^O_{K,k + 1}}\right)^2\\
& + \sum_{k = 0}^{K}\left(\hat{\tau}^{\min}_{k + 1} - \hat{\tau}^{\max}_{k}\right)\left(\overline{\varepsilon}^E_{\hat{\tau}^{\max}_{k} :\hat{\tau}^{\min}_{k + 1}}\right)^2 + \sum_{k = 1}^{K} \left(\hat{\tau}^{\max}_{k} - \hat{\tau}^{\min}_{k}\right) \left(\overline{\varepsilon}^E_{\hat{\tau}^{\min}_{k} :\hat{\tau}^{\max}_{k}}\right)^2 .
\end{split}
\end{equation*}

Finally, since $\varepsilon^O_1,\ldots,\varepsilon^O_{n / 2}$ and $\varepsilon^E_1,\ldots,\varepsilon^E_{n / 2}$ are independent, we obtain from Markov's inequality that
\begin{equation*}
\begin{split}
& \sum_{k = 0}^{K}\left(\hat{\tau}^{\min}_{k + 1} - \hat{\tau}^{\max}_{k}\right)\left(\overline{\varepsilon}^E_{\hat{\tau}^{\max}_{k} :\hat{\tau}^{\min}_{k + 1}}\right)^2 = \mathcal{O}_{\Pj}\left( K \overline{\sigma}^2 \right),\\
&\sum_{k = 1}^{K} \left(\hat{\tau}^{\max}_{k} - \hat{\tau}^{\min}_{k}\right) \left(\overline{\varepsilon}^E_{\hat{\tau}^{\min}_{k} :\hat{\tau}^{\max}_{k}}\right)^2 = \mathcal{O}_{\Pj}\left( K \overline{\sigma}^2 \right).
%,\\
%-A_3 = &\sum_{k = 0}^{K}\left(\hat{\tau}^O_{K,k + 1} - \hat{\tau}^O_{K,k}\right)\left(\overline{\varepsilon}^E_{\hat{\tau}^O_{K,k} :\hat{\tau}^O_{K,k + 1}}\right)^2 = \mathcal{O}_{\Pj}\left( K \overline{\sigma}^2\right),\\
%A_4 = &\sum_{k = 0}^{K}\left(\tau^O_{k + 1} - \tau^O_{k}\right)\left(\overline{\varepsilon}^E_{\tau^O_{k} :\tau^O_{k + 1}}\right)^2 = \mathcal{O}_{\Pj}\left( K \overline{\sigma}^2\right). 
\end{split}
\end{equation*}

Recall that $\Pj(\Omega_n) \to 1$. Thus, in total we obtain 
\begin{equation*}
\begin{split}
& S_{E}\left(\hat{\mathcal{T}}^O_K \right) - S_{E}\left(\mathcal{T}^{O}_{K}\right) = A_1 + A_2 + A_3 + A_4
= \mathcal{O}_{\Pj}\left( \sum_{k = 1}^{K} \delta_{0, k} \Delta_k^2 \right) + \mathcal{O}_{\Pj}\left( K \overline{\sigma}^2\right).
\end{split}
\end{equation*}
Moreover, Assumption~\ref{assumption:cpNumberMultivariate}\ref{assumption:cpNumberBound} yields $K = o(\log\log \overline{\lambda})$. Hence, it follows from Assumption~\ref{assumption:detectionPrecision}\ref{assumption:detectionPrecision:Infq>0} that
\[
S_{E}\left(\hat{\mathcal{T}}^O_K \right) - S_{E}\left(\mathcal{T}^{O}_{K}\right) = o_{\Pj}\left( \overline{\sigma}^2\log\log \overline{\lambda}\right).
\]

We now show \ref{item1:lemma:boundsForYE}.
We have that
\begin{equation*}
S_{E}\left(\hat{\mathcal{T}}^O_L\right) - S_{E}\left(\hat{\mathcal{T}}^O_{K}\right)
= \left\{ S_{E}\left(\hat{\mathcal{T}}^O_L\right) - S_{E}\left(\mathcal{T}^{O}_{K}\right) \right\} - \left\{ S_{E}\left(\hat{\mathcal{T}}^O_{K}\right) - S_{E}\left(\mathcal{T}^{O}_{K}\right) \right\}.
\end{equation*}

From \ref{item4:lemma:boundsForYE} we have that $S_{E}\left(\hat{\mathcal{T}}^O_{K}\right) - S_{E}\left(\mathcal{T}^{O}_{K}\right) = o_\Pj\left(\overline{\sigma}^2 \log\log \overline{\lambda} \right)$.

Furthermore, it follows from \eqref{eq:costSplit} that
\begin{equation*}
\begin{split}
& S_{E}\left(\hat{\mathcal{T}}^O_L\right) - S_{E}\left(\mathcal{T}^{O}_{K}\right)\\
= & \sum_{l = 0}^{L}\sum_{i = \hat{\tau}^O_{L,l} + 1}^{\hat{\tau}^O_{L,l + 1}}{\left(\mu^O_i - \overline{\mu}^O_{\hat{\tau}^O_{L,l} :\hat{\tau}^O_{L,l + 1}}\right)^2}
+ 2\sum_{l = 0}^{L}\sum_{i = \hat{\tau}^O_{L,l} + 1}^{\hat{\tau}^O_{L,l + 1}}{\left(\mu^O_i - \overline{\mu}^O_{\hat{\tau}^O_{L,l} :\hat{\tau}^O_{L,l + 1}} \right) \varepsilon^E_i}
 + S_{\varepsilon^E}\left(\hat{\mathcal{T}}^O_L\right) - S_{\varepsilon^E}\left(\mathcal{T}^{O}_{K}\right).
\end{split}
\end{equation*}
% \sum_{l = 0}^{L}\sum_{k = 0}^{\hat{K}_l}{\left(\hat{\tau}^O_{L,l,k + 1} - \hat{\tau}^O_{L,l,k} \right)\left[\left(\mu^O_{\hat{\tau}^O_{L,l,k} + 1} - \overline{\mu}^O_{\hat{\tau}^O_{L,l} :\hat{\tau}^O_{L,l + 1}}\right)^2 + 2\left(\mu^O_{\hat{\tau}^O_{L,l,k} + 1} - \overline{\mu}^O_{\hat{\tau}^O_{L,l} :\hat{\tau}^O_{L,l + 1}}\right) \overline{\varepsilon}^E_{\hat{\tau}^O_{L,l,k} :\hat{\tau}^O_{L,l,k + 1}}\right]}\\

In addition, Lemma~\ref{lemma:boundsForEpsilonE} yields
\begin{equation}\label{eq:alsoDifferenceCVL<KthirdTerm}
\begin{split}
& \max_{L = 0,\ldots,K - 1}\left\vert S_{\varepsilon^E}\left( \hat{\mathcal{T}}^O_L \right) - S_{\varepsilon^E}\left( \hat{\mathcal{T}}^O_K \right)\right\vert\\
\leq & \max_{L = 0,\ldots,K - 1} \left\{ S_{\varepsilon^E}\left( \hat{\mathcal{T}}^O_L \right) - S_{\varepsilon^E}\left(\hat{\mathcal{T}}^O_L \cup \mathcal{T}^{O}_{K} \right)\right\}
 + \max_{L = 0,\ldots,K - 1} \left\{ S_{\varepsilon^E}\left(\mathcal{T}^{O}_{K}\right) - S_{\varepsilon^E}\left(\hat{\mathcal{T}}^O_L \cup \mathcal{T}^{O}_{K} \right) \right\}\\
& + \left\{ S_{\varepsilon^E}\left(\mathcal{T}^{O}_{K}\right) -  S_{\varepsilon^E}\left(\hat{\mathcal{T}}^O_K \cup \mathcal{T}^{O}_{K} \right)\right\} + \left\{ S_{\varepsilon^E}\left( \hat{\mathcal{T}}^O_K \right) - S_{\varepsilon^E}\left(\hat{\mathcal{T}}^O_K \cup \mathcal{T}^{O}_{K} \right)\right\}\\
= & o_{\Pj}(\overline{\sigma}^2 \log\log\overline{\lambda}).
\end{split}
\end{equation}

Let $\hat{\tau}^O_{L,l,k}$, $l = 0,\ldots,L,\ k = 0,\ldots,\hat{K}_l$, be defined as in the proof of Lemma~\ref{lemma:boundsForEpsilonE}\ref{item1:lemma:boundsForEpsilonE}. Then,
\begin{align*}
&\min_{L = 0,\ldots,K - 1} \left\{S_{E}\left(\hat{\mathcal{T}}^O_L\right) - S_{E}\left(\hat{\mathcal{T}}^O_{K}\right)\right\}\\
= &\min_{L = 0,\ldots,K - 1}\sum_{l = 0}^{L}\sum_{k = 0}^{\hat{K}_l}\left(\hat{\tau}^O_{L,l,k + 1} - \hat{\tau}^O_{L,l,k} \right)\bigg[\left(\mu^O_{\hat{\tau}^O_{L,l,k} + 1} - \overline{\mu}^O_{\hat{\tau}^O_{L,l} :\hat{\tau}^O_{L,l + 1}}\right)^2\\
& \hspace*{150pt} + 2\left(\mu^O_{\hat{\tau}^O_{L,l,k} + 1} - \overline{\mu}^O_{\hat{\tau}^O_{L,l} :\hat{\tau}^O_{L,l + 1}}\right) \overline{\varepsilon}^E_{\hat{\tau}^O_{L,l,k} :\hat{\tau}^O_{L,l,k + 1}}\bigg]\\
& + o_\Pj\left(\overline{\sigma}^2 \log\log(\overline{\lambda})\right).
\end{align*}

Next, it follows from the Cauchy--Schwarz inequality that
\begin{equation*}
\begin{split}
& \sum_{l = 0}^{L}\sum_{k = 0}^{\hat{K}_l}{\left(\hat{\tau}^O_{L,l,k + 1} - \hat{\tau}^O_{L,l,k} \right)\left(\mu^O_{\hat{\tau}^O_{L,l,k} + 1} - \overline{\mu}^O_{\hat{\tau}^O_{L,l} :\hat{\tau}^O_{L,l + 1}}\right) \overline{\varepsilon}^E_{\hat{\tau}^O_{L,l,k} :\hat{\tau}^O_{L,l,k + 1}}}\\
\geq & -\left(\sum_{l = 0}^{L}\sum_{k = 0}^{\hat{K}_l} \left(\hat{\tau}^O_{L,l,k + 1} - \hat{\tau}^O_{L,l,k} \right)\left(\mu^O_{\hat{\tau}^O_{L,l,k} + 1} - \overline{\mu}^O_{\hat{\tau}^O_{L,l} :\hat{\tau}^O_{L,l + 1}}\right)^2 \right)^{1/2}\\
& \left(\sum_{l = 0}^{L}\sum_{k = 0}^{\hat{K}_l} \left(\hat{\tau}^O_{L,l,k + 1} - \hat{\tau}^O_{L,l,k} \right)\left(\overline{\varepsilon}^E_{\hat{\tau}^O_{L,l,k} :\hat{\tau}^O_{L,l,k + 1}}\right)^2 \right)^{1/2}.
\end{split}
\end{equation*}

Then, Lemma~\ref{lemma:maxEpsilonEsquared} yields
\begin{equation*}
\begin{split}
\max_{L = 0,\ldots,K - 1}\sum_{l = 0}^{L}\sum_{k = 0}^{\hat{K}_l} \left(\hat{\tau}^O_{L,l,k + 1} - \hat{\tau}^O_{L,l,k} \right)\left(\overline{\varepsilon}^E_{\hat{\tau}^O_{L,l,k} :\hat{\tau}^O_{L,l,k + 1}}\right)^2
= \mathcal{O}_\Pj\left(\overline{\sigma}^2 (K \log K)^{1/2}\right),
\end{split}
\end{equation*}
since $\sum_{l = 0}^{L} (\hat{K}_l + 1) \leq K + L \leq 2K$. In addition, Assumption~\ref{assumption:cpNumberMultivariate}\ref{assumption:cpNumberKmaxBound} yields that 
\begin{equation*}
(K \log K)^{1/2} \leq (K_{\max} \log K_{\max})^{1/2} = o(\log\log \overline{\lambda}).
\end{equation*}
Moreover, $\vert\mathcal{I}_L\vert \geq 1$ and Assumption \ref{assumption:minimumSignalMultivariate} yield
\begin{align*}
&\min_{L = 0,\ldots,K - 1} \Big(\underline{\lambda} \sum_{k\in \mathcal{I}_L}{\Delta_k^2}\Big)^{-1} \overline{\sigma}^2 \log\log \overline{\lambda}
\leq \left(\underline{\lambda} \Delta_{(1)}^2\right)^{-1} \overline{\sigma}^2 \log\log \overline{\lambda} \to 0.
\end{align*}

In addition, it follows from the definition of $\overline{\mu}^O_{L, i}$ and \eqref{eq:assumptionMeanO} that
\begin{equation*}
\begin{split}
& \min_{L = 0,\ldots,K - 1} \left\{\Big(\underline{\lambda} \sum_{k\in \mathcal{I}_L}{\Delta_k^2}\Big)^{-1} \sum_{l = 0}^{L}\sum_{k = 0}^{\hat{K}_l}{\left(\hat{\tau}^O_{L,l,k + 1} - \hat{\tau}^O_{L,l,k} \right)\left(\mu^O_{\hat{\tau}^O_{L,l,k} + 1} - \overline{\mu}^O_{\hat{\tau}^O_{L,l} :\hat{\tau}^O_{L,l + 1}}\right)^2}\right\}\\
= & \min_{L = 0,\ldots,K - 1} \left\{\Big(\underline{\lambda} \sum_{k\in \mathcal{I}_L}{\Delta_k^2}\Big)^{-1} \sum_{k = 0}^{K}\sum_{i = \tau_{k}^O + 1}^{\tau_{k + 1}^O}(\mu^O_i - \overline{\mu}^O_{L, i})^2\right\}\\
\geq & \min_{L = 0,\ldots,K - 1} \left\{\Big(\underline{\lambda} \sum_{k\in \mathcal{I}_L}{\Delta_k^2}\Big)^{-1} \sum_{k \in \mathcal{I}_L}\sum_{i = \tau_{k}^O - \floor{\frac{\underline{\lambda}}{4}} + 1}^{\tau_{k}^O + \floor{\frac{\underline{\lambda}}{4}}}(\mu^O_i - \overline{\mu}^O_{L, i})^2\right\}
\geq A,
\end{split}
\end{equation*}
with probability approaching 1.
Thus,
\begin{equation*}
 \min_{L = 0,\ldots,K - 1} \left\{\Big(\underline{\lambda} \sum_{k\in \mathcal{I}_L}{\Delta_k^2}\Big)^{-1}\left( S_{E}\left(\hat{\mathcal{T}}^O_L\right) - S_{E}\left(\hat{\mathcal{T}}^O_{K}\right)\right)\right\}
\geq A + o_\Pj(1).
\end{equation*}
\end{proof}

The following lemma is a uniform version of Lemma~5 in \citet{zou2020consistent}.
\begin{Lemma}\label{lemma:boundsForEpsilonO}
Suppose that Assumptions~\ref{assumption:cpNumberMultivariate}--\ref{assumption:minimumSignalMultivariate} hold in the case where $K\geq 1$ eventually, or only Assumptions~\ref{assumption:cpNumberMultivariate}, \ref{assumption:NoiseBernstein}~and~\ref{assumption:overestimation} in the case $K = 0\ \forall\ n$. Then,
\begin{enumerate}[label=(\roman*)]
\item \label{item1:boundsForEpsilonO} $\max_{L = 0,\ldots,K - 1} \left\{ S_{\varepsilon^O}\left(\mathcal{T}^{O}_{K}\right) - S_{\varepsilon^O}\left(\hat{\mathcal{T}}^O_L \cup \mathcal{T}^{O}_{K} \right)\right\} = \mathcal{O}_{\Pj}(K \overline{\sigma}^2(\log \overline{\lambda})^2)$ and\\
$\max_{L = 0,\ldots,K - 1} \left\{ S_{\varepsilon^O}\left( \hat{\mathcal{T}}^O_L \right) - S_{\varepsilon^O}\left(\hat{\mathcal{T}}^O_L \cup \mathcal{T}^{O}_{K} \right)\right\} = \mathcal{O}_{\Pj}(K \overline{\sigma}^2(\log \overline{\lambda})^2)$,

\item \label{item2:boundsForEpsilonO} $S_{\varepsilon^O}\left(\mathcal{T}^{O}_{K}\right) - S_{\varepsilon^O}\left(\hat{\mathcal{T}}^O_K \cup \mathcal{T}^{O}_{K} \right) = o_{\Pj}(\overline{\sigma}^2 \log\log\overline{\lambda})$,

\item \label{item3:boundsForEpsilonO} $\max_{L = K,\ldots,K_{\max}} \left\{ S_{\varepsilon^O}\left( \hat{\mathcal{T}}^O_L \right) - S_{\varepsilon^O}\left(\hat{\mathcal{T}}^O_L \cup \mathcal{T}^{O}_{K} \right)\right\} = o_{\Pj}(\overline{\sigma}^2 \log\log\overline{\lambda})$.
\end{enumerate}
\end{Lemma}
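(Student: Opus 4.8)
The common engine of all three parts is Lemma~\ref{lemma:differenceCostsNestedChangePointSets}: for nested change-point sets it bounds $S_{\varepsilon^O}(\mathcal{A})-S_{\varepsilon^O}(\mathcal{A}\cup\mathcal{B})$ by twice a sum, over the sub-segments that $\mathcal{B}$ carves out of $\mathcal{A}$, of terms $(\text{length})(\overline{\varepsilon}^O)^2=(\text{length})^{-1}(\sum\varepsilon^O)^2$. The one feature distinguishing the present lemma from Lemma~\ref{lemma:boundsForEpsilonE} is that $\hat{\mathcal{T}}^O_L$ is measurable with respect to the very errors $\varepsilon^O$ appearing in $S_{\varepsilon^O}$, so I cannot condition as in the $\varepsilon^E$ case; instead I will control these error averages \emph{simultaneously over all admissible endpoints} using the uniform maximal inequalities of Lemmas~\ref{lemma:maxYao},~\ref{lemma:maxOneSided}~and~\ref{lemma:specialCaseMaxOneSided}. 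After normalising $\varepsilon^O$ by $\overline{\sigma}$ so that Assumption~\ref{assumption:NoiseBernstein} furnishes the Bernstein moments those lemmas require, every resulting bound carries a factor $\overline{\sigma}^2$.

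For \eqref{item1:boundsForEpsilonO}, where $L<K$, I would apply Lemma~\ref{lemma:differenceCostsNestedChangePointSets} in whichever of its two orientations matches the statement (the two displayed bounds are precisely the two orientations, inserting $\hat{\mathcal{T}}^O_L$ into $\mathcal{T}^{O}_{K}$ and vice versa). Since fewer than $K$ points are inserted, only split segments contribute and the number of created sub-segments is at most $2K$ (bounding $m+1\leq 2m$ for a segment receiving $m\geq 1$ insertions). Bounding each summand crudely by the global maximum $\max_{0\leq i<j\leq n/2}(j-i)^{-1}(\sum_{l=i+1}^{j}\varepsilon^O_l)^2$ and invoking Lemma~\ref{lemma:maxYao} with a single sequence, which gives $\mathcal{O}_\Pj(\overline{\sigma}^2(\log(n/2))^2)=\mathcal{O}_\Pj(\overline{\sigma}^2(\log\overline{\lambda})^2)$ once $\log(n/2)=\mathcal{O}(\log\overline{\lambda})$ is read off from Assumption~\ref{assumption:cpNumberMultivariate}\eqref{assumption:cpNumberBound}, yields the claimed $\mathcal{O}_\Pj(K\overline{\sigma}^2(\log\overline{\lambda})^2)$. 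As this bound is free of $L$, it survives the maximum over $L=0,\ldots,K-1$.

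For \eqref{item2:boundsForEpsilonO}~and~\eqref{item3:boundsForEpsilonO}, where $L\geq K$, the improvement comes from Assumption~\ref{assumption:detectionPrecision}\eqref{assumption:detectionPrecision:L>=K}: every inserted point lies within $\delta_{L-K,k}\leq\bar\delta:=\max_{0\leq q\leq Q,\,1\leq k\leq K}\delta_{q,k}$ of a true change-point $\tau_k$, so (since true change-points are at least $\underline{\lambda}$ apart) each created sub-segment is anchored at some $\tau_k$ and is either short, of length at most $C\bar\delta$, or carries the damping factor $(t-a)/(b-a)$ of the sharp (first) form of Lemma~\ref{lemma:differenceCostsNestedChangePointSets}. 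A short sub-segment contributes at most $\max_{1\leq m\leq C\bar\delta}m^{-1}(\sum\varepsilon^O)^2$, which a scale-$C\bar\delta$ one-sided maximal inequality (Lemma~\ref{lemma:maxOneSided} applied to the two fresh sequences running left and right from each $\tau_k$, with $2K$ copies) bounds, uniformly in $k$, by $\overline{\sigma}^2\mathcal{O}_\Pj(\log\log(\bar\delta\vee e)+(\log K)^2)$. A damped long sub-segment, of length exceeding $C\bar\delta$, contributes at most $\bar\delta\,(b-t)^{-2}(\sum_{\tau_k+1}^{b}\varepsilon^O)^2$ with $b-t>C\bar\delta$, which Lemma~\ref{lemma:specialCaseMaxOneSided} (scale $\bar\delta$, $2K$ copies) bounds by $\overline{\sigma}^2\mathcal{O}_\Pj((\log K)^2)$. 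Bounding each of the at most $2K$ summands by these $L$-free maxima, the short parts total $\overline{\sigma}^2\mathcal{O}_\Pj(K\log\log(\bar\delta\vee e)+K(\log K)^2)$ and the long parts $\overline{\sigma}^2\mathcal{O}_\Pj(K(\log K)^2)$; Assumption~\ref{assumption:detectionPrecision}\eqref{assumption:detectionPrecision:Rate} gives $K\log\log(\bar\delta\vee e)=o(\log\log\overline{\lambda})$ and Assumption~\ref{assumption:cpNumberMultivariate}\eqref{assumption:cpNumberBound} gives $K(\log K)^2=o(\log\log\overline{\lambda})$, so both are $o_\Pj(\overline{\sigma}^2\log\log\overline{\lambda})$. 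Statement \eqref{item2:boundsForEpsilonO} is the case $L=K$ with the two change-point sets interchanged, where $\delta_{0,k}\leq\underline{\lambda}/4$ (as used for Lemma~\ref{lemma:boundsForYE}) guarantees that the non-short part is genuinely long.

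The main difficulty is achieving uniformity in $L$ without sacrificing the sharp double-logarithmic rate. The device that dissolves the $\max_L$ is to replace the $L$-dependent tolerances $\delta_{L-K,k}$ by the single $L$-free scale $\bar\delta$ and to express every estimate as a maximal inequality over scales with only $2K$ copies, so that the attendant penalty $(\log(\#\text{copies}))^2=\mathcal{O}((\log K)^2)$ stays controlled. The genuinely delicate point, however, is the long sub-segments: one must exploit the damping factor $(t-a)/(b-a)$ in the non-trivial form of Lemma~\ref{lemma:differenceCostsNestedChangePointSets} together with the $j^{-2}$ scaling of Lemma~\ref{lemma:specialCaseMaxOneSided}, since the \emph{undamped} one-sided sum over a long segment is only $\mathcal{O}_\Pj(\overline{\sigma}^2\log\log n)$ and, summed over the $K\to\infty$ change-points, would overwhelm the target $o_\Pj(\overline{\sigma}^2\log\log\overline{\lambda})$.
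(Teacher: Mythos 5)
Your proposal is correct and follows essentially the same route as the paper's proof: Lemma~\ref{lemma:differenceCostsNestedChangePointSets} as the engine, the crude maximum via Lemma~\ref{lemma:maxYao} for $L<K$, and for $L\geq K$ the replacement of $\delta_{L-K,k}$ by the $L$-free scale $\max_{q,k}\delta_{q,k}$ together with a short/long split in which the short pieces are controlled by Lemma~\ref{lemma:maxOneSided} and the long pieces by the damping factor combined with Lemma~\ref{lemma:specialCaseMaxOneSided}, closed by Assumptions~\ref{assumption:detectionPrecision}\eqref{assumption:detectionPrecision:Rate} and~\ref{assumption:cpNumberMultivariate}\eqref{assumption:cpNumberBound}. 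The only cosmetic difference is that the paper formalises your short/long decomposition by inserting the auxiliary point sets $\mathcal{T}^{\delta}_{L}=\{\tau_k^O\pm\delta_k\}$ (and treats the full inner segments by a separate $\mathcal{O}_{\Pj}(\sqrt{K}\overline{\sigma}^2)$ moment bound), but the estimates and the role of each hypothesis are the same as in your argument, including your correct identification of why the undamped $\log\log$ bound would fail for diverging $K$.
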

\begin{proof} 
Without loss of generality we may assume $K > 0$, since the statements are trivial for $K = 0$.

\ref{item1:boundsForEpsilonO} Let $L \in \{0,\ldots, K - 1\}$ be fixed. Recall that $\tau_k^O =: \tilde{\tau}^O_{L,k,0} \leq \tilde{\tau}^O_{L,k, 1} < \cdots < \tilde{\tau}^O_{L,k,\tilde{L}_k} \leq \tilde{\tau}^O_{L,k,\tilde{L}_k + 1} := \tau_{k + 1}^O$ denote the estimated change-points in $\hat{\mathcal{T}}^O_L$ between $\tau_k^O$ and $\tau_{k + 1}^O$, so $\bigcup_{k = 0}^{K}\bigcup_{l = 0}^{\tilde{L}_k}{\tilde{\tau}^O_{L,k,l}} = \bigcup_{k = 0}^{K}{\tau_k^O} \cup \bigcup_{l = 0}^{L}{\hat{\tau}^O_{L,l}}$ and $\sum_{k = 0}^{K} L_{k} = L$.

Then, Lemma~\ref{lemma:differenceCostsNestedChangePointSets} yields
\begin{align*}
0 \leq & S_{\varepsilon^O}\left(\mathcal{T}^{O}_{K}\right) - S_{\varepsilon^O}\left(\hat{\mathcal{T}}^O_L \cup \mathcal{T}^{O}_{K} \right)
\leq  2\sum_{\substack{k = 0,\ldots, K,\\ \tilde{L}_k > 0}} \sum_{l = 0}^{\tilde{L}_k} \left(\tilde{\tau}^O_{L,k,l + 1} - \tilde{\tau}^O_{L,k,l} \right)\left(\overline{\varepsilon}^O_{\tilde{\tau}^O_{L,k,l} :\tilde{\tau}^O_{L,k,l + 1}}\right)^2.
\end{align*}

It follows from $\sum_{\substack{k = 0,\ldots, K,\\ \tilde{L}_k > 0}} (\tilde{L}_k + 1) \leq 2 L$ and Lemma~\ref{lemma:maxYao} that
\begin{align*}
&\max_{L = 0,\ldots,K - 1} \sum_{\substack{k = 0,\ldots, K,\\ \tilde{L}_k > 0}} \sum_{l = 0}^{\tilde{L}_k} \left(\tilde{\tau}^O_{L,k,l + 1} - \tilde{\tau}^O_{L,k,l} \right)\left(\overline{\varepsilon}^O_{\tilde{\tau}^O_{L,k,l} :\tilde{\tau}^O_{L,k,l + 1}}\right)^2\\
\leq & \max_{L = 0,\ldots,K - 1} \sum_{\substack{k = 0,\ldots, K,\\ \tilde{L}_k > 0}} (\tilde{L}_k + 1) \max_{\tau_k^O \leq i < j \leq \tau_{k + 1}^O} (j - i) \left(\overline{\varepsilon}^O_{i :j}\right)^2\\
\leq & \max_{L = 0,\ldots,K - 1} 2L \max_{k = 1,\ldots,K}\max_{\tau_k^O \leq i < j \leq \tau_{k + 1}^O} (j - i) \left(\overline{\varepsilon}^O_{i :j}\right)^2\\
= & \mathcal{O}_{\Pj}\left(K \overline{\sigma}^2 \left((\log \overline{\lambda})^2 + (\log K)^2\right)\right).
\end{align*}
Finally, Assumption~\ref{assumption:cpNumberMultivariate}\ref{assumption:cpNumberBound} yields that $\log K \leq \log \overline{\lambda}$ and the first statement follows.

For the second statement, let $L \in \{0,\ldots, K - 1\}$ be fixed. Recall that $\hat{\tau}^O_{L,l} =: \hat{\tau}^O_{L,l,0} < \hat{\tau}^O_{L,l, 1} < \cdots < \hat{\tau}^O_{L,l,\hat{K}_l} < \hat{\tau}^O_{L,l,\hat{K}_l + 1} := \hat{\tau}^O_{L,l + 1}$ denote the true change-points between $\hat{\tau}^O_{L,l}$ and $\hat{\tau}^O_{L,l + 1}$, so
\begin{equation*}
\bigcup_{l = 0}^{L}\bigcup_{k = 0}^{\hat{K}_l}{\hat{\tau}^O_{L,l,k}} = \bigcup_{l = 0}^{L}{\hat{\tau}^O_{L,l}} \cup \bigcup_{k = 0}^{K}{\tau_k^O}.
\end{equation*}
Then, Lemma~\ref{lemma:differenceCostsNestedChangePointSets} yields
\begin{align*}
0 \leq & S_{\varepsilon^O}\left( \hat{\mathcal{T}}^O_L \right) - S_{\varepsilon^O}\left(\hat{\mathcal{T}}^O_L \cup \mathcal{T}^{O}_{K} \right)
\leq  2\sum_{\substack{l = 0,\ldots, L,\\ \hat{K}_l > 0}}\sum_{k = 0}^{\hat{K}_l} \left(\hat{\tau}^O_{L,l,k + 1} - \hat{\tau}^O_{L,l,k} \right)\left(\overline{\varepsilon}^O_{\hat{\tau}^O_{L,l,k} :\hat{\tau}^O_{L,l,k + 1}}\right)^2.
\end{align*}

It follows from Lemma~\ref{lemma:maxOneSided} that
\begin{equation}\label{eq:argumentsOneSidedBound}
\begin{split}
&\max_{L = 0,\ldots,K - 1} \sum_{\substack{l = 0,\ldots, L,\\ \hat{K}_l > 0}}\sum_{k = 0}^{\hat{K}_l} \left(\hat{\tau}^O_{L,l,k + 1} - \hat{\tau}^O_{L,l,k} \right)\left(\overline{\varepsilon}^O_{\hat{\tau}^O_{L,l,k} :\hat{\tau}^O_{L,l,k + 1}}\right)^2\\
\leq & \sum_{k = 0}^{K} (\tau_{k + 1}^O - \tau_k^O) \left(\overline{\varepsilon}^O_{\tau_k^O :\tau_{k + 1}^O}\right)^2\\
& + \sum_{k = 0}^{K} \max_{i = \tau_k^O + 1,\ldots,\tau_{k + 1}^O - 1} (\tau_{k + 1}^O - i) \left(\overline{\varepsilon}^O_{i :\tau_{k + 1}^O}\right)^2
+ \sum_{k = 0}^{K} \max_{i = \tau_k^O + 1,\ldots,\tau_{k + 1}^O - 1} (i - \tau_k^O) \left(\overline{\varepsilon}^O_{\tau_k^O :i}\right)^2\\
\leq &\mathcal{O}_{\Pj}\left(\sqrt{K}\overline{\sigma}^2\right)
 + (K + 1) \max_{k = 0,\ldots,K} \max_{i = \tau_k^O + 1,\ldots,\tau_{k + 1}^O - 1} \left\{ (\tau_{k + 1}^O - i) \left(\overline{\varepsilon}^O_{i :\tau_{k + 1}^O}\right)^2
+  (i - \tau_k^O) \left(\overline{\varepsilon}^O_{\tau_k^O :i}\right)^2\right\}\\
= &\mathcal{O}_{\Pj}\left(K \overline{\sigma}^2 \left(\log\log \overline{\lambda} + (\log K)^2\right)\right).
\end{split}
\end{equation}
Finally, Assumption~\ref{assumption:cpNumberMultivariate}\ref{assumption:cpNumberBound} yields that $\log K \leq \log \overline{\lambda}$ and the statement follows.

\ref{item2:boundsForEpsilonO} Let $\mathcal{T}^{\delta}_{K} := \{\tau_k^O \pm \floor{\delta_{0, k}},\ k = 1,\ldots,K\}$ and $\mathcal{T}^{\delta}_{L} := \{\tau_k^O \pm \floor{\delta_{1, k}},\ k = 1,\ldots,K\}$, $L = K + 1,\ldots,K_{\max}$. Then,
\begin{align*}
& S_{\varepsilon^O}\left(\mathcal{T}^{O}_{K}\right) - S_{\varepsilon^O}\left(\hat{\mathcal{T}}^O_K \cup \mathcal{T}^{O}_{K} \right)\\
= &\left\{S_{\varepsilon^O}\left(\mathcal{T}^{O}_{K}\right) - S_{\varepsilon^O}\left(\hat{\mathcal{T}}^O_K \cup \mathcal{T}^{O}_{K} \cup \mathcal{T}^{\delta}_{K} \right)\right\} - 
\left\{S_{\varepsilon^O}\left(\hat{\mathcal{T}}^O_K \cup \mathcal{T}^{O}_{K} \right) - S_{\varepsilon^O}\left(\hat{\mathcal{T}}^O_K \cup \mathcal{T}^{O}_{K} \cup \mathcal{T}^{\delta}_{K} \right)\right\}.
\end{align*}
We are working on the event in Assumption~\ref{assumption:detectionPrecision}\ref{assumption:detectionPrecision:L>=K}, but Assumption~\ref{assumption:detectionPrecision}\ref{assumption:detectionPrecision:L>=K} assumes that the probability of that event converges to one. Using similar arguments as in \ref{item1:boundsForEpsilonO} gives us
\begin{align*}
&\left\{S_{\varepsilon^O}\left(\mathcal{T}^{O}_{K}\right) - S_{\varepsilon^O}\left(\hat{\mathcal{T}}^O_K \cup \mathcal{T}^{O}_{K} \cup \mathcal{T}^{\delta}_{K} \right)\right\} - 
\left\{S_{\varepsilon^O}\left(\hat{\mathcal{T}}^O_K \cup \mathcal{T}^{O}_{K} \right) - S_{\varepsilon^O}\left(\hat{\mathcal{T}}^O_K \cup \mathcal{T}^{O}_{K} \cup \mathcal{T}^{\delta}_{K} \right)\right\}\\
= & \mathcal{O}_{\Pj}\left(K \overline{\sigma}^2 \left(\log\log \Big(\max_{k = 1,\ldots,K} \delta_{0, k} \vee e\Big) + (\log K)^2\right)\right).
\end{align*}

Finally, Assumption~\ref{assumption:detectionPrecision}\ref{assumption:detectionPrecision:Rate} gives $K \log\log \Big(\max_{k = 1,\ldots,K} \delta_{0, k} \vee e\Big) = o(\log\log\overline{\lambda})$ and Assumption~\ref{assumption:cpNumberMultivariate}\ref{assumption:cpNumberBound} yields $K(\log K)^2 = o(\log\log\overline{\lambda})$.

\ref{item3:boundsForEpsilonO} For $k = 1,\ldots,K$ let $\delta_k := \max_{q = 0,1} \floor{\delta_{q, k}}$. Moreover, let $\mathcal{T}^{\delta}_{L} := \{\tau_k^O \pm \delta_k,\ k = 1,\ldots,K\}$, $L = K,\ldots,K_{\max}$. Then,
\begin{align*}
& S_{\varepsilon^O}\left(\hat{\mathcal{T}}^O_L\right) - S_{\varepsilon^O}\left(\hat{\mathcal{T}}^O_L \cup \mathcal{T}^{O}_{K} \right)\\
= &\left\{S_{\varepsilon^O}\left(\hat{\mathcal{T}}^O_L\right) - S_{\varepsilon^O}\left(\hat{\mathcal{T}}^O_L \cup \mathcal{T}^{O}_{K} \cup \mathcal{T}^{\delta}_{L} \right)\right\} - 
\left\{S_{\varepsilon^O}\left(\hat{\mathcal{T}}^O_L \cup \mathcal{T}^{O}_{K} \right) - S_{\varepsilon^O}\left(\hat{\mathcal{T}}^O_L \cup \mathcal{T}^{O}_{K} \cup \mathcal{T}^{\delta}_{L} \right)\right\}.
\end{align*}
We will focus on how to bound $\max_{L = K,\ldots,K_{\max}} \left\{S_{\varepsilon^O}\left(\hat{\mathcal{T}}^O_L\right) - S_{\varepsilon^O}\left(\hat{\mathcal{T}}^O_L \cup \mathcal{T}^{O}_{K} \cup \mathcal{T}^{\delta}_{L} \right)\right\}$, but the same bound can be obtained for $\max_{L = K,\ldots,K_{\max}}\left\{S_{\varepsilon^O}\left(\hat{\mathcal{T}}^O_L \cup \mathcal{T}^{O}_{K} \right) - S_{\varepsilon^O}\left(\hat{\mathcal{T}}^O_L \cup \mathcal{T}^{O}_{K} \cup \mathcal{T}^{\delta}_{L} \right)\right\}$ by using identical arguments. 

Let $L \in \{K,\ldots,K_{\max}\}$ be fixed. Let $\hat{\tau}^O_{L,l} =: \check{\tau}^O_{L,l,0} < \check{\tau}^O_{L,l, 1} < \cdots < \check{\tau}^O_{L,l,\check{K}_l} < \check{\tau}^O_{L,l,\check{K}_l + 1} := \hat{\tau}^O_{L,l + 1}$ denote all elements of the set $\mathcal{T}^{O}_{K} \cup \mathcal{T}^{\delta}_{L}$ that are between $\hat{\tau}^O_{L,l}$ and $\hat{\tau}^O_{L,l + 1}$, so
$\bigcup_{l = 0}^{L}\bigcup_{k = 0}^{\check{K}_l}{\check{\tau}^O_{L,l,k}} = \bigcup_{l = 0}^{L}{\hat{\tau}^O_{L,l}} \cup \mathcal{T}^{O}_{K} \cup \mathcal{T}^{\delta}_{L}$. Then, Lemma~\ref{lemma:differenceCostsNestedChangePointSets} yields
\begin{align*}
0 \leq & S_{\varepsilon^O}\left(\hat{\mathcal{T}}^O_L\right) - S_{\varepsilon^O}\left(\hat{\mathcal{T}}^O_L \cup \mathcal{T}^{O}_{K} \cup \mathcal{T}^{\delta}_{L} \right)\\
\leq & 2\sum_{\substack{l = 0,\ldots, L,\\ \check{K}_l > 0}}\sum_{k = 0}^{\check{K}_l} \frac{\left(\hat{\tau}^O_{L,l + 1} - \hat{\tau}^O_{L,l}\right) - \left(\check{\tau}^O_{L,l,k + 1} - \check{\tau}^O_{L,l,k} \right)}{\hat{\tau}^O_{L,l + 1} - \hat{\tau}^O_{L,l}}\left(\check{\tau}^O_{L,l,k + 1} - \check{\tau}^O_{L,l,k} \right)\left(\overline{\varepsilon}^O_{\check{\tau}^O_{L,l,k} :\check{\tau}^O_{L,l,k + 1}}\right)^2\\
\leq & 2\sum_{\substack{l = 0,\ldots, L,\\ \check{K}_l > 0}}\sum_{k = 0}^{\check{K}_l} \left(\check{\tau}^O_{L,l,k + 1} - \check{\tau}^O_{L,l,k} \right)\left(\overline{\varepsilon}^O_{\check{\tau}^O_{L,l,k} :\check{\tau}^O_{L,l,k + 1}}\right)^2.
\end{align*}

We are working on the event in Assumption~\ref{assumption:detectionPrecision}\ref{assumption:detectionPrecision:L>=K}, but Assumption~\ref{assumption:detectionPrecision}\ref{assumption:detectionPrecision:L>=K} assumes that the probability of that event converges to one. Hence, for every $k = 1,\ldots,K$ there exists an $l = 1,\ldots,L$ (not necessarily unique) such that $\tau_k^O - \delta_k \leq \hat{\tau}^O_{L,l} \leq \tau_k^O$ or $\tau_k^O \leq \hat{\tau}^O_{L,l} \leq \tau_k^O + \delta_k$. Thus, 
\begin{align*}
& \frac{1}{2}\max_{L = K,\ldots,K_{\max}} \left\{S_{\varepsilon^O}\left(\hat{\mathcal{T}}^O_L\right) - S_{\varepsilon^O}\left(\hat{\mathcal{T}}^O_L \cup \mathcal{T}^{O}_{K} \cup \mathcal{T}^{\delta}_{L} \right)\right\}\\
\leq & \sum_{k = 0}^{K} (\tau_{k + 1}^O - \delta_{k + 1} - \tau_k^O - \delta_k) \left(\overline{\varepsilon}^O_{\tau_k^O + \delta_k :\tau_{k + 1}^O - \delta_{k + 1}}\right)^2
+ \sum_{k = 1}^{K} \delta_k\left(\overline{\varepsilon}^O_{\tau_k^O - \delta_k :\tau_k^O}\right)^2
+ \sum_{k = 1}^{K} \delta_k\left(\overline{\varepsilon}^O_{\tau_k^O :\tau_k^O + \delta_k}\right)^2\\
& + \sum_{k = 1}^{K} \max_{t = \tau_k^O - \delta_k + 1,\ldots,\tau_k^O - 1}  (\tau_k^O - t)\left(\overline{\varepsilon}^O_{\tau_k^O - \delta_k :t}\right)^2
+ \sum_{k = 1}^{K} \max_{t = \tau_k^O - \delta_k + 1,\ldots,\tau_k^O - 1} (t - \tau_k^O + \delta_k)\left(\overline{\varepsilon}^O_{t :\tau_k^O}\right)^2\\
& + \sum_{k = 1}^{K} \max_{t = \tau_k^O + 1,\ldots,\tau_k^O + \delta_k - 1}  (t - \tau_k^O)\left(\overline{\varepsilon}^O_{\tau_k^O :t}\right)^2
+ \sum_{k = 1}^{K} \max_{t = \tau_k^O + 1,\ldots,\tau_k^O + \delta_k - 1} (\tau_k^O + \delta_k - t)\left(\overline{\varepsilon}^O_{t :\tau_k^O + \delta_k}\right)^2\\
& + \sum_{k = 1}^{K} \max_{t = \tau_k^O + \delta_k + 1,\ldots,\tau_{k + 1}^O - \delta_{k + 1} - 1} \frac{(t - \tau_k^O + \delta_k) - (t - \tau_k^O - \delta_k)}{t - \tau_k^O + \delta_k} (t - \tau_k^O - \delta_k)\left(\overline{\varepsilon}^O_{\tau_k^O + \delta_{k} :t}\right)^2\\
& + \sum_{k = 1}^{K} \max_{t = \tau_{k - 1}^O + \delta_{k - 1} + 1,\ldots,\tau_k^O - \delta_{k} - 1} \frac{(\tau_k^O + \delta_{k} - t) - (\tau_k^O - \delta_{k} - t)}{\tau_k^O + \delta_{k} - t} (\tau_k^O - \delta_{k} - t)\left(\overline{\varepsilon}^O_{t :\tau_k^O - \delta_{k}}\right)^2,
\end{align*}
where we have used the notation $\delta_0 := \delta_{K + 1} := 0$.

The following calculation follows from a central limit theorem for triangular arrays. Note that Lyapunov's condition is met, since higher moments are uniformly bounded because of Assumption~\ref{assumption:NoiseBernstein}. We obtain
\begin{align*}
&\sum_{k = 0}^{K} (\tau_{k + 1}^O - \delta_{k + 1} - \tau_k^O - \delta_k) \left(\overline{\varepsilon}^O_{\tau_k^O + \delta_k :\tau_{k + 1}^O - \delta_{k + 1}}\right)^2
+ \sum_{k = 1}^{K} \delta_k\left(\overline{\varepsilon}^O_{\tau_k^O - \delta_k :\tau_k^O}\right)^2
+ \sum_{k = 1}^{K} \delta_k\left(\overline{\varepsilon}^O_{\tau_k^O :\tau_k^O + \delta_k}\right)^2\\
= & \mathcal{O}_{\Pj}\left(\sqrt{K} \overline{\sigma}^2\right).
\end{align*}
Furthermore, Assumption~\ref{assumption:cpNumberMultivariate}\ref{assumption:cpNumberBound} yields $\sqrt{K} = o(\log\log\overline{\lambda})$.

It follows from Lemma~\ref{lemma:maxOneSided} that
\begin{align*}
&\sum_{k = 1}^{K} \max_{t = \tau_k^O - \delta_k + 1,\ldots,\tau_k^O - 1}  (\tau_k^O - t)\left(\overline{\varepsilon}^O_{\tau_k^O - \delta_k :t}\right)^2\\
\leq  &K \max_{k = 1,\ldots,K}\max_{t = \tau_k^O - \max_{k = 1,\ldots,K} \delta_k + 1,\ldots,\tau_k^O - 1}  (\tau_k^O - t)\left(\overline{\varepsilon}^O_{\tau_k^O - \max_{k = 1,\ldots,K}\delta_k :t}\right)^2\\
= &\mathcal{O}_{\Pj}\left(K \overline{\sigma}^2\left(\log\log \Big(\max_{k = 1,\ldots,K}\delta_k  \vee e\Big) + (\log K)^2\right)\right).
\end{align*}
Furthermore, Assumption~\ref{assumption:cpNumberMultivariate}\ref{assumption:cpNumberBound} yields $K(\log K)^2  = o(\log\log\overline{\lambda})$ and Assumption~\ref{assumption:detectionPrecision}\ref{assumption:detectionPrecision:Rate} yields 
\begin{equation*}
K \log\log \Big(\max_{k = 1,\ldots,K}\delta_k  \vee e\Big) = o(\log\log\overline{\lambda}).
\end{equation*}
The same bound applies to the other terms and hence
\begin{align*}
&\sum_{k = 1}^{K} \max_{t = \tau_k^O - \delta_k + 1,\ldots,\tau_k^O - 1}  (\tau_k^O - t)\left(\overline{\varepsilon}^O_{\tau_k^O - \delta_k :t}\right)^2
+ \sum_{k = 1}^{K} \max_{t = \tau_k^O - \delta_k + 1,\ldots,\tau_k^O - 1} (t - \tau_k^O + \delta_k)\left(\overline{\varepsilon}^O_{t :\tau_k^O}\right)^2\\
& + \sum_{k = 1}^{K} \max_{t = \tau_k^O + 1,\ldots,\tau_k^O + \delta_k - 1}  (t - \tau_k^O)\left(\overline{\varepsilon}^O_{\tau_k^O :t}\right)^2
+ \sum_{k = 1}^{K} \max_{t = \tau_k^O + 1,\ldots,\tau_k^O + \delta_k - 1} (\tau_k^O + \delta_k - t)\left(\overline{\varepsilon}^O_{t :\tau_k^O + \delta_k}\right)^2\\
= &o_{\Pj}\left(\overline{\sigma}^2 \log\log\overline{\lambda}\right).
\end{align*}

Finally, let $C > 2$ be a constant. Then,
\begin{align*}
&\sum_{k = 1}^{K} \max_{t = \tau_k^O + \delta_k + 1,\ldots,\tau_{k + 1}^O - \delta_{k + 1} - 1} \frac{(t - \tau_k^O + \delta_k) - (t - \tau_k^O - \delta_k)}{t - \tau_k^O + \delta_k} (t - \tau_k^O - \delta_k)\left(\overline{\varepsilon}^O_{\tau_k^O + \delta_{k} :t}\right)^2\\
\leq &\sum_{k = 1}^{K} \max_{t = \tau_k^O + \delta_k + 1,\ldots,\tau_{k + 1}^O - \tau_k^O + C \{\max_{l = 1,\ldots,K}\delta_l\}}  (t - \tau_k^O - \delta_k)\left(\overline{\varepsilon}^O_{\tau_k^O + \delta_{k} :t}\right)^2\\
& + \sum_{k = 1}^{K} \max_{t = \tau_k^O + C \{\max_{l = 1,\ldots,K}\delta_l\} + 1,\ldots,\tau_{k + 1}^O - \delta_{k + 1} - 1} 2\delta_k\left(\overline{\varepsilon}^O_{\tau_k^O + \delta_{k} :t}\right)^2.
\end{align*}
It follows from Lemma~\ref{lemma:maxOneSided} that
\begin{align*}
& \sum_{k = 1}^{K} \max_{t = \tau_k^O + \delta_k + 1,\ldots,\tau_{k + 1}^O - \tau_k^O + C \{\max_{l = 1,\ldots,K}\delta_l\}}  (t - \tau_k^O - \delta_k)\left(\overline{\varepsilon}^O_{\tau_k^O + \delta_{k} :t}\right)^2\\
\leq&  K \max_{k = 1,\ldots,K} \max_{t = \tau_k^O + \delta_k + 1,\ldots,\tau_{k + 1}^O - \tau_k^O + C \{\max_{l = 1,\ldots,K}\delta_l\}}  (t - \tau_k^O - \delta_k)\left(\overline{\varepsilon}^O_{\tau_k^O + \delta_{k} :t}\right)^2\\
= & \mathcal{O}_{\Pj}\left(K \overline{\sigma}^2\left(\log\log \Big(\max_{k = 1,\ldots,K}\delta_k  \vee e\Big) + (\log K)^2\right)\right).
\end{align*}

Moreover, it follows from Lemma~\ref{lemma:specialCaseMaxOneSided} that
\begin{align*}
&\sum_{k = 1}^{K} \max_{t = \tau_k^O + C \{\max_{l = 1,\ldots,K}\delta_l \big\} + 1,\ldots,\tau_{k + 1}^O - \delta_{k + 1} - 1} 2\delta_k\left(\overline{\varepsilon}^O_{\tau_k^O + \delta_{k} :t}\right)^2\\
\leq & 2\sum_{k = 1}^{K} \big\{\max_{l = 1,\ldots,K}\delta_l \big\} \delta_k \max_{t = \tau_k^O + C \{\max_{l = 1,\ldots,K}\delta_l\} + 1,\ldots,\tau_{k + 1}^O - \delta_{k + 1} - 1} \left(\overline{\varepsilon}^O_{\tau_k^O + \delta_{k} :t}\right)^2\\
\leq & \mathcal{O}_{\Pj}\left(K \overline{\sigma}^2  (\log K)^2\right).
\end{align*}

Thus, similarly to our earlier argument, it follows from Assumptions~\ref{assumption:detectionPrecision}\ref{assumption:detectionPrecision:Rate}~and~Assumption~\ref{assumption:cpNumberMultivariate}\ref{assumption:cpNumberBound} that
\begin{align*}
&\sum_{k = 1}^{K} \max_{t = \tau_k^O + \delta_k + 1,\ldots,\tau_{k + 1}^O - \delta_{k + 1} - 1} \frac{(t - \tau_k^O + \delta_k) - (t - \tau_k^O - \delta_k)}{t - \tau_k^O + \delta_k} (t - \tau_k^O - \delta_k)\left(\overline{\varepsilon}^O_{\tau_k^O + \delta_{k} :t}\right)^2\\
& + \sum_{k = 1}^{K} \max_{t = \tau_{k - 1}^O + \delta_{k - 1} + 1,\ldots,\tau_k^O - \delta_{k} - 1} \frac{(\tau_k^O + \delta_{k} - t) - (\tau_k^O - \delta_{k} - t)}{\tau_k^O + \delta_{k} - t} (\tau_k^O - \delta_{k} - t)\left(\overline{\varepsilon}^O_{t :\tau_k^O - \delta_{k}}\right)^2\\
= &o_{\Pj}\left(\overline{\sigma}^2 \log\log\overline{\lambda}\right).
\end{align*}
This completes the proof.
\end{proof}

\begin{Lemma}\label{lemma:differenceCVLtoK}
Recall $E_i := \mu^O_i + \varepsilon^E_i,\ i = 1,\ldots, n / 2$ and $\widetilde{\operatorname{CV}}^O_{\mathrm{mod}}(L)$ as defined in \eqref{eq:simplifiedCVterm}. Suppose that Assumptions~\ref{assumption:cpNumberMultivariate}--\ref{assumption:minimumSignalMultivariate} hold in the case where $K\geq 1$ eventually, or only Assumptions~\ref{assumption:cpNumberMultivariate}, \ref{assumption:NoiseBernstein}~and~\ref{assumption:overestimation} in the case $K = 0\ \forall\ n$. Then we have the following.
\begin{enumerate}[label=(\roman*)]
\item  \label{item:differenceCVL<K} Suppose that, in addition, $K\geq 1$.
%Let us write  $\overline{\mu}^O_{L, i} := \sum_{l = 0}^{L}{\EINS_{\{\hat{\tau}^O_{L,l} + 1 \leq i \leq \hat{\tau}^O_{L,l + 1}\}} \overline{\mu}^O_{\hat{\tau}^O_{L,l} :\hat{\tau}^O_{L,l + 1}}}$.
If for some $A>0$, $\mathcal{I}_L\subseteq \{1,\ldots,K\}$ for $L=0,\ldots,K-1$ is a sequence of non-empty sets satisfying
	\begin{equation*}
		\Pj\Bigg(\forall\ L < K,\  \forall\ k\in \mathcal{I}_L,\ \sum_{i = \tau_{k}^O - \floor{\frac{\underline{\lambda}}{4}} + 1}^{\tau_{k}^O + \floor{\frac{\underline{\lambda}}{4}}}{\big(\mu^O_i - \overline{\mu}^O_{L, i}\big)^2} \geq  A\underline{\lambda}\Delta_k^2 \Bigg) \to 1,
	\end{equation*}
	then
\begin{align*}
	& \min_{L = 0,\ldots,K-1} \Bigg(\sum_{k\in \mathcal{I}_L}{\Delta_k^2}\Bigg)^{-1} \left\{\widetilde{\operatorname{CV}}^O_{\mathrm{mod}}(L) - \widetilde{\operatorname{CV}}^O_{\mathrm{mod}}(K) \right\} \geq \underline{\lambda} \left(A + o_\Pj(1)\right).
\end{align*}	

%Suppose that, in addition, $K\geq 1$ and for a constant $A>0$ there exist sequences of non-empty sets $\mathcal{I}_L\subseteq \{1,\ldots,K\}$ such that
%\begin{equation*}
%\Pj\Bigg(\forall\ L < K,\  \forall\ k\in \mathcal{I}_L,\ \sum_{i = \tau_{k}^O - \frac{\underline{\lambda}}{4} + 1}^{\tau_{k}^O + \frac{\underline{\lambda}}{4}}{(\mu^O_i - \overline{\mu}^O_{L, i})^2} \geq  A\underline{\lambda}\Delta_k^2 \Bigg) \to 1,
%\end{equation*}
%with $\overline{\mu}^O_{L, i} := \sum_{l = 0}^{L}{\EINS_{\{\hat{\tau}^O_{L,l} + 1 \leq i \leq \hat{\tau}^O_{L,l + 1}\}} \overline{\mu}^O_{\hat{\tau}^O_{L,l} :\hat{\tau}^O_{L,l + 1}}}$. Then,
%\begin{align*}
%& \min_{L = 0,\ldots,K-1} \left(\sum_{k\in \mathcal{I}_L}{\Delta_k^2}\right)^{-1} \left\{\widetilde{\operatorname{CV}}^O_{\mathrm{mod}}(L) - \widetilde{\operatorname{CV}}^O_{\mathrm{mod}}(K) \right\} \geq \underline{\lambda} \left(A + o_\Pj(1)\right).
%\end{align*}
\item  \label{item:differenceCVL>K} 
\begin{align*}
&\min_{L = K + 1,\ldots,K_{\max}} \left\{\widetilde{\operatorname{CV}}^O_{\mathrm{mod}}(L) - \widetilde{\operatorname{CV}}^O_{\mathrm{mod}}(K) \right\}\\
=&  \min_{L = K + 1,\ldots,K_{\max}} \left\{ S_{\varepsilon^O}\left(\mathcal{T}^{O}_{K}\right) - S_{\varepsilon^O}\left(\hat{\mathcal{T}}^O_L \cup \mathcal{T}^{O}_{K} \right) \right\} \left(1 + o_{\Pj}(1)\right).
\end{align*}
\end{enumerate}
\end{Lemma}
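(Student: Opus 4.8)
The plan is to reduce both parts to quantities already controlled by Lemmas~\ref{lemma:maxYao}--\ref{lemma:boundsForEpsilonO}. Writing $\hat n^O_l := \hat\tau^O_{L,l+1}-\hat\tau^O_{L,l}$ and using $\sum_{i=\hat\tau^O_{L,l}+1}^{\hat\tau^O_{L,l+1}}(\mu^O_i-\overline\mu^O_{\hat\tau^O_{L,l}:\hat\tau^O_{L,l+1}})=0$ on each estimated segment, the identity $E_i-\overline Y^O = (E_i-\overline E)+(\overline E-\overline Y^O)$ together with $\overline E_{a:b}-\overline Y^O_{a:b}=\overline\varepsilon^E_{a:b}-\overline\varepsilon^O_{a:b}$ gives
\[
\widetilde{\operatorname{CV}}^O_{\mathrm{mod}}(L) = S_{E}(\hat{\mathcal T}^O_L) + R(L),\qquad R(L):=\sum_{l=0}^{L}\hat n^O_l\big(\overline\varepsilon^O_{\hat\tau^O_{L,l}:\hat\tau^O_{L,l+1}}-\overline\varepsilon^E_{\hat\tau^O_{L,l}:\hat\tau^O_{L,l+1}}\big)^2 .
\]
Since $\hat{\mathcal T}^O_L$ depends only on $Y^O$ while $\varepsilon^E$ is independent of $\varepsilon^O$, $R(L)$ is a ``cross-fold'' overfitting term that I expect to be of lower order than the relevant main term throughout.

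For part~\eqref{item:differenceCVL<K} I would use this decomposition directly. Lemma~\ref{lemma:boundsForYE}\eqref{item1:lemma:boundsForYE} already yields $\min_{L<K}(\sum_{k\in\mathcal I_L}\Delta_k^2)^{-1}\{S_E(\hat{\mathcal T}^O_L)-S_E(\hat{\mathcal T}^O_K)\}\ge\underline\lambda(A+o_\Pj(1))$, so it only remains to show $\max_{L<K}|R(L)-R(K)|=o_\Pj(\underline\lambda\Delta_{(1)}^2)$. Bounding $R(L)\le 2(L+1)\{\max_{i<j}\tfrac1{j-i}(\sum\varepsilon^O)^2+\max_{i<j}\tfrac1{j-i}(\sum\varepsilon^E)^2\}$ and invoking Lemma~\ref{lemma:maxYao} gives $\max_L R(L)=\mathcal{O}_\Pj(K\overline\sigma^2(\log\overline\lambda)^2)$, which is $o_\Pj(\underline\lambda\Delta_{(1)}^2)\le o_\Pj(\underline\lambda\sum_{k\in\mathcal I_L}\Delta_k^2)$ by Assumption~\ref{assumption:minimumSignalMultivariate} and $|\mathcal I_L|\ge1$. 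Dividing through yields the claim.

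For part~\eqref{item:differenceCVL>K} the main term is instead the $\varepsilon^O$-overfitting gain. Expanding $S_E=S_{\mu^O}+2S_{\mu^O,\varepsilon^E}+S_{\varepsilon^E}$ and $R(L)$ and differencing $L$ against $K$ produces four pieces: a bias difference $S_{\mu^O}(\hat{\mathcal T}^O_L)-S_{\mu^O}(\hat{\mathcal T}^O_K)$, an $\varepsilon^O$-quadratic piece, a cross term $\sum_l\hat n^O_l\overline\varepsilon^O_\cdot\overline\varepsilon^E_\cdot$, and a linear term $\sum_l\sum_i(\mu^O_i-\overline\mu^O_\cdot)\varepsilon^E_i$, each differenced between $L$ and $K$ (the $\sum_i(\varepsilon^E_i)^2$ term cancels). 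Using $\sum_l\hat n^O_l(\overline\varepsilon^O_\cdot)^2=\sum_i(\varepsilon^O_i)^2-S_{\varepsilon^O}(\hat{\mathcal T}^O_L)$, the $\varepsilon^O$-quadratic piece telescopes to $S_{\varepsilon^O}(\hat{\mathcal T}^O_K)-S_{\varepsilon^O}(\hat{\mathcal T}^O_L)$; inserting $\mathcal T^O_K$ and applying Lemma~\ref{lemma:boundsForEpsilonO}\eqref{item2:boundsForEpsilonO},\eqref{item3:boundsForEpsilonO} (the latter both at general $L$ and at $L=K$) rewrites it as $S_{\varepsilon^O}(\mathcal T^O_K)-S_{\varepsilon^O}(\hat{\mathcal T}^O_L\cup\mathcal T^O_K)+o_\Pj(\overline\sigma^2\log\log\overline\lambda)$ uniformly in $L>K$. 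By Assumption~\ref{assumption:overestimation} this main term exceeds $\overline\sigma^2\log\log\overline\lambda$ in order with probability tending to one, so it suffices to show the other three pieces are $o_\Pj(\text{main}(L))$ uniformly in $L$; the stated identity then follows from the elementary observation that $\min_L(\text{main}(L)+\text{err}(L))=\min_L\text{main}(L)\,(1+o_\Pj(1))$ whenever $\max_L|\text{err}(L)|/\text{main}(L)=o_\Pj(1)$.

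The cross and linear terms I would handle by conditioning on $\varepsilon^O$: for the (fixed) partitions both are mean-zero in $\varepsilon^E$, with conditional variances of order $\overline\sigma^2(\text{main}(L)+K\overline\sigma^2)$ and $\overline\sigma^2 S_{\mu^O}(\hat{\mathcal T}^O_L)$ respectively, so a Bernstein bound per $L$ followed by a union bound over the at most $K_{\max}$ values of $L$ controls them; since $\text{main}(L)/\overline\sigma^2\gtrsim\log\log\overline\lambda$ dominates both $K=o(\log\log\overline\lambda)$ and $\log K_{\max}=o(\log\log\overline\lambda)$ (Assumption~\ref{assumption:cpNumberMultivariate}\eqref{assumption:cpNumberBound},\eqref{assumption:cpNumberKmaxBound}), this gives the desired uniform $o_\Pj(\text{main}(L))$ once the bias is controlled. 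The bias term, equal to $S_{\mu^O}(\hat{\mathcal T}^O_L)$ up to the $o_\Pj$-quantity $S_{\mu^O}(\hat{\mathcal T}^O_K)$, I would bound via Lemma~\ref{lemma:differenceCostsNestedChangePointSets} by $\mathcal{O}(\sum_k\delta_{L-K,k}\Delta_k^2)$ and show it is $o_\Pj(\overline\sigma^2\log\log\overline\lambda)$: large changes contribute nothing because $\delta_{q,k}=0$ there (Assumption~\ref{assumption:detectionPrecision}\eqref{assumption:detectionPrecision:Infq>0}), while the small changes in $\mathcal K_n$ are controlled using the rate bound \eqref{assumption:detectionPrecision:Rate} and \eqref{assumption:detectionPrecision:Infq0}. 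The hard part will be exactly this uniform-in-$L$ step: the error terms must be negligible relative to the \emph{smallest} overfitting gain across a diverging range of $L$ and $K_{\max}$, which is what forces the conditional maximal-inequality argument and the tight balancing of the precision rates $\delta_{q,k}$ against $\overline\sigma^2\log\log\overline\lambda$; this is precisely the obstruction, noted after \eqref{eq:whatZouetalHaveShown}, that prevents the finite-$K$ reasoning of \citet{zou2020consistent} from applying.
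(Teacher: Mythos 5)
Your decomposition $\widetilde{\operatorname{CV}}^O_{\mathrm{mod}}(L)=S_E(\hat{\mathcal T}^O_L)+R(L)$ is, after expanding $R(L)$, exactly the paper's identity $\widetilde{\operatorname{CV}}^O_{\mathrm{mod}}(L)=S_E(\hat{\mathcal T}^O_L)-S_{\varepsilon^O}(\hat{\mathcal T}^O_L)-S_{\varepsilon^E}(\hat{\mathcal T}^O_L)+2S_{\varepsilon^O,\varepsilon^E}(\hat{\mathcal T}^O_L)+\sum_i(\varepsilon^O_i-\varepsilon^E_i)^2$, and your part~\eqref{item:differenceCVL<K} is sound: Lemma~\ref{lemma:boundsForYE}\eqref{item1:lemma:boundsForYE} supplies the main term, and your cruder bound $\max_{L\le K}R(L)=\mathcal{O}_\Pj(K\overline\sigma^2(\log\overline\lambda)^2)=o_\Pj(\underline\lambda\Delta_{(1)}^2)$ via Lemma~\ref{lemma:maxYao} and Assumption~\ref{assumption:minimumSignalMultivariate} achieves the same order as the paper's finer term-by-term treatment. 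Your identification of the telescoped $\varepsilon^O$-quadratic piece and its reduction to $S_{\varepsilon^O}(\mathcal T^O_K)-S_{\varepsilon^O}(\hat{\mathcal T}^O_L\cup\mathcal T^O_K)$ via Lemma~\ref{lemma:boundsForEpsilonO} also matches the paper.

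The gap is in your treatment of the bias term for $L>K$. You propose to bound $S_{\mu^O}(\hat{\mathcal T}^O_L)=\mathcal O\big(\sum_k\delta_{L-K,k}\Delta_k^2\big)$ and conclude it is $o_\Pj(\overline\sigma^2\log\log\overline\lambda)$, but the assumptions only control this sum at $q=0$: Assumption~\ref{assumption:detectionPrecision}\eqref{assumption:detectionPrecision:Infq0} is a statement about $\delta_{0,k}$ alone, \eqref{assumption:detectionPrecision:Infq>0} zeroes out $\delta_{q,k}$ only for the large jumps, and for $k\in\mathcal K_n$ with $q>0$ the rate condition \eqref{assumption:detectionPrecision:Rate} bounds only $\log\log\delta_{q,k}$, so $\delta_{q,k}$ may be as large as, say, $(\log\overline\lambda)^{1/2}$ while $\Delta_k^2$ is only $\le \overline\sigma^2\log\log\overline\lambda/(KC_n)$; the product $\delta_{q,k}\Delta_k^2$ is then not $o(\overline\sigma^2\log\log\overline\lambda)$. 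The same unbounded quantity re-enters your linear term through its conditional variance $\overline\sigma^2 S_{\mu^O}(\hat{\mathcal T}^O_L)$. The way out — and what the paper does — is to observe that for $L>K$ only a \emph{lower} bound on the difference is needed, and the bias, linear and part of the $\varepsilon^E$ contributions combine into $S_E(\hat{\mathcal T}^O_L)-S_E(\hat{\mathcal T}^O_L\cup\mathcal T^O_K)\ge 0$ (adding change-points never increases the cost), which can simply be dropped; the remainder $S_E(\mathcal T^O_K)-S_E(\hat{\mathcal T}^O_L\cup\mathcal T^O_K)$ equals $S_{\varepsilon^E}(\mathcal T^O_K)-S_{\varepsilon^E}(\hat{\mathcal T}^O_L\cup\mathcal T^O_K)$ because $\mu^O$ is constant on every segment of the refinement, and is $o_\Pj(\overline\sigma^2\log\log\overline\lambda)$ by Lemma~\ref{lemma:boundsForEpsilonE}. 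A similar insertion of $\mathcal T^O_K$ (via Lemma~\ref{lemma:differenceCostsMixedTerm} and Cauchy--Schwarz) is also what makes your cross-term variance heuristic rigorous. Without this regrouping, your uniform-in-$L$ step for part~\eqref{item:differenceCVL>K} does not go through under the stated assumptions.
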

\begin{proof}
%For any $L = 0,\ldots,K_{\max}$ we have that
%\begin{align*}
%& \widetilde{\operatorname{CV}}^O_{\mathrm{mod}}(L)
%=  S_{E}\left( \hat{\mathcal{T}}^O_L \right) - S_{\varepsilon^O}\left( \hat{\mathcal{T}}^O_L \right) - S_{\varepsilon^E}\left( \hat{\mathcal{T}}^O_L \right)\\
%& + 2 \sum_{l = 0}^{L}\sum_{i = \hat{\tau}^O_{L,l} + 1}^{\hat{\tau}^O_{L,l + 1}} \left(\varepsilon^O_i - \overline{\varepsilon}^O_{\hat{\tau}^O_{L,l} :\hat{\tau}^O_{L,l + 1}}\right)\left(\varepsilon^E_i - \overline{\varepsilon}^E_{\hat{\tau}^O_{L,l} :\hat{\tau}^O_{L,l + 1}}\right) + \sum_{i = 1}^{n / 2} \left(\varepsilon^O_i - \varepsilon^E_i\right)^2.
%\end{align*}
%Consequently, for any $L \neq K$,
%\begin{align*}
%& \widetilde{\operatorname{CV}}^O_{\mathrm{mod}}(L)  - \widetilde{\operatorname{CV}}^O_{\mathrm{mod}}(K)\\
%= & \left\{ S_{E}\left( \hat{\mathcal{T}}^O_L \right) - S_{E}\left( \hat{\mathcal{T}}^O_K \right)\right\}
%- \left\{ S_{\varepsilon^O}\left( \hat{\mathcal{T}}^O_L \right) - S_{\varepsilon^O}\left( \hat{\mathcal{T}}^O_K \right)\right\}
%- \left\{ S_{\varepsilon^E}\left( \hat{\mathcal{T}}^O_L \right) - S_{\varepsilon^E}\left( \hat{\mathcal{T}}^O_K \right)\right\}\\
%& + 2 \left\{ S_{\varepsilon^O, \varepsilon^E}\left( \hat{\mathcal{T}}^O_L \right) - S_{\varepsilon^O, \varepsilon^E}\left( \hat{\mathcal{T}}^O_K \right)\right\},
%\end{align*}
%with $S_{X, Y}\left( \mathcal{T} \right)$ as defined before Lemma~\ref{lemma:differenceCostsMixedTerm}. In the following we will derive uniform bounds for each of the four terms separately.

As seen in \eqref{eq:splitCostDifferenceModCost}, we have for any $L \neq K$,
\begin{align*}
& \widetilde{\operatorname{CV}}^O_{\mathrm{mod}}(L)  - \widetilde{\operatorname{CV}}^O_{\mathrm{mod}}(K)\\
= & \left\{ S_{E}\left( \hat{\mathcal{T}}^O_L \right) - S_{E}\left( \hat{\mathcal{T}}^O_K \right)\right\}
- \left\{ S_{\varepsilon^O}\left( \hat{\mathcal{T}}^O_L \right) - S_{\varepsilon^O}\left( \hat{\mathcal{T}}^O_K \right)\right\}
- \left\{ S_{\varepsilon^E}\left( \hat{\mathcal{T}}^O_L \right) - S_{\varepsilon^E}\left( \hat{\mathcal{T}}^O_K \right)\right\}\\
& + 2 \left\{ S_{\varepsilon^O, \varepsilon^E}\left( \hat{\mathcal{T}}^O_L \right) - S_{\varepsilon^O, \varepsilon^E}\left( \hat{\mathcal{T}}^O_K \right)\right\} \\
&=: A_1 + A_2 + A_3 + 2A_4.
\end{align*}
In the following we will derive uniform bounds for each of the $A_j$ separately.

\subsubsection*{Showing \ref{item:differenceCVL<K}:}
Let $K\geq 1$. We have that
\begin{align*}
S_{E}\left( \hat{\mathcal{T}}^O_L \right) - S_{E}\left( \hat{\mathcal{T}}^O_K \right)
= \left\{ S_{E}\left( \hat{\mathcal{T}}^O_L \right) - S_{E}\left( \mathcal{T}^{O}_{K} \right)\right\}
- \left\{ S_{E}\left( \hat{\mathcal{T}}^O_K \right) - S_{E}\left( \mathcal{T}^{O}_{K} \right)\right\}.
\end{align*}
\emph{Bounding $A_1$:} It follows from Lemma~\ref{lemma:boundsForYE}, \ref{item1:lemma:boundsForYE}, \ref{item4:lemma:boundsForYE}, the fact that $\vert \mathcal{I}_L \vert \geq 1$ and Assumption~\ref{assumption:minimumSignalMultivariate}, that
\begin{align}\label{eq:differenceCVL<KfirstTerm}
\min_{L = 0,\ldots,K-1} \Bigg(\sum_{k\in \mathcal{I}_L}{\Delta_k^2}\Bigg)^{-1} \left\{ S_{E}\left( \hat{\mathcal{T}}^O_L \right) - S_{E}\left( \hat{\mathcal{T}}^O_K \right)\right\} \geq \underline{\lambda} \left(A + o_\Pj(1)\right).
\end{align}
\emph{Bounding $A_2$:} It follows from Lemma~\ref{lemma:boundsForEpsilonO} that
\begin{equation}\label{eq:differenceCVL<KsecondTerm}
\begin{split}
& \max_{L = 0,\ldots,K - 1}\left\vert S_{\varepsilon^O}\left( \hat{\mathcal{T}}^O_L \right) - S_{\varepsilon^O}\left( \hat{\mathcal{T}}^O_K \right)\right\vert\\
\leq & \max_{L = 0,\ldots,K - 1} \left\{ S_{\varepsilon^O}\left( \hat{\mathcal{T}}^O_L \right) - S_{\varepsilon^O}\left(\hat{\mathcal{T}}^O_L \cup \mathcal{T}^{O}_{K} \right)\right\}
 + \max_{L = 0,\ldots,K - 1} \left\{ S_{\varepsilon^O}\left(\mathcal{T}^{O}_{K}\right) - S_{\varepsilon^O}\left(\hat{\mathcal{T}}^O_L \cup \mathcal{T}^{O}_{K} \right) \right\}\\
& + \left\{ S_{\varepsilon^O}\left(\mathcal{T}^{O}_{K}\right) - S_{\varepsilon^O}\left(\hat{\mathcal{T}}^O_K \cup \mathcal{T}^{O}_{K} \right)\right\}
 + \left\{ S_{\varepsilon^O}\left( \hat{\mathcal{T}}^O_K \right) - S_{\varepsilon^O}\left(\hat{\mathcal{T}}^O_K \cup \mathcal{T}^{O}_{K} \right) \right\}\\
= & \mathcal{O}_{\Pj}(K \overline{\sigma}^2(\log \overline{\lambda})^2).
\end{split}
\end{equation}

\emph{Bounding $A_3$:} A bound for the third term is given in \eqref{eq:alsoDifferenceCVL<KthirdTerm}. We have that
\begin{equation}\label{eq:differenceCVL<KthirdTerm}
\begin{split}
& \max_{L = 0,\ldots,K - 1}\left\{ S_{\varepsilon^E}\left( \hat{\mathcal{T}}^O_L \right) - S_{\varepsilon^E}\left( \hat{\mathcal{T}}^O_K \right)\right\}
= o_{\Pj}(\overline{\sigma}^2 \log\log\overline{\lambda}).
\end{split}
\end{equation}

\emph{Bounding $A_4$:} It follows from Lemma~\ref{lemma:differenceCostsMixedTerm} that
\begin{equation}\label{eq:differenceCVL<KfourthTerm}
\begin{split}
& 2\max_{L = 0,\ldots,K - 1}\left\vert S_{\varepsilon^O, \varepsilon^E}\left( \hat{\mathcal{T}}^O_L \right) - S_{\varepsilon^O, \varepsilon^E}\left( \hat{\mathcal{T}}^O_K \right)\right\vert\\
\leq & 2\max_{L = 0,\ldots,K - 1} \left\vert S_{\varepsilon^O, \varepsilon^E}\left( \hat{\mathcal{T}}^O_L \right) - S_{\varepsilon^O, \varepsilon^E}\left(\hat{\mathcal{T}}^O_L \cup \mathcal{T}^{O}_{K} \right)\right\vert\\
& + 2\max_{L = 0,\ldots,K - 1} \left\vert S_{\varepsilon^O, \varepsilon^E}\left(\hat{\mathcal{T}}^O_L \cup \mathcal{T}^{O}_{K} \right) - S_{\varepsilon^O, \varepsilon^E}\left(\mathcal{T}^{O}_{K}\right)\right\vert\\
& + 2\left\vert S_{\varepsilon^O, \varepsilon^E}\left(\mathcal{T}^{O}_{K}\right) - S_{\varepsilon^O, \varepsilon^E}\left(\hat{\mathcal{T}}^O_K \cup \mathcal{T}^{O}_{K} \right)\right\vert
 + 2\left\vert S_{\varepsilon^O, \varepsilon^E}\left(\hat{\mathcal{T}}^O_K \cup \mathcal{T}^{O}_{K} \right) - S_{\varepsilon^O, \varepsilon^E}\left( \hat{\mathcal{T}}^O_K \right)\right\vert\\
\leq &\max_{L = 0,\ldots,K - 1} \left\{ S_{\varepsilon^O}\left( \hat{\mathcal{T}}^O_L \right) - S_{\varepsilon^O}\left(\hat{\mathcal{T}}^O_L \cup \mathcal{T}^{O}_{K} \right)\right\} + \max_{L = 0,\ldots,K - 1} \left\{ S_{\varepsilon^E}\left( \hat{\mathcal{T}}^O_L \right) - S_{\varepsilon^E}\left(\hat{\mathcal{T}}^O_L \cup \mathcal{T}^{O}_{K} \right)\right\}\\
& + \max_{L = 0,\ldots,K - 1} \left\{ S_{\varepsilon^O}\left(\mathcal{T}^{O}_{K}\right) - S_{\varepsilon^O}\left(\hat{\mathcal{T}}^O_L \cup \mathcal{T}^{O}_{K} \right)\right\} + \max_{L = 0,\ldots,K - 1} \left\{ S_{\varepsilon^E}\left(\mathcal{T}^{O}_{K}\right) - S_{\varepsilon^E}\left(\hat{\mathcal{T}}^O_L \cup \mathcal{T}^{O}_{K} \right)\right\}\\
& + \left\{ S_{\varepsilon^O}\left(\mathcal{T}^{O}_{K}\right) - S_{\varepsilon^O}\left(\hat{\mathcal{T}}^O_K \cup \mathcal{T}^{O}_{K} \right)\right\} + \left\{ S_{\varepsilon^E}\left(\mathcal{T}^{O}_{K}\right) - S_{\varepsilon^E}\left(\hat{\mathcal{T}}^O_K \cup \mathcal{T}^{O}_{K} \right)\right\}\\
& + \left\{ S_{\varepsilon^O}\left( \hat{\mathcal{T}}^O_K \right) - S_{\varepsilon^O}\left(\hat{\mathcal{T}}^O_K \cup \mathcal{T}^{O}_{K} \right) \right\} + \left\{ S_{\varepsilon^E}\left( \hat{\mathcal{T}}^O_K \right) - S_{\varepsilon^E}\left(\hat{\mathcal{T}}^O_K \cup \mathcal{T}^{O}_{K} \right) \right\}\\
=& \mathcal{O}_{\Pj}\left(K \overline{\sigma}^2 (\log \overline{\lambda})^2\right).
\end{split}
\end{equation}
where the last bound is a consequence of Lemmas~\ref{lemma:differenceCostsNestedChangePointSets},~\ref{lemma:boundsForEpsilonE}~and
\ref{lemma:boundsForEpsilonO} as seen before.

Thus, by combining \eqref{eq:differenceCVL<KfirstTerm}--\eqref{eq:differenceCVL<KfourthTerm} we have that
\begin{align*}
& \min_{L = 0,\ldots,K-1} \Bigg(\underline{\lambda}\sum_{k\in \mathcal{I}_L}{\Delta_k^2}\Bigg)^{-1} \left\{\widetilde{\operatorname{CV}}^O_{\mathrm{mod}}(L) - \widetilde{\operatorname{CV}}^O_{\mathrm{mod}}(K) \right\}\\
\geq & A + o_\Pj(1) + \mathcal{O}_{\Pj}\Bigg(\min_{L = 0,\ldots,K-1} \Bigg(\underline{\lambda}\sum_{k\in \mathcal{I}_L}{\Delta_k^2}\Bigg)^{-1} K \overline{\sigma}^2 (\log \overline{\lambda})^2\Bigg).
\end{align*}
Hence, \ref{item:differenceCVL<K} follows from the fact that $\vert \mathcal{I}_L \vert \geq 1$ and Assumption~\ref{assumption:minimumSignalMultivariate}.

\subsubsection*{Showing \ref{item:differenceCVL>K}:}
\emph{Bounding $A_1$:} It follows from Lemma~\ref{lemma:boundsForYE}\ref{item3:lemma:boundsForYE}~and~\ref{item4:lemma:boundsForYE} that
\begin{equation}\label{eq:differenceCVL>KfirstTerm}
\begin{split}
&\min_{L = K + 1,\ldots,K_{\max}} \left\{ S_{E}\left( \hat{\mathcal{T}}^O_L \right) - S_{E}\left( \hat{\mathcal{T}}^O_K \right)\right\}\\
\geq &\min_{L = K + 1,\ldots,K_{\max}} \left\{ S_{E}\left(\hat{\mathcal{T}}^O_L\right) - S_{E}\left(\hat{\mathcal{T}}^O_L \cup \mathcal{T}^{O}_{K} \right)\right\}\\
& - \max_{L = K + 1,\ldots,K_{\max}} \left\{ S_{E}\left(\mathcal{T}^{O}_{K}\right) - S_{E}\left(\hat{\mathcal{T}}^O_L \cup \mathcal{T}^{O}_{K} \right) \right\}
 - \left\vert S_{E}\left(\hat{\mathcal{T}}^O_K \right) - S_{E}\left(\mathcal{T}^{O}_{K}\right)\right\vert\\
\geq & - \max_{L = K + 1,\ldots,K_{\max}} \left\{ S_{E}\left(\mathcal{T}^{O}_{K}\right) - S_{E}\left(\hat{\mathcal{T}}^O_L \cup \mathcal{T}^{O}_{K} \right) \right\}
 - \left\vert S_{E}\left(\hat{\mathcal{T}}^O_K \right) - S_{E}\left(\mathcal{T}^{O}_{K}\right)\right\vert\\
= & -\left\vert o_{\Pj}(\overline{\sigma}^2 \log\log \overline{\lambda})\right\vert.
\end{split}
\end{equation}

\emph{Bounding $A_2$:} It follows from Lemma~\ref{lemma:boundsForEpsilonO} that
\begin{equation}\label{eq:differenceCVL>KsecondTerm}
\begin{split}
& \min_{L = K + 1,\ldots,K_{\max}}\left\{ S_{\varepsilon^O}\left( \hat{\mathcal{T}}^O_K \right) - S_{\varepsilon^O}\left( \hat{\mathcal{T}}^O_L \right)\right\}\\
\geq & -\max_{L = K + 1,\ldots,K_{\max}} \left\{ S_{\varepsilon^O}\left( \hat{\mathcal{T}}^O_L \right) - S_{\varepsilon^O}\left(\hat{\mathcal{T}}^O_L \cup \mathcal{T}^{O}_{K} \right)\right\}
 + \min_{L = K + 1,\ldots,K_{\max}} \left\{ S_{\varepsilon^O}\left(\mathcal{T}^{O}_{K}\right) - S_{\varepsilon^O}\left(\hat{\mathcal{T}}^O_L \cup \mathcal{T}^{O}_{K} \right) \right\}\\
& - \left\{ S_{\varepsilon^O}\left(\mathcal{T}^{O}_{K}\right) - S_{\varepsilon^O}\left(\hat{\mathcal{T}}^O_K \cup \mathcal{T}^{O}_{K} \right)\right\}
 - \left\{ S_{\varepsilon^O}\left( \hat{\mathcal{T}}^O_K \right) - S_{\varepsilon^O}\left(\hat{\mathcal{T}}^O_K \cup \mathcal{T}^{O}_{K} \right) \right\}\\
= & \min_{L = K + 1,\ldots,K_{\max}} \left\{ S_{\varepsilon^O}\left(\mathcal{T}^{O}_{K}\right) - S_{\varepsilon^O}\left(\hat{\mathcal{T}}^O_L \cup \mathcal{T}^{O}_{K} \right) \right\} + o_{\Pj}(\overline{\sigma}^2 \log\log\overline{\lambda}).
\end{split}
\end{equation}

\emph{Bounding $A_3$:} It follows from Lemma~\ref{lemma:boundsForEpsilonE} that
\begin{equation}\label{eq:differenceCVL>KthirdTerm}
\begin{split}
& -\min_{L = K + 1,\ldots,K_{\max}}\left\{ S_{\varepsilon^E}\left( \hat{\mathcal{T}}^O_L \right) - S_{\varepsilon^E}\left( \hat{\mathcal{T}}^O_K \right)\right\} = \max_{L = K + 1,\ldots,K_{\max}}\left\{ S_{\varepsilon^E}\left( \hat{\mathcal{T}}^O_L \right) - S_{\varepsilon^E}\left( \hat{\mathcal{T}}^O_K \right)\right\}\\
\leq & \max_{L = K + 1,\ldots,K_{\max}} \left\{ S_{\varepsilon^E}\left( \hat{\mathcal{T}}^O_L \right) - S_{\varepsilon^E}\left(\hat{\mathcal{T}}^O_L \cup \mathcal{T}^{O}_{K} \right)\right\}
 + \max_{L = K + 1,\ldots,K_{\max}} \left\{ S_{\varepsilon^E}\left(\hat{\mathcal{T}}^O_L \cup \mathcal{T}^{O}_{K} \right) - S_{\varepsilon^E}\left(\mathcal{T}^{O}_{K}\right)\right\}\\
& + \left\{ S_{\varepsilon^E}\left(\mathcal{T}^{O}_{K}\right) -  S_{\varepsilon^E}\left(\hat{\mathcal{T}}^O_K \cup \mathcal{T}^{O}_{K} \right) \right\} + \left\{ S_{\varepsilon^E}\left(\hat{\mathcal{T}}^O_K \cup \mathcal{T}^{O}_{K} \right) -  S_{\varepsilon^E}\left( \hat{\mathcal{T}}^O_K \right) \right\}\\
= & o_{\Pj}(\overline{\sigma}^2 \log\log\overline{\lambda}).
\end{split}
\end{equation}

\emph{Bounding $A_4$:} We have that
\begin{align*}
& 2\left\vert S_{\varepsilon^O, \varepsilon^E}\left( \hat{\mathcal{T}}^O_L \right) - S_{\varepsilon^O, \varepsilon^E}\left( \hat{\mathcal{T}}^O_K \right)\right\vert\\
\leq & 2\left\vert S_{\varepsilon^O, \varepsilon^E}\left( \hat{\mathcal{T}}^O_L \right) - S_{\varepsilon^O, \varepsilon^E}\left(\hat{\mathcal{T}}^O_L \cup \mathcal{T}^{O}_{K} \right)\right\vert
 + 2\left\vert S_{\varepsilon^O, \varepsilon^E}\left(\hat{\mathcal{T}}^O_L \cup \mathcal{T}^{O}_{K} \right) - S_{\varepsilon^O, \varepsilon^E}\left(\mathcal{T}^{O}_{K}\right)\right\vert\\
& + 2\left\vert S_{\varepsilon^O, \varepsilon^E}\left(\mathcal{T}^{O}_{K}\right) - S_{\varepsilon^O, \varepsilon^E}\left(\hat{\mathcal{T}}^O_K \cup \mathcal{T}^{O}_{K} \right)\right\vert
 + 2\left\vert S_{\varepsilon^O, \varepsilon^E}\left(\hat{\mathcal{T}}^O_K \cup \mathcal{T}^{O}_{K} \right) - S_{\varepsilon^O, \varepsilon^E}\left( \hat{\mathcal{T}}^O_K \right)\right\vert.
\end{align*}
Moreover, it follows from the Cauchy--Schwarz inequality that
\begin{align*}
& \left\vert S_{\varepsilon^O, \varepsilon^E}\left(\hat{\mathcal{T}}^O_L \cup \mathcal{T}^{O}_{K} \right) - S_{\varepsilon^O, \varepsilon^E}\left(\mathcal{T}^{O}_{K}\right)\right\vert\\
\leq & \left(S_{\varepsilon^O}\left(\mathcal{T}^{O}_{K}\right) - S_{\varepsilon^O}\left(\hat{\mathcal{T}}^O_L \cup \mathcal{T}^{O}_{K} \right) \right)^{1/2}
\left(S_{\varepsilon^E}\left(\mathcal{T}^{O}_{K}\right) - S_{\varepsilon^E}\left(\hat{\mathcal{T}}^O_L \cup \mathcal{T}^{O}_{K} \right) \right)^{1/2}
\end{align*}
Thus, it follows from Lemmas~\ref{lemma:differenceCostsMixedTerm}, \ref{lemma:differenceCostsNestedChangePointSets},~\ref{lemma:boundsForEpsilonE}~and~\ref{lemma:boundsForEpsilonO} that
\begin{equation}\label{eq:differenceCVL>KfourthTerm}
\begin{split}
& \min_{L = K + 1,\ldots,K_{\max}} \left\vert S_{\varepsilon^O, \varepsilon^E}\left( \hat{\mathcal{T}}^O_L \right) - S_{\varepsilon^O, \varepsilon^E}\left( \hat{\mathcal{T}}^O_K \right)\right\vert\\
= & o_{\Pj}\left( \min_{L = K + 1,\ldots,K_{\max}} \left\{ S_{\varepsilon^O}\left(\mathcal{T}^{O}_{K}\right) - S_{\varepsilon^O}\left(\hat{\mathcal{T}}^O_L \cup \mathcal{T}^{O}_{K} \right) \right\} \right) + o_{\Pj}(\overline{\sigma}^2 \log\log\overline{\lambda}).
\end{split}
\end{equation}

Thus, by combining \eqref{eq:differenceCVL>KfirstTerm}--\eqref{eq:differenceCVL>KfourthTerm} we have that
\begin{align*}
& \min_{L = K + 1,\ldots,K_{\max}} \left\{ \widetilde{\operatorname{CV}}^O_{\mathrm{mod}}(L) - \widetilde{\operatorname{CV}}^O_{\mathrm{mod}}(K) \right\}\\
= & \min_{L = K + 1,\ldots,K_{\max}} \left\{ S_{\varepsilon^O}\left(\mathcal{T}^{O}_{K}\right) - S_{\varepsilon^O}\left(\hat{\mathcal{T}}^O_L \cup \mathcal{T}^{O}_{K} \right) \right\}\\
& + o_{\Pj}\left( \min_{L = K + 1,\ldots,K_{\max}} \left\{ S_{\varepsilon^O}\left(\mathcal{T}^{O}_{K}\right) - S_{\varepsilon^O}\left(\hat{\mathcal{T}}^O_L \cup \mathcal{T}^{O}_{K} \right) \right\} \right) + o_{\Pj}(\overline{\sigma}^2 \log\log\overline{\lambda}).
\end{align*}
Hence, \ref{item:differenceCVL>K} follows from Assumption~\ref{assumption:overestimation}.
\end{proof}

\begin{Lemma}\label{lemma:preliminaryCalculationsDifferenceCVs}
Suppose that Assumptions~\ref{assumption:cpNumberMultivariate}--\ref{assumption:minimumSignalMultivariate} hold in the case where $K\geq 1$ eventually.
%, or only Assumptions~\ref{assumption:cpNumberMultivariate}, \ref{assumption:NoiseBernstein}~and~\ref{assumption:overestimation} in the case $K = 0\ \forall\ n$. 
Let
\begin{align*}
B_{1, L, l}^{(n)} := & \left\vert \overline{\mu}^E_{\hat{\tau}^O_{L,l} :\hat{\tau}^O_{L,l + 1} - 1} - \overline{\mu}^O_{\hat{\tau}^O_{L,l} :\hat{\tau}^O_{L,l + 1}}\right\vert,\\
B_{2, L, l}^{(n)} := & \sum_{i = \hat{\tau}^O_{L,l} + 1}^{\hat{\tau}^O_{L,l + 1} - 1}{ \left\vert\mu^E_i - \mu^O_i\right\vert},\\
B_{3, L, l}^{(n)} := & \max_{i = \hat{\tau}^O_{L,l} + 1,\ldots,\hat{\tau}^O_{L,l + 1} - 1}{ \left\vert\mu^E_i - \overline{\mu}^E_{\hat{\tau}^O_{L,l} :\hat{\tau}^O_{L,l + 1} - 1}\right\vert},\\
B_{4, L, l}^{(n)} := & \max_{i = \hat{\tau}^O_{L,l} + 1,\ldots,\hat{\tau}^O_{L,l + 1}}{ \left\vert \mu^O_i - \overline{\mu}^O_{\hat{\tau}^O_{L,l} :\hat{\tau}^O_{L,l + 1}} \right\vert },
\end{align*}
and
\begin{align*}
B_{L}^{(n)} := \sum_{l = 0}^L \max\left(B_{1, L, l}^{(n)}, B_{2, L, l}^{(n)}, B_{3, L, l}^{(n)}, B_{4, L, l}^{(n)}\right)^2.
\end{align*}
Then we have the following.
\begin{enumerate}[label=(\roman*)]
	 \item \begin{equation*}
	 	\max_{L = K + 1,\ldots,K_{\max}} \left(S_{\varepsilon^O}\left(\mathcal{T}^{O}_{K}\right) - S_{\varepsilon^O}\left(\hat{\mathcal{T}}^O_L\cup \mathcal{T}^{O}_{K}\right)\right)^{-1} \max\Big(B_K^{(n)},B_L^{(n)}\Big) = o_{\Pj}(1).
	 \end{equation*}
	 \item When $K \geq 1$, there exists a constant $A>0$ such that defining
	 \[
	 \overline{\mu}^O_{L, i} := \sum_{l = 0}^{L}{\EINS_{\{\hat{\tau}^O_{L,l} + 1 \leq i \leq \hat{\tau}^O_{L,l + 1}\}} \overline{\mu}^O_{\hat{\tau}^O_{L,l} :\hat{\tau}^O_{L,l + 1}}}
	 \] and
	 \begin{equation*}
	 \mathcal{I}_L := \Bigg\{ k : \sum_{i = \tau_{k}^O - \floor{\frac{\underline{\lambda}}{4}} + 1}^{\tau_{k}^O + \floor{\frac{\underline{\lambda}}{4}}}{\big(\mu^O_i - \overline{\mu}^O_{L, i}\big)^2} \geq  A \underline{\lambda}\Delta_k^2  \Bigg\},
	 \end{equation*}
	 we have that $\mathcal{I}_L \neq \emptyset$ for all $L \leq K-1$ and all $n$, and
%	 \Rajen{we have $\Pj(\forall L \leq K-1, \; \mathcal{I}_L \text{ is non-empty}) \to 1$, where }
%	 is non-empty and
	 \begin{equation*}
	 	\max_{L = 0,\ldots,K - 1} \Bigg(\sum_{k\in \mathcal{I}_L}{\Delta_k^2}\Bigg)^{-1} \max\Big(B_K^{(n)},B_L^{(n)}\Big) = o_\Pj\left(\underline{\lambda}\right).
	 \end{equation*}
\end{enumerate}
%and there exists a constant $A>0$ such that
%\begin{equation*}
%\max_{L = 0,\ldots,K - 1} \left(\sum_{k\in \mathcal{I}_L}{\Delta_k^2}\right)^{-1} \max\Big(B_K^{(n)},B_L^{(n)}\Big) = o_\Pj\left(\underline{\lambda}\right),
%\end{equation*}
%where
%\[
%\mathcal{I}_L := \left\{ k : \sum_{i = \tau_{k}^O - \frac{\underline{\lambda}}{4} + 1}^{\tau_{k}^O + \frac{\underline{\lambda}}{4}}{(\mu^O_i - \overline{\mu}^O_{L, i})^2} \geq  A \underline{\lambda}\Delta_k^2  \right\},
%\]
%and $\overline{\mu}^O_{L, i} := \sum_{l = 0}^{L}{\EINS_{\{\hat{\tau}^O_{L,l} + 1 \leq i \leq \hat{\tau}^O_{L,l + 1}\}} \overline{\mu}^O_{\hat{\tau}^O_{L,l} :\hat{\tau}^O_{L,l + 1}}}$.
\end{Lemma}
\begin{proof}
	First note that if $K= 0$ for all $n$, then $B^{(n)}_L = 0$ for all $L$ and so the claims hold trivially. Now suppose $K \geq 1 $ eventually and let the event $\Omega_n$ be the intersection of the events in \ref{assumption:detectionPrecision:L>=K} and \ref{assumption:detectionPrecision:L<K} of Assumption~\ref{assumption:detectionPrecision}. We note that $\Omega_n$ must have probability converging to $1$. In the following we work on $\Omega_n$.

\subsubsection*{Proof of (i).}
Let us fix $L \geq K$. For each $l = 0,\ldots,L$, let $k = k(l) \in \{1,\ldots,K+1\}$ be such that $\tau_{k - 1}^O \leq \hat{\tau}^O_{L,l} < \tau_{k}^O$. 
Further, when $k \neq K+1$, due to the fact that we are on $\Omega_n$, we have that there exists some $\tilde{l}$ such that 
$\tau_{k}^O \leq \hat{\tau}^O_{L,\tilde{l}} \leq \tau_{k+2}^O$ as $ \min(\tau_{k+2}^O - \tau_{k+1}^O, \tau_{k+1}^O  - \tau_{k}^O) \geq \underline{\lambda}/2-1$.

%The other way around, for each $\tilde{k}$ there exists an $\tilde{l}$ such that 
%$\tau_{\tilde{k} - 1}^O < \hat{\tau}^O_{L,\tilde{l}} < \tau_{\tilde{k} + 1}^O$. In other words, there must be an estimated change-point close to $\tau_{\tilde{k} + 1}^O$. 
Therefore, exactly one of the following scenarios must occur
\begin{enumerate}[label=(\arabic*)]
\item $\tau_{k - 1}^O \leq \hat{\tau}^O_{L,l} < \hat{\tau}^O_{L,l + 1} \leq \tau_{k}^O$,
\item $\tau_{k - 1}^O \leq \hat{\tau}^O_{L,l} < \tau_{k}^O < \hat{\tau}^O_{L,l + 1} \leq \tau_{k + 1}^O$,
\item $\tau_{k - 1}^O \leq \hat{\tau}^O_{L,l} < \tau_{k}^O < \tau_{k + 1}^O < \hat{\tau}^O_{L,l + 1} \leq \tau_{k + 2}^O$.
\end{enumerate}
We observe that $\mu^E_i - \mu^O_i = \mu_{2i} - \mu_{2i - 1} = \beta_k - \beta_{k - 1}$ if $\tau_k = 2 i - 1$ and zero if such a $k$ does not exist. Hence, $\vert \mu^E_i - \mu^O_i \vert \leq \Delta_k$ if $\tau_{k}^O = i$ and zero if such a $k$ does not exist. Therefore,

\noindent if (1) occurs, then
\begin{equation*}
\max\left(B_{1, L, l}^{(n)}, B_{2, L, l}^{(n)}, B_{3, L, l}^{(n)}, B_{4, L, l}^{(n)}\right) = 0;
\end{equation*}
if (2) occurs, then
\begin{equation*}
\max\left(B_{1, L, l}^{(n)}, B_{2, L, l}^{(n)}, B_{3, L, l}^{(n)}, B_{4, L, l}^{(n)}\right) \leq \Delta_{k(l)};
\end{equation*}
if (3) occurs, then
\begin{equation*}
\max\left(B_{1, L, l}^{(n)}, B_{2, L, l}^{(n)}, B_{3, L, l}^{(n)}, B_{4, L, l}^{(n)}\right) \leq \Delta_{k(l)} + \Delta_{k(l) + 1}.
\end{equation*}
Also, because of Assumption~\ref{assumption:detectionPrecision}\ref{assumption:detectionPrecision:Infq>0}, if $\overline{\sigma}^2 \log\log \overline{\lambda} / (K \Delta_{k(l)}^2) \leq C_n$, then there exists an $\tilde{l}$ such that $\tau_{k(l)}^O = \hat{\tau}^O_{L,\tilde{l}}$. Consequently, scenario (1) must occur. From the same reasoning, we have that scenario (3) cannot occur if $\overline{\sigma}^2 \log\log \overline{\lambda} / (K \Delta_{k(l) + 1}^2) \leq C_n$. Moreover, by the structure of the scenarios, if $k(l_1) = k(l_2)$ for any $l_1 \neq l_2$, then scenario (1) must occur for $l_1$ or $l_2$ (or for both). Thus, %almost surely,
\begin{equation}\label{eq:preliminaryCalculationForK}
\max_{L = K, \ldots, K_{\max}} B_{L}^{(n)} \EINS_{\Omega_n} \leq 4 K \ \left(C_n^{-1}\overline{\sigma}^2 \log\log \overline{\lambda} / K\right) = o\left(\overline{\sigma}^2\log\log\overline{\lambda}\right).
\end{equation}
Thus we obtain from Assumption~\ref{assumption:overestimation} and $\Pj(\Omega_n) \to 1$ that
\begin{equation*}
\max_{L = K + 1,\ldots,K_{\max}} \frac{\max\Big(B_L^{(n)},B_K^{(n)}\Big)}{S_{\varepsilon^O}\left(\mathcal{T}^{O}_{K}\right) - S_{\varepsilon^O}\left(\hat{\mathcal{T}}^O_L\cup \mathcal{T}^{O}_{K}\right)}
= o_\Pj(1)
\end{equation*}
as required.
%Note that this also holds when we replace the maximum by $L = K$.

\subsubsection*{Proof of (ii).} Let us set $A = \min(1/200, C)$. Fix $L < K$ and $k \in \{1,\ldots,K\}$. Note that
\[
\min_{c \in \R} \sum_{i = \tau_{k}^O - \floor{\frac{\underline{\lambda}}{4}} + 1}^{\tau_{k}^O + \floor{\frac{\underline{\lambda}}{4}}}{(\mu^O_i - c)^2} \geq  A \underline{\lambda}\Delta_k^2.
\]
Consequently, since $L < K$, it follows that $\mathcal{I}_L$ is non-empty. Furthermore, note that Assumption~\ref{assumption:detectionPrecision}\ref{assumption:detectionPrecision:L<K} is also true when $C$ is replaced by $A$.

Let $0 = \hat{k}_{0} \leq \cdots \leq \hat{k}_{L + 1} = K + 1$ be such that $\tau_{\hat{k}_{l}}^O \leq \hat{\tau}^O_{L,l} < \tau_{\hat{k}_{l} + 1}^O$. Then for each $l = 0,\ldots, L$ we have that $\tau_{\hat{k}_{l}}^O \leq \hat{\tau}^O_{L,l} < \hat{\tau}^O_{L,l + 1} < \tau_{\hat{k}_{l + 1} + 1}^O$. Recall that $\vert \mu^E_i - \mu^O_i \vert \leq \Delta_k$ if $\tau_{k}^O = i$ and zero if such a $k$ does not exist. Therefore,
\begin{equation*}
\max\left(B_{1, L, l}^{(n)}, B_{2, L, l}^{(n)}, B_{3, L, l}^{(n)}, B_{4, L, l}^{(n)}\right) \leq \sum_{k = \hat{k}_l + 1}^{\hat{k}_{l + 1}} \Delta_k.
\end{equation*}
Recall that we are working on $\Omega_n$. Due to Assumption~\ref{assumption:detectionPrecision}\ref{assumption:detectionPrecision:L<K}, for each $k \not\in \mathcal{I}_L$ we have that either $k \notin \mathcal{K}_n$ so $\Delta_k^2 < \overline{\sigma}^2 \log\log \overline{\lambda} / (K C_n)$ or there exists an $l = 1,\ldots,L$ such that $\tau_{k}^O = \hat{\tau}^O_{L,l}$. Hence,
\begin{equation*}
\max\left(B_{1, L, l}^{(n)}, B_{2, L, l}^{(n)}, B_{3, L, l}^{(n)}, B_{4, L, l}^{(n)}\right) \leq \sum_{\substack{k = \hat{k}_l + 1, \ldots,\hat{k}_{l + 1},\\ k \in \mathcal{I}_L}}{\Delta_k} + \big(\hat{k}_{l + 1} - \hat{k}_l\big) \sqrt{\overline{\sigma}^2 \log\log \overline{\lambda} / (K C_n)}.
\end{equation*}
Consequently, by  using $(\sum_{i = 1}^{n} a_i)^2 \leq \sum_{i = 1}^{n} n a_i^2$ for any $a_1,\ldots,a_n \in \R$,%  $(x+y)^2\leq 2x^2 + 2y^2$ multiple times
we obtain
\begin{equation*}
\begin{split}
B_{L}^{(n)}
\leq & \sum_{l = 0}^{L}{\Bigg(\sum_{\substack{k = \hat{k}_l + 1, \ldots,\hat{k}_{l + 1},\\ k \in \mathcal{I}_L}}{\Delta_k} + \big(\hat{k}_{l + 1} - \hat{k}_l\big) \sqrt{\overline{\sigma}^2 \log\log \overline{\lambda} / (K C_n)}\Bigg)^2}\\
\leq & 2\sum_{l = 0}^{L}{\Bigg(\sum_{\substack{k = \hat{k}_l + 1, \ldots,\hat{k}_{l + 1},\\ k \in \mathcal{I}_L}}{\Delta_k}\Bigg)^2} + \sum_{l = 0}^{L}{2 \big(\hat{k}_{l + 1} - \hat{k}_l\big)^2 \overline{\sigma}^2 \log\log \overline{\lambda} / (K C_n)}\\
\leq & 2 K  \sum_{k \in \mathcal{I}_L}{\Delta_k^2} + 2 K \overline{\sigma}^2 \log\log \overline{\lambda}.
\end{split}
\end{equation*}
Recall that $\mathcal{I}_L$ is non-empty. Hence, $\sum_{k \in \mathcal{I}_L}{\Delta_k^2} \geq   \Delta_{(1)}^2$. 
Thus, Assumption~\ref{assumption:minimumSignalMultivariate} yields
\[K \overline{\sigma}^2 \log\log \overline{\lambda} \leq K \overline{\sigma}^2 (\log\overline{\lambda})^2  = o_\Pj\bigg(\min_{L = 0,\ldots,K - 1} \underline{\lambda}  \sum_{k \in \mathcal{I}_L}{\Delta_k^2}\bigg)\] 
and hence because of Assumption~\ref{assumption:cpNumberMultivariate}\ref{assumption:cpNumberBound},
\begin{equation*}
\max_{L = 0,\ldots,K - 1}\Bigg(\sum_{k \in \mathcal{I}_L}{\Delta_k^2}\Bigg)^{-1} B_{L}^{(n)} = o_\Pj\left(\underline{\lambda}\right).
\end{equation*}
From \eqref{eq:preliminaryCalculationForK} and Assumption~\ref{assumption:minimumSignalMultivariate} we also obtain that 
\begin{equation*}
\max_{L = 0,\ldots,K - 1}\Bigg(\sum_{k \in \mathcal{I}_L}{\Delta_k^2}\Bigg)^{-1} \max\Big(B_L^{(n)},B_L^{(n)}\Big) = o_\Pj\left(\underline{\lambda}\right). \qedhere
\end{equation*}
\end{proof}

\begin{Lemma}\label{lemma:differenceCVs}
Recall $\operatorname{CV}^O_{\mathrm{mod}}(L)$ and $\widetilde{\operatorname{CV}}^O_{\mathrm{mod}}(L)$ from \eqref{eq:CVmodO} and \eqref{eq:simplifiedCVterm}, respectively. Suppose that Assumptions~\ref{assumption:cpNumberMultivariate}--\ref{assumption:minimumSignalMultivariate} hold in the case where $K\geq 1$ eventually, or only Assumptions~\ref{assumption:cpNumberMultivariate}, \ref{assumption:NoiseBernstein}~and~\ref{assumption:overestimation} in the case $K = 0\ \forall\ n$. Let
\begin{equation*}
A_L^{(n)} := \left\vert \operatorname{CV}^O_{\mathrm{mod}}(L) - \widetilde{\operatorname{CV}}^O_{\mathrm{mod}}(L) \right\vert.
\end{equation*}
Then,
\begin{equation*}
\max_{L = K + 1,\ldots,K_{\max}} \left(S_{\varepsilon^O}\left(\mathcal{T}^{O}_{K}\right) - S_{\varepsilon^O}\left(\hat{\mathcal{T}}^O_L\cup \mathcal{T}^{O}_{K}\right)\right)^{-1} \max\big(A_L^{(n)},\, A_K^{(n)}\big) = o_{\Pj}(1).
\end{equation*}
Furthermore, there exists a constant $A>0$ such that writing
\[
\mathcal{I}_L := \Bigg\{ k : \sum_{i = \tau_{k}^O - \floor{\frac{\underline{\lambda}}{4}} + 1}^{\tau_{k}^O + \floor{\frac{\underline{\lambda}}{4}}}{\big(\mu^O_i - \overline{\mu}^O_{L, i}\big)^2} \geq  A \underline{\lambda}\Delta_k^2  \Bigg\},
\]
where
$\overline{\mu}^O_{L, i} := \sum_{l = 0}^{L}{\EINS_{\{\hat{\tau}^O_{L,l} + 1 \leq i \leq \hat{\tau}^O_{L,l + 1}\}} \overline{\mu}^O_{\hat{\tau}^O_{L,l} :\hat{\tau}^O_{L,l + 1}}}$, we have that $\mathcal{I}_L \neq \emptyset$ for all $L \leq K-1$ and all $n$, and
\begin{equation} \label{eq:A_L_0K_bd}
\max_{L = 0\ldots,K - 1} \Bigg(\sum_{k\in \mathcal{I}_L}{\Delta_k^2}\Bigg)^{-1} \max\big(A_L^{(n)},\, A_K^{(n)}\big) = o_\Pj\left(\underline{\lambda}\right).
\end{equation}
%where
%\[
%\mathcal{I}_L := \left\{ k : \sum_{i = \tau_{k}^O - \floor{\frac{\underline{\lambda}}{4}} + 1}^{\tau_{k}^O + \floor{\frac{\underline{\lambda}}{4}}}{(\mu^O_i - \overline{\mu}^O_{L, i})^2} \geq  A \underline{\lambda}\Delta_k^2  \right\},
%\]
%$\overline{\mu}^O_{L, i} := \sum_{l = 0}^{L}{\EINS_{\{\hat{\tau}^O_{L,l} + 1 \leq i \leq \hat{\tau}^O_{L,l + 1}\}} \overline{\mu}^O_{\hat{\tau}^O_{L,l} :\hat{\tau}^O_{L,l + 1}}}$, and \Rajen{for a constant} $A>0$ is chosen such that $\mathcal{I}_L$ is non-empty.
\end{Lemma}
\begin{proof}
We will split $A_L^{(n)}$ into a sum of five terms and show the statement for each of them separately, thus giving the result. Note that the existence of a constant $A>0$ such that $\mathcal{I}_L$ is non-empty is guaranteed by Lemma~\ref{lemma:preliminaryCalculationsDifferenceCVs}.

\subsubsection*{Splitting $A_L^{(n)}$}
We have
\begin{equation*}
\begin{split}
A_L^{(n)} \leq & \left\vert \sum_{l = 0}^{L}\sum_{i = \hat{\tau}^O_{L,l} + 1}^{\hat{\tau}^O_{L,l + 1} - 1}{ \frac{\hat{n}^O_l}{\hat{n}^O_l - 1} \left(\varepsilon^E_i - \overline{\varepsilon}^O_{\hat{\tau}^O_{L,l} :\hat{\tau}^O_{L,l + 1}}\right)^2}
- \sum_{l = 0}^{L}\sum_{i = \hat{\tau}^O_{L,l} + 1}^{\hat{\tau}^O_{L,l + 1}}{\left(\varepsilon^E_i - \overline{\varepsilon}^O_{\hat{\tau}^O_{L,l} :\hat{\tau}^O_{L,l + 1}}\right)^2} \right\vert\\
& + \left\vert \sum_{l = 0}^{L}\sum_{i = \hat{\tau}^O_{L,l} + 1}^{\hat{\tau}^O_{L,l + 1} - 1}{ \frac{\hat{n}^O_l}{\hat{n}^O_l - 1} \left[   2 \left(\varepsilon^E_i - \overline{\varepsilon}^O_{\hat{\tau}^O_{L,l} :\hat{\tau}^O_{L,l + 1}}\right) \left(\mu^E_i - \overline{\mu}^O_{\hat{\tau}^O_{L,l} :\hat{\tau}^O_{L,l + 1}}\right) + \left(\mu^E_i - \overline{\mu}^O_{\hat{\tau}^O_{L,l} :\hat{\tau}^O_{L,l + 1}}\right)^2 \right] }\right.\\
& - \left.\sum_{l = 0}^{L}\sum_{i = \hat{\tau}^O_{L,l} + 1}^{\hat{\tau}^O_{L,l + 1}}{ \left[   2 \left(\varepsilon^E_i - \overline{\varepsilon}^O_{\hat{\tau}^O_{L,l} :\hat{\tau}^O_{L,l + 1}}\right) \left(\mu^O_i - \overline{\mu}^O_{\hat{\tau}^O_{L,l} :\hat{\tau}^O_{L,l + 1}}\right) + \left(\mu^O_i - \overline{\mu}^O_{\hat{\tau}^O_{L,l} :\hat{\tau}^O_{L,l + 1}}\right)^2 \right] } \right\vert.
\end{split}
\end{equation*}
Also,
\begin{equation*}
\begin{split}
&\sum_{l = 0}^{L}\sum_{i = \hat{\tau}^O_{L,l} + 1}^{\hat{\tau}^O_{L,l + 1} - 1}{ \frac{\hat{n}^O_l}{\hat{n}^O_l - 1} \left[   2 \left(\varepsilon^E_i - \overline{\varepsilon}^O_{\hat{\tau}^O_{L,l} :\hat{\tau}^O_{L,l + 1}}\right) \left(\mu^E_i - \overline{\mu}^O_{\hat{\tau}^O_{L,l} :\hat{\tau}^O_{L,l + 1}}\right) + \left(\mu^E_i - \overline{\mu}^O_{\hat{\tau}^O_{L,l} :\hat{\tau}^O_{L,l + 1}}\right)^2 \right] }\\
=& \sum_{l = 0}^{L}\sum_{i = \hat{\tau}^O_{L,l} + 1}^{\hat{\tau}^O_{L,l + 1} - 1}{ \frac{\hat{n}^O_l}{\hat{n}^O_l - 1} \left[   2 \left(\varepsilon^E_i - \overline{\varepsilon}^O_{\hat{\tau}^O_{L,l} :\hat{\tau}^O_{L,l + 1}}\right) \left(\mu^E_i - \overline{\mu}^E_{\hat{\tau}^O_{L,l} :\hat{\tau}^O_{L,l + 1} - 1}\right) + \left(\mu^E_i - \overline{\mu}^E_{\hat{\tau}^O_{L,l} :\hat{\tau}^O_{L,l + 1} - 1}\right)^2 \right] }\\
&+ 2\sum_{l = 0}^{L}\sum_{i = \hat{\tau}^O_{L,l} + 1}^{\hat{\tau}^O_{L,l + 1} - 1}{ \frac{\hat{n}^O_l}{\hat{n}^O_l - 1}  \left(\varepsilon^E_i - \overline{\varepsilon}^O_{\hat{\tau}^O_{L,l} :\hat{\tau}^O_{L,l + 1}}\right) \left(\overline{\mu}^E_{\hat{\tau}^O_{L,l} :\hat{\tau}^O_{L,l + 1} - 1} - \overline{\mu}^O_{\hat{\tau}^O_{L,l} :\hat{\tau}^O_{L,l + 1}}\right)}\\
& + \sum_{l = 0}^{L}\sum_{i = \hat{\tau}^O_{L,l} + 1}^{\hat{\tau}^O_{L,l + 1} - 1}{ \frac{\hat{n}^O_l}{\hat{n}^O_l - 1}\left(\overline{\mu}^E_{\hat{\tau}^O_{L,l} :\hat{\tau}^O_{L,l + 1} - 1} - \overline{\mu}^O_{\hat{\tau}^O_{L,l} :\hat{\tau}^O_{L,l + 1}} \right)^2  }\\
=& \sum_{l = 0}^{L}\sum_{i = \hat{\tau}^O_{L,l} + 1}^{\hat{\tau}^O_{L,l + 1} - 1}{ \frac{\hat{n}^O_l}{\hat{n}^O_l - 1} \left[   2 \left(\varepsilon^E_i - \overline{\varepsilon}^E_{\hat{\tau}^O_{L,l} :\hat{\tau}^O_{L,l + 1} - 1}\right) \left(\mu^E_i - \overline{\mu}^E_{\hat{\tau}^O_{L,l} :\hat{\tau}^O_{L,l + 1} - 1}\right) + \left(\mu^E_i - \overline{\mu}^E_{\hat{\tau}^O_{L,l} :\hat{\tau}^O_{L,l + 1} - 1}\right)^2 \right] }\\
&+ 2\sum_{l = 0}^{L}\sum_{i = \hat{\tau}^O_{L,l} + 1}^{\hat{\tau}^O_{L,l + 1} - 1}{ \frac{\hat{n}^O_l}{\hat{n}^O_l - 1}  \left(\varepsilon^E_i - \overline{\varepsilon}^O_{\hat{\tau}^O_{L,l} :\hat{\tau}^O_{L,l + 1}}\right) \left(\overline{\mu}^E_{\hat{\tau}^O_{L,l} :\hat{\tau}^O_{L,l + 1} - 1} - \overline{\mu}^O_{\hat{\tau}^O_{L,l} :\hat{\tau}^O_{L,l + 1}}\right)}\\
& + \sum_{l = 0}^{L} \hat{n}^O_l \left(\overline{\mu}^E_{\hat{\tau}^O_{L,l} :\hat{\tau}^O_{L,l + 1} - 1} - \overline{\mu}^O_{\hat{\tau}^O_{L,l} :\hat{\tau}^O_{L,l + 1}} \right)^2.
\end{split}
\end{equation*}
To see the last equation, note that
\begin{equation*}
\sum_{i = \hat{\tau}^O_{L,l} + 1}^{\hat{\tau}^O_{L,l + 1} - 1}{\overline{\varepsilon}^O_{\hat{\tau}^O_{L,l} :\hat{\tau}^O_{L,l + 1}} \left(\mu^E_i - \overline{\mu}^E_{\hat{\tau}^O_{L,l} :\hat{\tau}^O_{L,l + 1} - 1}\right)} = 0 = \sum_{i = \hat{\tau}^O_{L,l} + 1}^{\hat{\tau}^O_{L,l + 1} - 1}{\overline{\varepsilon}^E_{\hat{\tau}^O_{L,l} :\hat{\tau}^O_{L,l + 1} - 1} \left(\mu^E_i - \overline{\mu}^E_{\hat{\tau}^O_{L,l} :\hat{\tau}^O_{L,l + 1} - 1}\right)}.
\end{equation*}

Hence, 
\begin{equation*}
\begin{split}
A_L^{(n)}
\leq & \left\vert \sum_{l = 0}^{L}\sum_{i = \hat{\tau}^O_{L,l} + 1}^{\hat{\tau}^O_{L,l + 1} - 1}{ \frac{\hat{n}^O_l}{\hat{n}^O_l - 1} \left(\varepsilon^E_i - \overline{\varepsilon}^O_{\hat{\tau}^O_{L,l} :\hat{\tau}^O_{L,l + 1}}\right)^2}
- \sum_{l = 0}^{L}\sum_{i = \hat{\tau}^O_{L,l} + 1}^{\hat{\tau}^O_{L,l + 1}}{\left(\varepsilon^E_i - \overline{\varepsilon}^O_{\hat{\tau}^O_{L,l} :\hat{\tau}^O_{L,l + 1}}\right)^2} \right\vert\\
& + \left\vert\sum_{l = 0}^{L}\sum_{i = \hat{\tau}^O_{L,l} + 1}^{\hat{\tau}^O_{L,l + 1} - 1}{ \frac{\hat{n}^O_l}{\hat{n}^O_l - 1}\left(\mu^E_i - \overline{\mu}^E_{\hat{\tau}^O_{L,l} :\hat{\tau}^O_{L,l + 1} - 1}\right)^2 } 
- \sum_{l = 0}^{L}\sum_{i = \hat{\tau}^O_{L,l} + 1}^{\hat{\tau}^O_{L,l + 1}}{\left(\mu^O_i - \overline{\mu}^O_{\hat{\tau}^O_{L,l} :\hat{\tau}^O_{L,l + 1}}\right)^2 }\right\vert\\
& + 2\left\vert\sum_{l = 0}^{L}\sum_{i = \hat{\tau}^O_{L,l} + 1}^{\hat{\tau}^O_{L,l + 1} - 1}{ \frac{\hat{n}^O_l}{\hat{n}^O_l - 1} \left(\varepsilon^E_i - \overline{\varepsilon}^E_{\hat{\tau}^O_{L,l} :\hat{\tau}^O_{L,l + 1} - 1}\right) \left(\mu^E_i - \overline{\mu}^E_{\hat{\tau}^O_{L,l} :\hat{\tau}^O_{L,l + 1} - 1}\right)}\right.\\
& - \left.\sum_{l = 0}^{L}\sum_{i = \hat{\tau}^O_{L,l} + 1}^{\hat{\tau}^O_{L,l + 1}}{ \left(\varepsilon^E_i - \overline{\varepsilon}^E_{\hat{\tau}^O_{L,l} :\hat{\tau}^O_{L,l + 1}}\right) \left(\mu^O_i - \overline{\mu}^O_{\hat{\tau}^O_{L,l} :\hat{\tau}^O_{L,l + 1}}\right)} \right\vert\\
& + 2\left\vert \sum_{l = 0}^{L}\sum_{i = \hat{\tau}^O_{L,l} + 1}^{\hat{\tau}^O_{L,l + 1} - 1}{ \frac{\hat{n}^O_l}{\hat{n}^O_l - 1}  \left(\varepsilon^E_i - \overline{\varepsilon}^O_{\hat{\tau}^O_{L,l} :\hat{\tau}^O_{L,l + 1}}\right) \left(\overline{\mu}^E_{\hat{\tau}^O_{L,l} :\hat{\tau}^O_{L,l + 1} - 1} - \overline{\mu}^O_{\hat{\tau}^O_{L,l} :\hat{\tau}^O_{L,l + 1}}\right)}\right\vert\\
& + \left\vert\sum_{l = 0}^{L} \hat{n}^O_l \left(\overline{\mu}^E_{\hat{\tau}^O_{L,l} :\hat{\tau}^O_{L,l + 1} - 1} - \overline{\mu}^O_{\hat{\tau}^O_{L,l} :\hat{\tau}^O_{L,l + 1}} \right)^2\right\vert\\
=: & A_{1, L}^{(n)} + A_{2, L}^{(n)} + 2A_{3, L}^{(n)} + 2A_{4, L}^{(n)} + A_{5, L}^{(n)},
\end{split}
\end{equation*}
where we have used
\begin{equation*}
\sum_{i = \hat{\tau}^O_{L,l} + 1}^{\hat{\tau}^O_{L,l + 1} - 1}{\overline{\varepsilon}^E_{\hat{\tau}^O_{L,l} :\hat{\tau}^O_{L,l + 1}} \left(\mu^O_i - \overline{\mu}^O_{\hat{\tau}^O_{L,l} :\hat{\tau}^O_{L,l + 1}}\right)} = 0 = \sum_{i = \hat{\tau}^O_{L,l} + 1}^{\hat{\tau}^O_{L,l + 1}}{\overline{\varepsilon}^O_{\hat{\tau}^O_{L,l} :\hat{\tau}^O_{L,l + 1}} \left(\mu^O_i - \overline{\mu}^O_{\hat{\tau}^O_{L,l} :\hat{\tau}^O_{L,l + 1}}\right)}.
\end{equation*}

We will bound each of the five terms in the following.

\subsubsection*{Bounding $A_{1, L}^{(n)}$}

\begin{equation}\label{eq:splitFirstTerm}
\begin{split}
A_{1, L}^{(n)} \leq & \left\vert\sum_{l = 0}^{L} \left\{{\frac{1}{\hat{n}^O_l - 1}\sum_{i = \hat{\tau}^O_{L,l} + 1}^{\hat{\tau}^O_{L,l + 1} - 1}{\left(\varepsilon^E_i\right)^2} - \left(\varepsilon^E_{\hat{\tau}^O_{L,l + 1}}\right)^2} \right\}\right\vert\\
& + 2\left\vert\sum_{l = 0}^{L}{\overline{\varepsilon}^O_{\hat{\tau}^O_{L,l} :\hat{\tau}^O_{L,l + 1}} \left(   \frac{1}{\hat{n}^O_l - 1}\sum_{i = \hat{\tau}^O_{L,l} + 1}^{\hat{\tau}^O_{L,l + 1} - 1}{\varepsilon^E_i} - \varepsilon^E_{\hat{\tau}^O_{L,l + 1}}\right)}\right\vert =: A_{11, L}^{(n)} + 2 A_{12, L}^{(n)}.
\end{split}
\end{equation}

We will now work on both terms separately and then bound the sum of them at the end.

\subsubsection*{Bounding $A_{11, L}^{(n)}$}
Let $\mathcal{E}^O$ be the sigma algebra generated by $\varepsilon^O_1,\ldots,\varepsilon^O_{ n / 2 }$. Then,
\begin{equation}\label{eq:splitFirstTermFirstTerm}
\begin{split}
A_{11, L}^{(n)}=& \left\vert\sum_{l = 0}^{L}{\left\{\frac{1}{\hat{n}^O_l - 1}\sum_{i = \hat{\tau}^O_{L,l} + 1}^{\hat{\tau}^O_{L,l + 1} - 1}{\left(\varepsilon^E_i\right)^2} - \left(\varepsilon^E_{\hat{\tau}^O_{L,l + 1}}\right)^2\right\}}\right\vert \\
\leq &\left\vert\sum_{l = 0}^{L}{\left\{\frac{1}{\hat{n}^O_l - 1}\sum_{i = \hat{\tau}^O_{L,l} + 1}^{\hat{\tau}^O_{L,l + 1} - 1}{\left(\varepsilon^E_i\right)^2} - \E\left[\left.\frac{1}{\hat{n}^O_l - 1}\sum_{i = \hat{\tau}^O_{L,l} + 1}^{\hat{\tau}^O_{L,l + 1} - 1}{\left(\varepsilon^E_i\right)^2}\right\vert \mathcal{E}^O\right]\right\}}\right\vert\\
& + \left\vert\sum_{l = 0}^{L}{\left\{\left(\varepsilon^E_{\hat{\tau}^O_{L,l + 1}}\right)^2 - \E\left[\left.\left(\varepsilon^E_{\hat{\tau}^O_{L,l + 1}}\right)^2\right\vert \mathcal{E}^O\right]\right\}}\right\vert\\
& + \sum_{l = 0}^{L}{\left\vert \E\left[\left.\left(\varepsilon^E_{\hat{\tau}^O_{L,l + 1}}\right)^2\right\vert \mathcal{E}^O\right] - \E\left[\left.\frac{1}{\hat{n}^O_l - 1}\sum_{i = \hat{\tau}^O_{L,l} + 1}^{\hat{\tau}^O_{L,l + 1} - 1}{\left(\varepsilon^E_i\right)^2}\right\vert \mathcal{E}^O\right] \right\vert}.
\end{split}
\end{equation}

In the following we bound the r.h.s.\ in \eqref{eq:splitFirstTermFirstTerm}. It follows from Lemma~\ref{lemma:maxEpsilonEsquared} that
\begin{equation*}
\begin{split}
&\max_{L = 0,\ldots,K_{\max}}\left\vert\sum_{l = 0}^{L}{\left\{\frac{1}{\hat{n}^O_l - 1}\sum_{i = \hat{\tau}^O_{L,l} + 1}^{\hat{\tau}^O_{L,l + 1} - 1}{\left(\varepsilon^E_i\right)^2} - \E\left[\left.\frac{1}{\hat{n}^O_l - 1}\sum_{i = \hat{\tau}^O_{L,l} + 1}^{\hat{\tau}^O_{L,l + 1} - 1}{\left(\varepsilon^E_i\right)^2}\right\vert \mathcal{E}^O\right]\right\}}\right\vert\\
 = &\mathcal{O}_\Pj \left( \left(K_{\max} \log K_{\max}\right)^{1/2}\,\overline{\sigma}^2\right)
\end{split}
\end{equation*}
and
\begin{equation*}
\begin{split}
\max_{L = 0,\ldots,K_{\max}}\left\vert\sum_{l = 0}^{L}{\left\{\left(\varepsilon^E_{\hat{\tau}^O_{L,l + 1}}\right)^2 - \E\left[\left.\left(\varepsilon^E_{\hat{\tau}^O_{L,l + 1}}\right)^2\right\vert \mathcal{E}^O\right]\right\}}\right\vert = \mathcal{O}_\Pj \left( \left(K_{\max} \log K_{\max}\right)^{1/2}\,\overline{\sigma}^2\right).
\end{split}
\end{equation*}
Consider the last term on the r.h.s.\ in \eqref{eq:splitFirstTermFirstTerm}. Each conditional expectation is bounded by $\overline{\sigma}^2$ almost surely. Moreover, the conditional expectations can only differ if there is a true change-point between $\hat{\tau}^O_{L,l}$ and $\hat{\tau}^O_{L,l + 1}$ as observations on the same segment have the same variance-covariance matrix. Hence, at most $K$ terms can be non-zero. Thus almost surely
\begin{equation*}
\max_{L = 0,\ldots,K_{\max}}\sum_{l = 0}^{L}{\left\vert \E\left[\left.\left(\varepsilon^E_{\hat{\tau}^O_{L,l + 1}}\right)^2\right\vert \mathcal{E}^O\right] - \E\left[\left.\frac{1}{\hat{n}^O_l - 1}\sum_{i = \hat{\tau}^O_{L,l} + 1}^{\hat{\tau}^O_{L,l + 1} - 1}{\left(\varepsilon^E_i\right)^2}\right\vert \mathcal{E}^O\right] \right\vert}
\leq K \overline{\sigma}^2.
\end{equation*}

Hence,
\begin{equation*}
%\max_{L = 0,\ldots,K_{\max}}\left\vert\sum_{l = 0}^{L}{\left\{\frac{1}{\hat{n}^O_l - 1}\sum_{i = \hat{\tau}^O_{L,l} + 1}^{\hat{\tau}^O_{L,l + 1} - 1}{\left(\varepsilon^E_i\right)^2} - \left(\varepsilon^E_{\hat{\tau}^O_{L,l + 1}}\right)^2\right\}}\right\vert
\max_{L = 0,\ldots,K_{\max}} A_{11, L}^{(n)} = \mathcal{O}_\Pj \left(\left(K_{\max} \log K_{\max}\right)^{1/2}\,\overline{\sigma}^2\right).
\end{equation*}
Recall that we have assumed $\left(K_{\max} \log K_{\max}\right)^{1/2} = o(\log\log\overline{\lambda})$, see Assumption~\ref{assumption:cpNumberMultivariate}\ref{assumption:cpNumberKmaxBound}. Thus putting things together, we see that %the first term in \eqref{eq:splitFirstTerm} satisfies
\begin{equation}\label{eq:bd_A_11}
%\max_{L = 0,\ldots,K_{\max}}\left\vert\sum_{l = 0}^{L} \left\{{\frac{1}{\hat{n}^O_l - 1}\sum_{i = \hat{\tau}^O_{L,l} + 1}^{\hat{\tau}^O_{L,l + 1} - 1}{\left(\varepsilon^E_i\right)^2} - \left(\varepsilon^E_{\hat{\tau}^O_{L,l + 1}}\right)^2} \right\}\right\vert
\max_{L = 0,\ldots,K_{\max}} A_{11, L}^{(n)}= o_{\Pj}(\overline{\sigma}^2 \log\log \overline{\lambda}).
\end{equation}

\subsubsection*{Bounding $A_{12, L}^{(n)}$}
We now bound the second term in the r.h.s.\ of \eqref{eq:splitFirstTerm}:
\begin{align}
A_{12, L}^{(n)}=&\left\vert\sum_{l = 0}^{L}{\overline{\varepsilon}^O_{\hat{\tau}^O_{L,l} :\hat{\tau}^O_{L,l + 1}} \left(   \frac{1}{\hat{n}^O_l - 1}\sum_{i = \hat{\tau}^O_{L,l} + 1}^{\hat{\tau}^O_{L,l + 1} - 1}{\varepsilon^E_i} - \varepsilon^E_{\hat{\tau}^O_{L,l + 1}}\right)}\right\vert \notag\\
\leq &\left\vert\sum_{l = 0}^{L}{\overline{\varepsilon}^O_{\hat{\tau}^O_{L,l} :\hat{\tau}^O_{L,l + 1}} \frac{1}{\hat{n}^O_l - 1}\sum_{i = \hat{\tau}^O_{L,l} + 1}^{\hat{\tau}^O_{L,l + 1} - 1}{\varepsilon^E_i}}\right\vert + \left\vert\sum_{l = 0}^{L}{\overline{\varepsilon}^O_{\hat{\tau}^O_{L,l} :\hat{\tau}^O_{L,l + 1}} \varepsilon^E_{\hat{\tau}^O_{L,l + 1}}}\right\vert. \label{eq:A_12_decomp}
\end{align}
We focus on the second term in the display above, but the same steps lead to the same bound for the first term. It follows from the law of total probability that for any $\mathcal{E}^0$-measurable random variable $X$ that is positive almost surely,
\begin{equation*}
\begin{split}
& \Pj\left( \left\vert\sum_{l = 0}^{L}{\overline{\varepsilon}^O_{\hat{\tau}^O_{L,l} :\hat{\tau}^O_{L,l + 1}} \varepsilon^E_{\hat{\tau}^O_{L,l + 1}}}\right\vert \geq X\right)\\
=& \Pj\left( \sum_{l = 0}^{L}{\overline{\varepsilon}^O_{\hat{\tau}^O_{L,l} :\hat{\tau}^O_{L,l + 1}} \varepsilon^E_{\hat{\tau}^O_{L,l + 1}}} \geq X\right)
 + \Pj\left( \sum_{l = 0}^{L}{\overline{\varepsilon}^O_{\hat{\tau}^O_{L,l} :\hat{\tau}^O_{L,l + 1}} \varepsilon^E_{\hat{\tau}^O_{L,l + 1}}} \leq  - X\right)\\
= & \E\left[ \Pj\left(\left. \sum_{l = 0}^{L}{\overline{\varepsilon}^O_{\hat{\tau}^O_{L,l} :\hat{\tau}^O_{L,l + 1}} \varepsilon^E_{\hat{\tau}^O_{L,l + 1}}} \geq X\right\vert \mathcal{E}^O\right) \right]
 + \E\left[ \Pj\left(\left. \sum_{l = 0}^{L}{\overline{\varepsilon}^O_{\hat{\tau}^O_{L,l} :\hat{\tau}^O_{L,l + 1}} \varepsilon^E_{\hat{\tau}^O_{L,l + 1}}} \leq - X\right\vert \mathcal{E}^O\right) \right].
\end{split}
\end{equation*}
We focus on the first term and the probability inside the expectation. The second term can be bounded in the same way. Let $L = 0,\ldots,K_{\max}$ be fixed. We will use Bernstein's inequality conditional on $\mathcal{E}^O$. To this end, note that $\overline{\varepsilon}^O_{\hat{\tau}^O_{L,l} :\hat{\tau}^O_{L,l + 1}} \varepsilon^E_{\hat{\tau}^O_{L,l + 1}},\ l = 0,\ldots,L$, are independent conditional on $\mathcal{E}^O$
with
\[
\E\Big[\left(\overline{\varepsilon}^O_{\hat{\tau}^O_{L,l} :\hat{\tau}^O_{L,l + 1}}\right)^{\top} \varepsilon^E_{\hat{\tau}^O_{L,l + 1}}\Big\vert \mathcal{E}^O\Big] = 0,\quad \forall\ l = 0,\ldots,L.
\]
Moreover, it follows from the Cauchy--Schwarz inequality and Assumption~\ref{assumption:NoiseBernstein} that
\begin{align*}
%\E\Big[\left(\overline{\varepsilon}^O_{\hat{\tau}^O_{L,l} :\hat{\tau}^O_{L,l + 1}}\right)^{\top} \varepsilon^E_{\hat{\tau}^O_{L,l + 1}}\Big\vert \mathcal{E}^O\Big] =& 0,\quad \forall\ l = 0,\ldots,L,\\
\sum_{l = 0}^{L}\E\bigg[\bigg(\left(\overline{\varepsilon}^O_{\hat{\tau}^O_{L,l} :\hat{\tau}^O_{L,l + 1}}\right)^{\top} \varepsilon^E_{\hat{\tau}^O_{L,l + 1}}\bigg)^2\bigg\vert \mathcal{E}^O\bigg]
\leq &\sum_{l = 0}^{L}\E\bigg[\left\|\overline{\varepsilon}^O_{\hat{\tau}^O_{L,l} :\hat{\tau}^O_{L,l + 1}}\right\|_2^2 \left\|\varepsilon^E_{\hat{\tau}^O_{L,l + 1}}\right\|_2^2\bigg\vert \mathcal{E}^O\bigg]\\
\leq &\overline{\sigma}^2 \sum_{l = 0}^{L} \left\|\overline{\varepsilon}^O_{\hat{\tau}^O_{L,l} :\hat{\tau}^O_{L,l + 1}}\right\|_2^2,\\
\sum_{l = 0}^{L}\E\bigg[\bigg(\left(\overline{\varepsilon}^O_{\hat{\tau}^O_{L,l} :\hat{\tau}^O_{L,l + 1}}\right)^{\top} \varepsilon^E_{\hat{\tau}^O_{L,l + 1}}\bigg)^q\bigg\vert \mathcal{E}^O\bigg]
\leq &\sum_{l = 0}^{L}\E\bigg[\left\|\overline{\varepsilon}^O_{\hat{\tau}^O_{L,l} :\hat{\tau}^O_{L,l + 1}}\right\|_2^q \left\|\varepsilon^E_{\hat{\tau}^O_{L,l + 1}}\right\|_2^q\bigg\vert \mathcal{E}^O\bigg]\\
\leq \frac{q!}{2} c^{q - 2} \overline{\sigma}^q \sum_{l = 0}^{L} \left\|\overline{\varepsilon}^O_{\hat{\tau}^O_{L,l} :\hat{\tau}^O_{L,l + 1}}\right\|_2^q
\leq & \frac{q!}{2} \overline{\sigma}^2 \sum_{l = 0}^{L} \left\|\overline{\varepsilon}^O_{\hat{\tau}^O_{L,l} :\hat{\tau}^O_{L,l + 1}}\right\|_2^2 \left(c\, \overline{\sigma} \left(\sum_{l = 0}^{L} \left\|\overline{\varepsilon}^O_{\hat{\tau}^O_{L,l} :\hat{\tau}^O_{L,l + 1}}\right\|_2^2\right)^{1/2}\right)^{q - 2},
\end{align*}
for $q \geq 3$. Thus, it follows from Bernstein's inequality \citep[Cor.~2.11]{boucheron2013concentration} that for any $a \geq 1$
\begin{equation*}
\begin{split}
&  \Pj\left(\left. \sum_{l = 0}^{L}{\overline{\varepsilon}^O_{\hat{\tau}^O_{L,l} :\hat{\tau}^O_{L,l + 1}} \varepsilon^E_{\hat{\tau}^O_{L,l + 1}}} \geq a \log\left(K_{\max}\right) \overline{\sigma}\left(\sum_{l = 0}^{L} \left(\overline{\varepsilon}^O_{\hat{\tau}^O_{L,l} :\hat{\tau}^O_{L,l + 1}}\right)^2\right)^{1/2} (\sqrt{2} + c) \right\vert \mathcal{E}^O\right)\\
\leq & \exp(-a \log\left(K_{\max}\right)) = K_{\max}^{-a}.
\end{split}
\end{equation*}
Hence, by a similar argument involving the negative, we have
\begin{equation*}
\begin{split}
\Pj\left( \left\vert\sum_{l = 0}^{L}{\overline{\varepsilon}^O_{\hat{\tau}^O_{L,l} :\hat{\tau}^O_{L,l + 1}} \varepsilon^E_{\hat{\tau}^O_{L,l + 1}}}\right\vert \geq a \log\left(K_{\max}\right) \overline{\sigma}\left(\sum_{l = 0}^{L} \left(\overline{\varepsilon}^O_{\hat{\tau}^O_{L,l} :\hat{\tau}^O_{L,l + 1}}\right)^2\right)^{1/2} (\sqrt{2} + c)\right)
\leq 2 K_{\max}^{-a}
\end{split}
\end{equation*}
and so
\begin{equation*}
\begin{split}
& \max_{L = 0,\ldots,K_{\max}} \left(\sum_{l = 0}^{L} \left(\overline{\varepsilon}^O_{\hat{\tau}^O_{L,l} :\hat{\tau}^O_{L,l + 1}}\right)^2\right)^{-1/2}  \left\vert\sum_{l = 0}^{L}{\overline{\varepsilon}^O_{\hat{\tau}^O_{L,l} :\hat{\tau}^O_{L,l + 1}} \varepsilon^E_{\hat{\tau}^O_{L,l + 1}}}\right\vert
= \mathcal{O}_\Pj\left(\log\left(K_{\max}\right) \overline{\sigma}\right).
\end{split}
\end{equation*}
As remarked, the same bound applies to the first term in \eqref{eq:A_12_decomp}. Hence, we obtain
\begin{equation}\label{eq:firstboundA12L}
\begin{split}
& \max_{L = 0,\ldots,K_{\max}} \left(\sum_{l = 0}^{L} \left(\overline{\varepsilon}^O_{\hat{\tau}^O_{L,l} :\hat{\tau}^O_{L,l + 1}}\right)^2\right)^{-1/2} A_{12, L}^{(n)} = \mathcal{O}_\Pj\left(\log\left(K_{\max}\right) \overline{\sigma}\right).
\end{split}
\end{equation}
We now simplify this term. To this end, we use the notation
\begin{equation*}
\left\{\tilde{\tau}^O_{L,0},\ldots, \tilde{\tau}^O_{L,\tilde{L} + 1}\right\} := \left\{\tau_{0}^O,\ldots,\tau_{K + 1}^O\right\}\, \cup\, \left\{\hat{\tau}^O_{L,1},\ldots,\hat{\tau}^O_{L,L}\right\},
\end{equation*}
with $\tilde{L} \leq K + L$. Further, we use the fact that $(\sum_{i = 1}^{n} a_i)^2 \leq \sum_{i = 1}^{n} n a_i^2$ for any $a_1,\ldots,a_n \in \R$. Using this and that the number of change-points in the set above between $\hat{\tau}^O_{L,l}$ and $\hat{\tau}^O_{L,l + 1}$ can be upper bounded by the number of observations $\tilde{\tau}^O_{L,l + 1} - \tilde{\tau}^O_{L,l}$, we have
\begin{equation}\label{eq:boundMeanEpsilonOsq}
\begin{split}
\sum_{l = 0}^{L}{\left( \overline{\varepsilon}^O_{\hat{\tau}^O_{L,l} :\hat{\tau}^O_{L,l + 1}}\right)^2}
&\leq \sum_{l = 0}^{\tilde{L}}{\left(\tilde{\tau}^O_{L,l + 1} - \tilde{\tau}^O_{L,l}\right) \left( \overline{\varepsilon}^O_{\tilde{\tau}^O_{L,l} :\tilde{\tau}^O_{L,l + 1}} \right)^2}\\
=&\sum_{k = 0}^{K}{\sum_{i = \tau^O_{k} + 1}^{\tau^O_{k + 1}}{\left(\varepsilon^O_i - \overline{\varepsilon}^O_{\tau^O_{k} :\tau^O_{k + 1}} \right)^2}} - \sum_{l = 0}^{\tilde{L}}{\sum_{i = \tilde{\tau}^O_{L,l} + 1}^{\tilde{\tau}^O_{L,l + 1}}{\left(\varepsilon^O_i - \overline{\varepsilon}^O_{\tilde{\tau}^O_{L,l} :\tilde{\tau}^O_{L,l + 1}} \right)^2}}\\
& + \sum_{k = 0}^{K}{\left(\tau^O_{k + 1} - \tau^O_{k}\right) \left( \overline{\varepsilon}^O_{\tau^O_{k} :\tau^O_{k + 1}} \right)^2} \\
= & S_{\varepsilon^O}\left(\mathcal{T}^{O}_{K}\right) - S_{\varepsilon^O}\left(\hat{\mathcal{T}}^O_L\cup \mathcal{T}^{O}_{K}\right) + \sum_{k = 0}^{K}{\left(\tau^O_{k + 1} - \tau^O_{k}\right) \left( \overline{\varepsilon}^O_{\tau^O_{k} :\tau^O_{k + 1}} \right)^2}.
\end{split}
\end{equation}
Note that by Markov's inequality,
\begin{equation}\label{eq:boundMeanEpsilonOsqSecondPart}
\sum_{k = 0}^{K}{\left(\tau^O_{k + 1} - \tau^O_{k}\right) \left( \overline{\varepsilon}^O_{\tau^O_{k} :\tau^O_{k + 1}} \right)^2} = \mathcal{O}_\Pj\left((K \vee 1)\overline{\sigma}^2\right).
\end{equation}
To bound the terms further we distinguish the cases $L = K, L > K,$ and $L < K$. We will also combine the obtained bounds with \eqref{eq:bd_A_11} to obtain the final statement for $A_{1, L}^{(n)}$.

\emph{Case $L=K$.}
From Lemma~\ref{lemma:boundsForEpsilonO}\ref{item2:boundsForEpsilonO} we have that
\begin{equation*}
S_{\varepsilon^O}\left(\mathcal{T}^{O}_{K}\right) - S_{\varepsilon^O}\left(\hat{\mathcal{T}}^O_K\cup \mathcal{T}^{O}_{K}\right)
= o_\Pj\left(\overline{\sigma}^2 \log\log\overline{\lambda}\right).
\end{equation*}
Moreover, Assumption~\ref{assumption:cpNumberMultivariate}\ref{assumption:cpNumberBound} yields $K = o(\log\log\overline{\lambda})$. Thus,
\begin{equation}\label{eq:boundMeanEpsilonOSqK}
\sum_{k = 0}^{K}{\left( \overline{\varepsilon}^O_{\hat{\tau}^O_{K,k} :\hat{\tau}^O_{K,k + 1}}\right)^2} = o_{\Pj}(\overline{\sigma}^2 \log\log\overline{\lambda}),
\end{equation}
and this also holds when $K=0$. Furthermore, it follows from Assumption~\ref{assumption:cpNumberMultivariate}\ref{assumption:cpNumberKmaxBound} that $\log K_{\max} = o\Big((\log\log\overline{\lambda})^{1/2}\Big)$. To this end, note that this is trivially satisfied if $K_{\max} = \mathcal{O}(1)$ and otherwise since $(\log K_{\max})^3 / K_{\max} \to 0$.
Hence,
\begin{equation*}
\begin{split}
\left\vert\sum_{k = 0}^{K}{\overline{\varepsilon}^O_{\hat{\tau}^O_{K,k} :\hat{\tau}^O_{K,k + 1}} \varepsilon^E_{\hat{\tau}^O_{K,k + 1}}}\right\vert = o_{\Pj}(\overline{\sigma}^2 \log\log\overline{\lambda})
\end{split}
\end{equation*}
and thus using \eqref{eq:bd_A_11} we have
\begin{equation}\label{eq:boundA1K}
A_{1, K}^{(n)} = o_{\Pj}(\overline{\sigma}^2 \log\log\overline{\lambda}).
\end{equation}
We will use this bound in the following to bound $\max\big( A_{1, L}^{(n)},\, A_{1, K}^{(n)}\big)$ for $L > K$ and $L < K$.

%Next, we consider $L > K$.
\emph{Case $L>K$.}
Using \eqref{eq:boundMeanEpsilonOsq}, \eqref{eq:boundMeanEpsilonOsqSecondPart}, Assumptions~\ref{assumption:cpNumberMultivariate}\ref{assumption:cpNumberBound}~and~\ref{assumption:cpNumberKmaxBound} similarly to the case where $L = K$ as well as Assumption~\ref{assumption:overestimation} gives us
\begin{equation}\label{eq:boundMeanEpsilonOSqL>K}
\max_{L = K + 1,\ldots,K_{\max}} \left(S_{\varepsilon^O}\left(\mathcal{T}^{O}_{K}\right) - S_{\varepsilon^O}\left(\hat{\mathcal{T}}^O_L\cup \mathcal{T}^{O}_{K}\right)\right)^{-1}\sum_{l = 0}^{L}{\left( \overline{\varepsilon}^O_{\hat{\tau}^O_{L,l} :\hat{\tau}^O_{L,l + 1}}\right)^2} = \mathcal{O}_{\Pj}(1).
\end{equation}
Combining this with \eqref{eq:bd_A_11} and \eqref{eq:firstboundA12L} as well as \eqref{eq:boundA1K} and Assumption~\ref{assumption:overestimation} yields
\begin{equation*}
\begin{split}
& \max_{L = K + 1,\ldots,K_{\max}} \left(S_{\varepsilon^O}\left(\mathcal{T}^{O}_{K}\right) - S_{\varepsilon^O}\left(\hat{\mathcal{T}}^O_L\cup \mathcal{T}^{O}_{K}\right)\right)^{-1} \max\big( A_{1, L}^{(n)},\, A_{1, K}^{(n)}\big) = o_{\Pj}(1).
\end{split}
\end{equation*}
Note that this completes the proof in the case where $K = 0\ \forall\ n$ as we only have to consider $L \geq K$ and since all $\mu_i$'s are the same, we have that  $A_L^{(n)} \leq A_{1, L}^{(n)}$. Moreover, note that the arguments above in the case where $K=0$ do not use Assumptions~\ref{assumption:detectionPrecision}~and~\ref{assumption:minimumSignalMultivariate}. In the following we may therefore assume that $K \geq 1$.

%We now consider the final case where $L < K$.
\emph{Case $L<K$.} From Lemma~\ref{lemma:boundsForEpsilonO}\ref{item1:boundsForEpsilonO} we have that
\begin{equation*}
\max_{L = 0\ldots,K - 1} \left\{S_{\varepsilon^O}\left(\mathcal{T}^{O}_{K}\right) - S_{\varepsilon^O}\left(\hat{\mathcal{T}}^O_L\cup \mathcal{T}^{O}_{K}\right)\right\}
= \mathcal{O}_{\Pj}\big(K \overline{\sigma}^2(\log \overline{\lambda})^2\big).
\end{equation*}
Thus, \eqref{eq:boundMeanEpsilonOsq} and \eqref{eq:boundMeanEpsilonOsqSecondPart} yields
\begin{equation}\label{eq:boundMeanEpsilonOSqL<K}
\max_{L = 0\ldots,K - 1}\sum_{l = 0}^{L}{\left( \overline{\varepsilon}^O_{\hat{\tau}^O_{L,l} :\hat{\tau}^O_{L,l + 1}}\right)^2} = \mathcal{O}_{\Pj}\big(K \overline{\sigma}^2(\log \overline{\lambda})^2\big).
\end{equation}
Consequently, \eqref{eq:firstboundA12L} and \eqref{eq:boundA1K} give us
\begin{equation*}
\begin{split}
\max_{L = 0\ldots,K - 1} \max\big( A_{12, L}^{(n)},\, A_{12, K}^{(n)}\big) = \mathcal{O}_{\Pj}\big(K \overline{\sigma}^2(\log \overline{\lambda})^2\big)
\end{split}
\end{equation*}
and hence from \eqref{eq:bd_A_11} we have that
\begin{equation*}
\begin{split}
\max_{L = 0\ldots,K - 1} \max\big( A_{1, L}^{(n)},\, A_{1, K}^{(n)}\big) = \mathcal{O}_{\Pj}\big(K \overline{\sigma}^2(\log \overline{\lambda})^2\big).
\end{split}
\end{equation*}
Finally, note that $\mathcal{I}_L$ is non-empty. Hence, $ \sum_{k \in \mathcal{I}_L}{\Delta_k^2} \geq \Delta_{(1)}^2$. Thus, Assumption~\ref{assumption:minimumSignalMultivariate} yields 
\begin{equation*}
K \overline{\sigma}^2(\log \overline{\lambda})^2 = o_\Pj\left( \min_L \underline{\lambda}  \sum_{k \in \mathcal{I}_L}{\Delta_k^2}\right).
\end{equation*}
In summary, it follows that $A_{1, L}^{(n)}$ satisfies the same bounds as $A_{L}^{(n)}$ in the statement of the lemma.

\subsubsection*{Bounding $A_{2, L}^{(n)}$}
\begin{equation*}
\begin{split}
A_{2, L}^{(n)}
\leq & \left\vert\sum_{l = 0}^{L}\sum_{i = \hat{\tau}^O_{L,l} + 1}^{\hat{\tau}^O_{L,l + 1} - 1}{ \frac{\hat{n}^O_l}{\hat{n}^O_l - 1}\left(\mu^E_i - \overline{\mu}^E_{\hat{\tau}^O_{L,l} :\hat{\tau}^O_{L,l + 1} - 1}\right)^2 }
- \sum_{l = 0}^{L}\sum_{i = \hat{\tau}^O_{L,l} + 1}^{\hat{\tau}^O_{L,l + 1} - 1}{ \frac{\hat{n}^O_l}{\hat{n}^O_l - 1}\left(\mu^E_i - \overline{\mu}^O_{\hat{\tau}^O_{L,l} :\hat{\tau}^O_{L,l + 1}}\right)^2 }\right\vert\\
&+ \left\vert\sum_{l = 0}^{L}\sum_{i = \hat{\tau}^O_{L,l} + 1}^{\hat{\tau}^O_{L,l + 1} - 1}{ \frac{\hat{n}^O_l}{\hat{n}^O_l - 1}\left(\mu^E_i - \overline{\mu}^O_{\hat{\tau}^O_{L,l} :\hat{\tau}^O_{L,l + 1}}\right)^2 } 
- \sum_{l = 0}^{L}\sum_{i = \hat{\tau}^O_{L,l} + 1}^{\hat{\tau}^O_{L,l + 1} - 1}{ \frac{\hat{n}^O_l}{\hat{n}^O_l - 1}\left(\mu^O_i - \overline{\mu}^O_{\hat{\tau}^O_{L,l} :\hat{\tau}^O_{L,l + 1}}\right)^2 }\right\vert \\
&+ \left\vert\sum_{l = 0}^{L}\sum_{i = \hat{\tau}^O_{L,l} + 1}^{\hat{\tau}^O_{L,l + 1} - 1}{ \frac{\hat{n}^O_l}{\hat{n}^O_l - 1}\left(\mu^O_i - \overline{\mu}^O_{\hat{\tau}^O_{L,l} :\hat{\tau}^O_{L,l + 1}}\right)^2 }
 - \sum_{l = 0}^{L}\sum_{i = \hat{\tau}^O_{L,l} + 1}^{\hat{\tau}^O_{L,l + 1}}{\left(\mu^O_i - \overline{\mu}^O_{\hat{\tau}^O_{L,l} :\hat{\tau}^O_{L,l + 1}}\right)^2 }\right\vert.
\end{split}
\end{equation*}
The r.h.s.\ is bounded from above by
\begin{equation*}
	\begin{split}
 & 2\left\vert\sum_{l = 0}^{L}\sum_{i = \hat{\tau}^O_{L,l} + 1}^{\hat{\tau}^O_{L,l + 1} - 1}{\frac{\hat{n}^O_l}{\hat{n}^O_l - 1}  \left(\mu^E_i - \overline{\mu}^E_{\hat{\tau}^O_{L,l} :\hat{\tau}^O_{L,l + 1} - 1}\right)  \left(\overline{\mu}^E_{\hat{\tau}^O_{L,l} :\hat{\tau}^O_{L,l + 1} - 1} - \overline{\mu}^O_{\hat{\tau}^O_{L,l} :\hat{\tau}^O_{L,l + 1}}\right)}\right\vert\\
 & + \left\vert\sum_{l = 0}^{L}\sum_{i = \hat{\tau}^O_{L,l} + 1}^{\hat{\tau}^O_{L,l + 1} - 1}{\frac{\hat{n}^O_l}{\hat{n}^O_l - 1} \left(\overline{\mu}^E_{\hat{\tau}^O_{L,l} :\hat{\tau}^O_{L,l + 1} - 1} - \overline{\mu}^O_{\hat{\tau}^O_{L,l} :\hat{\tau}^O_{L,l + 1}}\right)^2}\right\vert\\
 & + \left\vert\sum_{l = 0}^{L}\sum_{i = \hat{\tau}^O_{L,l} + 1}^{\hat{\tau}^O_{L,l + 1} - 1}{\frac{\hat{n}^O_l}{\hat{n}^O_l - 1} \left(\mu^E_i - \mu^O_i\right)^2}\right\vert\\
 & + 2\left\vert\sum_{l = 0}^{L}\sum_{i = \hat{\tau}^O_{L,l} + 1}^{\hat{\tau}^O_{L,l + 1} - 1}{\frac{\hat{n}^O_l}{\hat{n}^O_l - 1} \left(\mu^E_i - \mu^O_i\right) \left(\mu^O_i - \overline{\mu}^O_{\hat{\tau}^O_{L,l} :\hat{\tau}^O_{L,l + 1}}\right)}\right\vert\\
 & + \left\vert\sum_{l = 0}^{L}\sum_{i = \hat{\tau}^O_{L,l} + 1}^{\hat{\tau}^O_{L,l + 1} - 1}{ \frac{1}{\hat{n}^O_l - 1}\left(\mu^O_i - \overline{\mu}^O_{\hat{\tau}^O_{L,l} :\hat{\tau}^O_{L,l + 1}}\right)^2 }
 - \sum_{l = 0}^{L}{\left(\mu^O_{\hat{\tau}^O_{L,l + 1}} - \overline{\mu}^O_{\hat{\tau}^O_{L,l} :\hat{\tau}^O_{L,l + 1}}\right)^2 }\right\vert,
\end{split}
\end{equation*}
which is in turn bounded from above by
 \begin{equation*}
 	\begin{split}
& 2\sum_{l = 0}^{L}\left(\max_{i = \hat{\tau}^O_{L,l} + 1,\ldots,\hat{\tau}^O_{L,l + 1} - 1}{ \left\vert\mu^E_i - \overline{\mu}^E_{\hat{\tau}^O_{L,l} :\hat{\tau}^O_{L,l + 1} - 1}\right\vert}\right)\left( \hat{n}^O_l\left\vert\overline{\mu}^E_{\hat{\tau}^O_{L,l} :\hat{\tau}^O_{L,l + 1} - 1} - \overline{\mu}^O_{\hat{\tau}^O_{L,l} :\hat{\tau}^O_{L,l + 1}}\right\vert\right)\\
 & + \sum_{l = 0}^{L}\left( \hat{n}^O_l\left\vert\overline{\mu}^E_{\hat{\tau}^O_{L,l} :\hat{\tau}^O_{L,l + 1} - 1} - \overline{\mu}^O_{\hat{\tau}^O_{L,l} :\hat{\tau}^O_{L,l + 1}}\right\vert\right)^2\\
 & + 2\sum_{l = 0}^{L}\left(\sum_{i = \hat{\tau}^O_{L,l} + 1}^{\hat{\tau}^O_{L,l + 1} - 1}{ \left\vert\mu^E_i - \mu^O_i\right\vert}\right)^2\\
 & + 4\sum_{l = 0}^{L}\left(\max_{i = \hat{\tau}^O_{L,l} + 1,\ldots,\hat{\tau}^O_{L,l + 1} - 1}{ \left\vert \mu^O_i - \overline{\mu}^O_{\hat{\tau}^O_{L,l} :\hat{\tau}^O_{L,l + 1}} \right\vert } \right)\left(\sum_{i = \hat{\tau}^O_{L,l} + 1}^{\hat{\tau}^O_{L,l + 1} - 1}{\left\vert\mu^E_i - \mu^O_i\right\vert}\right)\\
 & + \sum_{l = 0}^{L}\left(\max_{i = \hat{\tau}^O_{L,l} + 1,\ldots,\hat{\tau}^O_{L,l + 1} - 1}{ \left\vert \mu^O_i - \overline{\mu}^O_{\hat{\tau}^O_{L,l} :\hat{\tau}^O_{L,l + 1}} \right\vert } \right)^2\\
\leq & 16 B_{L}^{(n)}, 
\end{split}
\end{equation*}
where we have used the Cauchy-Schwarz inequality and $B_{L}^{(n)}$ is defined as in Lemma~\ref{lemma:preliminaryCalculationsDifferenceCVs}. Then, it follows from Lemma~\ref{lemma:preliminaryCalculationsDifferenceCVs} that $A_{2, L}^{(n)}$ satisfies the same bounds as $A_{L}^{(n)}$ in the statement of the lemma.

\subsubsection*{Bounding $A_{3, L}^{(n)}$}
Since
\begin{equation*}
\overline{\varepsilon}^E_{\hat{\tau}^O_{L,l} :\hat{\tau}^O_{L,l + 1}}\left(\mu^O_i - \overline{\mu}^O_{\hat{\tau}^O_{L,l} :\hat{\tau}^O_{L,l + 1}}\right) = 0 = \overline{\varepsilon}^E_{\hat{\tau}^O_{L,l} :\hat{\tau}^O_{L,l + 1} - 1}\left(\mu^O_i - \overline{\mu}^O_{\hat{\tau}^O_{L,l} :\hat{\tau}^O_{L,l + 1}}\right),
\end{equation*}
we have
\begin{equation*}
\begin{split}
A_{3, L}^{(n)}
\leq & \Bigg\vert\sum_{l = 0}^{L}\sum_{i = \hat{\tau}^O_{L,l} + 1}^{\hat{\tau}^O_{L,l + 1} - 1}{ \frac{\hat{n}^O_l}{\hat{n}^O_l - 1} \left(\varepsilon^E_i - \overline{\varepsilon}^E_{\hat{\tau}^O_{L,l} :\hat{\tau}^O_{L,l + 1} - 1}\right) \left(\mu^E_i - \overline{\mu}^E_{\hat{\tau}^O_{L,l} :\hat{\tau}^O_{L,l + 1} - 1}\right)}\\
& - \sum_{l = 0}^{L}\sum_{i = \hat{\tau}^O_{L,l} + 1}^{\hat{\tau}^O_{L,l + 1} - 1}{\frac{\hat{n}^O_l}{\hat{n}^O_l - 1} \left(\varepsilon^E_i - \overline{\varepsilon}^E_{\hat{\tau}^O_{L,l} :\hat{\tau}^O_{L,l + 1} - 1}\right) \left(\mu^O_i - \overline{\mu}^O_{\hat{\tau}^O_{L,l} :\hat{\tau}^O_{L,l + 1}}\right)}\Bigg\vert\\
& + \Bigg\vert\sum_{l = 0}^{L}\sum_{i = \hat{\tau}^O_{L,l} + 1}^{\hat{\tau}^O_{L,l + 1} - 1}{\frac{\hat{n}^O_l}{\hat{n}^O_l - 1} \left(\varepsilon^E_i - \overline{\varepsilon}^E_{\hat{\tau}^O_{L,l} :\hat{\tau}^O_{L,l + 1} - 1}\right) \left(\mu^O_i - \overline{\mu}^O_{\hat{\tau}^O_{L,l} :\hat{\tau}^O_{L,l + 1}}\right)}\\
& - \sum_{l = 0}^{L}\sum_{i = \hat{\tau}^O_{L,l} + 1}^{\hat{\tau}^O_{L,l + 1}}{ \left(\varepsilon^E_i - \overline{\varepsilon}^E_{\hat{\tau}^O_{L,l} :\hat{\tau}^O_{L,l + 1} - 1}\right) \left(\mu^O_i - \overline{\mu}^O_{\hat{\tau}^O_{L,l} :\hat{\tau}^O_{L,l + 1}}\right)} \Bigg\vert.
\end{split}
\end{equation*}

Using the same arguments as for the first term gives us
\begin{equation*}
\begin{split}
&A_{3, L}^{(n)} \Bigg(\sum_{l = 0}^{L}\Bigg( \sum_{i = \hat{\tau}^O_{L,l} + 1}^{\hat{\tau}^O_{L,l + 1} - 1}{\left\vert\mu^E_i - \mu^O_i\right\vert} + \hat{n}^O_l \left\vert \overline{\mu}^O_{\hat{\tau}^O_{L,l} :\hat{\tau}^O_{L,l + 1}} - \overline{\mu}^E_{\hat{\tau}^O_{L,l} :\hat{\tau}^O_{L,l + 1} - 1}\right\vert\\
& \hspace*{70pt} + \max_{i = \hat{\tau}^O_{L,l} + 1,\ldots, \hat{\tau}^O_{L,l + 1}}{ \left\vert \mu^O_i - \overline{\mu}^O_{\hat{\tau}^O_{L,l} :\hat{\tau}^O_{L,l + 1}} \right\vert }\Bigg)\Bigg)^{-1}\\
= & \mathcal{O}_\Pj\left( \log(K_{\max}) \overline{\sigma} \right).
\end{split}
\end{equation*}
Hence, from Lemma~\ref{lemma:preliminaryCalculationsDifferenceCVs} and the same simplifications as seen before it follows that $A_{3, L}^{(n)}$ satisfies the same bounds as $A_{L}^{(n)}$ in the statement of the lemma.

\subsubsection*{Bounding $A_{4, L}^{(n)}$}
Applying the triangle and Cauchy--Schwarz inequalities gives
\begin{equation*}
\begin{split}
A_{4, L}^{(n)}
\leq & \left\vert \sum_{l = 0}^{L}\sum_{i = \hat{\tau}^O_{L,l} + 1}^{\hat{\tau}^O_{L,l + 1} - 1}{ \frac{\hat{n}^O_l}{\hat{n}^O_l - 1} \varepsilon^E_i \left(\overline{\mu}^E_{\hat{\tau}^O_{L,l} :\hat{\tau}^O_{L,l + 1} - 1} - \overline{\mu}^O_{\hat{\tau}^O_{L,l} :\hat{\tau}^O_{L,l + 1}}\right)}\right \vert\\
& + \left\vert \sum_{l = 0}^{L}{\hat{n}^O_l\overline{\varepsilon}^O_{\hat{\tau}^O_{L,l} :\hat{\tau}^O_{L,l + 1}} \left(\overline{\mu}^E_{\hat{\tau}^O_{L,l} :\hat{\tau}^O_{L,l + 1} - 1} - \overline{\mu}^O_{\hat{\tau}^O_{L,l} :\hat{\tau}^O_{L,l + 1}}\right)}\right \vert\\
\leq & \left\vert \sum_{l = 0}^{L}\sum_{i = \hat{\tau}^O_{L,l} + 1}^{\hat{\tau}^O_{L,l + 1} - 1}{ \frac{\hat{n}^O_l}{\hat{n}^O_l - 1} \varepsilon^E_i \left(\overline{\mu}^E_{\hat{\tau}^O_{L,l} :\hat{\tau}^O_{L,l + 1} - 1} - \overline{\mu}^O_{\hat{\tau}^O_{L,l} :\hat{\tau}^O_{L,l + 1}}\right)}\right \vert\\
& + \left( \sum_{l = 0}^{L}{\Big(\overline{\varepsilon}^O_{\hat{\tau}^O_{L,l} :\hat{\tau}^O_{L,l + 1}}\Big)^2}\right)^{1/2}\left( \sum_{l = 0}^{L}{\Big(\hat{n}^O_l\Big)^2\left(\overline{\mu}^E_{\hat{\tau}^O_{L,l} :\hat{\tau}^O_{L,l + 1} - 1} - \overline{\mu}^O_{\hat{\tau}^O_{L,l} :\hat{\tau}^O_{L,l + 1}}\right)^2}\right)^{1/2}.
\end{split}
\end{equation*}

Then it follows from Lemma~\ref{lemma:preliminaryCalculationsDifferenceCVs} and the same calculations as for bounding $A_{1, L}^{(n)}$ that $A_{4, L}^{(n)}$ satisfies the same bounds as $A_{L}^{(n)}$ in the statement of the lemma.

\subsubsection*{Bounding $A_{5, L}^{(n)}$}
From Lemma~\ref{lemma:preliminaryCalculationsDifferenceCVs} it follows that $A_{5, L}^{(n)}$ satisfies the same bounds as $A_{L}^{(n)}$ in the statement of the lemma.

\subsubsection*{Proving the statement}
Combining the bounds for $A_L^{(n)}$, $A_{1, L}^{(n)}$, $A_{2, L}^{(n)}$, $A_{3, L}^{(n)}$, $A_{4, L}^{(n)}$ and $A_{5, L}^{(n)}$ gives that $A_{L}^{(n)}$ satisfies the bounds given in the statement of the lemma.
\end{proof}

\begin{proof}[Proof of Theorem \ref{theorem:positiveResultRescaledCV}]
Recall $\operatorname{CV}^O_{\mathrm{mod}}(L)$, $\operatorname{CV}^E_{\mathrm{mod}}(L)$, $\widetilde{\operatorname{CV}}^O_{\mathrm{mod}}(L)$ and $\widetilde{\operatorname{CV}}^E_{\mathrm{mod}}(L)$ from \eqref{eq:CVmodO} and \eqref{eq:simplifiedCVterm}, respectively.

Then, $\operatorname{CV}_{\mathrm{mod}}(L) = \operatorname{CV}^O_{\mathrm{mod}}(L) + \operatorname{CV}^E_{\mathrm{mod}}(L)$. Thus,
\begin{align*}
& \Pj\left( \hat{K} = K \right)\\
= &\Pj\left( \min_{\substack{L = 0,\ldots,K_{\max},\\ L \neq K}} \operatorname{CV}_{\mathrm{mod}}(L) - \operatorname{CV}_{\mathrm{mod}}(K) > 0\right)\\
\geq &\Pj\left( \min_{\substack{L = 0,\ldots,K_{\max},\\ L \neq K}} \operatorname{CV}^O_{\mathrm{mod}}(L) - \operatorname{CV}^O_{\mathrm{mod}}(K) > 0, \min_{\substack{L = 0,\ldots,K_{\max},\\ L \neq K}} \operatorname{CV}^E_{\mathrm{mod}}(L) - \operatorname{CV}^E_{\mathrm{mod}}(K) > 0\right)\\
= & \Pj\left( \min_{\substack{L = 0,\ldots,K_{\max},\\ L \neq K}} \operatorname{CV}^O_{\mathrm{mod}}(L) - \operatorname{CV}^O_{\mathrm{mod}}(K) > 0\right) + \Pj\left( \min_{\substack{L = 0,\ldots,K_{\max},\\ L \neq K}} \operatorname{CV}^E_{\mathrm{mod}}(L) - \operatorname{CV}^E_{\mathrm{mod}}(K) > 0\right)\\
& - \Pj\left( \min_{\substack{L = 0,\ldots,K_{\max},\\ L \neq K}} \operatorname{CV}^O_{\mathrm{mod}}(L) - \operatorname{CV}^O_{\mathrm{mod}}(K) > 0 \text{ or } \min_{\substack{L = 0,\ldots,K_{\max},\\ L \neq K}} \operatorname{CV}^E_{\mathrm{mod}}(L) - \operatorname{CV}^E_{\mathrm{mod}}(K) > 0\right).
\end{align*}
In the following we will show that
\begin{equation}\label{eq:CVOpositive}
\Pj\Bigg( \min_{\substack{L = 0,\ldots,K_{\max},\\ L \neq K}} \operatorname{CV}^O_{\mathrm{mod}}(L) - \operatorname{CV}^O_{\mathrm{mod}}(K) > 0\Bigg) \to 1, \text{ as } n\to \infty.
\end{equation}
Note that this completes the proof as it implies
\begin{equation*}
\Pj\Bigg( \min_{\substack{L = 0,\ldots,K_{\max},\\ L \neq K}} \operatorname{CV}^E_{\mathrm{mod}}(L) - \operatorname{CV}^E_{\mathrm{mod}}(K) > 0\Bigg) \to 1, \text{ as } n\to \infty,
\end{equation*}
since $\operatorname{CV}^O_{\mathrm{mod}}(L)$ and $\operatorname{CV}^E_{\mathrm{mod}}(L)$ are symmetric and for both terms the same properties follow from Assumptions~\ref{assumption:cpNumberMultivariate}--\ref{assumption:minimumSignalMultivariate}. In fact, one way to see this is to consider the observations in reverse order, which interchanges $Y^O_i$ and $Y^E_i$, but Assumptions~\ref{assumption:cpNumberMultivariate}--\ref{assumption:minimumSignalMultivariate} are not altered. Hence, \eqref{eq:CVOpositive} implies $\Pj\left( \hat{K} = K \right)  \to 1, \text{ as } n\to \infty$.

It remains to show \eqref{eq:CVOpositive}. We will consider $L > K$ and $L < K$ separately. Let $L > K$. Recall the notation $A_L^{(n)} = \left\vert \operatorname{CV}^O_{\mathrm{mod}}(L) - \widetilde{\operatorname{CV}}^O_{\mathrm{mod}}(L) \right\vert$
from Lemma~\ref{lemma:differenceCVs}. It follows from Lemmas~\ref{lemma:differenceCVLtoK}~and~\ref{lemma:differenceCVs} that
\begin{align*}
&\min_{L = K + 1,\ldots,K_{\max}} \operatorname{CV}^O_{\mathrm{mod}}(L) - \operatorname{CV}^O_{\mathrm{mod}}(K)\\
= &\min_{L = K + 1,\ldots,K_{\max}}\left\{\widetilde{\operatorname{CV}}^O_{\mathrm{mod}}(L)  - A_L^{(n)}\right\} - \widetilde{\operatorname{CV}}^O_{\mathrm{mod}}(K) + A_K^{(n)}\\
\geq & \min_{L = K + 1,\ldots,K_{\max}} \left\{ S_{\varepsilon^O}\left(\mathcal{T}^{O}_{K}\right) - S_{\varepsilon^O}\left(\hat{\mathcal{T}}^O_L \cup \mathcal{T}^{O}_{K} \right) \right\} \left(1 + o_{\Pj}(1)\right).
\end{align*}
It follows from Assumption~\ref{assumption:overestimation} that the r.h.s.\ is positive with probability converging to $1$ as $n \to \infty$. Hence, \eqref{eq:CVOpositive} holds when $L > K$. Note that thus far, Assumptions~\ref{assumption:detectionPrecision}~and~\ref{assumption:minimumSignalMultivariate} have not been used directly, and moreover they are not required for the referenced Lemmas when $K=0$ for all $n$.

Let $L < K$ now. Lemma~\ref{lemma:preliminaryCalculationsDifferenceCVs} yields that there exist a constant $A > 0$ and sequences of stochastic non-empty sets $\mathcal{I}_L\subseteq \{1,\ldots,K\}$ such that
\begin{equation*}
\Pj\Bigg( \forall \,L<K, \,\,  \sum_{i = \tau_{k}^O - \floor{\frac{\underline{\lambda}}{4}} + 1}^{\tau_{k}^O + \floor{\frac{\underline{\lambda}}{4}}}{\big(\mu^O_i - \overline{\mu}^O_{L, i}\big)^2} \geq  A \underline{\lambda}\Delta_k^2\ \forall\ k\in \mathcal{I}_L \Bigg) \to 1,
\end{equation*}
where $\overline{\mu}^O_{L, i} := \sum_{l = 0}^{L}{\EINS_{\{\hat{\tau}^O_{L,l} + 1 \leq i \leq \hat{\tau}^O_{L,l + 1}\}} \overline{\mu}^O_{\hat{\tau}^O_{L,l} :\hat{\tau}^O_{L,l + 1}}}$. Thus, it follows from Lemmas~\ref{lemma:differenceCVLtoK}~and~\ref{lemma:differenceCVs} that
\begin{align*}
&\min_{L = 0\ldots,K - 1} \operatorname{CV}^O_{\mathrm{mod}}(L) - \operatorname{CV}^O_{\mathrm{mod}}(K)\\
= &\min_{L = 0\ldots,K - 1}\left\{\widetilde{\operatorname{CV}}^O_{\mathrm{mod}}(L)  - A_L^{(n)}\right\} - \widetilde{\operatorname{CV}}^O_{\mathrm{mod}}(K) + A_K^{(n)}\\
\geq & \min_{L = 0\ldots,K - 1} \Bigg\{ \underline{\lambda}\sum_{k\in \mathcal{I}_L}{\Delta_k^2} \Bigg\} \big(A + o_{\Pj}(1)\big).
\end{align*}
The r.h.s.\ is positive, since $\vert \mathcal{I}_L \vert \geq 1,\ \forall L = L = 0\ldots,K - 1$ and $A > 0$. Hence, \eqref{eq:CVOpositive} holds also when $L < K$. This completes the proof as noted before.
\end{proof}

\section{Proof of Theorem~\ref{thm:consistency}}\label{sec:proof:consistency}
The first statement in Theorem~\ref{thm:consistency} follows directly from Theorem~\ref{theorem:positiveResultRescaledCV} if its assumptions are satisfied by least squares estimation under the given setting. To show this, we will use Theorem~\ref{theorem:detectionPrecision} and Lemma~\ref{lemma:overestimation}. The latter is a shorter version of Theorem~2 in \citet{zou2020consistent}. It shows that Assumption~\ref{assumption:overestimation} is indeed satisfied in the given set-up.
\begin{Lemma}\label{lemma:overestimation}
Suppose the noise assumption in Section~\ref{sec:cvL2}, i.e. sub-Gaussian noise, constant variance on each segment and ratio between the smallest variance and largest variance proxy is bounded from below. Then, for all $M>0$,
\begin{equation*}
	\Pj \left(\frac{\min_{L = K + 1,\ldots, K_{\max}}\left\{S_{\varepsilon^O}\left(\mathcal{T}^{O}_{K}\right) - S_{\varepsilon^O}\left(\hat{\mathcal{T}}^O_L\cup \mathcal{T}^{O}_{K}\right)\right\}}{\overline{\sigma}^2 \log\log\overline{\lambda}} < M \right)  \to 0
\end{equation*}
and as above but with all instances of $O$ replaced by $E$.
\end{Lemma}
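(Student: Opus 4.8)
The plan is to show that, with probability tending to one, the minimum cost reduction appearing in the statement is bounded below by a positive multiple of $\overline{\sigma}^2\log\log\overline{\lambda}$; as this quantity is exactly the order of the error terms it must dominate in Lemmas~\ref{lemma:differenceCVLtoK} and~\ref{lemma:differenceCVs}, such a bound gives the claim. The starting observation is that both $\mathcal{T}^{O}_{K}$ and $\hat{\mathcal{T}}^O_L \cup \mathcal{T}^{O}_{K}$ contain all true change-points, so $\mu^O$ is constant on every segment of each partition and the within-segment deviations of $Y^O$ coincide with those of $\varepsilon^O$. Hence for every $L$,
\[
S_{\varepsilon^O}\!\left(\mathcal{T}^{O}_{K}\right) - S_{\varepsilon^O}\!\left(\hat{\mathcal{T}}^O_L \cup \mathcal{T}^{O}_{K}\right) = S_{Y^O}\!\left(\mathcal{T}^{O}_{K}\right) - S_{Y^O}\!\left(\hat{\mathcal{T}}^O_L \cup \mathcal{T}^{O}_{K}\right),
\]
which reduces the problem to lower bounding a reduction of the \emph{observed-data} cost, where the optimality of optimal partitioning can be brought to bear.

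First I would exploit monotonicity and optimality. Since adding change-points never increases cost, $S_{Y^O}(\hat{\mathcal{T}}^O_L \cup \mathcal{T}^{O}_{K}) \leq S_{Y^O}(\hat{\mathcal{T}}^O_L)$, and since the optimal cost is non-increasing in the number of change-points, $S_{Y^O}(\hat{\mathcal{T}}^O_L) \leq S_{Y^O}(\hat{\mathcal{T}}^O_{K+1})$ for all $L \geq K+1$. Let $\tau^*$ maximise, over all locations strictly interior to a true segment, the reduction in the cost of $\varepsilon^O$ obtained by splitting that segment there, and write $M$ for this maximal reduction. The partition $\mathcal{T}^{O}_{K}\cup\{\tau^*\}$ has $K+1$ change-points and refines $\mathcal{T}^{O}_{K}$, so by optimality of $\hat{\mathcal{T}}^O_{K+1}$ together with the identity $S_{Y^O}=S_{\varepsilon^O}$ on refinements of $\mathcal{T}^{O}_{K}$,
\[
S_{Y^O}\!\left(\hat{\mathcal{T}}^O_{K+1}\right) \leq S_{Y^O}\!\left(\mathcal{T}^{O}_{K}\cup\{\tau^*\}\right) = S_{\varepsilon^O}\!\left(\mathcal{T}^{O}_{K}\right) - M = S_{Y^O}\!\left(\mathcal{T}^{O}_{K}\right) - M.
\]
Chaining the displayed bounds yields, for every $L = K+1,\ldots,K_{\max}$,
\[
S_{\varepsilon^O}\!\left(\mathcal{T}^{O}_{K}\right) - S_{\varepsilon^O}\!\left(\hat{\mathcal{T}}^O_L \cup \mathcal{T}^{O}_{K}\right) \geq S_{Y^O}\!\left(\mathcal{T}^{O}_{K}\right) - S_{Y^O}\!\left(\hat{\mathcal{T}}^O_{K+1}\right) \geq M,
\]
so the minimum over $L$ is at least $M$. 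Notably this step uses only refinement monotonicity and optimality, and requires neither Assumption~\ref{assumption:detectionPrecision} nor Assumption~\ref{assumption:minimumSignalMultivariate}, consistent with the lemma being stated under the noise hypotheses alone and covering $K=0$.

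It then remains to show $M \geq c\,\overline{\sigma}^2\log\log\overline{\lambda}$ with probability tending to one for some constant $c>0$. Restricting to the longest true segment $[a,b]$, of length $\overline{\lambda}$, on which the errors are independent, mean-zero, of common variance $v \geq \underline{\sigma}^2$ and satisfying the moment bounds of Assumption~\ref{assumption:NoiseBernstein}, the best split reduces the cost by
\[
\max_{a<s<b}\ \frac{b-a}{(s-a)(b-s)}\Big(\sum_{i=a+1}^{s}\varepsilon^O_i - \tfrac{s-a}{b-a}\sum_{i=a+1}^{b}\varepsilon^O_i\Big)^2,
\]
a maximal squared CUSUM statistic over a window of length $\overline{\lambda}$. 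By the Darling--Erd\H{o}s theorem, equivalently the lower half of the law of the iterated logarithm for the partial-sum process (the same extreme-value input already used at~\eqref{eq:maxtau}), this maximum equals $2v\log\log\overline{\lambda}\,(1+o_\Pj(1))$, and in particular exceeds $c\,\overline{\sigma}^2\log\log\overline{\lambda}$ with probability tending to one once $\limsup_n \overline{\sigma}/\underline{\sigma}<\infty$ is used to give $v \asymp \overline{\sigma}^2$. The hard part will be precisely this lower bound: in contrast to the one-sided \emph{upper} tail controls of Lemmas~\ref{lemma:maxYao} and~\ref{lemma:maxOneSided}, which follow from union bounds, bounding a maximum from \emph{below} requires the more delicate LIL-type argument. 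The statement for the folds indexed by $E$ follows verbatim after interchanging $O$ and $E$, or equivalently by considering the observations in reverse order.
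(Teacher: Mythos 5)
Your proposal is correct and follows essentially the same route as the paper's proof: reduce to $S_{Y^O}$ via the fact that both partitions refine the true change-point set, chain $S_{Y^O}(\hat{\mathcal{T}}^O_L\cup\mathcal{T}^O_K)\leq S_{Y^O}(\hat{\mathcal{T}}^O_{K+1})\leq \min_t S_{Y^O}(\mathcal{T}^O_K\cup\{t\})$ by monotonicity and optimality, and then lower-bound the best single-split gain on the longest segment by a LIL/Darling--Erd\H{o}s-type bound of order $\underline{\sigma}^2\log\log\overline{\lambda}$ (the paper cites Theorem~2 of \citet{zou2020consistent} for exactly this), converting to $\overline{\sigma}^2$ via $\limsup_n\sigma/\underline{\sigma}<\infty$. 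The only cosmetic difference is that you define the maximal gain $M$ over all segments before specialising to the longest one, whereas the paper works with the longest segment from the outset.
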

\begin{proof}
Let ${k^*}$ be the index of the longest segment, so $\overline{\lambda}/2 -1 \leq \tau^O_{{k^*} + 1} - \tau^O_{k^*} \leq \overline{\lambda}/2+1$. Then, it follows from the fact that adding change-points only decreases the costs and the definition of least squares estimation that for every $L = K + 1,\ldots, K_{\max}$, 
\begin{align*}
S_{\varepsilon^O}\left(\hat{\mathcal{T}}^O_L\cup \mathcal{T}^{O}_{K}\right)
= & S_{Y^O}\left(\hat{\mathcal{T}}^O_L\cup \mathcal{T}^{O}_{K}\right)
\leq S_{Y^O}\left(\hat{\mathcal{T}}^O_{K + 1}\right)\\
\leq & \min_{\tau^O_{k^*} < t < \tau^O_{{k^*} + 1}} S_{Y^O}\left(\mathcal{T}^{O}_{K} \cup \{ t\} \right)\\
= & \min_{\tau^O_{k^*} < t < \tau^O_{{k^*} + 1}} S_{\varepsilon^O}\left(\mathcal{T}^{O}_{K} \cup \{ t\} \right).
\end{align*}

Thus,
\begin{align*}
& S_{\varepsilon^O}\left(\mathcal{T}^{O}_{K}\right) - S_{\varepsilon^O}\left(\hat{\mathcal{T}}^O_L\cup \mathcal{T}^{O}_{K}\right)\\
\geq & S_{\varepsilon^O}\left(\mathcal{T}^{O}_{K}\right) - \min_{\tau^O_{k^*} < t < \tau^O_{{k^*} + 1}} S_{\varepsilon^O}\left(\mathcal{T}^{O}_{K} \cup \{ t\} \right)\\
\geq & \max_{\tau^O_{k^*} < t < \tau^O_{{k^*} + 1}} \left\{ (t - \tau^O_{k^*}) \left(\overline{\varepsilon}^O_{\tau^O_{k^*} :t}\right)^2 + ( \tau^O_{{k^*} + 1} - t) \left(\overline{\varepsilon}^O_{t :\tau^O_{{k^*} + 1}}\right)^2 \right\} - (\tau^O_{{k^*} + 1} - \tau^O_{k^*}) \left(\overline{\varepsilon}^O_{\tau^O_{k^*} :\tau^O_{{k^*} + 1}}\right)^2.
\end{align*}

It follows from \citet[Lemma~2]{zou2020consistent} (see also \citet[Lemma~2.1]{horvath1993maximum}) that there exists a constant $c > 0$ such that
\begin{equation*}
\Pj\left(\max_{\tau^O_{k^*} < t < \tau^O_{{k^*} + 1}} \left\{ (t - \tau^O_{k^*}) \left(\overline{\varepsilon}^O_{\tau^O_{k^*} :t}\right)^2 + ( \tau^O_{{k^*} + 1} - t) \left(\overline{\varepsilon}^O_{t :\tau^O_{{k^*} + 1}}\right)^2 \right\} > c\, \underline{\sigma}^2 \log\log \overline{\lambda}\right) \to 1.
\end{equation*}
Moreover, $(\tau^O_{{k^*} + 1} - \tau^O_{k^*}) \Big(\overline{\varepsilon}^O_{\tau^O_{k^*} :\tau^O_{{k^*} + 1}}\Big)^2 = \mathcal{O}_{\Pj}(\sigma)$. Hence, the stated formula follows by using that $\limsup_{n \to \infty}\sigma / \underline{\sigma} < \infty$.

The same argument holds with all instances of $O$ replaced by $E$.
\end{proof}

\begin{proof}[Proof of Theorem~\ref{thm:consistency}]
For the first part we will use Theorem~\ref{theorem:positiveResultRescaledCV} and hence, we first verify its assumptions in the following. Note that Assumptions~\ref{assumption:cpNumberMultivariate}~and~\ref{assumption:minimumSignalMultivariate} are assumed in Theorem~\ref{thm:consistency} as well. Since $\varepsilon_1,\ldots,\varepsilon_n$ are sub-Gaussian with uniformly bounded variance proxy $\sigma^2$, Assumption~\ref{assumption:NoiseBernstein} follows. Moreover, Lemma~\ref{lemma:overestimation} shows that Assumption~\ref{assumption:overestimation} holds as well. It remains to show Assumption~\ref{assumption:detectionPrecision}. To this end, we will apply Theorem~\ref{theorem:detectionPrecision} to $Y_1^O,\ldots,Y_{n/2}^O$. Note that this sequence has $K$ change-points with minimal jump size $\Delta_{(1)}$, minimal distance between change-points of at least $\floor{\,\underline{\lambda}/2}$, and maximal distance between change-points of at most $\overline{\lambda}/2$. As noted in the discussion following Theorem~\ref{theorem:positiveResultRescaledCV}, Assumption~\ref{assumption:detectionPrecision} is not required for the conclusion when $K=0$. Hence, we can assume that $K > 0$. 

Now let $C_n$ be a sequence with $C_n \to \infty$ but such that
\begin{align}
C_n^2 K\log\log\bigg((\log(K) \vee 1) \frac{\sigma^2}{\Delta_{(1)}^2} \vee e \bigg) &= o(\log\log \overline{\lambda}), \label{eq:a_n_1}\\
C_n K (\log K \vee 1) &= o(\log\log \overline{\lambda}). \label{eq:a_n_2}
	\end{align}
The existence of such a sequence is guaranteed by \eqref{eq:conditionConsistency} and Assumption~\ref{assumption:cpNumberMultivariate}\ref{assumption:cpNumberBound}. Now we set
\begin{align*}
\gamma_{0k} &:= C_n \left(\log\left(K \right)\vee 1\right) \frac{\sigma^2}{\Delta_k^2}, \\
\gamma_{1k} &:=  C_n \left(\log\left(K \frac{\sigma^2}{\Delta_k^2} \right)\vee  1\right) \frac{\sigma^2}{\Delta_k^2}.
\end{align*}
This choice satisfies \eqref{eq:precisionRate}. We now claim that taking $\delta_{0k}=\floor{\gamma_{0k}}$ and $\delta_{1k} = \floor{\gamma_{1k}}$ satisfies Assumption~\ref{assumption:detectionPrecision}. It follows from \eqref{detectionPrecision:L>=K} in Theorem~\ref{theorem:detectionPrecision} that Assumption~\ref{assumption:detectionPrecision}\ref{assumption:detectionPrecision:L>=K} is satisfied, if we show that $\gamma_{0,k} \vee \gamma_{1,k} \leq \underline{\lambda}/2-1$. It follows from \eqref{eq:conditionConsistency} that
\begin{align*}
K \frac{\sigma^2}{\Delta_k^2} = o(\underline{\lambda}).
\end{align*}
Thus, \eqref{eq:a_n_2} and \eqref{eq:conditionConsistency} imply
\begin{align*}
\frac{\gamma_{0,k} \vee \gamma_{1,k}}{\underline{\lambda}}
\leq o\left(\frac{\sigma^2}{\underline{\lambda}\Delta_k^2}\log(\underline{\lambda})\log\log \overline{\lambda}\right)\to 0.
\end{align*}
From \eqref{eq:a_n_1}, we see that Assumption~\ref{assumption:detectionPrecision}\ref{assumption:detectionPrecision:Rate} is satisfied, since 
\begin{align*}
& \max_{q=0,1;\ 1\leq k \leq K }  K \log\log{ (\delta_{q,k} \vee e)}\\
\leq & K \log\log\left(C_n \left(\log\left(K\frac{\sigma^2}{\Delta_{k}^2} \right) \vee (\log K) \vee 1 \right) \frac{\sigma^2}{\Delta_{k}^2}\vee e\right)\\
\leq & \mathcal{O}\left(C_n K \log\log\left(\left(\log(K)\vee 1 \right)\frac{\sigma^2}{\Delta_{(1)}^2}\vee e\right)\right)\\
= & o(\log\log\overline{\lambda}).
\end{align*}

Next, from \eqref{eq:a_n_2} we see that Assumption~\ref{assumption:detectionPrecision}\ref{assumption:detectionPrecision:Infq0} is satisfied, since
\[
\sum_{k=1}^K \delta_{0,k}\Delta_k^2 \leq C_n K (\log (K) \vee 1) \sigma^2 = o(\sigma^2 \log\log \overline{\lambda}).
\]
Furthermore,  Assumption~\ref{assumption:detectionPrecision}\ref{assumption:detectionPrecision:Infq>0} is satisfied as well, since if $\frac{\sigma^2 \log \log \overline{\lambda}}{K \Delta_k^2} \leq C_n$, then
\begin{align*}
\gamma_{0,k} \vee \gamma_{1,k} \leq \frac{C_n^2 K}{\log \log \overline{\lambda}}  \left\{ \log K \vee \log\left(\frac{C_n}{\log \log \overline{\lambda}} \right)  \vee 1 \right\} \vee  1 \to 0,
\end{align*}
due to \eqref{eq:a_n_2}. Thus, $\delta_{0k} = \floor{\gamma_{0,k}} = \delta_{1k} = \floor{\gamma_{1,k}} = 0$.
Finally, it follows from $\delta_{1k} = 0$ if $\frac{\sigma^2 \log \log \overline{\lambda}}{K \Delta_k^2} \leq C_n$ and from \eqref{detectionPrecision:L<K} in Theorem~\ref{theorem:detectionPrecision} that Assumption~\ref{assumption:detectionPrecision}\ref{assumption:detectionPrecision:L<K} is satisfied as well. To this end, recall that the minimal distance between two change-points is at least $\floor{\,\underline{\lambda}/2}$ and note that 
$$
\bigg\lfloor\frac{\bfloor{\frac{\underline{\lambda}}{2}}}{2}\bigg\rfloor = \Bfloor{\frac{\underline{\lambda}}{4}}.
$$
Hence, Assumption~\ref{assumption:detectionPrecision} holds. The same is true when we replace O's by E's. Consequently, it follows from Theorem~\ref{theorem:positiveResultRescaledCV} that $\Pj\left( \hat{K} = K \right) \to 1, \text{ as } n\to \infty$.

It remains to show \eqref{eq:L2riskModifiedCriterion}. Let us write $\hat{\tau}_k := \hat{\tau}_{K, k}$ for $k = 0,\ldots,K+1$ and $\hat{\delta}_k := \vert \hat{\tau}_{k} - \tau_k \vert$. Recall that $\hat{f}_K:\ [0,1] \to \R,\ t \mapsto  \sum_{k = 0}^{K}{\overline{Y}_{\hat{\tau}_{k} :\hat{\tau}_{k + 1}} \EINS_{(\hat{\tau}_{k} / n, \hat{\tau}_{k + 1} / n]}(t)}$. We have that
\begin{equation*}
\begin{split}
&n \int_0^1{ \left(\hat{f}_{K}(t) - f(t)\right)^2 dt}\\
\leq & \sum_{k = 0}^{K}(\hat{\tau}_{k + 1} - \hat{\tau}_{k}) \vert \overline{Y}_{\hat{\tau}_{k} :\hat{\tau}_{k + 1}} - \beta_k \vert^2 + \sum_{k = 1}^{K}\hat{\delta}_k \left( \Delta_k + \max\big(\vert \overline{Y}_{\hat{\tau}_{k - 1} :\hat{\tau}_{k}} - \beta_{k - 1} \vert,\, \vert \overline{Y}_{\hat{\tau}_{k} :\hat{\tau}_{k + 1}} - \beta_k \vert\big) \right)^2.
\end{split}
\end{equation*}
In the following we will bound these terms. It follows from  Theorem~\ref{theorem:detectionPrecision} and the fact that $\Pj\left( \hat{K} = K \right) \to 1$, that for $\gamma_k := c_n(\log(K) \vee 1) \sigma^2 / \Delta_k^2$, where $c_n$ can be any sequence such that $c_n \to \infty$, as $n\to\infty$,
\begin{equation}\label{eq:deltakK}
\Pj\left(\hat{\delta}_k \leq \gamma_k \ \forall\ k = 1,\ldots,K  \text{ and } \hat{K} = K\right) \to 1.
\end{equation}
Furthermore, \eqref{eq:CPdetection} implies $\gamma_k < \underline{\lambda} / 2$ for a suitable chosen $c_n$. In the following, we work on the sequence of events in \eqref{eq:deltakK}. Consequently, $\hat{\tau}_{k + 1} - \hat{\tau}_{k} \geq \underline{\lambda} / 2$.

We also have that
\begin{equation*}
\vert \overline{Y}_{\hat{\tau}_{k} :\hat{\tau}_{k + 1}} - \beta_k \vert
\leq \frac{\hat{\delta}_k \Delta_k + \hat{\delta}_{k + 1} \Delta_{k + 1}}{\hat{\tau}_{k + 1} - \hat{\tau}_{k}} + \vert\overline{\varepsilon}_{\hat{\tau}_{k} :\hat{\tau}_{k + 1}}\vert.
\end{equation*}

For the following calculation we assume w.l.o.g.\ that $\hat{\tau}_{k} < \tau_{k} < \tau_{k + 1} < \hat{\tau}_{k + 1}$, since other cases lead to the same bound. Then,
\begin{equation*}
\begin{split}
&(\hat{\tau}_{k + 1} - \hat{\tau}_{k})\vert\overline{\varepsilon}_{\hat{\tau}_{k} :\hat{\tau}_{k + 1}}\vert^2\\
\leq & 3\left((\tau_{k} - \hat{\tau}_{k})\vert\overline{\varepsilon}_{\hat{\tau}_{k} :\tau_{k}}\vert^2 + (\tau_{k + 1} - \tau_{k})\vert\overline{\varepsilon}_{\tau_{k} :\tau_{k + 1}}\vert^2
+ (\hat{\tau}_{k + 1} - \tau_{k + 1})\vert\overline{\varepsilon}_{\tau_{k + 1} :\hat{\tau}_{k + 1}}\vert^2
\right)\\
\leq &\max_{t = \tau_{k} - \gamma_k,\ldots,\tau_{k} - 1}\{(\tau_{k} - t)\vert\overline{\varepsilon}_{t :\tau_{k}}\vert^2\} + (\tau_{k + 1} - \tau_{k})\vert\overline{\varepsilon}_{\tau_{k} :\tau_{k + 1}}\vert^2\\
& \hspace{15pt} + \max_{t = \tau_{k + 1} + 1,\ldots,\tau_{k + 1} + \gamma_{k+1}}\{(t - \tau_{k + 1})\vert\overline{\varepsilon}_{\tau_{k + 1} :t}\vert^2\}.
\end{split}
\end{equation*}
Hence, due to sub-Gaussianity and independence,
\begin{equation*}
\max_{k = 1,\ldots,K} \big(\log\log \gamma_k + 1 + \log\log \gamma_{k+1}\big)^{-1} (\hat{\tau}_{k + 1} - \hat{\tau}_{k})\vert\overline{\varepsilon}_{\hat{\tau}_{k} :\hat{\tau}_{k + 1}}\vert^2 = \mathcal{O}_\Pj(\log(K)\sigma^2).
\end{equation*}
Note that here and in the following we have written $\log\log \gamma_k$ instead of $\log\log(\gamma_k \vee e)$ to improve readability.

Thus, using $(x + y)^2 \leq 2x^2 + 2y^2$,
\begin{equation*}
\begin{split}
& \sum_{k = 0}^{K}(\hat{\tau}_{k + 1} - \hat{\tau}_{k}) \vert \overline{Y}_{\hat{\tau}_{k} :\hat{\tau}_{k + 1}} - \beta_k \vert^2
= \mathcal{O}_\Pj\left( \sum_{k = 1}^{K} \frac{\gamma_k \Delta_k^2}{\underline{\lambda}} \right)
+ \mathcal{O}_\Pj\left( \sum_{k = 1}^{K} (1 + \log\log \gamma_k) \log(K)\sigma^2 \right),
\end{split}
\end{equation*}
and
\begin{equation*}
\begin{split}
& \sum_{k = 1}^{K}\hat{\delta}_k \left( \Delta_k + \max\big(\vert \overline{Y}_{\hat{\tau}_{k - 1} :\hat{\tau}_{k}} - \beta_{k - 1} \vert,\, \vert \overline{Y}_{\hat{\tau}_{k} :\hat{\tau}_{k + 1}} - \beta_k \vert\big) \right)^2\\
= & \mathcal{O}_\Pj\left(\sum_{k = 1}^{K} \gamma_k \Delta_k^2\right) + \mathcal{O}_\Pj\left(\sum_{k = 1}^{K} \gamma_k \frac{\left(\gamma_{k-1}^2 \Delta_{k - 1}^2 + \gamma_k^2 \Delta_k^2 + \gamma_{k+1}^2 \Delta_{k + 1}^2\right)}{\underline{\lambda}^2} \right)\\
& + \mathcal{O}_\Pj\left( \sum_{k = 1}^{K} \gamma_k (1 + \log\log \gamma_{k-1} + \log\log \gamma_k + \log\log \gamma_{k+1}) \log(K)\sigma^2 / \underline{\lambda} \right),
\end{split}
\end{equation*}
where we have used the notation $\Delta_{0} = \Delta_{K + 1} = 0$ and $\gamma_{0} = \gamma_{K + 1} = e$.

Since $\gamma_k \leq \underline{\lambda}$, it follows that
\begin{equation*}
\begin{split}
&n \int_0^1{ \left(\hat{f}_{K}(t) - f(t)\right)^2 dt}\\
= & \mathcal{O}_\Pj\left( \sum_{k = 1}^{K} \gamma_k \Delta_k^2 \right)
+ \mathcal{O}_\Pj\left( \sum_{k = 1}^{K} (1 + \log\log \gamma_k) \log(K)\sigma^2 \right)\\
= & \mathcal{O}_\Pj\left( \sum_{k = 1}^{K} (c_n + \log\log \gamma_k) (\log(K) \vee 1) \sigma^2\right)\\
= & \mathcal{O}_\Pj\left( \left(c_n + \log\log\left( c_n(\log(K) \vee 1) \sigma^2 / \Delta_{(1)}^2 \right) \right) K (\log(K) \vee 1) \sigma^2\right).
\end{split}
\end{equation*}
%We have that
%\begin{equation*}
%\begin{split}
%&\Pj\left( n \int_0^1{ \left(\hat{f}_{\hat{K}}(t) - f(t)\right)^2 dt} > x \right)\\
%= &\Pj\left( n \int_0^1{ \left(\hat{f}_{K}(t) - f(t)\right)^2 dt} > x, \hat{K} = K, \hat{\delta}_k \leq \gamma_k\ \forall\ k = 1,\ldots,K \right)\\
%& + \Pj\left( n \int_0^1{ \left(\hat{f}_{K}(t) - f(t)\right)^2 dt} > x, \hat{K} \neq K \text{ or } \exists\ k = 1,\ldots,K\,:\, \hat{\delta}_k > \gamma_k\right)
%\end{split}
%\end{equation*}
%and that the last probability converges to zero, due to $\Pj\big( \hat{K} = K \big) \to 1$ and \eqref{eq:deltakK}. Consequently, we obtain the bound
%\begin{equation*}
%\mathcal{O}_\Pj\left( n^{-1}\left(c_n + \log\log\left( c_n(\log(eK) \sigma^2 / \Delta_{(1)}^2 \right) \right) K \log(eK) \sigma^2\right)
%\end{equation*}
By using the second part of \eqref{eq:conditionConsistency} and that $\log\log\overline{\lambda} \to \infty$, as $n\to\infty$, as well as by choosing a $c_n$ that increases slowly enough, this simplifies to the claimed bound.
\end{proof}

%\bibliographystyle{imsart-nameyear}
%\bibliography{Literature}

\end{cbunit}

\end{document}